\newtheorem{definition}{Definition}[section]%
\newtheorem*{lipschitz*}{Bi-Lipschitz regularity}
\newtheorem{assumption}{Assumption}[section]%
\newtheorem*{redshift*}{The phenomenological redshift parametrization}%
\newtheorem{lemma}{Lemma}[section]
\newtheorem{proposition}{Proposition}[section]
\newtheorem{theorem}{Theorem}[section]
\theoremstyle{definition}
\theoremstyle{remark}
\newtheorem{remark}[theorem]{Remark}
\newtheorem*{appendix*}{\textbf{The role of the Appendix}}
\numberwithin{equation}{section}
\begin{document}
\title[Article Title]{How Dark is Dark Energy?
A Lightcones Comparison Approach}
\author[1,2]{\fnm{Mauro} \sur{Carfora}}\email{mauro.carfora@unipv.it}
\author[2]{\fnm{Francesca} \sur{Familiari}}\email{francesca.familiari@unipv.it}
\affil[1]{\orgdiv{Department of Physics}, \orgname{Pavia University},\country{Italy}}
\affil[2]{\orgname{National Group for Mathematical Physics (GNFM)}}
\abstract{We present a detailed analysis that may significantly impact understanding the relationship between structure formation in the late--epoch Universe and dark energy as described by the Friedmann--Lema\^itre--Robertson--Walker (FLRW) cosmological constant density 
${\widehat\Omega_\Lambda}$. Our geometrical approach provides a non--perturbative technique that allows the standard FLRW observer to evaluate a measurable, scale--dependent distance functional between her idealized FLRW past light cone and the actual physical past light cone. From the point of view of the FLRW observer, gathering data from sources at cosmological redshift $\widehat{z}$, this functional generates a geometry--structure--growth contribution ${\Omega_\Lambda(\widehat{z})}$ to   ${\widehat\Omega_\Lambda}$. This redshift--dependent contribution erodes the interpretation 
of ${\widehat\Omega_\Lambda}$ as representing constant dark energy. In particular, ${\Omega_\Lambda(\widehat{z})}$ becomes significantly large at very low $\widehat{z}$, where structures dominate the cosmological landscape. At the pivotal galaxy cluster scale, where cosmological expansion decouples from the local gravitation dynamics, we get
${\Omega_\Lambda(\widehat{z})/\widehat\Omega_\Lambda}\,=\,\mathscr{O}(1)$, showing that late--epoch structures provide an effective field contribution to the FLRW cosmological constant that is of the same order of magnitude of its assumed value. We prove that ${\Omega_\Lambda(\widehat{z})}$ is generated by a scale--dependent effective field governed by structures formation and related to the comparison between the idealized FLRW past light cone and the actual physical past light cone. These results are naturally framed in mainstream FLRW cosmology; they do not require the existence of exotic fields and provide a natural setting for analyzing the coincidence problem, leading to an interpretative shift in the current interpretation of constant dark energy.} 
\keywords{Dark Energy, Cosmological Constant, General relativity and Large Scale Structures, Non--Perturbative methods in cosmology}
\maketitle
\tableofcontents
\section{Introduction}\label{sec1}
Dark energy, in the form of a cosmological constant $\Lambda$, cold dark matter (CDM), and inflation are the key players in the $\Lambda\mathrm{CDM}$ model, the standard paradigm of modern cosmology. A rich repertoire of hypotheses, modeling assumptions, and observations\cite{Planck} indicate that this dark backstage is responsible for most of the large--scale physics in our Universe, from the early stage of structure formation to the observed late--epoch acceleration of the cosmological expansion\cite{Amendola}. We are familiar with the complex astrophysical landscape that dominates our neighborhood,  yet the Universe appears statistically\cite{SEH} isotropic and homogeneous when observed on large scales. This cosmic view encompasses space and time, and the Universe was indeed extremely uniform, except for tiny density fluctuations when the cosmic microwave background (CMB) radiation formed. Surveys of the CMB indicate that these primordial density fluctuations have a Gaussian \cite{ThomGaussian}, almost scale--invariant power spectrum originating from inflation and suggest that the inhomogeneities around us are the gravitational evolution of these inflationary seeds. We are dealing with a top--down scenario that is difficult to verify in a model--independent way\cite{Maartens}, but which is instrumental in describing cosmological dynamics in terms of a Friedmann--Lema\^itre--Robertson--Walker (FLRW) spacetime (with flat space sections) and its perturbations. The FLRW evolution of the $\Lambda\mathrm{CDM}$ model depends quite rigidly on a minimal set of observational data, the most relevant of which are the present value of the Hubble expansion parameter,  the matter density (comprising baryonic and dark matter),\; the radiation density, the primordial fluctuation amplitude related to inflation, and the scalar spectral index characterizing how fluctuations change with the scale. \; From the Friedmann equations, derived from Einstein's theory, we eventually get also the presence of a dark energy contribution in the vague terms of a cosmological constant.\;  Observations show that, at the present epoch, the estimated value of this dark energy contribution is dominant over the other components, a dominance that we see at work in the actual accelerated phase of the cosmological expansion. This narrative is the $\Lambda\mathrm{CDM}$ concordance model of cosmology\footnote{To set notation, a capsule of the $\Lambda\mathrm{CDM}$ model is presented in Appendix \ref{capsule}.}. Simple, somewhat vague, surprisingly predictive\cite{Planck}. Yet, the simplicity of this description is deceptive since we live in a perturbed universe that bears, in a rather unexpected and subtle way, the imprint of the fluctuations induced by gravitational instability. The successful role of  FLRW spacetime geometry and its perturbations
in the early--epoch $\Lambda\mathrm{CDM}$ scenario, is related to the observed transition to homogeneity\cite{Scrimgeour, Dias}  that occur for comoving radii of the order $100\mathrm{h}^{-1}\; \mathrm{Mpc}$, where $\mathrm{h}$ is the dimensionless parameter describing the relative uncertainty of the present value, $H_0$, of the Hubble parameter.
It is essential to emphasize that the characterization of this homogeneity scale relies on the statistical measure employed, typically two--point correlation function statistics that utilize an FLRW background as a reference. Model--independent statistics, which encompass all orders of the correlation function, promote the transition to homogeneity at larger scales \cite{ThomMinkowski}. Here, we do not belabor on this delicate issue, and for simplicity, we use the standard reference value $100\mathrm{h}^{-1}\; \mathrm{Mpc}$ adopted in mainstream cosmology.

In the late--epoch Universe, the role of FLRW geometry in the $\Lambda\mathrm{CDM}$ model is more delicate to handle. Gravitational clustering generates fluctuations that grow larger and larger and eventually become non--perturbative. Observations show that, in the pre--homogeneity region surrounding us, this clustering gives rise to a complex network of structures, characterized by a foam--like web of voids and galaxy filaments often extending well into the  $100\; \mathrm{Mpc}$ range\cite{Wiltshire}. At these scales,  the Einstein evolution of the fiducial  FLRW geometry uncouples from the dynamics of the matter sources and survives more as a practical computational assumption, often only assisted by  Newtonian and general--relativistic (model--dependent) numerical simulations, rather than as a  \emph{bona fide} perturbative background gravitationally determined by the actual matter distribution as  Einstein's theory requires. We have little perturbative control over spacetime geometry in such pre--homogeneity regions. In particular,  the transition from their large--scale FLRW point of view to the actual inhomogeneous and anisotropic spacetime geometry emergent at these local scales is poorly understood in a model--independent way, and the idea that around $100\mathrm{h}^{-1}\; \mathrm{Mpc}$ we have a gradual and smooth transition between these two regimes is somewhat illusive (J-P. Uzan provides a detailed analysis of this issue in the remarkable \cite{Uzan}). The existence of this issue was presciently put forward long ago by G. F. R. Ellis \cite{EllisPadova} and explored in depth by T. Buchert\cite{DarkBuchert, BuchertSyksy} (there is a vast literature on the subject, see\cite{Ellis2} for a review). Ellis suggested that non--perturbative inhomogeneities can generate backreaction effects due to the nonlinear nature of Einstein equations. These effects can contribute to the overall FLRW dynamics by modifying the cosmological parameters and even motivate a shift in perspective on the nature of dark energy\cite{DarkBuchert, Carfora0, Carfora1}. The lively debate raised by backreaction\cite{BuchertSyksy, Ellis2, Buchert, Wald1, Wald2, Wald3}, and an in--depth analysis of nonlinear FLRW perturbative effects\cite{Durrer, Fanizza, Heinesen, Obinna4} showed that backreaction could affect cosmological parameters up to the percent level, and it has a role in high--precision cosmology. But what about the impact of inhomogeneities \cite{Durrer} and backreaction on dark energy? Evidence in this sense is not conclusive. Perturbative results and averaging techniques leave the solution of the problem in an unclear state. Observations indicate that the expansion of the Universe appears to have started accelerating at the same epoch when complex, nonlinear structures emerged, a dynamical evolution that is technically challenging to control with the available backreaction and perturbative methods.
Recent observational analyses further indicate that dark energy might not be a strictly constant component. In particular, the recent DESI \cite{DESI1} and DES--SN \cite{DESI2} data have reported evidence that the dark energy equation of state parameter $w(z)$ may evolve with redshift, suggesting a phantom crossing at low redshift $z$. This renewed observational trend reopens the debate on the dynamical nature of dark energy and on the potential role of inhomogeneities in shaping its apparent evolution, highlighting the need for non--perturbative approaches able to connect structure formation and the effective cosmological constant. In this complex landscape, the primary motivation for our paper is to illustrate a new non--perturbative approach and a result that may significantly impact understanding the relationship between dark energy, as described by the FLRW cosmological constant $\widehat{\Lambda}^{(FLRW)}$, and structure formation in the late--epoch Universe.

\begin{appendix*}
The results discussed in our paper rely on the thorough characterization, in the non--smooth setting induced by strong gravitational clustering, of the properties of the observer's celestial spheres and their associated past light cones. The natural mathematical framework that is invoked in such cases is that of Lipschitz analysis and Hausdorff measures. The corresponding definitions of the area and angular diameter distances, the study of their non--perturbative fluctuations, and finally, the comparison with the idealized FLRW point of view all require a mathematical sophistication that is hardly discussed in the current cosmology literature. Thus, for the reader's convenience, the paper is accompanied by a lengthy and detailed Appendix, which collects, in a pedagogical form, definitions, and proofs of the more technical aspects of our results. The text explicitly references this material by using the header "Appendix" followed by a section or equation number. 
\end{appendix*}

\section{Motivation and a new best fit FLRW strategy} 
 Let us recall (see Appendix \ref{capsule} for notation) that with the FLRW observer, we 	can associate  the  dimensionless density parameters (we use units with $G=1=c$)
\begin{equation} 
\label{cosmoparameters}
\widehat\Omega_m(\widehat{z})\,:=\,\frac{8\pi\,\widehat\rho(\widehat{z})}{3\,H^2(\widehat{z})},\;\;\;\widehat\Omega_\Lambda^{(FLRW)}(\widehat{z})\,:=\,\frac{\widehat{\Lambda}^{(FLRW)}}{3\,H^2(\widehat{z})},\;\;\; \widehat\Omega_k(\widehat{z})\,:=\,-\,\frac{k}{a^2(\tau)\,H^2(\widehat{z})},
\end{equation}
respectively associated, at the given redshift $\widehat{z}$, with the FLRW matter density 
$\widehat\rho$, cosmological constant $\widehat{\Lambda}^{(FLRW)}$, and spatial curvature $k/a^2(\tau)$ of the comoving slicing, where $a(\tau)$ is the time--dependent FLRW scale factor, and $k$ is the spatial curvature constant. At the present epoch, observations (from the CMB data) are consistent with spatial near 
flatness, so the FLRW observer can set $\widehat\Omega_k\,\simeq\, 0$ (up to local perturbative curvature fluctuations), and reduce the curvature Friedmann equation (in the form of the Bahcall cosmic triangle\cite{Bahcall}) to the relation   
\begin{equation}
\label{triangleB}
\widehat\Omega_m\,+\,\widehat\Omega_\Lambda\,\simeq\,1\,,
\end{equation}
where the cosmological density parameters refer to the present--day values. The evidence for an accelerating Universe derived from supernova data\cite{Riess, Garnavich, Perlmutter}, and supernovae SNeIa compilations, (\emph{e.g.}, Pantheon\cite{Scolnic}) provide the best--fit values, $\widehat\Omega_m\,\simeq\,0.3$ and $\widehat\Omega_\Lambda\,\simeq\,0.7$, indicating that the cosmological constant interpretation of the dark energy is a good match to existing data from the point of view of the FLRW observer (for an assumed spatially flat $\Lambda\mathrm{CDM}$ model). It is worthwhile to stress that whereas $\widehat\Omega_m$ represents the contribution to the energy density of all matter and fields present (but not from $\Lambda^{(FLRW)}$), the interpretation of the present--day cosmological constant density $\widehat\Omega_\Lambda$ is typically, but not necessarily, associated with \emph{constant} dark energy. Attempts to establish the presence of a partial contribution due to density fluctuations generated by the presence of late--epoch structures (\emph{i.e.}, a geometry--structure--growth contribution of the type often used in cosmological model building in the presence of structure formations \cite{Huterer}) have been confined to the role of a \emph{nuisance parameter} responsible for percent level corrections. Yet, a careful geometrical analysis of the interplay between  
the reference role of the FLRW observer and the actual physical observer shows that the overall picture is more complex, leaving room for an explanation of the $\Lambda\mathrm{CDM}$  accelerating cosmic expansion as driven by an effective scalar field connected with the geometry of the physical past light cone as described by the reference FLRW observer. To enforce this non--perturbative FLRW description, we prove that the past light cones of the FLRW observer and the physical observer can be optimally compared at a given scale. The comparison is a consequence of a rigorous geometrical result\cite{CarFam} that allows the FLRW observer to evaluate a measurable scale--dependent functional distance between the celestial sphere of her idealized past light cone and the celestial sphere of the physical past light cone\footnote{This functional and its properties are discussed in full detail in Appendix \ref{TSKSM} ff.}. This functional is related to harmonic map theory, and it can be expressed in terms of the mean--square quadratic fluctuations of the physical area distance with respect to the reference FLRW area distance. It can be 	optimized with respect to the choice of the FLRW observer and the standard candles used to set the given observational scale, providing a scale--dependent \emph{best fit} between the physical and the FLRW past lightcones. 
 By exploiting the Lorentzian version of the classical Bertrand--Puiseaux formulas\cite{Berthiere} (familiar in Riemannian geometry when defining curvatures in terms of the surface areas and volumes of geodesic balls--see (\ref{Areaasymp1})), we show that the distance functional can be directly related to the spacetime scalar curvature, and thus, to the presence of local structures at a given FLRW redshift $\widehat{z}$. This relation provides a non--perturbative fractional contribution, generated by the presence of structures, that spits the  FLRW cosmological constant density $\widehat\Omega_\Lambda$. This splitting complies with the Bahcall triangle relation (\ref{triangleB}) and does not alter the present--day estimated values of the cosmological parameters (\ref{cosmoparameters}). Yet, it erodes the strict correlation between $\Lambda^{{(FLRW)}}$ and the interpretation of dark energy as a constant field since the fractional contribution to $\widehat\Omega_\Lambda$ due to structure formation becomes significantly large at very low $\widehat{z}$, where structures dominate the cosmological landscape.
We discuss what happens at the pivotal scale where cosmological expansion decouples from the local virialized dynamics of gravitational structures when we approach, for decreasing values of $\widehat{z}$, the outskirts of galaxy clusters. In this case, the fractional contribution to ${\widehat\Omega_\Lambda}$ is $\mathcal{O}(1)$, suggesting a significant role of the geometry--structure--growth term in the interpretation of dark energy as a dynamical field.

\section{A tale of two observers} 
According to the best--fit point of view advocated long ago\cite{EllisStoeger} by G. F. R. Ellis and W. R. Stoeger, the actual Universe should be modeled by a physical spacetime manifold $M$ with a Lorentzian metric $g$ that is statistically isotropic and homogeneous, \emph{viz.}\, a perturbed FLRW,  only on scales larger than the scale $100\mathrm{h}^{-1}\; \mathrm{Mpc}$ that marks the transition to homogeneity\cite{Gott, Hogg, Scrimgeour}.
More explicitly (see Definition \ref{PhenDefin} in Appendix \ref{avatar}), we  assume that $(M, g)$  is associated with the evolution of a universe that is (statistically) isotropic and homogeneous on sufficiently large scales
 $L\,>\,L_0\, \cong\, 100h^{-1}\; \mathrm{Mpc}$ and let local inhomogeneities dominate for $L\,<\,L_0$.    We also assume that in $(M, g)$, the motion of the cosmic structures characterizes a \emph{phenomenological Hubble flow} that generates a family of preferred geodesic worldlines $\mathbb{R}_{>0}\,\ni\tau\,\longmapsto\,\gamma_s(\tau)\,\in\,(M, g)$   parametrized by proper time $\tau$, and labeled by suitable comoving (Lagrangian) coordinates  $s$ adapted to the flow.
 We denote by $\dot{\gamma}_s\,:=\,\frac{d\gamma_s(\tau)}{d\tau}$,\,\,$g(\dot\gamma_s,  \dot\gamma_s)\,=\,-1$, the corresponding  $4$--velocity field.  
Since in our analysis, we fix our attention on a given observer, we drop the subscript $s$  and describe a finite portion of the observer's worldline with the timelike geodesic segment  $\tau\,\longmapsto\,\gamma(\tau)$,\,$-\delta<\tau<\delta$, \, for some $\delta>0$, \,\, where  $p\,:=\,\gamma(\tau=0)$ is the selected event corresponding to which the cosmological data are gathered. To organize and describe these data in the local rest frame of the observer $p\,:=\,\gamma(\tau=0)$, let $\left(T_p M,\,g_p,\,\{E_{(i)}\}\right)$ be the tangent space to $M$ at $p$ endowed with a $g$--orthonormal frame $\{E_{(i)}\}_{i=1,\ldots,4}$,\,\,$g_p\left(E_{(i)}, E_{(k)}\right)=\eta_{ik}$, where $\eta_{ik}$ is the Minkowski metric, and where we identify $E_{(4)}$ with the observer $4$--velocity $\dot{\gamma}(\tau)|_{\tau=0}$. It is worthwhile to stress that in the  Ellis and Stoeger paper, three possible best--fitting methods were qualitatively described, one of which involves fitting the null data, the subject of the current article. Attempts to fit the null data using observable coordinates are far from being new, and \cite{EllisPhRep} can still be considered as a foundational paper shaping the coming era of high--precision cosmology and promoting interest in the structure of our past light cone\cite{Fanizza}, \cite{VanElst}, \cite{Veneziano}, \cite{Adamek}. In particular, to whatever degree one is willing to accept the foundational role of FLRW perturbation theory in the $\Lambda\mathrm{CDM}$ model, we must face the fact that  
the analysis of the past light cone data and the best--fit point of view emphasize that only on large scales\footnote{It is important to stress that a large--scale reference
model may not even be a solution of General Relativity. An example is the
spacetime--averaged light cone that can be (almost) isotropic when averaged over
large scales, while not being dynamically characterized by the Friedmannian scale
factor \cite{VanElst}.} the physical spacetime $(M, g)$ can be (perturbatively) described in terms of a background FLRW spacetime $(M, \widehat{g})$. On smaller scales and late--epoch, the description of $(M, g)$ is more complex  since, according to the Einstein equations, the spacetime geometry must comply with the network of structures that dominate the cosmological landscape for  
$\lesssim\,100\mathrm{h}^{-1}\; \mathrm{Mpc}$. In this pre--homogeneity region, the typical observer is associated with an event $p$, the \emph{here and now} corresponding to which she gathers cosmological data. In practice, the \emph{here} can be identified with the barycenter of our small cluster of galaxies (the Local Group), and the \emph{now} is characterized by the actual temperature $T_{CMB}\,=\,2.725$ of the cosmic microwave background (CMB) as measured in the frame centered on us but stationary with respect to the CMB.   The interpretation of data in the pre--homogeneity region surrounding us is complex since the FLRW model should be used only over the largest scales. Yet, FLRW observers and the associated FLRW spacetime $(M, \widehat{g})$ have a natural role also in the presence of inhomogeneities as advocated by E. Kolb, V. Marra, and S. Matarrese in\cite{KolbMarraMatarrese}, and nicely put into work in their inspiring paper \cite{KolbMarraMatarrese2}.
Thus, together with the physical observer, it is natural to consider her Friedmannian avatar, a reference FLRW observer located at the same observational event $p$ and possibly moving, with respect to the physical observer, with a relative velocity $v(p)$. The two (instantaneous) observers may compare the geometry that astrophysical data induce on the celestial spheres associated with the physical past light cone, $\mathcal{C}^-(p, g)$, and the FLRW past light cone, $\mathcal{C}^-(p, \widehat{g})$. It is important to stress that this FLRW instantaneous observer  strictly adopts  the weak cosmological principle, a statistical rendering of the Copernican point of view  originated in the work of Jerzy Neyman\cite{Neyman} and emphasized by 
P.J.E. Peebles\cite{Peebles}.   In particular, in her reference role, she interprets the dynamics of the Universe as a realization of a homogeneous and isotropic random process on the FLRW background \cite {Peebles}, where the Binomial, Poisson, and Gaussian distributions govern fluctuations. This idealized point of view\cite{SEH} is clearly at variance with respect to the skewed distributions of sources expected to dominate in the pre--homogeneity region\cite{Adamek}, \cite{Fanizza2} and which offer a perspective that may significantly deviate from the reference FLRW picture, a further reason to introduce and keep distinct the roles of the physical observer and her FLRW avatar, and of their associated celestial spheres at a given redshift\footnote{These celestial spheres are described in detail in Appendices \ref{CelSphs} and \ref{skysezioni}.}. It is also important to note that the distance--redshift relation is complicated to control in terms of the physical redshift $z$ in the pre--homogeneity region. In this region, it is challenging to characterize a sensible notion of observed Hubble flow, and we must resort to the FLRW redshift $\widehat{z}$ and to the introduction of the peculiar velocities with respect to the underlying reference FLRW Hubble flow to parametrize the physical redshift $z$ of the sources. We emphasize that we will consider the FLRW observer in her FLRW reference frame modeling role, as is typically done in the $\Lambda\mathrm{CDM}$ model, where the weak cosmological principle is enforced at every scale to exploit FLRW perturbation theory. Thus, at very low redshift, where the FLRW model is problematic, the FLRW observer is confined to the reference role of providing a background against which we can compare the more realistic modeling described by the physical observer. These remarks on the FLRW observer's reference role in the pre--homogeneity region are critical to ensuring the nature of our result is understood.

To put the FLRW reference role described above into practice, we need, even at this expository level, some notation whose details will be streamlined in Appendices \ref{CelSphs} and \ref{skysezioni}. We begin by introducing the past null cone in the observer's local rest frame $(T_pM,\,\{E_{(i)}\})$, 
\begin{align}
\label{MinkLightCone}
C^-\left(T_pM,\,\{E_{(i)}\} \right)\,&:=\,\left\{X\,=\,\mathbb{X}^iE_{(i)}\,\not=\,0\,\in\,T_pM\,\,|\right.\\
&\left.\,g_p(X, X)\,=\,0,\,\,\mathbb{X}^4+r=0 \right\}\;,\nonumber
\end{align}
where 
$r:=(\sum_{a=1}^3(\mathbb{X}^a)^2)^{1/2}$ is the  comoving radius in  the hyperplane 
$\mathbb{X}^4\,=\,0\,\subset\,T_pM$. Let us stress that $C^-\left(T_pM,\,\{E_{(i)}\} \right)$ is not to be confused with the past lightcone $\mathcal{C}^-(p,g)$ with vertex at $p$, the set of all events $q\in (M, g)$ that can be reached along the past--pointing null geodesics issued from $p$, \emph{viz.}
\begin{equation}
\mathscr{C}^-(p,g)\,:=\,\exp_p\left[C^-\left(T_pM, g_p \right)\right]\,,
\end{equation}
where
\begin{equation}
\label{expmapdef0}
\left(T_pM, g_p \right)\,\ni\,X\,\longmapsto\,exp_p\,(X)\,:=\,\lambda_X(1)\,\in\,(M, g)
\end{equation}
is the exponential map at the observation event $p$, and where
$\lambda_X\,:\,I_W\,\longrightarrow\,(M, g)$, for some maximal interval $I_W\subseteq\mathbb{R}_{\geq0}$,  is the past--directed causal geodesic emanating from the point $p$ with initial tangent vector $\dot{\lambda}_X(0)\,=\, X\in T_pM$. We define in a  similar way the above quantities for the reference FLRW spacetime $(M, \widehat{g})$ and respectively denote by $\widehat{C}^-(T_pM,\,\{\widehat{E}_{(a)}\})$,\,\,$\widehat{r}$,\,
$\widehat{\mathscr{C}}^-(p,\widehat{g})$,\,\, and \, $\widehat{\exp}_p$\, the corresponding FLRW past null cone, comoving radius, past lightcone, and exponential map. 
Let $\widehat{\lambda}(\widehat{q})\,=\,c/\widehat{\nu}(\widehat{q})$ denote the wavelength of a signal emitted, at the event $\widehat{q}\in\,\widehat{\mathscr{C}}^-(p,\widehat{g})$,  by a source instantaneously at rest at $q$ with respect to an FLRW fundamental observer $(\widehat{q},\,\widehat{\dot{\gamma}}(\widehat{q}))$. If  $\widehat{\lambda}(p)\,=\,c/\widehat{\nu}(p)$ is the corresponding  wavelength received by the FLRW observer $(p,\,\widehat{\dot{\gamma}}(p))$ at $p$, we get the familiar expression of the FLRW cosmological redshift $\widehat{z}(\widehat{q})\,:=\,(\widehat{\lambda}(p)\,-\,\widehat{\lambda}(\widehat{q}))/\widehat{\lambda}(\widehat{q})$ of the emitting source $(\widehat{q},\,\dot{\gamma}(\widehat{q}))$.
We can measure redshifts accurately even for extremely faint galaxies, and in a regime where one has a linear Hubble (cor)relation between expansion velocity and distance, redshifts can be used as a distance indicator. More generally, we have an explicit relation between the comoving radius $\widehat{r}$ (which is not an observable quantity), the cosmological redshift $\widehat{z}$, and the (FLRW) Hubble parameter $H(\widehat{z})$, given by $\widehat{r}(\widehat{z})\,=\,\int_0^{\widehat{z}}\,\frac{d\widehat{z}\,'}{H(\widehat{z}\,')}$.
 The use of redshift by the FLRW observer as a distance indicator of sorts, implicit in the above remarks, is quite delicate for the physical observer $(p, \dot\gamma(0))$ since the knowledge of the corresponding distance--redshift relation is tantamount knowing the actual cosmological geometry in the pre--homogeneity region. In this region, the motion of cosmic structures hardly follows a linear Hubble relation, as galaxies gravitationally interact with each other. Their motion is characterized by peculiar velocities, not due to the expansion of the Universe, which can be\cite{TMDavies} of the order of $100--300\,\mathrm{Km}/\mathrm{sec}$, a significant fraction of the recession velocity over the scales that characterize the pre--homogeneity region. This fraction becomes less significant for distant galaxies, and one typically assumes that peculiar velocities can be ignored on the largest scales, where homogeneity prevails. Conversely, if 
$\widehat{z}\,(c)$ denote the reference FLRW redshift marking the decoupling of the phenomenological Hubble flow from the cosmological expansion (\emph{i.e.}, the redshift formally associated with the observer's cluster radius), then peculiar velocities becomes dominant. According to this latter remark and the FLRW best--fit strategy, it makes sense, as detailed in Appendix \ref{RefCosmRed} (see Definition \ref{RefRed}), to use the candidate FLRW spacetime as providing the cosmological expansion reference standard in the pre--homogeneity region only for $\widehat{z}\,>\,\widehat{z}\,(c)$. Explicitly, if $v_{\hat{z}}$ is the relative velocity of the physical observer $(p, \dot\gamma(0))$ with respect to the FLRW observer  $(p, \widehat{\dot\gamma}(0))$ and 
if ${z}(q):=\,\tfrac{{\lambda}(p)\,-\,{\lambda}({q})}{{\lambda}({q})}$ denote the observed redshift of a signal emitted at the event ${q}\in\,{\mathscr{C}}^-(p,{g})$ by a source in the physical spacetime $(M, g,\,\dot\gamma(\tau))$, then we can factorize, for $\widehat{z}\,>\,\widehat{z}\,(c)$,  the physical redshift in the pre--homogeneity region  according to 
\begin{equation}
\label{zetafactor0}
\left(1\,+\,{z}(q)\right)\,=\,\sqrt{\frac{1+v_{\hat{z}}}{1-v_{\hat{z}}}}\left(1\,+\,\widehat{z}(q)-\widehat{z}\,(c)\right)\left(1+\widehat{z}^{(doppl)}(q)\right)\left(1+\widehat{z}^{(grav)}(q)\right)\,,
\end{equation}
where $\widehat{z}(q)$ is the FLRW reference cosmological redshift of $q$, \; the redshift $\widehat{z}^{(doppl)}(q)$ comes from the peculiar velocity of the given source $q$ with respect to the reference FLRW Hubble flow;\; finally, the redshift $\widehat{z}^{(grav)}(q)$ comes from the presence of strong local gravitational fields affecting null geodesics propagation from $q$ to $p$. 
Note that these peculiar velocities not only comprise the  CMB dipole anisotropy induced by our solar system motion\footnote{This motion is characterized by the velocity  $\simeq 370\; km\,s^{-1}$ along the direction defined by the celestial coordinates $\mathrm{RA}\,=\,168^{\circ}$, and $\mathrm{Dec}\,=\, -7^{\circ}$.}, a contribution that can be easily factorized out in (\ref{zetafactor0}),  but also the dipole--like effect due to the peculiar motions of local galaxies and cluster of galaxies in the pre--homogeneity zone. These are the only terms that characterize ${z}(q)$ when the source is within the cosmological decoupled region. 
It is not easy to resolve these latter contributions in (\ref{zetafactor0}), even more so if these motions are generated by local isotropic over--densities or under--densities in the pre--homogeneity region or by the coherent motion of sources attracted by nearby large--scale structures.

The parametrization (\ref{zetafactor0}) is a delicate issue in any best--fit strategy. To take care of it,  we fix, for $\widehat{z}\,>\,\widehat{z}\,(c)$,  the reference FLRW redshift $\widehat{z}$  and adjust the relative velocity $v_{\hat{z}}$ (and the associated spatial orientation) of the physical $(p,\,{\dot{\gamma}}(p),\,\{{E}_{(\kappa)}\}_{\kappa=1,2,3})$ and of the FLRW observer $(p,\,\widehat{\dot{\gamma}}(p),\,\{\widehat{E}_{(\kappa)}\}_{\kappa=1,2,3})$ at $p$ in such a way to optimize, in a sense that we discuss in the following sections, the correspondence between the reference FLRW lightcone $\widehat{\mathscr{C}}^-(p,\widehat{g})$ and the physical lightcone ${\mathscr{C}}^-(p,{g})$ at the FLRW reference scale set by the chosen $\widehat{z}$.

\section{Celestial spheres and celestial cartography} 
The celestial sphere at redshift $\widehat{z}$ associated with the FLRW observer
$(p,\, \widehat{\dot\gamma}(0),\,\{\widehat{E}_{(\kappa)}\}_{\kappa=1,2,3})$ is defined by the $2$-sphere  (see Appendix \ref{CelSphs})
\begin{equation}
\label{celestialS00}
\widehat{\mathbb{CS}}_{\hat{z}}\,:=\,\left\{\widehat{X}\,=\,\widehat{\mathbb{X}}^i\widehat{E}_{(i)}\,\not=\,0\,\in\,T_pM\,\,|\,\,\widehat{\mathbb{X}}^4=0,\,\,\sum_{a=1}^3(\widehat{\mathbb{X}}^a)^2=\,\widehat{r}^2(\widehat{z}) \right\}\,,
\end{equation}
characterizing  the field of vision of the observer $(p,\, \widehat{\dot\gamma}(0),\,\{\widehat{E}_{(\kappa)}\}_{\kappa=1,2,3})$ at the given cosmological redshift $\widehat{z}$. In the sense described by R. Penrose  \cite{RindPen},  this is a representation of 
the abstract 2--sphere $\mathcal{S}^-(p)$ of past null directions parameterizing, in the reference FLRW spacetime $(M, \widehat{g})$, the past--directed null geodesics through $p$. In order to define a corresponding celestial sphere\; ${\mathbb{CS}}_{\hat{z}}$ associated with the physical observer $(p,\, {\dot\gamma}(0),\,\{{E}_{(\kappa)}\}_{\kappa=1,2,3})$ who wish to use the given  FLRW redshift $\widehat{z}$\, as a reference, let
$v_{\hat{z}}$ denote the relative $3$--velocity of the physical $(p,\,{\dot{\gamma}}(p))$ and of the FLRW observer $(p,\,\widehat{\dot{\gamma}}(p))$. This relative velocity implies that the celestial spheres $\widehat{\mathbb{CS}}_{\hat{z}}$ and ${\mathbb{CS}}_{\hat{z}}$ are related by  
 a Lorentz transformation that we can realize as a M\"obius transformation $\zeta_{(\widehat{z})}$ in the projective special linear group $\mathrm{PSL}(2,\mathbb{C})$ (the group of automorphisms of the Riemann sphere $\mathbb{S}^2\simeq\,\mathbb{C}\,\cup\,\{\infty\}$). Explicitly, we have (see Proposition \ref{PSL2mappingFLRW} in Appendix \ref{PSELLEDUE})
\begin{equation}
\label{pesse2}
\mathrm{PSL}(2, \mathbb{C})\times\widehat{\mathbb{C\,S}}_{\hat{z}}\,\ni\,\left( \zeta_{(\widehat{z})},\,\widehat{w}\right)\,\longmapsto\,\zeta_{(\widehat{z})}(\widehat{w})\,=
\,w\,=\,\sqrt{\frac{1\,+\,v_{\hat{z}}}{1\,-\,v_{\hat{z}}}}\,e^{i\,\alpha_{\hat{z}}}\,\widehat{w}\,\in\,{\mathbb{C\,S}}_{\hat{z}}\,,
\end{equation}
where $e^{i\,\alpha_{\,\widehat{z}}}$ is a rotation through an angle $\alpha$ about the $E^3$ direction, and $\widehat{w}\,:=\,(\widehat{\mathbb{X}}^1+i\,\widehat{\mathbb{X}}^2)/(1-\widehat{\mathbb{X}}^3)$ are directional coordinates of the generic point on the FLRW celestial sphere $\widehat{\mathbb{C\,S}}$.   The corresponding directional coordinates on the physical celestial sphere 
${\mathbb{C\,S}}_{\,\widehat{z}}$ are similarly defined by 
$\zeta_{(\widehat{z})}(\widehat{w})\,=\,
{w}\,:=\,({\mathbb{X}}^1+i\,{\mathbb{X}}^2)/(1-{\mathbb{X}}^3)$.
From the physical point of view, (\ref{pesse2}) corresponds to the composition 
of  the relative orientation of the spatial bases $\{{E}_{(\alpha)}\}$ with respect to the FLRW  $\{\widehat{{E}}_{(\alpha)}\}$,\;$\alpha=1,2,3$,\; followed by
a Lorentz boost with rapidity $\beta_{\hat{z}}\,:=\,\log\,\sqrt{\frac{1\,+\,v_{\hat{z}}}{1\,-\,v_{\hat{z}}}}$ along the $E^3$ axis, adjusting for the relative velocity of $(p, \dot{\gamma}(0))$ with respect to the FLRW $(p, \widehat{\dot{\gamma}}(0))$. The map (\ref{pesse2}) is fully 
determined by the images, on $\mathbb{C\,S}$, of three distinct astrophysical sources of choice (\emph{e.g.} Cepheid variable stars) on the FLRW celestial sphere $\widehat{\mathbb{C\,S}}$. This remark is simply the statement that, by adjusting her velocity and orientation, the FLRW observer can associate three specific positions $\{\widehat{w}_{(j)}\}$,\,\,$j=1,2,3$,\, on her reference $\widehat{\mathbb{C\,S}}$ with three (or less) given sources on the physical celestial sphere $\mathbb{C\,S}$. 
The aberration map (\ref{pesse2}) allows us to define, at the reference redshift $\widehat{z}$, the physical celestial sphere ${\mathbb{C\,S}}_{\,\widehat{z}}$ in the instantaneous rest space of the physical observer, $(T_pM,\,\{E_{(i)}\})$ as the image,
${\mathbb{C\,S}}_{\hat{z}}\,=\,\zeta_{(\widehat{z})}\,\left(\widehat{\mathbb{C\,S}}_{\hat{z}}\right)$, under the local aberration map $\zeta_{(\widehat{z})}$ of the FLRW celestial sphere $\widehat{\mathbb{C\,S}}_{\hat{z}}$, \emph{i.e.},
\begin{equation}
\label{celestialS0}
{\mathbb{CS}}_{\hat{z}}\,:=\,\left\{X\,=\,\mathbb{X}^iE_{(i)}\,\not=\,0\,\in\,T_pM\,\,|\,\,\mathbb{X}^4=0,\,\,\sum_{a=1}^3(\mathbb{X}^a)^2= 
\left(\frac{1\,+\,v_{\hat{z}}}{1\,-\,v_{\hat{z}}}\right)\,\widehat{r}^2(\widehat{z})
\right\}\,.
\end{equation}
Note that $\sqrt{(1\,+\,v_{\hat{z}})/(1\,-\,v_{\hat{z}})}\,\widehat{r}(\widehat{z})$ characterizes the ($v_{\hat{z}}$--Lorentz boosted) reference comoving radius ${r}(\widehat{z})$ adopted by the physical observer using the reference FLRW redshift $\widehat{z}$. The images of $\mathbb{CS}_{\hat{z}}$ and $\widehat{\mathbb{CS}}_{\hat{z}}$ under the corresponding exponential maps define the physical and FLRW cosmological sky sections\footnote{The geometry of the sky sections and their properties are discussed in Appendix \ref{skysezioni}.}   
\begin{equation}
\label{sigmapr0}
\Sigma_{\hat{z}}\,:=\,\exp_p\left[\mathbb{C\,S}_{\hat{z}} \right]\,\subset\,\mathscr{C}^-(p,g),\;\;\;\;\;
\widehat{\Sigma}_{\hat{z}}\,:=\,\widehat{\exp}_p\left[\widehat{\mathbb{C\,S}}_{\hat{z}} \right]\,\subset\,\widehat{\mathscr{C}}^-(p,\widehat{g})\,.
\end{equation}
These surfaces are orthogonal to the null geodesics generators of the past lightcones $\mathscr{C}^-(p,g)$ and 
$\widehat{\mathscr{C}}^-(p,\widehat{g})$. They carry the respective 2--dimensional metrics induced by the corresponding lightcones, and we write $(\Sigma_{\hat{z}},\,g^{(2)}_{\hat{z}})$ and $(\widehat{\Sigma}_{\hat{z}},\,\widehat{g}^{(2)}_{\hat{z}})$.
As  ${\mathbb{C\,S}}_{\hat{z}}$ and 
$\widehat{\mathbb{C\,S}}_{\hat{z}}$ vary on the associated null cones 
$C^-(T_pM, \{{E}_{(i)}\})$ and $C^-(T_pM, \{\widehat{E}_{(i)}\})$, the corresponding sky sections $(\Sigma_{\hat{z}},\,g^{(2)}_{\hat{z}})$ and $(\widehat{\Sigma}_{\hat{z}},\,\widehat{g}^{(2)}_{\hat{z}})$ foliate the physical and the FLRW past lightcones. Our best--fit strategy is to carefully exploit these past null cones and lightcone foliations to probe and compare the physical and reference FLRW spacetimes in the pre--homogeneity region. In this connection, it is important to stress that besides providing the apparent location of sources in the observer's sky at the given FLRW  redshift $\widehat{z}$, the celestial spheres ${\mathbb{C\,S}}_{\hat{z}}$ and $\widehat{\mathbb{C\,S}}_{\hat{z}}$ can provide a more accurate rendering then simple directional cartography.

\section{The Method: Comparing celestial spheres} 

Data from astrophysical sources at a given reference FLRW redshift $\widehat{z}$ 
reach the physical observer and are portrayed on her celestial sphere $\mathbb{CS}_{\hat{z}}$, by traveling along null geodesics. Physical null geodesics are not smooth since they may develop conjugate and cut locus points with the ensuing formations of past light cone caustics; see Proposition \ref{PropPhysLightCone}, Appendix \ref{LipNatSky}. The presence of multiple images of the same astrophysical source on the observer's celestial sphere, gravitational lensing\cite{Perlick}, and significant image distortions offer a dramatic demonstration of this behavior 
and also show that data gathered on $\mathbb{CS}_{\hat{z}}$ provide explicit geometric information.
These data are encoded on the physical sky section $(\Sigma_{\hat{z}},\,g^{(2)}_{\hat{z}})$, and are transferred on the physical observer's celestial sphere $\mathbb{CS}_{\hat{z}}$ by the pull--back action associated with the exponential map $\exp_p$. As stressed above, $\exp_p$ is not a smooth map; magnification,  distortions, and multiple imaging are the rule rather than the exception, and the information null geodesics convey to the physical celestial sphere $\mathbb{CS}_{\hat{z}}$ is quite corrupted by the presence of the gravitational inhomogeneities that they encounter evolving from the astrophysical sources to the observer located at $p$. Since we are not interested in the details and nature of the specific strong lensing event, one can safely make the working assumption (see \ref{Lipassumpt1}) that the physical past--directed null geodesics are bi--Lipschitz. Roughly speaking, this technical assumption sacrifices the safe harbor of smoothness in favor of more effective control of how null geodesics handle the distance and area measures\footnote{The Lipschitz regularity for the null geodesics flow is better suited to defining the tools of the trade of observational cosmology: The angular diameter distance, area distance, and luminosity distance are all naturally defined on a Lipschitz lightcone in contrast to their smooth definition that forces them diverge in the presence of caustics.}. 
In such an extended setting, we can define (see Proposition \ref{lemmaDiffF} ff.) the observable two--dimensional metric $h_{(\widehat{z})}$ on the physical celestial sphere $\mathbb{CS}_{\hat{z}}$ according to
$h_{(\widehat{z})}:=\exp_p^*\,g^{(2)}_{\hat{z}}$, where $\exp_p^*$ denotes the pull--back action associated with the (null) exponential map $\exp_p$. It is important to stress that $h_{(\widehat{z})}$  coexists with the directional angular metric  $\widetilde{h}(\mathbb{S}^2):=\,d\theta^2+\sin^2\theta\,d\phi^2$ naturally defined on $\mathbb{CS}_z$. As the FLRW redshift $\widehat{z}$ varies, the set of physical celestial spheres, decorated with both the physical and the directional metrics $\{(\mathbb{CS}_{\hat{z}},\,h_{(\widehat{z})},\,\widetilde{h}(\mathbb{S}^2))\}$,  
allows us to track down the actual spacetime geometry along the past light cone $\widehat{\mathscr{C}}^-(p,\widehat{g})$ and of the null geodesics distortion effect due to inhomogeneities. 

Astrophysical data from the FLRW sky section $(\widehat{\Sigma}_{\hat{z}},\,\widehat{g}^{(2)}_{\hat{z}})$ are similarly gathered by the reference FLRW observer on her celestial sphere 
$\widehat{\mathbb{CS}}_{\hat{z}}$. The corresponding exponential map $\widehat{\exp}_p$ has the smoothness (a diffeomorphism) induced by the FLRW null geodesics flow. The associated pull--back action induces a metric $\widehat{h}_{(\widehat{z})}\,:=\,\widehat{\exp}_p^*\,\widehat{g}^{\,(2)}_{\hat{z}}$ that gives the FLRW celestial sphere $(\widehat{\mathbb{CS}}_{\hat{z}},\,\widehat{h}_{(\widehat{z})})$ the ($\widehat{z}$-- scaled) round geometry of a sphere. Whereas the celestial sphere  $(\mathbb{CS}_z,\,h_z)$  has a rather rough landscape, data on $(\widehat{\mathbb{CS}}_{\hat{z}},\,\widehat{h}_{(\widehat{z})})$ are visualized and interpreted in terms of the smooth FLRW spacetime geometry.
At the common observation event $p$, the area elements $d\mu_{{h}_{({z})}}$ and    $d\mu_{\widehat{h}_{(\widehat{z})}}$ of  $(\mathbb{CS}_z,\,h_z)$  and $(\widehat{\mathbb{CS}}_{\hat{z}},\,\widehat{h}_{(\widehat{z})})$ provide the cross--sectional areas of the given source as seen by the physical and the FLRW observers in their local rest space. In particular, if the source subtends the observer's visual solid angle $d\mu_{\mathbb{S}^2}$ on the unit sphere, and if the observers know the intrinsic size of the given source, then the  physical area distance $D_{\hat{z}}\,:=\,{D}(\zeta_{(\widehat{z})}(y))$ and the FLRW area distance $\widehat{D}_{\hat{z}}(y)$,  between the observation event $p$ and the  astrophysical sources located at the celestial coordinates $y$ on $\widehat{\mathbb{CS}}_{\hat{z}}$ and $\zeta_{(\widehat{z})}(y)$ on $\mathbb{CS}_z$, are defined by
$d\mu_{{h}_{({z})}}\,=\,D^2_{\hat{z}}\,d\mu_{\mathbb{S}^2}\,:=\,{D}^2(\zeta_{(\widehat{z})}(y))\,d\mu_{\mathbb{S}^2}$ and 
$d\mu_{\widehat{h}_{(\widehat{z})}}\,=\,\widehat{D}^2_{\hat{z}}(y)\,d\mu_{\mathbb{S}^2}$ (see Appendix \ref{ArDistLipCone}).

Area distances are, in principle, measurable quantities whenever the source is an astrophysical object of known type. In this connection, it is worthwhile to stress that the celestial spheres area elements $d\mu_{{h}_{({z})}}$ and  $d\mu_{\widehat{h}_{(\widehat{z})}}$ are related to the intrinsic size of the source through the mapping action of the null geodesic flow reaching the observer from the source. $D_{\hat{z}}:={D}(\zeta_{(\widehat{z})})$ and  $\widehat{D}_{\hat{z}}$ are distinct from each other. In particular, ${D}(\zeta_{(\widehat{z})})$ has a memory of all the distortions that affect null geodesic propagation in the pre--homogeneity region. For this reason, the geometric landscape offered by the celestial sphere $({\mathbb{CS}}_{{z}},\,{h}_{({z})})$ is quite different from the one described by  
$(\widehat{\mathbb{CS}}_{\hat{z}},\,\widehat{h}_{(\widehat{z})})$, and their comparison requires a rather delicate approach\cite{CarFam}.
 As in cartography, where a smooth geographical globe is compared with a physical 3--dimensional scaled rendering of the actual Earth surface, one can exploit\cite{CarFam} the fact that the underlying geometries are conformally 
related,  
$\zeta_{(_{\hat{z}})}^*\,{{h}_{({z})}}\,=\,\Phi^2(\zeta_{(_{\hat{z}})})\,{\widehat{h}_{(\widehat{z})}}$, where $\zeta_{(_{\hat{z}})}^*$ denotes the action (the pull--back) of the Lorentz aberration map connecting $\widehat{\mathbb{CS}}_{\hat{z}}$ with ${\mathbb{CS}}_{{z}}$. The Poincar\'e--Koebe uniformization theorem for 2--dimensional surfaces is the basic tool that allows us to conformally represent bumpy 2--spheres with a complex pattern of raised relief and large landforms (such as the Earth's surface) over a smooth and round geographical globe. In the same spirit, the conformal factor  $\Phi(\zeta_{(_{\hat{z}})})$ does the job here and keeps track of what happens in the pre--homogeneity region (see Proposition \ref{comp1}). Not surprisingly, it can be related to the area distances according to 
$ \Phi^2(\zeta_{(_{\hat{z}})})\,=\,{D}_{\hat{z}}^2(\zeta_{(_{{z}})})/\widehat{D}_{\hat{z}}^2$, a connection that characterizes\cite{CarFam, CarFam2, Carfora3} the  functional which, at the given reference redshift $\widehat{z}$, allows us to compare the FLRW celestial sphere $\widehat{\mathbb{CS}}_{\hat{z}}$ and the physical celestial sphere $\mathbb{CS}_{z}$,\,\emph{i.e.}  
\begin{equation}
\label{Efunctn}
E_{\widehat{\mathbb{C\,S}}_{\hat{z}},\;{\mathbb{C\,S}}_{z}}[\zeta_{(\widehat{z})}]\,:=\,\int_{\widehat{\mathbb{CS}}_{\hat{z}}}\left({\Phi}(\zeta_{(_{\hat{z}})})\,-\,1 \right)^2\,d\mu_{\hat{h}_{(\widehat{z})}}\,=\,
\int_{\widehat{\mathbb{CS}}_{\hat{z}}}\,
\left[\frac{D_{\hat{z}}(\zeta_{(\widehat{z})})\,-\,\widehat{D}_{\hat{z}}}{\widehat{D}_{\hat{z}}}\right]^2
\,d\mu_{\hat{h}_{(\widehat{z})}}\,,
\end{equation}
(see Appendices \ref{CompFunctRedZ} and \ref{TSKSM}). \, 
This functional belongs to an important class of scale--dependent functionals efficiently used in image visualization\cite{HassKoehl}. 
From a geometric analysis point of view, the (bi)--Lipschitz continuous exponential map $\exp_p$ admits a Sobolev $(1,2)$--norm, requiring  that $\exp_p$ and its first differential be square integrable. This property implies that (\ref{Efunctn}) is well--defined in the assumed bi--Lipschitzian setting here adopted, and that (\ref{Efunctn}) can be minimized (see Theorem \ref{distheorem}  and \cite{CarFam}) with respect to the choice of the aberration map $\zeta_{(_{\hat{z}})}$, providing a scale--dependent  "distance functional" $E_{\widehat{\mathbb{C\,S}}_{\hat{z}},\;{\mathbb{C\,S}}_{z}}\,:=\inf_{\zeta_{(\widehat{z})}}\,E_{\widehat{\mathbb{C\,S}}_{\hat{z}},\;{\mathbb{C\,S}}_{z}}[\zeta_{(\widehat{z})}]$\;
between the FLRW celestial sphere $(\widehat{\mathbb{CS}}_{\hat{z}},\,\widehat{h}_{(\widehat{z})})$  and the physical celestial sphere $({\mathbb{CS}}_{{z}},\,{h}_{({z})})$.  In particular, it vanishes if and only if the two celestial spheres are isometric.

\section{The coupling with spacetime scalar curvature}
In terms of the sky averages of the area distances relative fluctuations,
\begin{equation} 
\delta^{(n)}_{\widehat{\mathbb{C\,S}}_{\hat{z}},\;{\mathbb{C\,S}}_{\hat{z}}}:=\frac{1}{4\pi}
\int_{\mathbb{S}^2}
[\frac{D_{\hat{z}}-\widehat{D}_{\hat{z}}}{\hat{D}_{\hat{z}}}]^n
d\mu_{\mathbb{S}^2}\,,
\end{equation}
with $n=1,2$, the functional $E_{\widehat{\mathbb{C\,S}}_{\hat{z}},\;{\mathbb{C\,S}}_{z}}$ can be equivalently rewritten in adimensional form as\footnote{Relation (\ref{EDelta2}) follows from Lemma \ref{DiArea},\,(\ref{FLRWArea1}), and (\ref{formulDiffD2}), Further properties of  $E_{\widehat{\mathbb{C\,S}}_{\hat{z}},\;{\mathbb{C\,S}}_{z}}$ are discussed in detail in Appendix \ref{CompFunctRedZ}.}      
\begin{equation}
\label{EDelta2}
\frac{E_{\widehat{\mathbb{C\,S}}_{\hat{z}},\;{\mathbb{C\,S}}_{z}}}{4\pi\,\widehat{D}^{2}_{(\widehat{z})}}\,=\,
\delta^{(2)}_{[\widehat{\mathbb{C\,S}}_{\hat{z}},\;{\mathbb{C\,S}}_{\hat{z}}]}\,
\,=\,
\frac{A({\mathbb{C\,S}}_{z})\,-\,A(\widehat{\mathbb{C\,S}}_{\hat{z}})}{4\pi\,\widehat{D}^{2}_{(\widehat{z})}}\,-2\,\delta^{(1)}_{[\widehat{\mathbb{C\,S}}_{\hat{z}},\;{\mathbb{C\,S}}_{\hat{z}}]}\,,
\end{equation} 
where $A(\widehat{\mathbb{C\,S}}_{\hat{z}})$ and $A({\mathbb{C\,S}}_{z})$ are the areas of the celestial spheres  $(\widehat{\mathbb{CS}}_{\hat{z}},\,\widehat{h}_{(\widehat{z})})$ and $({\mathbb{CS}}_{{z}},\,{h}_{({z})})$. Notice that the directional averages 
$\delta^{\;(1)}_{\widehat{\mathbb{C\,S}}_{\hat{z}},\;{\mathbb{C\,S}}_{{z}}}$  and $\delta^{\;(2)}_{\widehat{\mathbb{C\,S}}_{\hat{z}},\;{\mathbb{C\,S}}_{{z}}}$,
 control  the relative fluctuation  $(D_{\hat{z}}-\widehat{D}_{\hat{z}})/\widehat{D}_{\hat{z}}$ of the physical area distance $D_{\hat{z}}:= D_{\hat{z}}(\zeta_{(\widehat{z})})$  with respect to the reference area distance $\widehat{D}_{\hat{z}}$ associated with the FLRW fundamental observers. This fluctuation is characterized by the variance 
\begin{equation}
\label{samplevar}
\frac{1}{4\pi}
\int_{\mathbb{S}^2}
\left[\frac{D_{\hat{z}}-\widehat{D}_{\hat{z}}}{\widehat{D}_{\hat{z}}}\,-\,\delta^{(1)}_{[\widehat{\mathbb{C\,S}}_{\hat{z}},\;{\mathbb{C\,S}}_{\hat{z}}]}\right]^2
d\mu_{\mathbb{S}^2}\,=\,\delta^{(2)}_{[\widehat{\mathbb{C\,S}}_{\hat{z}},\;{\mathbb{C\,S}}_{\hat{z}}]}\,-\,\left(\delta^{(1)}_{[\widehat{\mathbb{C\,S}}_{\hat{z}},\;{\mathbb{C\,S}}_{\hat{z}}]} \right)^2\,,
\end{equation}
from which it immediately follows that $E_{\widehat{\mathbb{C\,S}}_{\hat{z}},\;{\mathbb{C\,S}}_{z}}=0$ if and only if, at the given FLRW redshift\footnote{Recall that the physical redshift $z$ is related to the reference FLRW redshift  $\widehat{z}$ by the $\mathrm{PSL}(2, \mathbb{C})$ transformation (\ref{pesse2}).} $D_{\hat{z}}=\widehat{D}_{\hat{z}}$.
According to the Lorentzian version of the Bertrand--Puiseaux formulas\cite{Berthiere}, these areas are related to the spacetime scalar curvature in the pre--homogeneity region by the asymptotic relation (see Proposition \ref{flucDisCurv}  and  Appendix \ref{ArDisCurvFact})

\begin{equation}
\label{AlmSureDeltasA0}
\frac{\delta^{\;(2)}_{\widehat{\mathbb{CS}}}(\,\widehat{z}\,)\,+\,2\,\delta^{\;(1)}_{\widehat{\mathbb{CS}}}(\,\widehat{z}\,)}{\left(1\,+\,\widehat{z}\,\right)^2\,\widehat{r}^{\,2}(\,\widehat{z}\,)}\,=\,\frac{1}{72}\,\left(\widehat{\mathrm{R}}({p})\,- \,{\mathrm{R}}(p)\,+\,\ldots\right)\,.
\end{equation}
 
\noindent
where $\,\ldots\,$ represents higher order corrections that can be expressed in terms of powers of the spacetime Riemann tensor and its covariant derivatives\footnote{For simplicity, the limit  (\ref{limitMem0}) and the other related expressions that follows are computed assuming that at $p$, the FLRW and the physical observer have the same 4--velocity; The general case,  where the two observers have a non--vanishing relative velocity $v$ 
(see (\ref{pesse2})) is worked out in full detail in the Appendix. For what concerns (\ref{limitMem0}), the general result is discussed in Theorem \ref{Theasympt1}.}.\,
For  $\widehat{z}\rightarrow 0$, we have \cite{CarFam} the limiting behavior
\begin{equation}
\label{limitMem0}
\lim_{\widehat{z}\,\longrightarrow\,0}\,
\frac{(1+q_0)^2\,\left(
\delta^{\;(2)}_{\widehat{\mathbb{C\,S}}_{\hat{z}},\;{\mathbb{C\,S}}_{{z}}}\,+\,2\,\delta^{\;(1)}_{\widehat{\mathbb{C\,S}}_{\hat{z}},\;{\mathbb{C\,S}}_{{z}}}\right)}{\left(1\,+\,\widehat{z}\right)^{\;2}\,\ln^2\left[1\,+\,(1+q_0)\widehat{z}\right]}\,
=\,\frac{\widehat{\mathrm{R}}(p)\,-\,\mathrm{R}(p)}{72\,H_0^2}\,,
\end{equation}  
where $q_0\,\simeq\,-\,0.55$ is the present value of the deceleration parameter, and
where $\mathrm{R}(p)$ and $\widehat{\mathrm{R}}(p)$ respectively are the scalar curvatures of the physical and of the reference FLRW spacetimes (evaluated at the common observation event $p$). As detailed in the Appendix (Theorem \ref{Theasympt1}), this relation holds asymptotically also in a neighborhood of $\widehat{z}=0$, say in the low--moderate redshift range $0<\widehat{z}\lesssim 10^{-2}$. For this reason, even if it would have been safe to employ the standard  Hubble formula  
$(c)\widehat{z}=H_0\,\widehat{r}$, with no role for the deceleration parameter $q_0$ (as in Appendix \ref{GareDisCurv}),\; we have used in (\ref{limitMem0}) the finer approximation of the Hubble parameter provided by $H(\widehat{z})\simeq H_0[1+(1+q_0)\widehat{z}]$.

Relation (\ref{limitMem0}) is a non--perturbative geometrical result that provides a robust correlation between the fluctuations in the area distance of astrophysical sources and scalar curvature fluctuations, directly connecting these distance indicators to Einstein's field equations. 
It is worthwhile to stress that the area (and the associated luminosity) distance fluctuation terms $\delta^{\;(1)}$ and $\delta^{\;(2)}$  often are objects of study in cosmological numerical simulations (see \emph{e.g.}, \cite{Adamek2}). They are relevant indicators in establishing the fluctuation between the physical and the reference FLRW distance--redshift relation, which is essential for establishing evidence of dark energy. Thus, one may wonder if these simulations may provide a direct estimate of the impact that the correlation between curvature and area distance fluctuations described by (\ref{limitMem0}) may have on the description of dark energy in terms of the FLRW cosmological constant $\widehat{\Lambda}^{(FLRW)}$. The smallness of $\delta^{\;(1)}$ and $\delta^{\;(2)}$, often estimated by ray tracing in relativistic N--body simulations, indicates that there is no significant bias of the distance--redshift correlation due to the presence of inhomogeneities\cite{Adamek2}. This result may suggest that (\ref{limitMem0}) and its implications are in contrast with the current framework, according to which late--epoch structures do not significantly affect the values of the cosmological parameters. To avoid this and other related misunderstandings, let us stress that what matters in (\ref{limitMem0})  is the behavior of the ratio (\ref{limitMem0}) in the $\widehat{z}\rightarrow\,0$ limit. This ratio does not alter the value of the cosmological parameters. It is a rigorous geometrical relation strictly related to the Lorentzian version of the Bertrand--Puiseaux formulas, and numerical simulations cannot question its validity. Order of magnitude estimates often assume that $\widehat{\mathrm{R}}(p)\,-\,\mathrm{R}(p)\approx 0$ as a consequence of an old conjecture by S. Weinberg\cite{Weinberg} stating that the areas $A(\mathbb{CS}_z)$ should be almost unaffected by the presence of inhomogeneities. Even if occasionally brought back to life by some authors (see \emph{e.g.}, \cite{Breton1}), this is a wrong conjecture, both from the physical point of view as shown by \cite{EllisBassett,EllisSolomons,EllisDurrer}, as well as from the geometrical point of view (whatever meaning we would like to give to \emph{almost unaffected}) since it underestimates the global consequences that  (Lorentzian) Bertrand--Puiseaux formulas can have. The lesson here is that even tiny fluctuations in curvature can have surprising effects, as illustrated by the example of a closed surface with tiny curvature bumps sparsely distributed throughout. A flatlander cosmologist with faith in a (surface version of a) cosmological principle would conclude that his observations strongly support that the observed portion of the universe is flat, with sparse and very tiny perturbative fluctuations in curvature of no statistical relevance. He does not know the global topology of his model universe because the whole surface is not accessible to his observations. Yet, according to the cosmological principle, he assumes that what he experiences is typical. Thus, from the gathered data and the cosmological principle, he can conclude that the surface universe is (perturbatively) flat. Yet, he cannot reach such a conclusion with the partial data he has. Locally, his analysis of the accessible data is correct. Still, the Gauss--Bonnet theorem demonstrates that the global average of the allegedly irrelevant, tiny curvatures (locally measured by a Bertrand--Puiseaux formula) is not a freely fluctuating quantity that can be easily dismissed since the surface's topology constrains it. The example is particular, and we do not have a Gauss--Bonnet theorem at work in our case; yet, it suggests that dismissing curvature fluctuations based on order--of--magnitude estimates is not a good strategy.

\section{Memories of late epoch inhomogeneities}
 According to the $\Lambda\mathrm{CDM}$ point of view and Neyman's weak cosmological principle \cite{Neyman, Peebles}, structure formation can be interpreted as a realization of a homogeneous and isotropic random process on the FLRW background. In such a scenario, structures are generated by the gravitational clustering of the initially small CMB density fluctuations seeded by a primordial anisotropy power spectrum of inflationary origin. This clustering is typically handled in a linear regime of adiabatic fluctuations coupled with FLRW perturbation theory. However, in the pre--homogeneity region, fluctuation growth breaks down the linear regime, and Gaussian modes interact with each other. As $\widehat{z}\rightarrow 0$, it becomes difficult to obtain a non--perturbative control over gravitational instabilities. Indeed, there is evidence that in such a regime, skewed distributions characterize the pre--homogeneity region \cite {Adamek} and \cite{Fanizza2}. In particular, the area distance fluctuations are distributed in a rather complex and unpredictable way when we approach the critical redshift  $\widehat{z}_{(c)}$, and we enter the cosmological decoupling region where the $\widehat{z}\,\rightarrow\,0$ limit is attained. Yet, the structure of the left member of the (\ref{limitMem0}) holds a strong connection with the moment generating function $\mathbb{M}_{N(m,\sigma^2)}$ associated with a normal distribution of the area distance fluctuations, with mean $m\,:=\,\delta^{\;(1)}_{\widehat{\mathbb{CS}}}(\,\widehat{z})$ and variance $\sigma^2\,:=\,\delta^{\;(2)}_{\widehat{\mathbb{CS}}}(\,\widehat{z})\,-\,(\delta^{\;(1)}_{\widehat{\mathbb{CS}}}(\,\widehat{z}))^2$. To wit, for $\xi\,\in\,\mathbb{R}$, we have\footnote{See Appendix \ref{ArDisFluc} for a detailed discussion.}
\begin{equation}
\label{MomGenFunct000}
\mathbb{M}_{N(m,\sigma^2)}\,(\xi)\,=\,\exp\,\left[\frac{\delta^{\;(2)}_{\widehat{\mathbb{CS}}}(\,\widehat{z})\,\xi^2\,+\,2\,\delta^{\;(1)}_{\widehat{\mathbb{CS}}}(\,\widehat{z})\,\xi}{2}\,-\,\frac{1}{2}\left(\delta^{\;(1)}_{\widehat{\mathbb{CS}}}(\,\widehat{z})\,\xi \right)^2\right]\,.
\end{equation}  
This relation suggests that upon stabilization (\emph{i.e.} normalizing the area distance fluctuations to have mean value $0$, and possibly variance $1$), we may recover a suitable form of Gaussianity in the $\widehat{z}\,\rightarrow\,0$ regime by exploiting central limit theorem arguments. We discuss this possibility in the worst possible scenario associated with the cosmological decoupling regime generated by strong gravitational clustering. To this end, we will begin by looking at a sequence  $\{\widehat{\mathbb{CS}}_{(\,\widehat{z}\,(i))}\,:\;i\in\mathbb{Z}^+\}$ of FLRW celestial spheres which foliate the FLRW null cone in the cosmological decoupling region surrounding $p$, \emph{i.e.}, in the FLRW redshift interval $0\,\leq\,\widehat{z}\,(i)\,\leq\,\widehat{z}\,(c)$,   
with $\widehat{\mathbb{CS}}_{(\,\widehat{z}\,(1))}=\widehat{\mathbb{CS}}_{(\,\widehat{z}\,(c))}$ and converging to $p$ as $i\rightarrow\,\infty$. We respectively denote by $\{\widehat{\Sigma}_{(\,\widehat{z}\,(i))}\,:\;i\in\mathbb{Z}^+\}$ and $\{{\Sigma}_{(\,\widehat{z}\,(i))}\,:\;i\in\mathbb{Z}^+\}$ the sequence of FLRW and physical sky sections associated with the celestial spheres $\widehat{\mathbb{CS}}_{(\,\widehat{z}\,(i))}$ according to (\ref{sigmapr0}) and the $\mathrm{PSL}(2,\mathbb{C})$ map $\zeta_{(\,\widehat{z}\,(i))}:\widehat{\mathbb{CS}}_{(\,\widehat{z}\,(i))}\rightarrow {\mathbb{CS}}_{(\,\widehat{z}\,(i))}$. To understand the relations among the corresponding
 sequence of area distance fluctuations $Y_{(\,\widehat{z}\,(i))}\,:=\,\tfrac{D_{\widehat{z}\,(i)}(\zeta_{(\,\widehat{z}\,(i))})\,-\,\widehat{D}_{\widehat{z}\,(i)}}{\widehat{D}_{\widehat{z}\,(i)}}$ sampled along distinct directions of the FLRW directional celestial sphere $\widehat{\mathbb{CS}}$, we assume that the $Y_{(\,\widehat{z}\,(i))}$ exhibit no mutual relation. We are dealing with the minimal assumption we can make because, in the real world, there is no FLRW Laplace's demon propagating the assumed Gaussianity of fluctuations from the observed portion of the CMB last scattering surface through the vagaries of the pre--homogeneity region up to the cosmological decoupling region surrounding $p$, and able to control all the moments of the random variables $\{Y_{(\,\widehat{z}\,(i))}\}$. We can only assume that the fluctuations $Y_{(\,\widehat{z}\,(i))}$ are mutually independent, square--integrable random variables which are identically distributed according to the directional probability measure $\mathbb{P}_{\widehat{\mathbb{CS}}}$\; on $\widehat{\mathbb{CS}}$\, defined (see (\ref{CSprobab}))
by $\mathbb{P}_{\widehat{\mathbb{CS}}}\,\left(d\mu_{\widehat{\mathbb{S}}^2}\right)\,:=\,
\frac{1}{4\pi}\,d\mu_{\widehat{\mathbb{S}}^2}$,\, where $d\mu_{\widehat{\mathbb{S}}^2}$ is the standard solid angle measure on the round 2--sphere. In particular, if 
\begin{equation}
{Y}^{-\,1}_{(\,\widehat{z}\,(i))}(I)\,:=\,\left\{\left.(\,\widehat\theta,\,\widehat\phi\,)\,\in\,\widehat{\mathbb{CS}}\,\,\right|\,Y_{(\,\widehat{z}\,(i))}(\widehat\theta,\,\widehat\phi)\,\in\,I\,\subset\,\mathbb{R} \right\}
\end{equation}
denote the subset of directions $(\,\widehat\theta,\,\widehat\phi\,)\,\in\,\widehat{\mathbb{CS}}$ whose corresponding area distance relative fluctuations  $Y_{(\,\widehat{z}\,(i))}(\widehat\theta,\,\widehat\phi)$ \, take values in the interval \,$I$, then the $Y_{(\,\widehat{z}\,(i))}$     are assumed to be independently distributed  under the push forward measure
\begin{equation}
\label{FlucDisDistr00}
\mathbb{P}_{\widehat{\mathbb{CS}}}\left({Y}^{-\,1}_{(\,\widehat{z}\,(i))}(I)\right)\,=\,
 \frac{1}{4\pi}\,\int_{{Y}^{-\,1}_{(\,\widehat{z}\,(i))}(I)}\, d\mu_{\widehat{\mathbb{S}}^2}\,.
\end{equation}
Under this assumption, the behavior of the sequence $\{Y_{(\,\widehat{z}\,(i))}\}$ is quite wild since in the pre--homogeneity region dominated by the vagaries of strong gravitational clustering, non--Gaussian randomness is at work, and even convergence to an average value may be questionable. Yet, the sequence of empirical means $\{\tfrac{1}{n}\,\sum_{i=1}^n\, Y_{(\,\widehat{z}\,(i))}\}$, can be controlled by a Kolmogorov's Strong Law (of large numbers) argument, and we can show that we have the $\mathbb{P}$--almost sure convergence (see Theorem \ref{AlmSureConv}) result
\begin{equation}
\label{AlmSureP0}
\mathbb{P}_{\widehat{\mathbb{CS}}}\,\left(
\lim_{n\rightarrow\,\infty}\,\frac{1}{n}\,
\sum_{i=1}^n\,Y_{(\,\widehat{z}\,(i))}\,=\, \delta^{\;(1)}_{\widehat{\mathbb{CS}}}(\,\widehat{z}\,(c))\right)\,=\,1\,
\end{equation} 
that computes the limit of the sample mean $\frac{1}{n}\,
\sum_{i=1}^n\,Y_{(\,\widehat{z}\,(i))}$ in terms of the directional average $\delta^{\;(1)}_{\widehat{\mathbb{CS}}}(\,\widehat{z}\,(c))$ of the area distance fluctuations evaluated on the cosmological decoupling celestial sphere $\widehat{\mathbb{CS}}_{\widehat{z}\,(c)}$. Moreover, if we stabilize the sequence $\{Y_{(\,\widehat{z}\,(i))}\}$ by normalizing its terms to have mean $0$ and variance $1$,\; \emph{i.e.}, if we define 
\begin{equation}
\label{RandVarX0}
X_{(\,\widehat{z}\,(i))}\,:=\,\left(
Y_{(\,\widehat{z}\,(i))}\,-\,\delta^{\;(1)}_{\widehat{\mathbb{CS}}}(\,\widehat{z}\,(i)\,)\right)\,\mathrm{Var}^{-\,1/2}_{\widehat{\mathbb{CS}}}(Y_{(\,\widehat{z}\,(i))})\,, 
\end{equation} 
then, the distribution of  corresponding sequence of  sample means $\breve{S}_m(\,\widehat\theta,\,\widehat\phi\,)\,:=\,
\frac{1}{m}\,\sum_{i}^{m}\,X_{(\,\widehat{z}\,(i))}(\,\widehat\theta,\,\widehat\phi\,)$ converges to a Gaussian distribution $N_{(0,1)}$ of $\breve{S}_m$, with mean $0$ and variance $1$, \emph{i.e.} for any interval $I\subset\,\mathbb{R}$, we have (see Theorem \ref{GaussCentLim})

\begin{equation}
\label{CentLimThm20}
\lim_{m\,\rightarrow\,\infty}\,\mathbb{E}^{\mathbb{P}}\left[\breve{S}_m,\,I \right]\,=\,
\frac{1}{\sqrt{2\pi}}\,\int_I\,\exp\left[-\,\frac{u^2}{2}\right]\,du\,.
\end{equation}
\vskip .2cm\noindent
Thus, as $m\rightarrow\infty$, the normalized area distance fluctuations $\breve{S}_m$ around the expectation $\delta^{\;(1)}_{\widehat{\mathbb{CS}}}(\,\widehat{z}\,(c))$ are normally distributed (with mean $0$ and variance $1$). Moreover, these fluctuations are independent from $\delta^{\;(1)}_{\widehat{\mathbb{CS}}}(\,\widehat{z}\,(c))$, and, in turn, $\delta^{\;(1)}_{\widehat{\mathbb{CS}}}(\,\widehat{z}\,(c))$ is independent from the $Y_{(\,\widehat{z}\,(c))}(\widehat\theta,\,\widehat\phi)$ sample variance $\mathrm{Var}_{\widehat{\mathbb{CS}}}(Y_{(\,\widehat{z}\,(c))})=\delta^{\;(2)}_{\widehat{\mathbb{CS}}}(\,\widehat{z}\,(c))\,-\,\left(\delta^{\;(1)}_{\widehat{\mathbb{CS}}}(\,\widehat{z}\,(c)) \right)^2$. In particular, $\delta^{\;(2)}_{\widehat{\mathbb{CS}}}(\,\widehat{z}\,(c))$ is indipendent from $\delta^{\;(1)}_{\widehat{\mathbb{CS}}}(\,\widehat{z}\,(c))$. This analysis can be readily extended to the curvature deviation term in (\ref{limitMem0}) (see Proposition \ref{PropA} and Appendix \ref{ArDisCurvFact}). To this end, by tracing the Einstein equations (\ref{FLRWeinstein}) and  (\ref{einsteinEFE}), respectively associated with the reference FLRW spacetime $(M, \widehat{g}, \widehat{\gamma}_s(\widehat\tau))$ and the physical spacetime $(M, {g}, {\gamma}_s(\tau))$, we get
\begin{equation}
\label{RmattCosmCon0}
\,\widehat{\mathrm{R}}(p)\,-\,\mathrm{R}(p)\,=\,4\,\left(\frac{\widehat{\Lambda}-\Lambda}{\widehat{\Lambda}}\right)\,\widehat{\Lambda}\,
+\,8\pi\,\left(\frac{\widehat{\rho}(p)\,-\,\rho(p)}{\widehat{\rho}(p)}\right) \widehat{\rho}(p)\,,
\end{equation}  
where for notational ease, we dropped the FLRW superscript from the FLRW cosmological constant term $\widehat{\Lambda}^{(FLRW)}$ and matter density $\widehat{\rho}^{\,(FLRW)}$ \,(just maintaining the carat $\widehat{}\,$), and where $\widehat{\rho}(p)$ and ${\rho}(p)$ respectively denote the FLRW and the physical matter density at $p$.  
In full generality, we have associated with the physical spacetime a cosmological constant $\Lambda^{(phys)}$, potentially distinct from $\widehat{\Lambda}^{(FLRW)}$,  and keeping track of a possible global deviation effect due to non--linearities in structures formation.
We denote by 
\begin{equation}
\Lambda^{(cont)}\,:=\,\widehat{\Lambda}^{(FLRW)}\,-\,\Lambda^{(phys)}\,,
\end{equation}
this potential contribution. In (\ref{RmattCosmCon0}), we have assumed a pure matter scheme for both the reference FLRW as well as for the physical spacetime. The characterization of  ${\rho}(p)$ in the pre--homogeneity region should take care of the fact that the physical matter density data are gathered from the past light cone (directly for the baryonic component, indirectly via the cold dark matter hypothesis). If 
$\rho_{(\,\widehat{z}\,(i))}(\widehat\theta, \widehat\phi\,)\,:=\,
\left(\exp_p\,\circ\zeta_{(\,\widehat{z}\,(i))}\right)^*{\rho}_{\,\Sigma_{\,\widehat{z}\,(i)}}(q)$ is the celestial sphere representation of ${\rho}_{\,\Sigma_{\,\widehat{z}\,(i)}}(q)$, we denote by
\begin{equation}
\rho^{(v)}\left(\widehat{\mathbb{CS}}_{\widehat{z}(i)}\right)\,=\,
\frac{1}{4\pi}\,\int_{\widehat{\mathbb{CS}}}\,
\rho^{(v)}_{\widehat{z}\,(i)}(\widehat\theta, \widehat\phi\,)\,
d\mu_{\widehat{\mathbb{S}}^2}\,,
\end{equation} 
the corresponding
 average density over $\widehat{\mathbb{CS}}_{\widehat{z}(i)}$, and by $\widehat{\rho}(\widehat{\mathbb{CS}}_{\widehat{z}(i)})$ the FLRW reference density.
For large $\widehat{z}$, the  density contrast 
\begin{equation}
\frac{\Delta_{\hat{z}}\rho}{\widehat{\rho}_{\hat{z}}}\,:=\,\frac{(\widehat{\rho}_{\hat{z}}\,-\,\rho_{\hat{z}})}{
\widehat{\rho}_{\hat{z}}},
\end{equation}
is governed by the density perturbations (with respect to the FLRW background)  seeded by inflation ($|\Delta_{\hat{z}}\rho/\widehat{\rho}_{\hat{z}}|_{{CMB}}\simeq\,10^{-5}$). As stressed in the introductory remarks, these perturbations are amplified by gravitational clustering to the point that the FLRW perturbation theory becomes completely unreliable at a late epoch when we probe the pre--homogeneity region at very low reference FLRW redshift $\widehat{z}\rightarrow\,\widehat{z}_{(c)}$. In such a non--perturbative regime, we cannot a priori assume that the relative matter density fluctuations have a Gaussian distribution. To handle the situation and recover some control, we assume that the fluctuating terms 
\begin{equation}
\Gamma_{\widehat{z}\,(i)}\left(\widehat\theta, \widehat\phi\, \right)\,:=\,  \frac{\widehat{\rho}(\widehat{\mathbb{CS}}_{\widehat{z}(i)})\,-\,\rho_{\widehat{z}\,(i)}(\widehat\theta, \widehat\phi\,)}{\widehat{\rho}(\widehat{\mathbb{CS}}_{\widehat{z}(i)})}
\end{equation} 
characterize a sequence $\{\Gamma_{\widehat{z}\,(i)}\,:\;i\,\in\,\mathbb{Z}^+\}$ of independent, identically distributed (i.i.d.) random variables with respect to $\mathbb{P}_{\widehat{\mathbb{CS}}}$.
As in the previous case, the law of large numbers and the i.i.d. assumption imply the $\mathbb{P}$--almost sure convergence result
\begin{equation}
\lim_{n\rightarrow\,\infty}\,\frac{1}{n}\,\sum_{i=1}^{n}\Gamma_{\widehat{z}\,(i)}\,=\,
\frac{\widehat{\rho}(\widehat{\mathbb{CS}}_{\widehat{z}(c)})\,-\,
\mathbb{E}^{\mathbb{P}}[\rho_{\widehat{z}\,(c)}]}{\widehat{\rho}(\widehat{\mathbb{CS}}_{\widehat{z}(c)})}\,,
\end{equation}  
where $\widehat{z}(c)$ is the FLRW reference redshift marking the cosmological decoupling sky section $\Sigma_{\widehat{z}(1)}$. We identify this expression with the matter density fluctuation at the observation point $p$, \emph{i. e.} 
\begin{equation}
\label{mattMeanValue0}
\frac{\widehat{\rho}(p)-\rho^{(v)}(p)}{\widehat{\rho}(p)}\,:=\,\frac{\widehat{\rho}(\widehat{\mathbb{CS}}_{\widehat{z}(c)})\,-\,
\mathbb{E}^{\mathbb{P}}[\rho_{\widehat{z}\,(c)}]}{\widehat{\rho}(\widehat{\mathbb{CS}}_{\widehat{z}(c)})}\,=:\,
\frac{\Delta_{\widehat{z}\,(c)}\,\rho}{\widehat{\rho}_{\widehat{z}\,(c)}}\,,
\end{equation} 
and use this characterization in the expression in 
 (\ref{RmattCosmCon0}). If we shift and normalize $\Gamma_{\widehat{z}\,(i)}$ in such a way to have mean value $0$ and variance $1$, then, in full analogy with (\ref{CentLimThm20}),  the distribution of the empirical means
$\breve{\Gamma}_n\,:=\,\tfrac{1}{n}\,\sum_{i=1}^n\,\overline{\Gamma}_{\widehat{z}\,(i)}$ 
 with respect to  the push forward measure $\mathbb{P}_{\widehat{\mathbb{CS}}}(\breve{\Gamma}^{- 1}_n(I))$\,,\, $I\,\subset\,\mathbb{R}$,\, converges for $n\,\rightarrow\,\infty$\;to\, a Gaussian distribution $N_{(0,1)}$ with mean $0$ and variance $1$. Thus, for $n\,\rightarrow\,\infty$,  the (normalized) matter density fluctuations are normally distributed around the expected mean value (\ref{mattMeanValue0}). If we note that the cosmological term in (\ref{RmattCosmCon0}), being a constant,   is statistically independent of the fluctuating term (\ref{mattMeanValue0}), then by gathering all the results above, we can factorize the relation (see (\ref{AlmSureDeltasA0}) 
  
\begin{align}
&\,\frac{72\left(\delta^{\;(2)}_{\widehat{\mathbb{CS}}}(\,\widehat{z}\,(c))\,+\,2\,\delta^{\;(1)}_{\widehat{\mathbb{CS}}}(\,\widehat{z}\,(c))\right)}{\left(1\,+\,\widehat{z}\,(c)\right)^2\,\widehat{r}^{\,2}(\,\widehat{z}\,(c))}\,=\,\left(\widehat{\mathrm{R}}({p})\,-\,{\mathrm{R}}(p)\right)\nonumber\\
\\
&= 4\,\left(\frac{\widehat{\Lambda}-\Lambda}{\widehat{\Lambda}}\right)\,\widehat{\Lambda}\,
+\,8\pi\,\left(\frac{\Delta_{\widehat{z}\,(c)}\,\rho}{\widehat{\rho}_{\widehat{z}\,(c)}}\right) \widehat{\rho}(p)\,,
\nonumber
\end{align}
in the $N_{(0,1)}$--distribution sense according to (Theorem \ref{MattCentLim})
\begin{align}
\label{OmegatermsCorr}
\delta^{\;(1)}_{\widehat{\mathbb{CS}}}(\,\widehat{z}\,(c))\,&=\,\frac{\widehat{\Omega}_m}{48}\,\left(\frac{\Delta_{\widehat{z}\,(c)}\,\rho}{\widehat{\rho}_{\widehat{z}\,(c)}}\right)\,\left(1\,+\,\widehat{z}\,(c)\right)^2\,\widehat{z}^{\,2}(c)\,\nonumber\\
\\
\delta^{\;(2)}_{\widehat{\mathbb{CS}}}(\,\widehat{z}\,(c))\,&=\,\frac{\widehat{\Omega}_{\widehat\Lambda}}{6}\,\left(\frac{\widehat{\Lambda}-\Lambda}{\widehat{\Lambda}}\right)\,\left(1\,+\,\widehat{z}\,(c)\right)^2\,\widehat{z}^{\,2}(c)\,,\nonumber
\end{align}  
where\, $\widehat\Omega_\Lambda:=\widehat{\Lambda}/{3\,H_0^2}$,\,and\,  $\widehat\Omega_m:=8\pi\widehat{\rho}/{3\,H_0^2}$ are the present values of the FLRW density cosmological parameters (see Appendix \ref{capsule}). In particular, from the 
$\delta^{\;(2)}_{\widehat{\mathbb{CS}}}(\,\widehat{z}\,(c))$ factorization in 
(\ref{OmegatermsCorr}) we explicitly get the correction terms to the FLRW cosmological constant $\widehat{\Lambda}^{(FLRW)}$ induced by the area distance fluctuations, \emph{i.e.},
\begin{equation}
\label{LambdaRenorm}
\widehat{\Lambda}^{(FLRW)}\,=\,{\Lambda}^{(phys)}\,+\,\frac{18\,H_0^2}{\left(1\,+\,\widehat{z}\,(c)\right)^2\,\widehat{z}^{\,2}(c)}\,\delta^{\;(2)}_{\widehat{\mathbb{CS}}}(\,\widehat{z}\,(c))\,,
\end{equation}
where, for clarity, we have reintroduced the full notation for the FLRW and physical cosmological constants. Actually, the correction is induced by the matter density fluctuations in an explicit and quantitative way. To see this, notice that from (\ref{OmegatermsCorr}) and the 
variance inequality (see (\ref{samplevar0}) and (\ref{samplevar0i}) )
\begin{align}
\label{samplevar0iii0}
\mathrm{Var}_{\widehat{\mathbb{CS}}}(Y_{(\,\widehat{z}\,(c))})\,&:=\,
\frac{1}{4\pi}
\int_{\widehat{\mathbb{CS}}}
\left(\tfrac{D_{\widehat{z}\,(c)}(\zeta_{(\,\widehat{z}\,(c))})\,-\,\widehat{D}_{\widehat{z}\,(c)}}{\widehat{D}_{\widehat{z}\,(c)}}\,-\,\delta^{\;(1)}_{\widehat{\mathbb{CS}}}(\,\widehat{z}\,(i))\right)^2
	d\mu_{\widehat{\mathbb{S}}^2}\nonumber\\
\\
&=\,\delta^{\;(2)}_{\widehat{\mathbb{CS}}}(\,\widehat{z}\,(c))\,-\,\left(\delta^{\;(1)}_{\widehat{\mathbb{CS}}}(\,\widehat{z}\,(c)) \right)^2\,>\,0,\nonumber
\end{align}
we directly get the lower bound 

\begin{equation}
\label{CosmCorr}
\frac{\widehat{\Lambda}^{(FLRW)}-\Lambda}{\widehat{\Lambda}^{(FLRW)}}\,\geq\,\frac{6}{(48)^2}\,\frac{\widehat{\Omega}^2_m}{\widehat\Omega_{\widehat\Lambda}}\,\left(\frac{\Delta_{\widehat{z}\,(c)}\,\rho}{\widehat{\rho}_{\widehat{z}\,(c)}}\right)^2\,\left(1\,+\,\widehat{z}\,(c)\right)^2\,\widehat{z}^{\,2}(c)
\end{equation}

\section{Structures and the FLRW cosmological constant}
The relation between the cosmological constant deviation term $\tfrac{\widehat{\Lambda}-\Lambda}{\widehat{\Lambda}}$ and the quadratic area distance fluctuation term $\delta^{\;(2)}_{\widehat{\mathbb{CS}}}(\,\widehat{z}\,(c))$ may appear somehow accidental, and to some extent without a physical rationale. However, this relation has a deep physical interpretation. To this end, let us notice that from (\ref{EDelta2}) and the expression (see (\ref{FLRWDi})) of the FLRW area distance\footnote{These relations are here written in the case of the spatially flat FLRW model. The general case is dealt with in the Appendix.}, we get
\begin{equation}
\label{rewd20}
E_{\widehat{\mathbb{C\,S}}_{\widehat{z}\,(c)},\;{\mathbb{C\,S}}_{\widehat{z}(c)}}[\overline\zeta_{(\widehat{z}\,(c))}]\,=\,
\frac{4\pi\,\widehat{r}^{\;2}_{(\widehat{z}\,(c))}}{\left(1\,+\,\widehat{z}\,(c)\right)^{\;2}}
\,\delta^{\;(2)}_{\widehat{\mathbb{CS}}}(\,\widehat{z}\,(c))\,
\end{equation}
where $\overline\zeta_{(\widehat{z}\,(c))}$ is the $\mathrm{PSL}(2, \mathbb{C})$--minimizing map (see Theorem  \ref{distheorem} and \ref{FactorEnergyD}). This minimized  $E_{\widehat{\mathbb{C\,S}}_{\widehat{z}\,(c)},\;{\mathbb{C\,S}}_{\widehat{z}(c)}}[\overline\zeta_{(\widehat{z}\,(c))}]$ defines the (functional) distance 
\begin{equation}
d_{\widehat{z}\,(c)}\left[\widehat{\mathbb{C\,S}}_{\widehat{z}\,(c)},\;{\mathbb{C\,S}}_{\widehat{z}\,(c)}\right]\,:=\,
E_{\widehat{\mathbb{C\,S}}_{\widehat{z}\,(c)},\;{\mathbb{C\,S}}_{\widehat{z}\,(c)}}[\overline\zeta_{(\widehat{z}\,(c))}]
\end{equation}
between the FLRW celestial sphere $\widehat{\mathbb{C\,S}}_{\widehat{z}\,(c)}$ and the physical celestial sphere ${\mathbb{C\,S}}_{\widehat{z}\,(c)}$. This is also the distance, at the given reference scale set by $\widehat{z}\,(c)$,  between the FLRW and the physical past lightcones $\widehat{\mathscr{C}}^-(p,\widehat{g})$ and  $\mathscr{C}^-(p,g)$. 
From (\ref{rewd20}) and (\ref{LambdaRenorm}) we easily get
  
\begin{equation}
\label{LambdaRenorm2}
\widehat{\Lambda}^{(FLRW)}\,=\,{\Lambda}^{(phys)}\,+\,\frac{9}{2\pi}\,\frac{d_{\widehat{z}\,(c)}\left[\widehat{\mathbb{C\,S}}_{\widehat{z}\,(c)},\;{\mathbb{C\,S}}_{\widehat{z}\,(c)}\right]}{\widehat{r}^{\,4}(\,\widehat{z}\,(c))}\,.
\end{equation}
\vskip 0.2cm
\noindent
If we follow the mainstream hypothesis that $\widehat{\Lambda}^{(FLRW)}$ is indeed a constant (describing Dark Energy), then  the lower bound (\ref{CosmCorr}), and the explicit  scale dependence of (\ref{LambdaRenorm2}),    indicate that we must more correctly rewrite the above relation as

\begin{equation}
\label{LambdaRenorm2sc}
\widehat{\Lambda}^{(FLRW)}\,=\,{\Lambda}^{(phys)}_{\widehat{z}\,(c)}\,+\,\frac{9}{2\pi}\,\frac{d_{\widehat{z}\,(c)}\left[\widehat{\mathbb{C\,S}}_{\widehat{z}\,(c)},\;{\mathbb{C\,S}}_{\widehat{z}\,(c)}\right]}{\widehat{r}^{\,4}(\,\widehat{z}\,(c))}\,,
\end{equation}
\noindent
which implies that the physical cosmological constant ${\Lambda}^{(phys)}_{\widehat{z}\,(c)}$ is scale dependent and that there is a contribution to $\widehat{\Lambda}^{(FLRW)}$ coming from structure formation. As long as this latter correction is negligibly small, its presence and the scale--factorization (\ref{LambdaRenorm2sc}) are not surprising since it is qualitatively in line with the backreaction effects mentioned in the introductory remarks. It turns out, however, that the relative correction $(\widehat{\Lambda}^{(FLRW)}\,-\,{\Lambda}^{(phys)}_{\widehat{z}\,(c)})/{\widehat{\Lambda}^{(FLRW)}}$ is not negligible when in the presence of the typical nonlinear structures dominating our late--epoch cosmological neighborhood, and
the explicit connection (\ref{LambdaRenorm2}), between  $\widehat{\Lambda}^{(FLRW)}\,-\,\widehat{\Lambda}^{(phys)}$ and the functional distance  between  $\widehat{\mathscr{C}}^-(p,\widehat{g})$ and  $\mathscr{C}^-(p,g)$, is a significant aspect of our analysis. To discuss this point, let us observe that we can naturally extend the correction term in (\ref{LambdaRenorm2sc}) to the (instantaneous) observers along the past--directed worldline $\widehat\tau\,\rightarrow\,\widehat{p}_{\widehat\tau}\,:=\,\widehat{\gamma}(\widehat\tau)$,\, $\widehat\tau\,<\,0$,\;of $p:=\,\widehat{\gamma}(0)$. If $\widehat{z}\,(\widehat{\tau},\,c)$ denotes the cosmological decoupling FLRW redshift associated\footnote{The redshift $\widehat{z}\,(\widehat\tau)$ is the reference FLRW redshift  measured by the FLRW observer $(\widehat{p}_{\widehat\tau},\,\widehat{\dot{\gamma}}_{\widehat\tau})$. As usual, we assume that, at a given $\widehat\tau$, there is a corresponding (instantaneous) physical observer $p_{\tau}$\,with\, $\widehat{p}_{\widehat\tau}\,=\,p_{\tau}$, and that the corresponding 4--velocities are distinct. The notational simplifications used in the paper also apply here. The general case handling the role of a nonvanishing relative velocity can be easily obtained using the techniques described in the Appendix.} with $\widehat{p}_{\widehat\tau}$, then we can characterize the cosmological constant correction term 

\begin{equation}
\label{LambdaRenorm3}
\widehat{p}_{\widehat\tau}\,\longmapsto\,\frac{9}{2\pi}\,\frac{d_{\widehat{z}\,(\tau,c)}\left[\widehat{\mathbb{C\,S}}_{\widehat{z}\,(\tau, c)},\;{\mathbb{C\,S}}_{\widehat{z}\,(\tau, c)}\right]}{\widehat{r}^{\,4}(\,\widehat{z}\,(\tau, c))}\,,
\end{equation}
\vskip 0.2cm
\noindent
as an effective field along the reference FLRW observer's worldline $\widehat\tau\,\rightarrow\,\widehat{p}_{\widehat\tau}$. If ${p}_{\widehat\tau}$ is in the homogeneity region, (\ref{LambdaRenorm3}) is perturbatively small, since at that scale, the physical spacetime $(M, g, \gamma_s(\tau))$ is, according to our assumptions, an FLRW perturbation of the reference FLRW background $(M, \widehat{g}, \widehat{\gamma}_s(\widehat\tau))$. The relative fluctuations in matter density  
\begin{equation}
\frac{\Delta_{\widehat{z}\,(\widehat\tau, c)}\,\rho}{\widehat{\rho}_{\widehat{z}\,(\widehat\tau, c)}}
\end{equation}
are very small (for a FLRW redshift $\widehat{z}\,(\widehat\tau)$ probing the CBM surface we have $|(\Delta_{\widehat{z}\,(\widehat\tau,)}\,\rho)/{\widehat{\rho}_{\widehat{z}\,(\widehat\tau)}}|\,\simeq\, 10^{\,-\,5}$) with a correspondingly small contribution,  of perturbative nature, to the area distance fluctuating 
terms  $\delta^{\;(1)}_{\widehat{\mathbb{CS}}}(\,\widehat{z}\,(\widehat{\tau}, c))$, \; and $\delta^{\;(2)}_{\widehat{\mathbb{CS}}}(\,\widehat{z}\,(\widehat{\tau}, c))$. The cosmological decoupling region marked, according to $\widehat{p}_{\widehat\tau}$,  by the reference redshift $\widehat{z}\,(\widehat{\tau}, c)$, is very small (approaching $\widehat{z}\,(\widehat{\tau}, c)\rightarrow\,0$) since the physical (instantaneous) observer ${p}_{\tau}$, associated with  $\widehat{p}_{\widehat\tau}$, is only perturbatively affected by the surrounding structure formation gravitational dynamics. 
However, as $\widehat{p}_{\widehat\tau}$ enters the pre--homogeneity region and approaches $p$,\, the situation changes in a significant way.
On can easily rework out the lower bound (\ref{CosmCorr}) at $\widehat{z}\,(\widehat{\tau}, c)$ and exploit the relation (\ref{LambdaRenorm2sc})  to write

\begin{align}
\label{ourBound}
&\frac{9}{2\pi}\,\frac{d_{\widehat{z}\,(\tau, c)}\left[\widehat{\mathbb{C\,S}}_{\widehat{z}\,(\tau, c)},\;{\mathbb{C\,S}}_{\widehat{z}\,(\tau, c)}\right]}{\,\Lambda^{(FLRW)}\,\widehat{r}^{\,4}(\,\widehat{z}\,(\tau, c))}\nonumber\\
\\
&\geq\,
\frac{6}{(48)^2}\,\frac{\widehat{\Omega}^2_m (\widehat{p}_{\widehat{\tau}})}{\widehat\Omega_{\widehat\Lambda}}\,\left(\frac{\Delta_{\widehat{z}\,(\tau, c)}\,\rho}{\widehat{\rho}_{\widehat{z}\,(\tau, c)}}\right)^2\,\left(1\,+\,\widehat{z}\,(\tau, c)\right)^2\,\widehat{z}^{\,2}(\tau, c)\,,\nonumber
\end{align}
\vskip 0.2cm\noindent
where the normalization to $\Lambda^{(FLRW)}$ is a consequence of the assumed constancy of 
the FLRW cosmological constant. The other cosmological parameters, such as  $\widehat{\Omega}^2_m (\widehat{p}_{\widehat{\tau}})$, are as the notation suggests, referred to the observer  $\widehat{p}_{\widehat\tau}$. Thus, the contribution of the effective field (\ref{LambdaRenorm3}) in the balance relation (\ref{LambdaRenorm2sc}) increases as structure formation proceeds. As we enter into the fully non--perturbative regime associated with an FLRW observer $p$ located deeply into the pre--homogeneity region, the contribution of late--epoch structures can be quite significant. To wit, for very low FLRW redshift  $\widehat{z}\,\sim\,10^{-\,4}$ we typically probe late--epoch  cosmic structures with density contrast ${\tfrac{\Delta_{\widehat{z}_c}\rho}{\widehat{\rho}_{\widehat{z}_c}}\sim 10^6}$. We are in a high nonlinear regime for which the right member of  (\ref{ourBound}) is $\mathcal{O}(1)$, as the following analysis shows.

\section{An example: Galaxy Clusters}
Galaxy clusters are quasi--equilibrium systems that provide a primary testing ground in modern high--precision cosmology\cite{Allen}. The statistics likelihood of their occurrence in a given model universe is quite sensible to the corresponding cosmological model parameters, in particular from $\widehat\Omega_\Lambda$, and observational sky surveys have promoted their use to test and constrain cosmological parameters, even at the level to discriminating among the possible alternative explanations of the nature of dark energy\cite{Allen}. Gravitational nonlinearities on scales of a few Mpc are typical of clusters, and current surveys, such as the Dark Energy Survey\cite{Abbott}, provide data and information about the nonlinear regime, whose analysis requires a delicate interplay between numerical simulations and analytical tools. Thus, it is unsurprising that clusters feature in our analysis, providing further evidence of the strong constraint that (\ref{ourBound}) puts on the interpretation of $\Omega^{(FLRW)} _\Lambda $ as the density of constant dark energy.
Clusters are structures that tend to virialize, where redshifts are due to the local relative motion of the constituent galaxies and their gravitational fields. The neighborhood of a physical observer located at a cluster barycenter $p$ is decoupled from cosmological expansion. The local redshift $z$ does not have a cosmological meaning, and the physical observer refers to the FLRW redshift $\widehat{z}$ to characterize the peculiar velocities of her neighboring galaxies with respect to the idealized FLRW Hubble flow. The situation changes as we gather our astrophysical data from the cluster's outskirts, where the infalling galaxies couple to the cosmological expansion. The radius of this region defines the cluster's \emph{physical radius}. A typical cluster has a physical radius that can be estimated to be in the range of $\sim\,(1\,-\,3)\,\mathrm{Mpc}$. In our case, for an observer located at the barycenter of the Local Group, the physical radius is of the order  $(0.95\,-\,1.05)\,\mathrm{Mpc}$ (see \emph{e.g.}, section 4 of \cite{Obinna3}). 
 The reference FLRW redshift $\widehat{z}_c$  associated with a cluster radius is formally given by the approximated formula derived from the linear Hubble law, \emph{viz.}, $\mathrm{CRad}_{(c)}\simeq\,\tfrac{\widehat{z}_c}{H_0}$, \,\, where $\mathrm{CRad}_{(c)}\,=\,r_{(c)}\,\mathrm{Mpc}$ is the radius of the cluster (in $\mathrm{Mpc}$), and where, for pedagogical purposes, we use  $H_0= 100\,h\,Km\,s^{-1}\,\mathrm{Mpc}^{-1}$ with $h=\tfrac{70}{100}$. Thus, the FLRW reference redshift $\widehat{z}_c$ associated with the cluster radius $\mathrm{Rad}_{(c)}$ can be estimated as $\widehat{z}_{c}\,\simeq \,\frac{7}{3}\,r_{(c)}
\times\,10^{-4} \simeq\,10^{-4}$, where we factorized out  the prefactor $(7/3)\, r_{(c)}$ that depends on the vagaries of the cluster considered (\cite{Obinna3} provides, as a reference, the figure $\widehat{z}_c\,\simeq\,2.4\,\times\,10^{-4}$).  This is the pivotal value of the redshift marking the \emph{inner cosmological boundary} of the pre--homogeneity region in the cluster (the Local Group) below which spacetime uncouples from the FLRW background cosmological expansion. It is worthwhile to stress that we are here dealing with an explicit version\footnote{Another significant development of the notion of Ellis' \emph{finite infinity} has been put forward by D. L. Wiltshire\cite{DavidW}.} of the notion of \emph{finite infinity}, presciently introduced by George Ellis in 1984 \cite{EllisPadova}. Explicitly, let us stress that the sphere centered on the observer, with a radius equal to the physical radius of the cluster, isolates the virialized core of the cluster from the expanding Universe. The 3--dimensional timelike cylinder generated by the world history of this sphere is the associated Ellis' finite infinity surface. This 3--dimensional cylinder is the hypersurface on which to impose the cosmological coupling of the subsystem defined by our cluster (Local Group) and the outside cosmological spacetime.   From this finite infinity point of view, the minimum $E_{\widehat{\mathbb{C\,S}}_{\hat{z}},\;{\mathbb{C\,S}}_{z}}:=\inf_{\zeta_{(\widehat{z})}}\,E_{\widehat{\mathbb{C\,S}}_{\hat{z}},\;{\mathbb{C\,S}}_{z}}[\zeta_{(\widehat{z})}]$ of the celestial spheres distance functional (\ref{Efunctn}) takes its maximum at the redshift $\widehat{z}_c$, and characterizes the functional distance between the physical and the reference FLRW past light cones.     
Inhomogeneities, in the form of the inner core density contrast,  leave their imprint on the boundary defined by the cluster's physical radius through two related mechanisms: \emph{(i)}\; A direct geometrical alteration of signal propagation that via the relation (\ref{OmegatermsCorr}) 
is described by  $\delta^{\;(1)}_{\widehat{\mathbb{C\,S}}_{\hat{z}},\;
{\mathbb{C\,S}}_{{z}}}$; \; and \emph{(ii)} A positive contribution to the FLRW cosmological constant $\widehat{\Lambda}^{(FLRW)}$ due to  $\delta^{\;(2)}_{\widehat{\mathbb{C\,S}}_{\hat{z}},\;
{\mathbb{C\,S}}_{{z}}}$, or, more expressively, to the distance functional $E_{\widehat{\mathbb{C\,S}}_{\hat{z}},\;
{\mathbb{C\,S}}_{{z}}}$. If we assume the typical cluster density contrast  $|\widehat{\rho}_{\hat{z}}(p)-\rho_{\hat{z}}(p)|/\widehat\rho_{\hat{z}}(p)\sim\,10^6$, and take the \emph{Planck} value $\widehat\Omega_m\sim 0.3$ for the average FLRW density\cite{Planck}, then as we approach the \emph{inner cosmological boundary} at $\widehat{z}\sim\ 10^{-4}$, where cluster dynamics decouples from cosmological expansion,  the relation (\ref{OmegatermsCorr}) provides 
$|\delta^{\;(1)}_{\widehat{\mathbb{C\,S}}_{\hat{z}},\;
{\mathbb{C\,S}}_{{z}}}|\sim\,10^{-5}$.\, A result that is in very good agreement with the estimates of $\delta^{\;(1)}$ obtained in this regime using FLRW perturbation theory\cite{Durrer, Fanizza, Heinesen, Obinna4}. From the inequality  
$\delta^{\;(2)}_{\widehat{\mathbb{C\,S}}_{\hat{z}},\;
{\mathbb{C\,S}}_{{z}}}\,\geq\,(\delta^{\;(1)}_{\widehat{\mathbb{C\,S}}_{\hat{z}},\;
{\mathbb{C\,S}}_{{z}}})^2$ we have 
$\delta^{\;(2)}_{\widehat{\mathbb{C\,S}}_{\hat{z}},\;
{\mathbb{C\,S}}_{{z}}}\,\geq\, 10^{-10}$, and by taking the concordance  $\Lambda\mathrm{CDM}$ model value $\widehat\Omega_\Lambda\simeq\,0.7$, the relations (\ref{OmegatermsCorr}) and (\ref{ourBound}), evaluated for $\widehat{z}=\widehat{z}_c\sim 10^{-4}$, provide 
\begin{equation}
\label{OmegatermsCorr2}
\frac{\Lambda^{(FLRW)}_{\hat{z}}\,-\,{\Lambda}^{(phys)}}{{\Lambda}^{(phys)}}\,=\,\left.
\frac{6\,\delta^{\;(2)}_{\widehat{\mathbb{C\,S}}_{\hat{z}},\;{\mathbb{C\,S}}_{{z}}}}{\widehat\Omega_\Lambda\left(1\,+\,\widehat{z}\right)^{\;2}\,\widehat{z}}\right|_{\widehat{z}_c}\,\gtrsim\,\frac{1}{12}.
\end{equation}   
 Thus, as we approach the \emph{inner cosmological boundary} in the pre--homogeneity region, the contributions (\ref{LambdaRenorm2}) and (\ref{LambdaRenorm2sc}) to the cosmological constant are of the same order of magnitude as the assumed FLRW contribution $\widehat{\Lambda}^{(FLRW)}$. 
If we assume a Copernican point of view, then the FLRW observer is strongly tempted to  identify 
$\Lambda^{(cont)}_{\hat{z}}$ with an average contribution to $\widehat{\Lambda}^{(FLRW)}$, statistically mediated over all fundamental observers, and generated by the growth of structures in the pre--homogeneity regions. This latter point of view is potentially significant for interpreting late--epoch observations in the presence of inhomogeneities.

\section{Discussion}
We have described how a rigorous analysis of the relation between the reference FLRW and the physical celestial spheres can strongly impact the interpretation of dark energy as a constant field described by the FLRW cosmological constant ${\widehat\Lambda^{(FLRW)}}$. In this analysis, we highlighted a novel technique based on the role of a distance functional between celestial spheres, a functional that originates in imaging and that can be directly related to the relative fluctuations in area distance, an observable quantity whose connection with the growth of cosmic structure is intensively investigated (see \emph{e.g.}, \cite{Huterer} for an informative review). In the late epoch, when the cosmological expansion decouples from the local virialized gravitational dynamics of cosmic structures, our distance functional keeps the memory of the non--perturbative density contrast associated with gravitational clustering. It provides a (FLRW)redshift--dependent positive contribution to the cosmological constant, which, at late epoch, is of the same order of magnitude as the FLRW cosmological constant itself. It is important to stress that regardless of its size, this contribution is a rigorous mathematical consequence of describing the physical past light cone from the reference point of view of the FLRW light cone. It is a highly nonlinear effect that cannot be easily controlled by order--of--magnitude estimates, as was often stated in the past\cite {Weinberg}. According to the scale--dependent factorization (\ref{LambdaRenorm2sc}),  the result proven here can be interpreted, from the mainstream FLRW point of view, as a parameter splitting of ${\widehat\Omega_\Lambda^{(FLRW)}}$ in two dark energy sectors  induced by 
the effective field (see (\ref{LambdaRenorm3}))
  
\begin{equation}
\label{LambdaRenorm3effF}
\frac{9}{2\pi}\,\frac{d_{\widehat{z}\,(\tau,c)}\left[\widehat{\mathbb{C\,S}}_{\widehat{z}\,(\tau, c)},\;{\mathbb{C\,S}}_{\widehat{z}\,(\tau, c)}\right]}{\widehat{r}^{\,4}(\,\widehat{z}\,(\tau, c))}\,,
\end{equation}
\vskip 0.2cm\noindent
that can be thought of as associated with the \emph{Finite Infinity Surface} surrounding each fundamental FLRW observer when evolving in the pre--homogeneity
 region. We have a physical sector ${\Lambda^{(phys)}_{\hat{z}}/3H_0^2}$ describing, for low FLRW redshift ${\widehat{z}}$, dark energy as a  term of yet unknown origin; and the cosmic structure term  ${\Lambda^{(cont)}_{\hat{z}}/3H_0^2}$ describing  the effect of nonlinear structures. For the cosmological mass parameter ${\widehat\Omega^{(FLRW)}_m}$, geometry--growth splittings of this sort are familiar\cite{Huterer}  when comparing geometry and growth of structures. But it must be stressed that whereas the usual geometry--growth splitting can be quite ambiguous, our analysis provides a strong foundation to the splitting for the dark energy parameter ${\widehat\Omega_\Lambda^{(FLRW)}}$.  
Our splitting is scale--dependent, reaching its maximum on the finite infinity surface surrounding the given FLRW observer, where the virialized structures couple with the background FLRW cosmological expansion. In these regions, typically identified with the physical radius of clusters, this effective field contributes as an effective cosmological constant of the same order of magnitude as the assumed $\Lambda^{(FLRW)}$. According to the nature of the effective field  (\ref{LambdaRenorm3}), it is clear that the explicit redshift parametrization of our ${\widehat\Omega_\Lambda^{(FLRW)}}$  splitting bears relevance to the coincidence problem,  according to which the expansion of the Universe appears to have started accelerating at the same epoch when complex nonlinear structures emerged.
As such, it is worth emphasizing that the effective field contribution emerging from the geometrical relation between the physical and reference light cones is by construction scale--dependent and, consequently, redshift--dependent. This feature arises naturally within our non--perturbative framework and does not rely on additional dynamical assumptions. Remarkably, this behavior is consistent with recent observational analyses by the DESI Collaboration \cite{DESI1}, which indicates that dark energy may exhibit a redshift evolution of its equation--of state parameter $w(z)$. In this respect, our result offers a non--exotic interpretation of such an evolution, suggesting that the apparent dynamics of dark energy could reflect the underlying scale dependence of spacetime inhomogeneities. 
It would be particularly relevant, in future work, to quantify explicitly the redshift dependence of this correction term and compare it with the DESI findings. Such an analysis is technically delicate, since it requires a consistent treatment of the distance functional at different redshifts and a refined understanding of how the geometrical distance redshift relation emerges in the inhomogeneous regime. Nevertheless, this investigation could shed light on the link between structure formation and the apparent dynamical behavior of dark energy, offering new insights into the coincidence problem.

\subsection*{Acknowledgments}
It is our pleasure to record our appreciation to Thomas Buchert, George Ellis, Giuseppe Fanizza, and Sabino Matarrese for their invaluable comments and suggestions. 

\textbf{Data availability statement}  No datasets were generated or analyzed during the current study.

\textbf{Authors contributions} The two authors contributed to the paper and extensively discussed the results.

\textbf{Competing interests} The authors declare no competing interests.

\vfill\eject

\section{A Geometrical Analysis Appendix}
\label{Appendice}
\appendix
 In this commented appendix, we provide a detailed discussion of the geometrical analysis results we have used. It comprises approximately seventy pages of relevant mathematical guidelines to the complex landscape that, by way of contrast, we have too briefly described in the main body of our text. We start with some (well-known) notational remarks.
 
\section{The standard FLRW model in a nutshell}
\label{capsule}
The weak cosmological principle and the attendant Friedman-Lema\^itre-Robertson-Walker (FLRW) spacetime play a primary role in the current interpretative standard of cosmology. Let us recall that in terms of  the FRLW coordinates $\left(\hat{r}, \hat{\theta},\hat{\varphi}\right)$, and of the proper time $\widehat{\tau}$ of the comoving fundamental FLRW observers, the FLRW spacetime metric $\widehat{g}$ reads 
\begin{align}
\widehat{g}\,&:=\,-d\hat\tau^2\,+\,a^2(\hat\tau)\,\left[d\hat{r}^2\,+\,f^2(\hat{r})\,\left(d\hat\theta^2\,+\,\sin^2\hat\theta\,d\hat\varphi^2  \right)\right]\,,\nonumber\\
\label{FLRWg0}\\
 f(\hat{r})\, &:=\,
    \begin{cases}
       & \frac{1}{\sqrt{k}}\sin\,\left(\sqrt{k}\,\hat{r}\right),\;\;\;k\,>\,0\\
       & \hat{r},\;\;\;\;\;\;\;\;\;k\,=\,0\\
       & \frac{1}{\sqrt{-\,k}}\sinh\,\left(\sqrt{-\,k}\,\hat{r}\right),\;\;k\,<\,0\,,
    \end{cases}       
\nonumber
\end{align}
where $a(\widehat\tau)$ is the time-dependent (dimensionless) scale factor, and $k$ is the spatial curvature (in units of inverse area) of the 3-dimensional spatial sections
\begin{equation}
\left(\widehat{V}_{\widehat\tau}, \widehat{g}^{(3)}_{\widehat{\tau}}\right)\,:=\,\left\{\left.q=(\widehat{r}(\widehat{\tau}), \widehat{\theta}(\widehat{\tau}), \widehat{\phi}(\widehat{\tau}))\in\,M\,\right|\,\widehat{g}^{(3)}_{\widehat{\tau}}\,=:\,\widehat{g}|_{\tau=const}    \right\}\,,
\end{equation}
$\widehat{g}$-orthogonal to the fundamental FLRW observers. These latter characterize the FLRW Hubble flow defined by the family of preferred chronological geodesics 
\begin{align}
\label{FLRWobservers0}
\widehat{\gamma}_s\,:\,\mathbb{R}_{>0}\,&\longrightarrow  \,(M, \widehat{g})\\
\widehat{\tau}\,&\longmapsto \,\widehat{\gamma}_s(\widehat{\tau})=(\widehat{r}(\widehat{\tau}), \widehat{\theta}(\widehat{\tau}), \widehat{\phi}(\widehat{\tau}))\;,\nonumber
\end{align}
with $4$-velocity $\dot{\gamma}_s\,:=\,\frac{d\gamma_s(\tau)}{d\tau}$,\,\,$g(\dot\gamma_s,  \dot\gamma_s)\,=\,-1$, and where the label $s:=(\widehat{r}, \widehat{\theta}, \widehat{\phi})|_{\tau=0}$ is a shorthand notation for the comoving (Lagrangian) coordinates adapted to the flow.

The mass-energy content of FLRW spacetimes $(M, \widehat{g}, \widehat{\gamma}_s(\widehat{\tau}))$ has the perfect fluid structure (but does not necessarily comply with the equation of state of a perfect fluid). If $\widehat{\rho}^{(FLRW)}$ and $\widehat{p}^{(FLRW)}$ respectively denote the matter density and the pressure of the fluid, then
\begin{equation}
\widehat{T}_{ab}^{(FLRW)}\,=\,\left(\widehat{\rho}^{(FLRW)}\,+\,\widehat{p}^{(FLRW)}   \right)\,\widehat{u}_a\,\widehat{u}_b\,+\,\widehat{p}^{(FLRW)}\,\widehat{g}_{ab}\,,
\end{equation}
where $\widehat{u}\,=\,\frac{d\,\widehat{\gamma}(\widehat\tau)}{d\widehat{\tau}}$ is the 4-velocity of the FLRW fundamental observers $\tau\,\longrightarrow\,\widehat{\gamma}(\widehat{\tau})$ describing the cosmic fluid flow world-lines. Note that $\widehat{\rho}^{(FLRW)}$ describes all matter contributions to the energy density (baryonic, radiation, and dark matter) but not the dark energy contribution. The FLRW Hubble parameter $H(\widehat{\tau})$ and the deceleration parameter $q(\widehat{\tau})$ describing the relative variation of the scale factor $a(\widehat{\tau})$ and its (dimensionless) second-time derivative are respectively defined by the relations
\begin{equation}
\label{Handq}
H(\widehat{\tau})\,:=\,\frac{d}{d\widehat{\tau}}\,\ln\,a(\widehat{\tau})\,,\;\;\;
q(\widehat{\tau})\,:=\,-\,\frac{\ddot{a}(\widehat{\tau})}{a(\widehat{\tau})\,H^2(\widehat{\tau})}\,,
\end{equation}
where $\ddot{a}(\widehat{\tau}):=\tfrac{d^2}{d\widehat{\tau}^2}\,{a}(\widehat{\tau})$.
Another basic quantity related to the scale factor $a(\widehat{\tau})$ is the cosmological redshift $\widehat{z}$ of an emitting source,  stationary with respect to the FLRW fundamental observers, \emph{i.e.}
\begin{equation}
\label{defredshift}
\widehat{z}(\widehat\tau)\,:=\,\frac{\widehat{\lambda}(0)-\widehat{\lambda}(\widehat{\tau})}{\widehat{\lambda}(\widehat{\tau})}\,=\,\frac{a(\widehat\tau=0)}{a(\widehat\tau)}
\end{equation}
where $\widehat{\lambda}(\widehat{\tau})$ and $\widehat{\lambda}(\widehat{\tau}=0)$ respectively are the emitted and the received wavelengths of the signal.
If we normalize (at the present time $\widehat{\tau}=0$) the FLRW metric by setting $a(\widehat\tau=0)\,=\,1$, then we can express the scale factor $a(\widehat\tau)$ in terms of $\widehat{z}$\;\, by setting
\begin{equation}
\label{aScaling}
a(\widehat{z})\,=\,\frac{1}{1\,+\,\widehat{z}(\widehat\tau)}\,.
\end{equation}
The Einstein field equations and the matter evolution equations associated with FLRW spacetimes are  
\begin{equation}
\label{FLRWeinstein}
R_{ab}(\widehat{g})\,-\,\frac{1}{2}\,R(\widehat{g})\,\widehat{g}_{ab}\,+\,\widehat{\Lambda}^{(FLRW)}\,\widehat{g}_{ab}\,=\,8\pi\,\widehat{T}_{ab}^{(FLRW)}\,,\;\;\;\;\widehat{\nabla}^a\,\widehat{T}_{ab}^{(FLRW)}\,=\,0\,,
\end{equation}
where $R_{ab}(\widehat{g})$ and $R(\widehat{g})$ are the Ricci tensor and the scalar curvature of the FLRW metric; we denoted by $\widehat{\Lambda}^{(FLRW)}$ the FLRW cosmological constant, and where $\widehat{\nabla}$ is the covariant derivative of the Levi-Civita connection associated with $\widehat{g}$. For the FLRW family of metrics (\ref{FLRWeinstein}) reduces to the Friedmann-Lema\^itre equations (see \emph{e.g.} \cite{Ellis2}) 
\begin{equation}
\label{E1}
\frac{\ddot{a}(\widehat{\tau})}{{a}(\widehat{\tau})}\,=\,-\,\frac{4\pi}{3}\left(\widehat{\rho}^{(FLRW)}(\widehat{\tau})\,+\,{3 \widehat{p}^{(FLRW)}(\widehat{\tau})}  \right)\,
+\,\frac{1}{3}\,\widehat{\Lambda}^{(FLRW)}\,.
\end{equation}
\begin{equation}
\label{E2}
\frac{d}{d\tau}\,\widehat{\rho}^{(FLRW)}(\widehat{\tau})\,+\,3\,H(\widehat{\tau})
\left(\widehat{\rho}^{(FLRW)}(\widehat{\tau})+\widehat{p}^{(FLRW)}(\widehat{z}) \right)\,=\,0\,,
\end{equation}
\begin{equation}
\label{E3} 
H^2(\widehat{\tau})\,=\,\frac{8\pi}{3}\,\widehat{\rho}^{(FLRW)}(\widehat{\tau})\,+\,\frac{\widehat{\Lambda}^{(FLRW)}}{3}\,-\,\frac{k}{a^2(\widehat{\tau})}\,.
\end{equation}
If $\dot{a}(\widehat{\tau})\not=0$, these expressions reduce to the conservation law (\ref{E2}) and the Friedmann equation (\ref{E3}). In this connection, it is useful to introduce the dimensionless density parameters respectively associated with the matter density $\widehat{\rho}^{(FLRW)}$, the radiation density $\widehat{\rho}_r^{(FLRW)}$,  the cosmological constant $\widehat{\Lambda}^{(FLRW)}$, and the curvature $k$ of the spatial sections, 
\begin{align}
&\widehat{\Omega}^{(FLRW)}_m(\widehat{\tau})\,:=\,
\frac{\widehat{\rho}^{(FLRW)}}{\widehat{\rho}_{crit}(\widehat{\tau})},\;\;\;\widehat{\Omega}^{(FLRW)}_r(\widehat{\tau})\,:=\,
\frac{\widehat{\rho}_r^{(FLRW)}}{\widehat{\rho}_{crit}(\widehat{\tau})}
\,,\nonumber\\
\\
&\widehat{\Omega}^{(FLRW)}_\Lambda(\widehat{\tau})\,:=\,
\frac{\widehat{\Lambda}^{(FLRW)}}{8\pi\,G\,\widehat{\rho}_{crit}(\widehat{\tau})},\;\;\;
\widehat{\Omega}^{(FLRW)}_k(\widehat{\tau})\,:=\,
-\,\frac{3\,k}{8\pi\,G\,a^2(\widehat\tau)\widehat{\rho}_{crit}(\widehat{\tau})}\,\,,\nonumber
\end{align}
where
\begin{equation}
\widehat{\rho}_{crit}(\widehat{\tau})\,:=\,\frac{3\,H^2(\widehat{\tau})}{8\pi\,G}
\end{equation}
is the critical matter density of the FLRW model. (Even if we typically assume $c=1=G$, for clarity, we have written the gravitational constant $G$ in the above densities.) The present-day values (denoted by a subscript "$\,0$") of the various contributions of these constituents  are estimated as \cite{Planck0, Planck} 
\begin{equation}
\label{cosmoparameters1}
\widehat{\Omega}^{(FLRW)}_{k\,0}\le0.01\,, \,\,\,\,\,\,\,\, \widehat{\Omega}^{(FLRW)}_{r\,0}=9.4\times10^{-5}\,, \,\,\,\,\,\,\, \widehat{\Omega}^{(FLRW)}_{m\,0}=0.32\,.
\end{equation}
Notice that $\widehat{\Omega}^{(FLRW)}_{k\,0}$ is usually considered negligible (assuming flat spatial section and that the local fluctuations in curvature are irrelevant); the matter contribution splits into the ordinary matter term $(\sim 5\%)$ indicated with the subscript "$b\,0$"(baryons) and the cold dark matter term $(\sim 27\%)$ indicated with the subscript "$CDM\,0$": 
\begin{equation}
 \widehat{\Omega}^{(FLRW)}_{b\,0}=0.05\,,\,\,\,\,\,\, \widehat{\Omega}^{(FLRW)}_{CDM\,0}=0.27\,.
\end{equation}
In terms of the present-day densities, the Friedmann equation (\ref{E3}) for $\tau\,=\,0$ provides the Bahcall \emph{Cosmic Triangle} relation \cite{Bahcall}  
\begin{equation}
\widehat{\Omega}^{(FLRW)}_{k\,0}+\widehat{\Omega}^{(FLRW)}_{r\,0}+\widehat{\Omega}^{(FLRW)}_{m\,0}+\widehat{\Omega}^{(FLRW)}_{\Lambda\,0}=1\,,
\end{equation}
conveniently representing the dynamical status of the Universe. Evaluated for the spatially flat case, $\widehat{\Omega}^{(FLRW)}_{k\,0}\,\approx\,0$, it provides the present-day contribution to dark energy,
\begin{equation}
\label{cosmoparameters2}
\widehat{\Omega}^{(FLRW)}_{\Lambda\,0}=0.68\,.
\end{equation} 
As long as the pressure $\widehat{p}^{(FLRW)}$ is assumed to be small as compared with the matter density $\widehat{\rho}^{(FLRW)}$, we can exploit (\ref{aScaling}) and the scaling properties of the matter and radiation densities \cite{Ellis2}, to write the Friedmann equation (\ref{E3}) in terms of the (normalized) Hubble rate of expansion ratio $H(\widehat{z})/H_0$, according to 
\begin{equation}
\label{AccaZeta}
\left(\frac{H(\widehat{z})}{H_0}\right)^2\, =\,\left(\widehat{\Omega}^{(FLRW)}_{b\,0}\,+\,\widehat{\Omega}^{(FLRW)}_{CDM\,0}\right)\,(1\,+\,\widehat{z})^3\,+\, 
\widehat{\Omega}^{(FLRW)}_{r\,0}\,(1\,+\,\widehat{z})^4\,+\,\widehat{\Omega}^{(FLRW)}_{\Lambda\,0}\,.
\end{equation}
In the FLRW modeling, the spacetime exhibits a constant matter density distribution $\widehat{\rho}(\widehat{\tau})$ on spatial hyper-surfaces at constant proper time $\widehat{\tau}$, and we can confidently deal, via FLRW perturbation theory, with small amplitude fluctuations $({\rho}(\widehat{\tau})-\widehat{\rho}(\widehat{\tau}))/\widehat{\rho}(\widehat{\tau})$ in the observed physical matter density field ${\rho}(\widehat{\tau})$. As small as they may be, these fluctuations play a fundamental role in the early  Universe. Acting as seeds of gravitational clustering, they give rise to the complex network of structures we observe today. This straightforward geometric scenario together with the assumption of a \emph{constant Dark Energy}, in the form of a cosmological constant ${\widehat{\Lambda}^{(FLRW)}}$, \emph{Cold Dark Matter} (CDM), and inflation, is the stage of the standard paradigm of modern cosmology, the $\Lambda\mathrm{CDM}$ model. The resulting picture is indeed quite successful in delivering a relatively accurate physical and geometrical representation of the Universe in the present era\footnote{Characterized by the actual temperature of the cosmic microwave background $T_{CMB}\,=\,2.725\,K$ as measured  in the frame centered on us
but stationary with respect to the CMB.} and  over spatial scales ranging from\footnote{
The actual averaging scale marking the statistical onset of isotropy and homogeneity is still much debated \cite{ThomMinkowski}. For the sake of the argument presented in this paper, we adopt the rather conservative estimate of the scales over which an average
isotropic expansion is seen to emerge, namely $70-120\,h^{-1}Mpc$, and ideally extending to a few times this scale \cite{Wiltshire}.} $\approx  \, 100\,h^{-1}\; \mathrm{Mpc}$ to the visual horizon of our past lightcone  \cite{Gott}, \cite{Hogg}, \cite{Scrimgeour}, where $h$ is the dimensionless parameter describing the relative uncertainty of the actual value of the present-epoch  Hubble-Lema\^itre constant $H_0$.

\section{The physical spacetime and its FLRW avatar}
\label{avatar}
The key ingredient of our analysis is a list of properties of the actual physical spacetime, which, for the reader's convenience, we introduce as a phenomenological definition of sorts.
\vskip 0.5cm \noindent
\begin{definition}(\emph{The physical cosmological model})
\label{PhenDefin}
We model the actual Universe with a cosmological spacetime
$(M, g, \gamma_s(\tau))$ complying with the following assumptions
\begin{itemize}
\item{$M$ is a four-dimensional smooth manifold, diffeomorphic to $V^{(3)}\,\times\,\mathbb{R}$ where $V^{(3)}\simeq\, \mathbb{R}^3$\, or $\mathbb{S}^3$ (the three-sphere) is the three-manifold topologically modeling the cosmological space sections.}
\item{$M$ is endowed with a smooth Lorentzian metric $g$. When considering the geometry of the associated lightcone, the regularity of $g$  can be lowered by requiring only the continuity and H\"older continuity of the components of the metric and their first derivatives.}
\item{The Lorentzian manifold $(M, g)$ is required to be statistically isotropic and homogeneous in Neyman's sense, \emph{viz.}\, $(M, g)$ is diffeomorphic to a perturbed FLRW metric $\widehat{g}$,  only on scales larger than the scale $L_0\,\gtrsim\,100\mathrm{h}^{-1}\; \mathrm{Mpc}$ that marks the transition to homogeneity.}
\item{ The metric $g$ is a solution of the Einstein field equations and of the associated  matter evolution equations 
\begin{equation}
\label{einsteinEFE}
R_{ab}({g})\,-\,\frac{1}{2}\,R({g})\,{g}_{ab}\,+\,{\Lambda}^{(phys)}\,{g}_{ab}\,=\,8\pi\,{T}_{ab}\,,\;\;\;\;{\nabla}^a\,{T}_{ab}\,=\,0\,,
\end{equation}
where $R_{ab}({g})$ and $R({g})$ are the Ricci tensor and the scalar curvature of the metric $g$; we denoted by ${\Lambda}^{(phys)}$ the physical cosmological constant (assumed to be potentially distinct from the corresponding FLTW value ${\Lambda}^{(FLRW)}$), and where ${\nabla}$ is the covariant derivative of the Levi-Civita connection associated with ${g}$.}
\item{The energy-momentum tensor $T_{ab}$ describes matter and radiation that are distributed in a statistically isotropic and homogeneous way only on scales larger than $L_0$; their local physical properties and dynamics are determined by the appropriate theoretical and phenomenological models provided by astrophysical observations.}
\item{We associate to the spacetime manifold $(M, g)$ 
the \emph{phenomenological fundamental observers} defined by a family of preferred chronological geodesics 
\begin{align}
\label{observers0}
\gamma_s\,:\,\mathbb{R}_{>0}\,&\longrightarrow  \,(M, g)\\
\tau\,&\longmapsto \,\gamma_s(\tau)\;,\nonumber
\end{align}
parametrized by proper time $\tau$ and labeled by suitable comoving (Lagrangian) coordinates  $s$ adapted to the flow. 
 We denote by $\dot{\gamma}_s\,:=\,\frac{d\gamma_s(\tau)}{d\tau}$,\,\,$g(\dot\gamma_s,  \dot\gamma_s)\,=\,-1$, the corresponding  $4$-velocity field.}
\item{The fundamental observers (\ref{observers0}) describe the average motion of matter in the Universe and characterize a phenomenological Hubble flow which is (statistically) isotropic and homogeneous only on sufficiently large scales
 $L\,\gtrsim\,L_0$.}
\end{itemize}
\end{definition}
The patchwork of assumptions collected in Definition \ref{PhenDefin}  is admittedly vague, and an explicit characterization of $(M, g, \gamma_s(\tau))$ as a phenomenological background solution of the Einstein equations \cite{KolbMarraMatarrese} is a task that remains to be settled. Yet, the critical point for present consideration is whether a less ambitious task, still providing an explanatory framework in the pre-homogeneity region $L\,<\,L_0$, can be technically addressed. The currently most profitable strategy is to confine the analysis to our past lightcone, where we have direct control over the physical and geometric description of the model and its relation to observations. To make sense of Definition \ref{PhenDefin} in this restricted setting, we must focus on our role as a given fundamental observer. The information we receive travels along our past lightcone, and our perspective is limited to the vantage point associated with a single spacetime event. We are unable to move away from this "here and now," a consequence of the fact that the observed region of the Universe is approximately on the order of $10^{10}$ light-years, whereas the diameters of our galaxy and the Local Group are much smaller. Thus, even with long-term astronomical observation programs, the observative time scales are tiny, even at the galactic and cluster scales. Yet, for technical reasons, we must associate with the  {phenomenological observer} a small proper time scale around the chosen data-gathering event. We collect these remarks in the following characterization of the phenomenological (instantaneous) observer.

\begin{definition}(\emph{The phenomenological observer})
\label{DefPhysObs}
Let $\mathbb{R}_{>0}\,\ni\,\tau\,\longmapsto\,\gamma(\tau)\,\in\,(M, g)$ denote the worldline of an observer in the family of phenomenological fundamental observers defined by (\ref{observers0}). To select a (small) proper time observational scale around the data gathering event $p\,\in\,(M, g)$, let fix a proper-time interval $-\delta<\tau<\delta$,\;\,  $\delta>0$, and let
\begin{equation}
\mathbb{R}_{>0}\,\ni\,\tau\,\longmapsto\,\gamma(\tau)\,\in\,(M, g),\;\;\;-\delta<\tau<\delta,\;\;\; \gamma(\tau=0)\,=:\,p
\end{equation}
be the corresponding geodesic segment contained in $\tau\,\longmapsto\,\gamma(\tau)$ and centered around the spacetime point $p\,:=\,\gamma(\tau=0)$ reached with a 4-velocity 
$\dot{\gamma}(p)\,:=\,\dot{\gamma}(\tau=0)$ by the chosen observer. The pair 
\begin{equation}
\label{physObserver1}
\left(p,\;\dot{\gamma}(p)\right)
\end{equation}
characterizes the (instantaneous) phenomenological observer who gathers cosmological data at the selected event $p$. Received data are organized and described in the phenomenological observer's \emph{Local Rest Frame} (LRF)  defined by the spacetime tangent space $T_pM$ endowed with a $g$-orthonormal frame $\{E_{(i)}\}_{i=1,\ldots,4}$, associated with the $4$-velocity  $\dot{\gamma}(p)$, \emph{i.e.}
\begin{equation}
\label{PhysObserv}
\mathrm{LRF}_{p}\,:=\,\left(T_p M,\,\left.\{E_{(i)}\}_{i=1,\ldots,4}\right|\,g_p\left(E_{(i)}, E_{(k)}\right)=\eta_{ik}, \,E_{(4)}\,:=\,\dot{\gamma}(p)\right)
\end{equation}
where $\eta_{ik}$ is the Minkowski metric in $T_pM$. 
\end{definition}
\vskip 0.5cm\noindent
For simplicity, we often refer to the phenomenological observer as the \emph{physical observer}.

As emphasized by E. Kolb, V. Marra, and S. Matarrese in\cite{KolbMarraMatarrese, KolbMarraMatarrese2},  
FLRW observers and the associated FLRW spacetime have, from a best-fit point of view, a natural role also in the presence of nonperturbative inhomogeneities.
 We implement this role by introducing on the same differential manifold $M$, arena of the physical $(M, g, \gamma(\tau))$, a family of reference FLRW cosmological models that, following the notation established in Section \ref{capsule}  (see (\ref{FLRWg0}) and (\ref{FLRWobservers0})), we denote by $(M, \widehat{g}, \widehat{\gamma}(\widehat{\tau}))$. This family provides the candidate FLRW spacetime potentially best fitting the physical $(M, g, \gamma(\tau))$. We characterize the corresponding reference FLRW observer as follows.
 
\begin{definition}(\emph{The reference FLRW observer})
Let us consider the family (\ref{FLRWobservers0}) of fundamental observers, with $4$-velocity $\dot{\gamma}_s\,:=\,\frac{d\gamma_s(\tau)}{d\tau}$,\,\,$g(\dot\gamma_s,  \dot\gamma_s)\,=\,-1$, defining the Hubble flow of the candidate reference FLRW model $(M, \widehat{g}, \widehat{\gamma}(\widehat{\tau}))$,  
\begin{align}
\label{FLRWobservers1}
\widehat{\gamma}_s\,:\,\mathbb{R}_{>0}\,&\longrightarrow  \,(M, \widehat{g})\\
\widehat{\tau}\,&\longmapsto \,\widehat{\gamma}_s(\widehat{\tau})=(\widehat{r}(\widehat{\tau}), \widehat{\theta}(\widehat{\tau}), \widehat{\phi}(\widehat{\tau}))\;.\nonumber
\end{align}
 Let   $\mathbb{R}_{>0}\,\ni\,\widehat\tau\,\longmapsto\,\widehat{\gamma}(\widehat\tau)\,\in\,(M, \widehat{g})$,\; with $\widehat{\gamma}(\widehat\tau=0)\,\equiv\,p$, denote the worldline of the observer in the family (\ref{FLRWobservers1}) passing through the same observational event $p$ chosen by the phenomenological observer (\ref{physObserver1}).\;For a given proper-time interval $-\widehat\delta\,<\,\widehat\tau\,<\,\widehat\delta$,\;\,with\,  $\widehat\delta>0$, let
\begin{equation}
\mathbb{R}_{>0}\,\ni\,\widehat\tau\,\longmapsto\,\widehat{\gamma}(\widehat\tau)\,\in\,(M, \widehat{g}),\;\;\;-\widehat{\delta}\,<\,\widehat\tau\,<\,\widehat\delta,\;\;\; \gamma(\tau=0)\,=:\,p
\end{equation}
be the geodesic segment contained in $\tau\,\longmapsto\,\gamma(\tau)$ and centered around the spacetime point $p$. If we denote by $\widehat{\dot{\gamma}}(p)\,:=\,\widehat{\dot{\gamma}}(\widehat\tau=0)$ the corresponding FLRW 4-velocity, then  
\begin{equation}
\label{FLRWObs}
\left(p,\;\widehat{\dot{\gamma}}(p)\right)
\end{equation}
characterizes the (instantaneous) FLRW observer who gathers cosmological data at the selected event $p$. The associated  FLRW observer's \emph{Local Rest Frame} is defined by 
\begin{equation}
\label{FLRWObserv}
\widehat{\mathrm{LRF}}_{p}\,:=\,\left(T_p M,\,\left.\{\widehat{E}_{(i)}\}_{i=1,\ldots,4}\right|\,\widehat{g}_p\left(\,\widehat{E}_{(i)},\, \widehat{E}_{(k)}\right)\,=\,\eta_{ik}, \,\widehat{E}_{(4)}\,:=\,\widehat{\dot{\gamma}}(p)\right)\,,
\end{equation}
where $\{\,\widehat{E}_{(i)}\}_{i=1,\ldots,4}$ denotes a $\widehat{g}_p$-orthonormal frame
 associated with the $4$-velocity  $\widehat{\dot{\gamma}}(p)$.
\end{definition}

The reference FLRW observer $(p, \widehat{\dot\gamma}(p))$ is chosen in such a way as to eliminate the dipole anisotropy induced by the motion with respect to the CMB. \, If $(p, \widehat{\dot\gamma}(p))$ is identified with our solar system, this motion is characterized by the velocity  $\simeq 370\; km\,s^{-1}$ along the direction defined by the celestial coordinates $\mathrm{RA}\,=\,168^{\circ}$, and $\mathrm{Dec}\,=\, -7^{\circ}$. However, for our cosmological best-fit purposes it is more appropriate to identify $(p, \widehat{\dot\gamma}(p))$ with the center of mass of our Local Group of galaxies whose motion relative to the CMB is characterized by a velocity $v_{LG}=620\,\pm\,151\,\mathrm{Km}\,s^{-1}$ \cite{Planck}. This CMB-normalization also controls the dipole anisotropy of the physical spacetime $(M, g, \gamma_s(\tau))$ when we probe the homogeneity scale; however, this control does not extend to the pre-homogeneity region, where 
the phenomenological Hubble flow is characterized by the background motion of nearby clusters of galaxies and differs quite significantly from the reference FLRW Hubble flow. To take care of this important degree of freedom,  we assume that the physical observer $(p, {\dot\gamma}(p))$ moves, relative to the CMB-normalized FLRW observer $(p, \widehat{\dot\gamma}(p))$, with a relative velocity $v(p)$, possibly depending on the observational scale. Explicitly, at the observational event $p$, in the FLRW local rest frame (\ref{FLRWObserv}) we can write (with $c=1$)
\begin{equation}
\label{3velocity}
{\dot\gamma}(p)\,=\,\frac{\sum_{a=1}^3\,v^a(p)\widehat{E}_{(a)}+\widehat{\dot{\gamma}}(p)}{\sqrt{1\,-\,|v(p)|^2}}\,,
\end{equation}
where $v(p)$ is the relative 3-velocity of the physical observer $(p, {\dot\gamma}(p))$ with respect to the reference FRW observer $(p, \widehat{\dot\gamma}(p))$ introduced above.

\begin{remark}
It must be stressed that the FLRW instantaneous observer so introduced strictly adopts Neyman's weak cosmological principle \cite{Neyman, Peebles}.   In particular, in her reference role, she interprets the dynamics of the Universe as a realization of a homogeneous and isotropic random process on the FLRW background \cite{Peebles, SEH}. Deviations from this controlled statistical behavior are associated with large, nonperturbative, peculiar velocities and gravitational vagaries with respect to the FLRW Hubble flow. These fluctuations can generate skewed statistics in the pre-homogeneity region\cite{Adamek2}, \cite{Fanizza2} that may signal significant deviation from an FLRW perturbative picture and provide a further reason to introduce and keep distinct the roles of the physical observer and her FLRW avatar. These remarks on the FLRW observer's reference role in the pre-homogeneity region are critical to ensuring the nature of our result is understood. \;\;\;\;\;$\square$
\end{remark}

\section{The directional Celestial Spheres}
\label{CelSphs} 
To put the FLRW reference role described above at work, we begin by introducing the past null cone in the physical observer's local rest frame $(T_pM,\,\{E_{(i)}\})$, 
\begin{equation}
\label{MinkLightCone0}
C^-\left(T_pM,\,\{E_{(i)}\} \right)\,:=\,\left\{\left.X\,=\,\mathbb{X}^iE_{(i)}\,\not=\,0\,\in\,T_pM\,\right|\,\,g_p(X, X)\,=\,0,\,\,\mathbb{X}^4+r=0 \right\}\;,
\end{equation}
where 
$r:=(\sum_{a=1}^3(\mathbb{X}^a)^2)^{1/2}$ is the comoving radius parametrizing the family of $2$-spheres 
\begin{equation}
\label{celestialR}
 \mathbb{S}^2_r(T_pM)\,:=\,\{X\in C^-\left(T_pM\right)\,|\,\, \mathbb{X}^4\,=\,-\,r,\,\,\,\sum_{a=1}^3(\mathbb{X}^a)^2=r^2,\,\,r\in\,\mathbb{R}_{> 0} \}\,,
 \end{equation}
which foliate $C^-\left(T_pM\right)/\{p\}$.  We think of $\mathbb{S}^2_r(T_pM)$  as providing the apparent representation of the sky, at a given value of the comoving radius $r$, in the local rest space of the (instantaneous) observer $(p,\dot\gamma(p))$. In particular, the 2-sphere   $\left.\mathbb{S}^2_r(T_pM)\right|_{r=1}$ or, equivalently, its projection on the hyperplane 
$\mathbb{X}^4\,=\,0$ in $T_pM$, 
\begin{equation}
\label{celestialS}
\mathbb{S}^2\left(T_pM\right)\,:=\,\left\{X\,=\,\mathbb{X}^iE_{(i)}\,\not=\,0\,\in\,T_pM\,\,|\,\,\mathbb{X}^4=0,\,\,\sum_{a=1}^3(\mathbb{X}^a)^2=1 \right\}\;,
\end{equation}
can be used to label the (spatial) past directions of sight constituting the field of vision of the physical observer $(p,\, \dot\gamma(p))$.
In the sense described by R. Penrose  \cite{RindPen},  this is a representation of 
the abstract 2-sphere of past null directions 
\begin{equation}
\mathcal{S}^-(p)\simeq \mathbb{S}^2\simeq\,\mathbb{C}\,\cup\,\{\infty\}\,,
\end{equation}
describing, in the physical spacetime $(M, {g}, \gamma_s(\tau))$, the past-directed null geodesics reaching the physical observer $(p,\, \dot\gamma(p))$. Explicitly, let  
\begin{equation}
{n}(\theta, \phi)\,:=\,\sum_{a=1}^3\,{n}^a(\theta, \phi)\,{E}_{(a)}\,,\,\,\,\,
0\leq\theta\leq\pi,\,\,0\leq\phi<2\pi\,,\nonumber
\end{equation}  
denote the spatial direction in $T_pM$ associated with the celestial angular coordinates $(\theta, \phi)\,\in\,\mathbb{S}^2\left(T_pM\right)$, (by abusing notation, we write ${n}(\theta, \phi)\,\in\,\mathbb{S}^2\left(T_pM\right)$). Any such spatial direction characterizes a corresponding past-directed null vector $\ell(\theta, \phi)\,\in\,C^-\left(T_pM, \{E_{(i)}\} \right)$,  
\begin{equation}
\ell(\theta, \phi)\,:=\,\left(n(\theta,\,\phi),\,-\,\dot{\gamma}(p)\right)\,,
\end{equation} 
 normalized according to 
 $g_p\left(\ell(\theta, \phi),\dot{\gamma}(p)\right)\,=\,1$.
The corresponding past-directed null rays 
\begin{equation}
\mathbb{R}_{\geq 0}\,\ni\,r\,\longmapsto\,r\,\ell({n}(\theta, \phi))\,,\,\,\,\,\,\,\,\,(\theta, \phi)\,\in\,\mathbb{S}^2\left(T_pM\right)\,,
\end{equation}

are the generators of the null-cone $C^-\left(T_pM, \{E_{(i)}\} \right)$. According to the geometrical optics approximation, we can assume that optical (and gravitational wave) data propagate along null geodesics, and    a photon reaching  $(p,\, \dot\gamma(p))$ from the past-directed null direction $\ell(\theta, \phi)$, is characterized  by  the (future-pointing) wave vector 
\begin{equation}
\label{wavevect0}
k_{(p)}(\theta, \phi)\,:=\,-\,{\nu}\,\ell(\theta, \phi)\,\in\, T_pM\,,
\end{equation}
where $\nu\,=\,-\,g_p\left(k_{(p)},\, \dot\gamma(p)\right)$ is the photon frequency as measured  by the physical observer $(p,\, \dot\gamma(p))$. 
We explicitly characterize the parametrization role of (\ref{celestialS}) by introducing the following definition.

\begin{definition}(\emph{The physical celestial sphere})
The spherical surface $\mathbb{S}^2\left(T_pM\right)$ endowed with the round metric
\begin{equation}
\label{roundmet_0}
{h}_{\mathbb{S}^2}(\theta, \phi)\,=\,d\theta^2\,+\,\sin^2\theta\,d\phi^2\,,
\end{equation}
and the associated solid angle (area) measure
$d\mu_{\mathbb{S}^2}(\theta, \phi)\,=\,\sin\theta\,d\theta d\phi$,
defines the \emph{physical celestial sphere of sky-directions} 
\begin{equation}
\label{celstSphere}
\mathbb{C\,S}(p)\,:=\,\left(\mathbb{S}^2\left(T_pM\right),\, {h}_{\mathbb{S}^2}   \right)\,,
\end{equation}
providing, in the instantaneous rest space $\left(T_pM, \{E_{(i)}\} \right)$, the geometrical representation of the set of all directions towards which the physical observer $(p,\, \dot\gamma(p))$ can look and collect (optical and gravitational wave) data from astrophysical sources.

\end{definition}
Another basic ingredient in the
 cartography of the astrophysical sources on the celestial sphere (\ref{celstSphere}) is provided by the past-directed \emph{null exponential map} based at the observation 
event $p$.  To characterize it, let $W_p\,\subseteq \,T_pM$ denote the maximal domain of existence of the set of causal geodesics emanating from the point $p$, 
\begin{align}
\lambda_X\,:\,I_X\,&\longrightarrow\,(M, g)\\
\eta\,&\longmapsto\,\lambda_X(\eta),\;\;\;\; \lambda(0)\,=\,p,\;\;\;\dot{\lambda}(0)\,=\,X\in\,W_p,\;\;g_p(X, X)\,\leq\,0,  \nonumber
\end{align} 
where $I_X\subseteq\mathbb{R}_{\geq0}$ denotes the maximal interval of existence associated with the chosen $X\in\,W_p$. The past-directed \emph{null exponential map} based at the observation event $p$ is defined by restricting $W_p\,\subseteq \,T_pM$ to the set of past-directed null vectors according to 
\begin{align}
&\exp_p\,:\,W_p\,\cap C^-\left(T_pM, \{E_{(i)}\} \right)\,\longrightarrow \;\;\;\;M\nonumber\\
&X\;\;\;\;\;\;\longmapsto \;\;\;\;exp_p\,(X)\,:=\,\lambda_X(1)\,.
\label{expmapdef_0}
\end{align}
Let us recall that the past lightcone $\mathscr{C}^-(p,g)\,\in\,(M, g)$ with the vertex at $p$,\, is the set of all events $q\in (M, g)$ that can be reached from $p$ along the past-pointing null geodesics issued from $p$. In our cosmological setting, we have the following more specific characterization of $\mathscr{C}^-(p,g)$.

\begin{proposition}
\label{PropPhysLightCone}
As $n(\theta, \phi)$ varies over the celestial sphere $\mathbb{C\,S}(p)$, the corresponding
set of null rays in $C^-\left(T_pM, \{E_{(i)}\} \right)$  
\begin{equation}
r\ell(n(\theta, \phi))\,\in\,C^-\left(T_pM, \{E_{(i)}\} \right),\;\;\;r\in\,I_{\ell},
\,\,\,\,(\theta, \phi)\in\mathbb{C\,S}(p)\,,
\end{equation}
parametrizes the null geodesic generators 
\begin{equation}
\label{radgeodesics_0}
r\ell(n(\theta, \phi))\,\longmapsto\,\exp_p(r \ell(n(\theta, \phi)))\,,
\end{equation}
of the physical past lightcone 
\begin{equation}
\label{pastcone2_0}
\mathscr{C}^-(p,g)\,:=\,\exp_p\left[W_p\cap C^-\left(T_pM, \{E_{(i)}\} \right)\right]\,.
\end{equation}
The geometrical structure of $\mathscr{C}^-(p,g)$ is characterized by its  \emph{terminal points}, the last\textendash{}points on the 
null geodesic generators that lie on the boundary $\partial\mathrm{I}^-(p,g)$ of the chronological past of $p$.  Any such terminal point 
\begin{equation}
q\left(r_*,\,\theta, \phi\right):=\exp_p\left(r_*\,\ell(n(\theta, \phi))\right)
\end{equation}
 is said to be:\, \emph{i)}\; a \emph{conjugate terminal point} if the exponential map $\exp_p$ is singular at $(r_*,\,\theta,\phi)$;\;\emph{ii)}\; a \emph{cut locus terminal point} 
if the exponential map $\exp_p$ is non--singular at $(r_*,\,\theta,\phi)$ and there exists another null geodesic, issued from $p$, passing through $q(r_*, \theta, \phi)$, (see also \cite{Beem}, \cite{Perlick}). We denote \cite{Klainer} by $\mathcal{T}^-(p)$, the set of all terminal points associated with the past null geodesic flow issuing from $p$. In the presence of cut points,  $\mathscr{C}^-(p,g)$ fails to be an embedded submanifold of $(M, g)$. 
\end{proposition}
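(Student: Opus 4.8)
The plan is to treat the first (parametrization) claim as essentially definitional and to extract the terminal--point structure from the classical causal--boundary theory, the only genuine work being the identification of where each generator leaves $\partial\mathrm{I}^-(p,g)$. For the parametrization: each radial null ray $r\mapsto r\,\ell(n(\theta,\phi))$ in $C^-(T_pM,\{E_{(i)}\})$ is carried by $\exp_p$ to the geodesic $\eta\mapsto\lambda_\ell(\eta)$ with $\lambda_\ell(0)=p$ and $\dot\lambda_\ell(0)=\ell(n(\theta,\phi))$; since $\ell$ is null and past--directed this is a past--directed null geodesic, and as $(\theta,\phi)$ ranges over $\mathbb{C\,S}(p)$ these geodesics sweep out exactly the generators of $\mathscr{C}^-(p,g)=\exp_p[W_p\cap C^-(T_pM,\{E_{(i)}\})]$.

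To locate the terminal value $r_*=r_*(\theta,\phi)$ I would invoke the causality dichotomy (Penrose; \cite{Beem,Klainer}): a point $q=\lambda_\ell(\eta)$ lies on the achronal boundary $\partial\mathrm{I}^-(p,g)$ if and only if the segment $\lambda_\ell|_{[0,\eta]}$ is achronal, equivalently $q\notin\mathrm{I}^-(p,g)$. Along each generator there is a threshold $r_*\in(0,\infty]$ with the segment achronal for $\eta\le r_*$ and $\lambda_\ell(\eta)\in\mathrm{I}^-(p,g)$ for $\eta>r_*$; the monotonicity is the elementary ``push--up'' fact $\mathrm{J}^-(\mathrm{I}^-(p,g))=\mathrm{I}^-(p,g)$, so that once a generator meets the open set $\mathrm{I}^-(p,g)$ it never returns to the boundary. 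Thus $\mathcal{T}^-(p)$ is precisely the locus of first loss of achronality.

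Next I would separate the two mechanisms of this loss into cases (i) and (ii). In case (i) the terminal point is the first conjugate point: a nontrivial Jacobi field along $\lambda_\ell$ vanishes at $p$ and at $r_*$, so $d\exp_p$ drops rank and $\exp_p$ is singular at $(r_*,\theta,\phi)$; the Lorentzian focusing theorem then yields, from a second--variation (Morse--index) deformation, a strictly timelike curve from $p$ to every $\lambda_\ell(\eta)$ with $\eta>r_*$, pushing the generator into $\mathrm{I}^-(p,g)$. In case (ii), $\exp_p$ is nonsingular at $(r_*,\theta,\phi)$ but a second, distinct past null geodesic from $p$ also reaches $q(r_*,\theta,\phi)$, necessarily with a different tangent direction there; following that second geodesic to $q$ and then continuing along the first past $q$ builds a broken causal path with a genuine corner at $q$, which the corner--rounding lemma deforms into a strictly timelike curve from $p$ to points just beyond $q$, again forcing the generator into $\mathrm{I}^-(p,g)$. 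The failure of embeddedness then reads off directly: at a cut terminal point two distinct null rays share one $\exp_p$--image, so $\exp_p$ is not injective on the null cone and two sheets of $\mathscr{C}^-(p,g)$ cross at $q$, while at a conjugate terminal point $d\exp_p$ is degenerate; in either case $\mathscr{C}^-(p,g)$ cannot be an embedded submanifold near $\mathcal{T}^-(p)$.

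The hard part will be the focusing step of case (i): producing the timelike shortcut past a conjugate point rigorously requires the second--variation and Jacobi--field (Morse index) analysis of null geodesics, and in the reduced--regularity setting of Definition \ref{PhenDefin}, where the metric is only assumed continuous with H\"older--continuous first derivatives, one must check that the classical smooth focusing and cut--locus machinery of \cite{Beem,Perlick,Klainer} either still applies or admits a suitable Lipschitz adaptation consistent with the bi--Lipschitz null flow used elsewhere in the paper.
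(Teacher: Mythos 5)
Your proposal is correct and takes essentially the same route as the paper: the paper's own proof is a two--line remark that the proposition follows from the definitions and from the standard properties of the null exponential map and terminal points in \cite{Beem}, \cite{Perlick}, \cite{Klainer}, and the machinery you spell out (the achronality dichotomy with push--up, focusing past the first conjugate point, corner--rounding at cut points, and the resulting failure of injectivity or of rank of $\exp_p$) is precisely the content of those cited results. Your closing caveat about reduced regularity is likewise consistent with how the paper addresses it, namely via the $C^{1,1}$ exponential--map results of \cite{Kuzinger} and \cite{Minguzzi} invoked later in the appendix.
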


\begin{proof}
The characterization of  $\mathscr{C}^-(p,g)$ is a direct consequence of the definition of the celestial sphere $\mathbb{C\,S}(p)$ and terminal points \cite{Klainer},  and of the properties of the (null) exponential map \cite{Beem}, \cite{Perlick}. \;\;\;\;\;$\square$
\end{proof}

The exponential map representation of the past lightcone $\mathscr{C}^-(p,g)$ afforded by Proposition \ref{PropPhysLightCone}  provides a natural setup for a description of observational data gathered from $\mathscr{C}^-(p,g)$. It emphasizes the basic role of past-directed null geodesics and provides the framework for interpreting the physical data in the observer's local rest frame at $p$. In particular, as we shall prove in the next sections, it allows us to represent on the celestial sphere  
$\mathbb{C\,S}(p)$  the actual geometry of the observed sky at a given length scale. This role of the null geodesics and the associated exponential map $\exp_p$ is well-known and quite effective in a neighborhood of $p$, where we can introduce normal coordinates associated with $\exp_p$. Yet, it is delicate to handle in regions where $\exp_p$ is not a diffeomorphism of $W_p\,\cap\, C^-\left(T_pM, \{E_{(i)}\} \right)$  onto its image. 
Our strategy is to start with the standard description \cite{EllisPhRep}, \cite{Ellis2} of observational data coming from $\mathscr{C}^-(p,g)$, associated with the usual assumption that the exponential map is a diffeomorphism\footnote{From an observational point of view, this geometrical setup is appropriate also in the \emph{weak lensing regime} describing the alteration, due to the effect of gravity, of the apparent shape and brightness of astrophysical sources.} 
in a sufficiently small neighborhood of $p$. Then, we move from this smooth description to the more general, low regularity (Lipschitz) case typical of the actual past lighcone $\mathscr{C}^-(p,g)$ . In this connection, it is worthwhile to stress that the standard normal coordinates description is strictly associated with the assumption that the metric of $(M, g)$ is sufficiently regular, with components $g_{ij}(x^\ell)$ which are at least twice continuously differentiable, \emph{i.e.}\, $g_{ij}(x^\ell)\,\in\, C^k(\mathbb{R}^4, \mathbb{R})$, \, for $k\geq 2$. Under this hypothesis, there is  a star-shaped neighborhood $N_0(g)$ of  $0$ in $W_p\subseteq T_pM$ and a corresponding geodesically convex neighborhood of $p$, $U_p 	\subseteq\,(M, g)$, restricted to which $\exp_p\,:\,N_0\,\subseteq \,T_pM\,\longrightarrow\;U_p\,\subseteq\,M$ is a diffeomorphism. In such $U_p$ we can  introduce the geodesic normal coordinates $(x^i)$  of the observed event $q\,\in\,\mathscr{C}^-(p,g)\cap\,U_p$ according to 
\begin{align}
\label{geodnormal_0}
x^i\,:=\,\mathbb{X}^i\,\circ \,\exp_p^{-1}\,:\,M\cap\,U_p\,&\longrightarrow \,N_0(g)\subseteq \mathbb{R}^4\\
q\,\,\,\,\,\,\,&\longmapsto\,x^i(q)\,:=\, \mathbb{X}^i\left(\exp_p^{-1}(q)\right)\nonumber
\end{align}
where $\mathbb{X}^i\left(\exp_p^{-1}(q)\right)$ are the components, in the $g$-orthonormal frame $\{E_{(i)}\}$ of the vector $\exp_p^{-1}(q)\,\in\,W_p\subseteq T_pM$. Thus, in $\mathscr{C}^-(p,g)\cap\,U_p$  we can write,  
\begin{align}
\label{exp1_0}
\exp_p\
\,:\,C^-\left(T_pM,\,\{E_{(i)} \}\right)\cap\,N_0(g)\,&\longrightarrow \,\mathscr{C}^-(p,g)\,\cap\,U_p\\
r \ell(n(\theta, \phi))\,=\,r \left(n^a(\theta, \phi) E_{(a)}\,-\,E_{(4)}\right)\,\,\,\,\,\,\,\,\,\,&\longmapsto \,\exp_p(r \ell(n))\,=\,q\nonumber\,,
\end{align}
with $x^i(q)\,=\,(x^a(q),\,x^4(q))$,\, $a\,=\,1,2,3$,\, explicitly provided by
\begin{equation}
x^i(q)\,:=\,\mathbb{X}^i\left(\exp_p^{-1}(q)\right)\,=\,\left(r\,n^a(\theta,\phi),\,-\,r\right)\,.
\end{equation}

If we move from the rather complex picture associated with the physical cosmological spacetime $(M,\,{g},\,\gamma(\tau))$   to the reference FLRW scenario provided by $(M,\,\widehat{g},\,\widehat\gamma(\widehat\tau))$, the definitions of the corresponding celestial sphere and exponential map are quite standard, and we can easily write down their characterization. We start by introducing the FLRW observer's past null cone according to
\begin{equation}
\label{MinkLightConeFLRW0}
\widehat{C}^-\left(T_pM,\,\{\widehat{E}_{(i)}\} \right)\,:=\,\left\{\left.\widehat{X}\,=\,\widehat{\mathbb{X}}^i\widehat{E}_{(i)}\,\not=\,0\,\in\,T_pM\,\right|\,\,\widehat{g}_p(\widehat{X},\widehat{X})\,=\,0,\,\,\widehat{\mathbb{X}}^4+\widehat{r}=0 \right\}\;,
\end{equation}
where $\widehat{r}$ is the FLRW comoving radius in the hyperplane $\widehat{\mathbb{X}}^4\,=\,0$. In analogy with (\ref{expmapdef_0}), we introduce characterize the the past-directed \emph{FLRW null exponential map} based at the observation 
event $(p, \widehat{\dot\gamma}(p))$. Let $\widehat{W}_p\,\subseteq \,(T_pM, \,\{\widehat{E}_{(i)}\})$ denote the maximal domain of existence of the set of FLRW causal geodesics emanating from the point $p$,  
\begin{align}
\widehat\lambda_{\widehat{X}}\,:\,\widehat{I}_{\widehat{X}}\,&\longrightarrow \,(M, \widehat{g})\\
\widehat\eta\,&\longmapsto \,\widehat\lambda_{\widehat{X}}(\widehat\eta),\;\;\;\; \widehat\lambda(0)\,=\,p,\;\;\;\widehat{\dot{\lambda}}(0)\,=\,\widehat{X}\in\,\widehat{W}_p,\;\;\widehat{g}_p({\widehat{X}}, {\widehat{X}})\,\leq\,0,  \nonumber
\end{align} 
where $\widehat{I}_{\widehat{X}}\subseteq\mathbb{R}_{\geq0}$ denotes the maximal interval of existence associated with the chosen ${\widehat{X}}\in\,\widehat{W}_p$. The past-directed \emph{FLRW null exponential map} based at the observation event $p$ is defined by restricting $\widehat{W}_p\,\subseteq \,T_pM$ to the set of past-directed FLRW null vectors according to 
\begin{align}
\label{expmapdef0FLRW}
\widehat{\exp}_p\,:\,\widehat{W}_p\,\cap C^-\left(T_pM, \{\widehat{E}_{(i)}\} \right)\,&\longrightarrow\;\;\;\;(M, \widehat{g})\\
\widehat{X}\;\;\;\;\;\;&\longmapsto\;\;\;\;\widehat{exp}_p\,({\widehat{X}})\,:=\,\widehat{\lambda}_{\widehat{X}}(1)\,,\nonumber
\end{align}
where $\widehat\eta\,\longmapsto\,\widehat\lambda_{\widehat{X}}(\widehat\eta)$ is the FLRW past-directed causal geodesic emanating from the point $p$ with initial tangent vector $\widehat{\dot{\lambda}}_{\widehat{X}}(0)\,=\,\widehat{X}\in (T_pM, \{\widehat{E}_{(i)}\})$. 
With these notational remarks, we characterize the celestial sphere describing the field of vision of the FLRW observer $(p,\, \widehat{\dot\gamma}(0),\,\{\widehat{E}_{(\kappa)}\})$, and the associated FLRW past lightcone $\widehat{\mathscr{C}}^{\,-}(p,\widehat{g})$ according to the following definition.

\begin{definition}(\emph{The directional FLRW celestial sphere and the associated past lightcone}).
Let us consider the $2$-sphere in $T_pM$ defined by\footnote{To emphasize that the round 2-sphere $\mathbb{S}^2(T_pM)$ is parametrized by the FLRW observer in terms of the celestial coordinates $(\widehat\theta, \widehat\phi)$ we denote it as $\widehat{\mathbb{S}}^2(T_pM)$ to avoid notational confusion with the round 2-sphere  $\mathbb{S}^2(T_pM)$ parametrized in terms of $(\theta, \phi)$ by the physical observer.} 
\begin{equation}
\label{celestialS001}
\widehat{\mathbb{S}}^2(T_pM)\,:=\,\left\{\widehat{X}\,=\,\widehat{\mathbb{X}}^i\widehat{E}_{(i)}\,\not=\,0\,\in\,T_pM\,\,|\,\,\widehat{\mathbb{X}}^4=0,\,\,\sum_{\kappa=1}^3(\widehat{\mathbb{X}}^{\kappa})^2=\,1 \right\}\,,
\end{equation}
providing the celestial coordinates, in $(T_pM,\, \widehat{\dot\gamma}(p),\,\{\widehat{E}_{(\kappa)}\})$,  of the past-directed null geodesics issuing from the reference FLRW observer $(p,\, \widehat{\dot\gamma}(p))$.
As in the case of the physical celestial sphere, $\widehat{\mathbb{S}}^2(T_pM)$ is naturally endowed with the directional round metric
\begin{equation}
\label{FLRWroundmet0}
{h}_{\widehat{\mathbb{S}}^2}(\widehat\theta, \widehat\phi)\,:=\,d\widehat\theta^2\,+\,\sin^2\widehat\theta\,d\widehat\phi^2,\;\;\;
0<\widehat{\theta}\leq\pi,\,\,0\leq\widehat{\phi}<2\pi\,,
\end{equation} 
and the associated area 2-form 
$d\mu_{\widehat{\mathbb{S}}^2}(\widehat\theta, \widehat\phi)\,=\,\sqrt{\det({h}_{\widehat{\mathbb{S}}^2})}\,d\widehat\theta d\widehat\phi\,=\,\sin\widehat\theta\,d\widehat\theta d\widehat\phi$,
that parametrize, in the instantaneous rest space $\left(T_pM, \{\widehat{E}_{(i)}\} \right)$, the solid angle measures that the FLRW observer associates with the astrophysical sources. These notational remarks characterize the directional FLRW celestial sphere $\widehat{\mathbb{C\,S}}(p)$ and the associated past lightcone $\widehat{\mathscr{C}}^{\,-}(p,\widehat{g})$  according to
\begin{equation}
\label{FLRWcelstSphere}
\widehat{\mathbb{C\,S}}(p)\,:=\,\left(\widehat{\mathbb{S}}^2\left(T_pM\right),\,
{h}_{\widehat{\mathbb{S}}^2}\right)\,,
\end{equation}
and 
\begin{equation}
\label{pastcone2_0FLRW}
\widehat{\mathscr{C}}^{\,-}(p,\widehat{g})\,:=\,\widehat{\exp}_p\left[\widehat{C}^-\left(T_pM,\,\{\widehat{E}_{(i)}\} \right)\right]\,\subset\,(M, \widehat{g}).
\end{equation}
\end{definition}
\vskip 0.5cm\noindent
We denote by 
\begin{equation}
\widehat{\mathscr{C}}^-(p,\widehat{g};\,\widehat{r}_0)\,:=\,\left\{\left. q\,\in\,M\,\right|\,\,q\,=\,\widehat{\exp}_p(\widehat{r} \widehat{\ell}(\widehat{n}(\widehat\theta, \widehat\phi))),\, \, 0\leq \widehat{r} < \widehat{r}_0,\,\, (\widehat\theta, \widehat\phi)\in\widehat{\mathbb{C\,S}}(p)  \right\}\,,
\end{equation}
the portion of $\widehat{\mathscr{C}}^-(p, \hat{g})$ accessible to observations for a given value $\widehat{r}_0$ of the FLRW comoving radius $\widehat{r}$. \, In the FLRW case (\ref{expmapdef0FLRW}) is, for any finite interval $\widehat{I}_{\widehat{W}}$,  a smooth diffeomorphism\footnote{As already stressed, the situation is quite more complex for (\ref{expmapdef_0}), since the physical exponential map $\exp_p$ is of low regularity (Lipschitz) along the past null cone ${\mathscr{C}}^-(p,{g})$.}, and 
in convex neighborhoods, $\hat{U}_{p}\subset\,(M, \hat{g})$,  of $p$ we can introduce normal coordinates by
\begin{equation}
\label{geodnormalY_0}
\widehat{x}^i\,:=\,\widehat{\mathbb{X}}^i\,\circ \,\widehat{\exp}_p^{-1}\,:\,(M, \widehat{g})\cap\,\widehat{U}_{p}\,\longrightarrow\,\mathbb{R}\,,
\end{equation}
where  $\widehat{\mathbb{X}}^i$ are the components of  the vector $\widehat{\mathbb{X}}\in\,{T}_pM$ with respect to the $\widehat{g}$-orthonormal frame $\{\widehat{E}_{(i)}\}_{i=1,\ldots,4}$ with $\widehat{E}_{(4)}\,:=\,\widehat{\dot{\gamma}}(p	)$.

\section{The reference cosmological redshift}
 \label{RefCosmRed} 
In comparing the physical and reference celestial spheres ${\mathbb{C\,S}}(p)$ and $\widehat{\mathbb{C\,S}}(p)$, a fundamental role is played by locating astrophysical sources in terms of their cosmological redshifts. This role is (apparently) simple for the  FLRW observer due to the relation (\ref{defredshift}) connecting the wavelength of the signal of an emitting source, stationary with an FLRW fundamental observer, with the scale factor $a(\widehat{\tau})$. This connection associates the FLRW cosmological redshift with a distance and scale indicator. Conversely, the redshift of a source comoving with the phenomenological Hubble flow can hardly be used as a reliable distance indicator. In particular, in the pre-homogeneity region near our observation point, gravitational clustering generates galactic structures, typically clusters, that tend to virialize. Hence,  redshifts are not of cosmological origin; they are mainly due to the local relative motion of the constituent galaxies and their gravitational fields. The clustering neighborhood, $V_{(c)}(p)$, of the physical observer located at the barycenter $p$ of such structures is decoupled from cosmological expansion, and the local redshift $z$ does not have a cosmological meaning. In the situation at hand, the physical observer refers to the FLRW redshift $\widehat{z}$ only to characterize the peculiar velocities of her neighboring galaxies with respect to the idealized FLRW Hubble flow. The situation changes as we gather our astrophysical data from the cluster's outskirts, where the infalling galaxies couple to the cosmological expansion. The radius of this region defines the cluster's \emph{physical radius}, and we need to keep track of this length scale when handling the cosmological comparison between the physical and the reference FLRW past lightcones. To circumvent these difficulties, we must discuss the relationship between the FLRW and the physical redshift. We start with a detailed analysis of the structure of the redshift of an astrophysical source as measured by the FLRW observer $(p,\,\widehat{\dot{\gamma}}(p))$.

Let $\widehat{\lambda}(\widehat{q})\,=\,c/\widehat{\nu}(\widehat{q})$ denote the wavelength of a signal emitted, at the event $\widehat{q}\in\,\widehat{\mathscr{C}}^-(p,\widehat{g})$,  by a source locally at rest with respect to the instantaneous FLRW fundamental observer $(\widehat{q},\,\widehat{\dot{\gamma}}(\widehat{q}))$. If  $\widehat{\lambda}(p)\,=\,c/\widehat{\nu}(p)$ is the corresponding  wavelength received by the FLRW observer $(p,\,\widehat{\dot{\gamma}}(p))$ at $p$, we get the expression (\ref{defredshift}) of the FLRW \emph{cosmological redshift} of the emitting source $(\widehat{q},\,\dot{\gamma}(\widehat{q}))$, \emph {i.e.}  
\begin{equation}
\widehat{z}(\widehat{q})\,:=\,\frac{a(\widehat\tau=0)}{a(\widehat\tau_{\widehat{q}})}\,-\,1\,=
\frac{\widehat{\lambda}(p)\,-\,\widehat{\lambda}(\widehat{q})}{\widehat{\lambda}(\widehat{q})}\,,
\end{equation} 
where $a(\widehat\tau)$ is the scale factor of the FLRW spacetime metric $\widehat{g}$ (see (\ref{FLRWg0})). We can measure redshifts accurately even for extremely faint galaxies, and in a regime where one has a linear Hubble (cor)relation between the cosmological expansion velocity and distance, the associated cosmological FLRW redshift $\widehat{z}$  can be used as a distance indicator. We have an explicit relation between the comoving radius $\widehat{r}$ (which is not an observable quantity), the cosmological redshift $\widehat{z}$, and the (FLRW) Hubble parameter $H(\widehat{z})$ (see (\ref{AccaZeta})), given by
\begin{equation}
\label{Dandri0} 
\widehat{r}(\,\widehat{z}\,)\,=\,\int_0^{\;\widehat{z}}\,\frac{d\widehat{z}\,'}{H(\widehat{z}\,')}\,.
\end{equation}
Conversely, the use of redshift by the physical observer $(p, \dot\gamma(0))$  as a distance indicator of sorts is quite delicate since the knowledge of the corresponding distance-redshift relation is tantamount to knowing the actual cosmological geometry in the pre-homogeneity region. In this region, the motion of cosmic structures characterizes the phenomenological Hubble flow, which deviates significantly from a linear Hubble relation due to the gravitational interactions between galaxies. Their motion is described by peculiar velocities, not due to the expansion of the Universe, which can be\cite{TMDavies} of the order of $100-300\,\mathrm{Km}/\mathrm{sec}$, a significant fraction of the reference FLRW recession velocity over the scales that characterize the pre-homogeneity region. In line with our modeling assumption on the nature of the physical spacetime $(M, g, \gamma(\tau))$, this fraction becomes less significant for distant galaxies, and one typically assumes that peculiar velocities can be ignored on the largest scales where homogeneity takes over. According to this latter remark  and the FLRW best-fit strategy, it makes sense to use the candidate FLRW spacetime as providing the cosmological expansion reference standard also in the pre-homogeneity region.

 To characterize explicitly the reference role of the FLRW redshift $\widehat{z}$, let us consider the FLRW description of a signal emitted, at an event ${q}\in\,{\mathscr{C}}^-(p,{g})\cap\,\widehat{\mathscr{C}}^{\,-}(p,\widehat{g})$, by a source associated with the \emph{phenomenological Hubble flow} $\tau\longrightarrow\,\gamma_s(\tau)$. On the physical celestial sphere $\mathbb{CS}(p)$ the source emission is described as coming from a region $B(q)\,\subset\,\mathbb{CS}(p)$ centered around the spatial unit vector $n(q)\,:=\,n(\theta(q), \phi(q))\in\,\mathbb{CS}(p)$ pointing to the source. This region subtends a solid angle measure
\begin{equation}
\label{apparentArea}
\mu\left(B(q)\right)\,:=\,\int_{B(q)}\,d\mu_{\mathbb{S}^2}(\theta,\phi)\,,
\end{equation}
describing the apparent area extension
of the source as seen by the physical observer $(p,\,{\dot{\gamma}}(p))$ at $p$. Moreover,  if $n(a)\in\,\mathbb{CS}(p)$, with $a\,=\,1, 2$, are two points\footnote{By a slight abuse of notation we indicate the angular coordinates $(\theta, \phi)$ of a point on $\mathbb{S}^2$ with the corresponding versor $n(\theta, \phi)$.} on the physical celestial sphere  $\mathbb{CS}(p)$ subtending the diameter of the region $B(q)\,\subset\,\mathbb{CS}(p)$, then the physical observer associates with the source its apparent angular diameter
\begin{equation}
\label{apparentDiamater}
\mathrm{diam}_{\mathbb{CS}}\left(B(q)\right)\,=\,d_{\mathbb{S}^2}\left[n(1),\,n(2)\right]\,,
\end{equation}  
where $d_{\mathbb{S}^2}$ denotes the distance function on the round unit sphere $(\mathbb{S}^2, {h}_{\mathbb{S}^2})$.

With an obvious adaptation of the above framework, the FLRW 
observer $(p,\,\widehat{\dot{\gamma}}(p))$ describes the source emission on the  FLRW celestial sphere $\widehat{\mathbb{CS}}(p)$ as coming from a region $\widehat{B}(q)\,\subset\,\widehat{\mathbb{CS}}(p)$ centered around the spatial unit vector 
$\widehat{n}(q):=\,\widehat{n}(\widehat\theta(q), \widehat\phi(q))$. In analogy with (\ref{apparentArea}) and (\ref{apparentDiamater}), we can associate with the source its solid angle area measure 
\begin{equation}
\label{FLRWapparentArea}
\widehat{\mu}\left(\widehat{B}(q)\right)\,:=\,\int_{\widehat{B}(q)}\,d\mu_{\widehat{\mathbb{S}}^2}(\widehat\theta,\widehat\phi)\,,
\end{equation}  
and its apparent angular diameter 
\begin{equation}
\label{FLRWapparentDiamater}
\widehat{\mathrm{diam}}_{\widehat{\mathbb{CS}}}\left(\widehat{B}(q)\right)\,=\,\widehat{d}_{\widehat{\mathbb{S}}^2}\left[\widehat{n}(1),\,\widehat{n}(2)\right]\,,
\end{equation} 
where $\widehat{d}_{\widehat{\mathbb{S}}^2}$ denotes the distance function on the round unit sphere $(\widehat{\mathbb{S}}^2, \widetilde{h}_{\widehat{\mathbb{S}}^2})$.

The physical and the FLRW observers are in relative motion with respect to each other; consequently, the above geometrical data on the appearance of the source on the celestial spheres ${\mathbb{CS}}$ and $\widehat{\mathbb{CS}}$ are distinct. To compare them requires a delicate nonperturbative analysis, which is the central theme of this work. The first step in this comparison analysis is to assume that the FLRW observer can identify:\; \emph{(i)}\; The spacetime region $\subset\,(M, \widehat{g},\, \widehat{\dot\gamma}(\widehat\tau))$ in which the emission event $q$ occurs; \; \emph{(ii)}\; The corresponding peculiar velocities of the phenomenological Hubble flow with respect to the reference FLRW Hubble flow;\; \emph{(iii)}\; The presence of strong local gravitational fields affecting the null geodesics propagation from the emission event $q$ to the observation event $p$;\; \emph{(iv)}\; The FLRW redshift $\widehat{z}_{(c)}$ marking the boundary of the region surrounding $p$ where the phenomenological Hubble flow decouples from the reference FLRW cosmological expansion. 
Under these assumptions, the FLRW observer can associate with the astrophysical source moving along the phenomenological Hubble flow and (instantaneously) located at $q$, the following redshifts:\; 
\begin{itemize}
\item {\emph{(I)}\; The FLRW cosmological redshift $\widehat{z}(q)$ associated with the location $q$ in the reference FLRW model. This term is present as long as the source is coupled to the cosmological expansion, namely if  $\widehat{z}(q)\,\geq\,\widehat{z}_{(c)}$;}
\item  {\emph{(II)}\; The Doppler redshift $\widehat{z}^{(doppl)}(q)$ associated to the peculiar velocity of the source with respect to the reference FLRW Hubble flow;}
\item {\emph{(III)}\; The gravitational redshift $\widehat{z}^{(grav)}(q)$ associated with the null geodesics distortions with respect to the reference FLRW null geodesic connecting $q$ to $p$ according to the FLRW observer $(p,\,\widehat{\dot{\gamma}}(p))$. In selecting the FLRW observer, we have factorized out the  CMB dipole anisotropy induced by our Local Group motion; it follows that  $\widehat{z}^{(doppl)}(q)$ is associated only with the peculiar motions of galaxies (and clusters) generated by the local isotropic over-densities or under-densities in the pre-homogeneity zone. According to the FLRW observer $(p,\,\widehat{\dot{\gamma}}(p))$, the phenomenological Hubble flow terms $\widehat{z}^{(doppl)}(q)$ and $\widehat{z}^{(grav)}(q)$ are always present in the pre-homogeneity region. They are dominant in the decoupling region $\widehat{z}(q)\,<\,\widehat{z}_{(c)}$, where they eventually account for the total redshift the FLRW observer attributes to the source $q$. This redshift is no longer of cosmological origin.}
\end{itemize}

 Under the assumptions $(i), (ii), (iii)$ described above\footnote{Even for the FLRW reference observer it is notoriously difficult \cite{Ellis2} to resolve the  contributions $\widehat{z}^{(doppl)}(q)$ and $\widehat{z}^{(grav)}(q)$   from the  total redshift $\widehat{z}^{(tot)}$ of the source, and the factorization (\ref{FLRWzetafactor0}) is possible only under the assumptions $(i), (ii), (iii)$.}, the actually measured \emph{total redshift}, $\widehat{z}^{(tot)}$ of the source can be factorized, for $\widehat{z}\,\geq\,\widehat{z}_{(c)}$, according to 

 \begin{align}
 \label{FLRWzetafactor0}
 \left(1\,+\,\widehat{z}^{\,(tot)}(q)\right)&:=
    \begin{cases}
       & (1+\widehat{z}(q)-\widehat{z}_{(c)})(1+\widehat{z}^{\,(doppl)}(q))(1+\widehat{z}^{\,(grav)}(q)),\;\;\;\widehat{z}\,\geq\,\widehat{z}_{(c)}\\
       & \\
       & (1+\widehat{z}^{\,(doppl)}(q))(1+\widehat{z}^{\,(grav)}(q)),,\;\;\widehat{z}\,<\,\widehat{z}_{(c)}\,.
    \end{cases}       
\end{align}
 
\noindent
This factorization of the FLRW redshift implies a corresponding factorization in the standard relation (\ref{Dandri0})  that connects, in terms of the (FLRW) Hubble parameter $H(\widehat{z})$ (see (\ref{AccaZeta})),  the comoving FLRW radius $\widehat{r}$ associated with a source $q$ to the corresponding FLRW redshift $\widehat{z}$. Let us assume that
the observation point $p$ is located at the barycenter of a cluster of galaxies, then the comoving radius associated with a source located at the FLRW redshift $\widehat{z}(q)$ factorizes according to

\begin{equation}
\label{factorErre}
\widehat{r}\,\left(\widehat{z}(q)\right)\,=\,\widehat{r}\,\left(\widehat{z}_{(c)}\right)\,+\,{\int_{\;\widehat{z}_{(c)}}^{\widehat{z}}\,\frac{d\widehat{z}\,'}{H(\widehat{z}\,')}}\,,
\end{equation}
where $\widehat{r}\,\left(\widehat{z}_{(c)}\right)$ is is identified with the \emph{cluster radius}. This radius can be formally evaluated according to the standard FLRW relation (\ref{Dandri0}), \emph{i.e.}  
\begin{equation}
\label{FLRWcritRad}
\widehat{r}\,\left(\widehat{z}_{(c)}\right)\,:=\,{\int_0^{\;\widehat{z}_{(c)}}\,\frac{d\widehat{z}\,'}{H(\widehat{z}\,')}}\,,
\end{equation}
an expression that has only an indicative reference value, as it does not take into account the fact that in the decoupling region, the cosmological redshift-distance relation does not hold. Yet, since we use $\widehat{z}$ only in an FLRW reference role, in what follows, we adopt (\ref{FLRWcritRad}), and factorize $\widehat{r}\,\left(\widehat{z}(q)\right)$ according to
\begin{equation}
\label{factorErreBis}
\widehat{r}\,\left(\widehat{z}(q)\right)\,=\,{\int_0^{\;\widehat{z}_{(c)}}\,\frac{d\widehat{z}\,'}{H(\widehat{z}\,')}}\,+\,{\int_{\;\widehat{z}_{(c)}}^{\widehat{z}}\,\frac{d\widehat{z}\,'}{H(\widehat{z}\,')}}\,.
\end{equation}
 The radius $\widehat{r}\,\left(\widehat{z}_{(c)}\right)$ is typically derived from physical models exploiting the fact that a cluster is a gravitationally virialized system. The typical physical radius can be estimated in the range $\sim\,(1\,-\,3)\,\mathrm{Mpc}$. By exploiting (\ref{factorErreBis}) and the linear Hubble law $\widehat{z}\,\simeq\,H_0\,\widehat{r}\,(\widehat{z})$ in the above formal reference sense, we can associate to $\widehat{r}\,\left(\widehat{z}_{(c)}\right)$  a FLRW reference redshift $10^{-\,4}\,\lesssim\,\widehat{z}_{(c)}\,\lesssim\,10^{-\,3}$.

The vagaries of the peculiar velocities 
in the pre-homogeneity region imply that there is no privileged choice of the local rest frame   $(T_pM, \{E_{(a)}\},\,\dot{\gamma}(p))$ of the physical observer
with respect to the FLRW local rest frame $(T_pM, \{\widehat{E}_{(a)}\},\,\widehat{\dot{\gamma}}(p))$ (with $a=1,2,3$).  From the point of the reference FLRW observer who is looking at sources in the pre-homogeneity region, the natural choice for the physical observer is to \emph{optimize the relative velocity and orientation of $(T_pM, \{E_{(a)}\}, \dot{\gamma}(p))$ in such a way that the FLRW celestial sphere at a given redshift $\widehat{z}$ is as close as possible to the physical celestial sphere}, in a sense that we make precise later on by introducing a natural optimization functional. In other words, in the pre-homogeneity region, we need to optimize the choice of $(T_pM, \{E_{(a)}\},\dot{\gamma}(p))$ with respect to  
$(T_pM, \{\widehat{E}_{(a)}\},\widehat{\dot{\gamma}}(p))$, an optimization that will be $\widehat{z}$-dependent. To this end, let $v_{\,\widehat{z}}(p)$ denote the relative 3-velocity of the physical observer $(p, \dot\gamma(p))$ with respect to the FLRW observer  $(p, \widehat{\dot\gamma}(p))$, defined by the relation 
\begin{equation}
\label{3velocity2}
{\dot\gamma}_{\;\widehat{z}}(p)\,=\,\frac{\sum_{a=1}^3\,v^a_{\,\widehat{z}}(p)\,\widehat{E}_{(a)}+\widehat{\dot{\gamma}}(p)}{\sqrt{1\,-\,|v_{\;\widehat{z}}|^2}}\,,
\end{equation}
(see (\ref{3velocity})) where we have explicitly introduced the FLRW redshift dependence in the 4-velocity vector ${\dot\gamma}(p)$ labeling. According to (\ref{FLRWzetafactor0}), and the Doppler rescaling induced by the observers' relative velocity $v_{\;\widehat{z}}$, we can characterize the physical redshift $z$ in terms of the total redshift $\widehat{z}^{\;(tot)}(q)$. To this end, let us consider, at an event $q\,\in\,(M, g, \gamma_s(\tau))$,
 an astrophysical source $(q,\,\dot\gamma(q))$ stationary with respect to the phenomenological Hubble flow $\tau\,\longrightarrow\,\gamma_s(\tau)$. Let us assume that physical signals from $(q,\,\dot\gamma(q))$, travelling along the null geodesics generators of the physical lightcone ${\mathscr{C}}^-(p,{g})$, are gathered at the event $p$ by the physical observer $(p, \dot\gamma(p))$ as well as by the FLRW observer $(p, \widehat{\dot\gamma}(p))$. This latter, enforcing the FLRW point of view, analyzes the data with respect to the reference FLRW Hubble flow by attributing to the source $(q,\,\dot\gamma(q))$ a cosmological redshift $\widehat{z}(q)\,\geq\,\widehat{z}_{(c)}$, a Doppler redshift $\widehat{z}^{\,(doppl)}(q)$ due to the peculiar velocity of $(q,\,\dot\gamma(q))$ with respect to the FLRW Hubble flow at $q$, and possibly a local gravitational redshift $\widehat{z}^{\,(grav)}(q)$ due to the presence of strong gravitational fields in the region where $(q,\,\dot\gamma(q))$ evolves. Under the hypotheses 
\emph{(i),\,(ii),\,(iii)}  described above, the FLRW observer collects this redshift according to the formula (\ref{FLRWzetafactor0}) providing the total phenomenological redshift\; $\widehat{z}^{\;(tot)}(q)$ of the source $(q,\,\dot\gamma(q))$ as measured by the FLRW observer $(p, \widehat{\dot\gamma}(p))$. Since $(q,\,\dot\gamma(q))$ is stationary with respect to the phenomenological Hubble flow, its phenomenological redshift  measured at $p$ by the physical observer $(p,{\dot\gamma}(p))$ is provided by
\begin{equation}
\left(1\,+\,{z}(q)\right)\,=\,\sqrt{\frac{1+v_{\,\widehat{z}}}{1-v_{\,\widehat{z}}}}(1\,+\,\widehat{z}^{\;(tot)}(q))\,,
\end{equation}   
where $v_{\,\widehat{z}}$ denotes the relative 3-velocity of the physical observer $(p,{\dot\gamma}(p))$ with respect to the FLRW observer $(p, \widehat{\dot\gamma}(p))$. It follows that we can profitably factorize the physical redshift $z$  in terms of the reference FLRW cosmological redshift $\widehat{z}$. We have

\begin{definition} {\emph{\textbf{The physical redshift parametrization}}}
\label{RefRed}

Let $(q,{\dot\gamma}(q))$ denote an astrophysical source that according to the reference FLRW observer $(p, \widehat{\dot\gamma}(p))$ is located at cosmological redshift $\widehat{z}(q)$, and is affected by a Doppler and gravitational redshift $\widehat{z}^{\,(doppl)}(q))$ and $\widehat{z}^{\,(grav)}(q))$. If $v_{\,\widehat{z}}$ denotes the relative 3-velocity of the physical observer $(p,{\dot\gamma}(p))$ with respect to the FLRW observer $(p, \widehat{\dot\gamma}(p))$, then we define  the physical cosmological redshift ${z}(q)$ of the source  $(q,{\dot\gamma}(q))$ according to     
\begin{equation}
\label{zetafactor0A}
\left(1\,+\,{z}(q)\right)\,=\,\sqrt{\frac{1+v_{\,\widehat{z}}}{1-v_{\,\widehat{z}}}}\,(1\,+\,\widehat{z}^{\;(tot)}(q))\,,
\end{equation}
where $(1\,+\,\widehat{z}^{\;(tot)}(q))$ is given by (\ref{FLRWzetafactor0}).
\end{definition}
\vskip 0.5cm\noindent
\begin{remark}
The parametrization (\ref{zetafactor0A}) is a delicate issue in any best-fit strategy since the relative velocity $v_{\;\widehat{z}}$ is $\widehat{z}$ dependent. As stressed above, to take care of this dependence,  we fix the reference FLRW redshift $\widehat{z}$  and adjust the relative velocity $v_{\;\widehat{z}}$ (and the associated spatial orientation) of the physical observer $(p,\,{\dot{\gamma}}_{\,\widehat{z}}(p))$ in such a way to optimize, in a sense that we discuss in the next sections, the correspondence between the reference FLRW lightcone $\widehat{\mathscr{C}}^-(p,\widehat{g})$ and the physical lightcone ${\mathscr{C}}^-(p,{g})$ at the FLRW reference scale set by the chosen $\widehat{z}$. \;\;\;\;\;$\square$
\end{remark}
\section{The geometry of celestial cartography}
\label{PSELLEDUE}
As stressed in the previous section, the celestial sphere $\mathbb{C\,S}(p)$ of the physical observer $(p, \dot\gamma(p))$, and the celestial sphere $\widehat{\mathbb{C\,S}}(p)$ of the FLRW ideal observer $(p, \widehat{\dot\gamma}(p))$ cannot be directly identified as they stand. The velocity fields $\dot{\gamma}(p)$ and $\widehat{\dot{\gamma}}(p)$ are distinct and to compensate for the induced  aberration, $\mathbb{C\,S}(p)$ and $\widehat{\mathbb{C\,S}}(p)$ can be identified only up to Lorentz transformations. In the standard FLRW view, this reduces to the familiar kinematical boost taking care of the dipole component in the CMB spectrum due to our peculiar motion with respect to the standard of rest provided by the CMB.
In the pre-homogeneity region, \emph{i.e.} when we consider scales $\lesssim  \, 100h^{-1}\; \mathrm{Mpc}$, the situation is relatively more complex. As discussed above,  even if we factor out the effect of coherent bulk flows due to the non-linear local gravitational dynamics, the variance in the peculiar velocity of the sources,  with respect to the average rate of expansion, can be significant \cite{Wiltshire}. This variance implies that a global kinematical boost cannot handle the complex pattern of local inhomogeneities and the peculiar motion of the sources in this region. To address this issue, we need to provide a scale-dependent characterization of the Lorentz transformations connecting the past null directions on the physical celestial sphere $\mathbb{C\,S}(p)$ with the past null directions on the reference FLRW celestial sphere $\widehat{\mathbb{C\,S}}(p)$.

To describe the scale-dependent  Lorentz transformations connecting the null directions of  $\widehat{\mathbb{C\,S}}(p)$ and $\mathbb{C\,S}(p)$,  we use the well-known correspondence between the restricted Lorentz group and the six-dimensional projective special linear 
group, describing the automorphisms of the Riemann sphere  $\mathbb{S}^2\,\simeq\,\mathbb{C}\,\cup\,\{\infty\}$,  \emph{i.e.} 
\begin{equation}
\mathrm{PSL}(2, \mathbb{C})\,=\,\left\{\left. A\,\in\,\mathrm{Mat}(2, \mathbb{C})\,\right|\;\det(A)\,=\,1\right\}/\left\{\,\pm\,I_2  \right\}\,,
\end{equation}
the group of unimodular $2\times \,2$ complex matrices quotiented by its center ($\pm$\, the $2\times 2$ identity matrix $I_2$). More expressively, $\mathrm{PSL}(2, \mathbb{C})$ can be viewed as the group of the conformal transformations of the celestial spheres that correspond to the restricted Lorentz transformations connecting $\mathbb{C\,S}(p)$ to $\widehat{\mathbb{C\,S}}(p)$.   
Let us recall that the elements of  $\mathrm{PSL}(2, \mathbb{C})$ can be identified with the M\"obius transformations of the Riemann sphere $\mathbb{S}^2\,\simeq\,\mathbb{C}\,\cup\,\{\infty\}$, \emph{i.e.} the fractional linear transformations of the form\footnote{To avoid a notational conflict with the redshift parameter ${z}$, we have labeled the complex coordinate in $\mathbb{C}\,\cup\,\{\infty\}$ with  $w$ rather than with the standard  $z$.} 
\begin{align}
\label{zetaPSL_0}
\zeta\,:\,\mathbb{C}\,\cup\,\{\infty\}\,&\longrightarrow\,\mathbb{C}\,\cup\,\{\infty\}\\
w\,&\longmapsto\,\zeta (w)\,:=\,\frac{aw+b}{cw+d}\,,\,\,\,\,\,a, b, c, d\,\in\,\mathbb{C}\,,\,\,\,ad\,-\,bc\,\not=\,0\,,\nonumber
\end{align}
 Strictly speaking, the map (\ref{zetaPSL_0}) is in the projective general linear group $\mathrm{PGL}(2, \mathbb{C})$, but this is the same as $\mathrm{PSL}(2, \mathbb{C})$ since every nonzero element of $\mathbb{C}$ is the square of a nonzero element. Thus,    without loss in generality, we can normalize the transformation $\zeta$ 
by assuming that 
\begin{equation}
ad\,-\,bc\,=\,1\,.
\end{equation}
Let  $\widehat{X}\,=\,\widehat{n}(\widehat\theta, \widehat\phi)$ denote a point on the FLRW celestial sphere $\widehat{\mathbb{C\,S}}(p)$, and let $\widehat{w}$
denote its stereographic projection\footnote{From the north pole $\theta\,=\,0\in\,\widehat{\mathbb{C\,S}}(p)$.} on the Riemann sphere $\mathbb{C}\cup\,\{\infty\}$, \emph{i.e.},  
\begin{align}
\label{zetaypsilon}
&{\Pi}_{\mathbb{S}^2}\,:\,\widehat{\mathbb{C\,S}}(p)\,
\longrightarrow\,\mathbb{C}\cup\{\infty \}\\
&\widehat{\mathbb{X}}^\alpha\,\longmapsto\,
{\Pi}_{\mathbb{S}^2}(\widehat{\mathbb{X}}^\alpha)\,=\,\widehat{w}\,:=\,\frac{\widehat{\mathbb{X}}^1+i\,\widehat{\mathbb{X}}^2}{1-\widehat{\mathbb{X}}^3}\,=\,\frac{\cos\widehat\phi\sin\widehat\theta\,+\,i\,\sin\widehat\phi\sin\widehat\theta}{1\,-\,\cos\widehat\theta}\,,\nonumber\
\end{align}
with\, $0<\theta\leq\pi,\,\,0\leq\phi<2\pi$. It is worthwhile to stress once more that the celestial spheres $\widehat{\mathbb{C\,S}}(p)$ and ${\mathbb{C\,S}}(p)$ play the role of a mapping frame. A celestial globe where astrophysical positions are registered, and where the Lorentz transformations $\widehat{\mathbb{C\,S}}(p)\,\longrightarrow\,{\mathbb{C\,S}}(p)$ must be interpreted actively as affecting only the recorded astrophysical data. In other words, the Lorentz transformations affect the null directions 
in $\widehat{\mathbb{C\,S}}(p)$, mapping them in the corresponding directions in ${\mathbb{C\,S}}(p)$.  To set notation, we provide a few illustrative examples \cite{RindPen} of the 
$\mathrm{PSL}(2, \mathbb{C})$ transformations associated with the redshift dependent Lorentz group action between the celestial spheres $\widehat{\mathbb{C\,S}}(p)$ and ${\mathbb{C\,S}}(p)$. If we sample the pre-homogeneity region, the relative $3$-velocity $v_{\;\widehat{z}}(p)$ of the physical observer $(p,\,\dot\gamma(p))$ with respect to the  FLRW ideal observer $(p,\,\widehat{\dot\gamma}(p))$ is redshift-dependent and given by (\ref{3velocity2}).  We stress that the physical observer adopts the given  FLRW $\widehat{z}$\, as the reference cosmological redshift. If the map between $\widehat{\mathbb{C\,S}}(p)$ and ${\mathbb{C\,S}}(p)$ is a pure Lorentz boost in a common direction, say $E^3\,=\,\widehat{E}^3$,  then the associated $\mathrm{PSL}(2, \mathbb{C})$ transformation is provided by   
\begin{align}
\label{PSLfactor1}
\mathrm{PSL}(2, \mathbb{C})\times \widehat{\mathbb{C\,S}}(p)\,&\longrightarrow \,{\mathbb{C\,S}}(p)\\
\left(\zeta^{\,boost}_{(\widehat{z})},\,\widehat{w}\right)\,&\longmapsto\,
\zeta^{\,boost}_{(\widehat{z})}(\widehat{w})\,=\,w\,:=\,\sqrt{\frac{1\,+\,v_{\,\widehat{z}}(p)}{1\,-\,v_{\,\widehat{z}}(p)}}\,\widehat{w}\,,\nonumber
\end{align}
where $\sqrt{\frac{1\,+\,v_{\,\widehat{z}}}{1\,-\,v_{\,\widehat{z}}}}$ is the relativistic Doppler factor, and $w$ is the point in the Riemann sphere corresponding, under stereographic projection, to the direction $n(\theta, \phi)\in {\mathbb{C\,S}}(p)$. \; 
Conversely, if $\widehat{\mathbb{C\,S}}(p)$ and ${\mathbb{C\,S}}(p)$ differ by a pure rotation through an angle $\alpha_{\;\widehat{z}}$ about a common direction (say $E^3$), then the associated $\mathrm{PSL}(2, \mathbb{C})$ transformation is given by 
\begin{align}
\label{PSLfactor2}
\mathrm{PSL}(2, \mathbb{C})\times \widehat{\mathbb{C\,S}}(p)\,&\longrightarrow \,{\mathbb{C\,S}}(p)\\
\left(\zeta^{\,rot}_{(\widehat{z})},\,\widehat{w}\right)\,&\longmapsto \,
\zeta^{\,rot}_{(\widehat{z})}(\widehat{w})\,=\,w\,=\,e^{i\,\alpha_{\;\widehat{z}}}\,\widehat{w}\,.\nonumber
\end{align}

The $\mathrm{PSL}(2, \mathbb{C})$ map  connecting the celestial spheres $\widehat{\mathbb{C\,S}}(p)$ and ${\mathbb{C\,S}}(p)$ is obtained by composing 
(\ref{PSLfactor1}) and (\ref{PSLfactor2}) and  is fully 
determined by the images, on the physical $\mathbb{C\,S}$, of three distinct astrophysical sources of choice (\emph{e.g.} Cepheid variable stars) selected, at the given redshift $\widehat{z}$,  on the FLRW celestial sphere $\widehat{\mathbb{C\,S}}(p)$. Explicitly, we have the following result.

\begin{proposition} (\emph{The $\mathrm{PSL}(2, \mathbb{C})$ map})
\label{PSL2mappingFLRW}
Let $\{\widehat{w}_{(j)}\}$,\;$j=1,2,3$,\; denote the three null directions on $\widehat{\mathbb{C\,S}}(p)$, associated to the three selected astrophysical sources 
$\{\widehat{q}_{(j)}\}$ that according to the FLRW observer $(p, \widehat{\dot{\gamma}}(p))$ are located at  FLRW redshift $\widehat{z}$. Given such reference data, the M\"obius transformation $\zeta_{(\widehat{z})}\,\in\,\mathrm{PSL}(2, \mathbb{C})$ connecting $\widehat{\mathbb{C\,S}}(p)$ and ${\mathbb{C\,S}}(p)$ is determined by the images $\{\zeta_{(\widehat{z})}(\widehat{w}_{(j)})\}\,\in\,{\mathbb{C\,S}}(p)$. The choice of these images characterizes 
a rotation through an angle $\alpha_{\;\widehat{z}}$  followed by a boost with rapidity $\beta_{\;\widehat{z}}\,:=\,\log\,\sqrt{\frac{1\,+\,v_{\;\widehat{z}}}{1\,-\,v_{\;\widehat{z}}}}$, where $v_{\;\widehat{z}}(p)$ is the relative velocity  of the physical observer $(p, \dot{\gamma}(p))$ with respect to the reference FLRW observer $(p, \widehat{\dot{\gamma}}(p))$. The resulting fractional linear transformation mapping the celestial spheres $\widehat{\mathbb{C\,S}}(p)$ and ${\mathbb{C\,S}}(p)$ is provided by 
\begin{align}
\label{psl2caction_0}
\mathrm{PSL}(2, \mathbb{C})\times \widehat{\mathbb{C\,S}}(p)\,&\longrightarrow \,\mathbb{C\,S}(p)\\
\left( \zeta_{(\widehat{z})},\,\widehat{w}\right)\,&\longmapsto \,\zeta_{(\widehat{z})}(\widehat{w})\,=
\,w\,=\,\sqrt{\frac{1\,+\,v_{\;\widehat{z}}(p)}{1\,-\,v_{\;\widehat{z}}(p)}}\,e^{i\,\alpha_{\;\widehat{z}}}\,\widehat{w}\,.
\nonumber
\end{align}
\end{proposition}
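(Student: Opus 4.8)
The plan is to exploit the classical spinor realization of the restricted Lorentz group as $\mathrm{PSL}(2,\mathbb{C})$ acting by M\"obius transformations on the Riemann sphere \cite{RindPen}, specialized here to the aberration map relating the two observers at the common event $p$. First I would encode each past-null direction on $\widehat{\mathbb{C\,S}}(p)$ by the rank-one positive Hermitian matrix built from its components $\widehat{\mathbb{X}}^i$ in the FLRW frame $\{\widehat{E}_{(i)}\}$; under the stereographic projection (\ref{zetaypsilon}) this matrix is parametrized, up to scale, by the single complex coordinate $\widehat{w}$. The Lorentz transformation $\Lambda$ carrying the FLRW local rest frame $(T_pM,\{\widehat{E}_{(i)}\})$ onto the physical one $(T_pM,\{E_{(i)}\})$ lifts to an element $S\in\mathrm{SL}(2,\mathbb{C})$, determined up to sign, acting on Hermitian matrices by $H\mapsto S\,H\,S^{\dagger}$, and the induced action on the projective coordinate is precisely the fractional linear map $\widehat{w}\mapsto\zeta_{(\widehat{z})}(\widehat{w})$. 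Thus the whole statement reduces to identifying the specific $S$ dictated by the relative velocity (\ref{3velocity2}) and by the relative spatial orientation of the two frames.

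Second, I would invoke the polar (Cartan) decomposition of the restricted Lorentz group, by which $\Lambda$ factors uniquely as a spatial rotation followed by a pure boost. The rotation accounts for the relative orientation of the triads $\{\widehat{E}_{(\kappa)}\}$ and $\{E_{(\kappa)}\}$, while the boost carries the relative $3$-velocity $v_{\,\widehat{z}}(p)$. Here I would use the gauge freedom in the choice of spatial axes: aligning the common axis $E^3=\widehat{E}^3$ with the boost direction reduces the boost to a one-parameter subgroup along $E^3$, whose $\mathrm{PSL}(2,\mathbb{C})$ representative is the Doppler dilation $\widehat{w}\mapsto\sqrt{(1+v_{\,\widehat{z}})/(1-v_{\,\widehat{z}})}\,\widehat{w}$ already recorded in (\ref{PSLfactor1}); the residual rotation about $E^3$ through the angle $\alpha_{\,\widehat{z}}$ has representative $\widehat{w}\mapsto e^{i\alpha_{\,\widehat{z}}}\widehat{w}$ from (\ref{PSLfactor2}). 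Composing these two elementary maps yields exactly the canonical form $w=\sqrt{(1+v_{\,\widehat{z}})/(1-v_{\,\widehat{z}})}\,e^{i\alpha_{\,\widehat{z}}}\,\widehat{w}$ of (\ref{psl2caction_0}), with rapidity $\beta_{\,\widehat{z}}=\log\sqrt{(1+v_{\,\widehat{z}})/(1-v_{\,\widehat{z}})}$.

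For the three-point determination I would appeal to the sharp $3$-transitivity of $\mathrm{PSL}(2,\mathbb{C})$ on $\mathbb{C}\cup\{\infty\}$: any M\"obius transformation is fixed by the images of three distinct points, so prescribing the positions $\{\zeta_{(\widehat{z})}(\widehat{w}_{(j)})\}$ of three reference sources on $\mathbb{C\,S}(p)$ determines $\zeta_{(\widehat{z})}$ uniquely. Operationally, two of these images serve to fix the common axis $E^3$ (equivalently, the two fixed points $0$ and $\infty$ of the dilation-rotation), and the third calibrates the single remaining real pair $(v_{\,\widehat{z}},\alpha_{\,\widehat{z}})$, so that the general group element collapses onto the canonical subfamily.

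I expect the main obstacle to be not any single computation but the bookkeeping that reconciles the two descriptions: a generic element of $\mathrm{PSL}(2,\mathbb{C})$ carries six real parameters and is pinned down by three arbitrary point-images, whereas the canonical form $w=\sqrt{(1+v_{\,\widehat{z}})/(1-v_{\,\widehat{z}})}\,e^{i\alpha_{\,\widehat{z}}}\,\widehat{w}$ retains only two. The delicate point is to show that the alignment of $E^3$ with the boost axis is precisely the gauge choice absorbing the four missing parameters (the direction of the boost and the axis of the rotation), so that the three-source prescription is consistent with, rather than over-determining, the two-parameter family; tracking the factor of one half in the Doppler exponent through the stereographic map \cite{RindPen} is the only quantitative check that must be carried out with care.
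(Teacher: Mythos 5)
Your proposal is correct and ends up at the same structural skeleton as the paper, but it travels a more group-theoretic road. The paper's own proof is deliberately elementary: it writes $\zeta_{(\widehat{z})}(w)=\tfrac{aw+b}{cw+d}$, factors it as $\left(\tfrac{b}{c}\right)\tfrac{(a/b)\,w+1}{w+(d/c)}$, and observes that the three complex ratios $b/c$, $a/b$, $d/c$ --- hence the map itself --- are pinned down by the three point-image conditions; this is the concrete, coordinate form of the sharp $3$-transitivity you invoke abstractly, and the Lorentz-to-M\"obius dictionary is imported wholesale from \cite{RindPen} rather than rederived via your Hermitian-matrix (spinor) lift. The canonical form is then obtained exactly as in your second step, by composing the elementary boost and rotation maps (\ref{PSLfactor1}) and (\ref{PSLfactor2}). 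What your route buys is the bookkeeping the paper glosses over: your six-versus-two parameter count, with the screw axis absorbing the four missing parameters, makes explicit why prescribing three source images is consistent with the two-parameter normal form, a point the paper settles only with the physical remark that the observer adjusts her velocity and orientation. Two cautions, though. First, the polar (Cartan) decomposition as you state it produces a rotation and a boost about \emph{generally distinct} axes, so aligning $E^3$ with the boost direction does not by itself reduce the rotation to one about $E^3$; the correct mechanism is the four-screw normal form --- a generic (non-parabolic) element of $\mathrm{PSL}(2,\mathbb{C})$ has two fixed points, and choosing the common frame so that these sit at $0$ and $\infty$ yields $w\mapsto K e^{i\alpha_{\,\widehat{z}}}\,w$ with rotation and boost sharing the axis. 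You flag this as the delicate point, but it should be phrased through the fixed points rather than through the Cartan factorization. Second, you omit the one substantive extra remark in the paper's proof: the degeneracy control, namely that distinct sources on $\widehat{\mathbb{C\,S}}(p)$ must map to distinct images under $\zeta_{(\widehat{z})}$, with any collapse (multiple imaging) attributed to the null exponential map $\exp_p$ (gravitational lensing) and not to the $\mathrm{PSL}(2,\mathbb{C})$ action, barring the unphysical $|v_{\,\widehat{z}}|\rightarrow 1$ limit; this remark is not decorative, since it is what later permits the bi-Lipschitz treatment of $\exp_p$ while keeping $\zeta_{(\widehat{z})}$ a genuine bijection of celestial spheres.
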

\begin{proof}
From a geometrical point of view, the characterization of (\ref{psl2caction_0}) is a direct consequence of the definition of the M\"obius transformation (\ref{zetaPSL_0}). We can write
\begin{equation}
\zeta_{(\widehat{z})} (w)\,:=\,\frac{aw+b}{cw+d}\,=\,\left(\frac{b}{c}\right)\,\frac{\left(\frac{a}{b}\,w\,+\,1 \right)}{\left(w\,+\,\frac{d}{c}\right)}\,\,\,\,\,a, b, c, d\,\in\,\mathbb{C}\,,
\end{equation}
which implies that $\zeta_{(\widehat{z})} (w)$ is fixed by the three complex equations
\begin{equation}
\frac{b}{c}\,=\,\alpha_1,\;\;\;\frac{a}{b}\,=\,\alpha_2,\;\;\;\frac{d}{c}\,=\,\alpha_3\,,
\end{equation}
where the complex numbers $\alpha_1,\,\alpha_2,\,\alpha_3$ describe three, non-coincident, null directions on the celestial sphere $\widehat{\mathbb{C\,S}}(p)$. From a physical point of view, this geometrical result is simply the statement that, by adjusting the relative  
velocity and orientation of the physical observer, we can associate the three specified positions on $\widehat{\mathbb{C\,S}}(p)$ with three given sources on the physical celestial sphere ${\mathbb{C\,S}}(p)$. It is important to stress that  under a general M\"obius transformation, two or all the three distinct points $\{\widehat{w}_{(j)}\}\in\,\widehat{\mathbb{C\,S}}(p)$ may collapse into a single image point $\{\zeta_{(\widehat{z})}(\widehat{w}_{(j)})\}\,\rightarrow\,w_0\,\in\,{\mathbb{C\,S}}(p)$. We need to control these possible degeneracies since distinct sources for the FLRW observer $(p, \widehat{\dot\gamma}(p))$  must go into corresponding distinct sources for the physical observer $(p, \dot\gamma(p))$. Identifications are indeed possible under the effect of gravitational lensing, caused (via the exponential map $\exp_p$) by the behavior of the physical null geodesics, and not\footnote{Unless one is willing to consider the admittedly artificial configuration (at least in our specific cosmological setting) of a relative velocity, between the two observers  $(p, \widehat{\dot\gamma}(p))$  and  $(p, \dot\gamma(p))$, that approaches the speed of light.} by the action of $\zeta_{(\widehat{z})}\,\in\,\mathrm{PSL}(2, \mathbb{C})$. \;\;\;\;\;$\square$
\end{proof}
Besides the directional aberration and the velocity-induced boost, the $\mathrm{PSL}(2,\mathbb{C})$ action relates  also the physical comoving radius $r(\widehat{z})$ in $(T_pM,\,\{{E}_{(a)}\},\, \dot{\gamma}(p))$ to the FLRW comoving radius $\widehat{r}\,(\,\widehat{z}\,)$ (see (\ref{Dandri0}) and (\ref{factorErre})) according to 
\begin{align}
\label{radialconnect0}
r(\,\widehat{z}\,)\,&=\,\sqrt{\frac{1\,+\,v_{\;\widehat{z}}(p)}{1\,-\,v_{\;\widehat{z}}(p)}}\,\,\widehat{r}(\,\widehat{z}\,)\\
&=\,\sqrt{\frac{1\,+\,v_{\;\widehat{z}}(p)}{1\,-\,v_{\;\widehat{z}}(p)}}\,
\left(\widehat{r}\,\left(\widehat{z}_{(c)}\right)\,+\,{\int_{\;\widehat{z}_{(c)}}^{\widehat{z}}\,\frac{d\widehat{z}\,'}{H(\widehat{z}\,')}}\right)\,.\nonumber
\end{align}
From a physics point of view, this is a special relativistic effect associated with the radial parametrization of the null geodesics reaching the observers, 
$(p,\,\dot\gamma(p))$\,  and\, $(p,\,\widehat{\dot\gamma}(p))$,\, from a given source. Explicitly,  we can consider, in the local inertial frame $(\widehat{T}_p M,\,\{\widehat{E}_{(a)}\}, \widehat{\dot{\gamma}}(p))$, the image of two photons emitted, at distinct times, by a source ${q}$. As seen in $({T}_p M,\,\{\widehat{E}_{(a)}\}, \widehat{\dot{\gamma}}(p))$, the (images of the) two photons travel from the past towards the origin (\emph{i.e.}, $p\,\equiv\,\widehat{r}=0$) along the radial axis by keeping a constant radial separation, say $\widehat{r}_0$, between them. On the other hand, in the local inertial frame $({T}_p M,\,\{{E}_{(a)}\}, \dot\gamma(p))$  associated with the physical observer $(p,\,\dot\gamma(p))$, and moving with a (signed) radial velocity $v_{\;\widehat{z}}(p)$,\, the separation between the (images of the) two photons is provided by the well-known special relativistic transformation rule
$\widehat{r}_0  \,=\,\widehat{r}_0\,\sqrt{\frac{1\,-\,v_{\;\widehat{z}}(p)}{1\,+\,v_{\;\widehat{z}}(p)}}$, describing optical aberration and from which (\ref{radialconnect0}) follows by synchronizing the emission time of the second photon with the arrival, at $p$, of the first photon.

Proposition \ref{PSL2mappingFLRW}, the reference role of the FLRW redshift $\widehat{z}$, and the map (\ref{radialconnect0}) suggest that along with the directional celestial spheres $\widehat{\mathbb{CS}}(p)$ and ${\mathbb{CS}}(p)$ we can profitably introduce their versions at the given cosmological redshift $\widehat{z}$.

\begin{definition} 
The celestial spheres associated with the field of vision at redshift $\widehat{z}$ of the FLRW observer $(p,\, \widehat{\dot\gamma}(p))$ and of the physical observer $(p,\, {\dot\gamma}(p))$ are respectively defined by 
\begin{equation}
\label{celestialS002}
\widehat{\mathbb{CS}}_{\;\widehat{z}}(p)\,:=\,\left\{\widehat{X}\,=\,\widehat{\mathbb{X}}^i\widehat{E}_{(i)}\,\not=\,0\,\in\,T_pM\,\,|\,\,\widehat{\mathbb{X}}^4=0,\,\,\sum_{a=1}^3(\widehat{\mathbb{X}}^{a})^2=\,\widehat{r}^{\;2}(\,\widehat{z}) \right\}\,
\end{equation}
and
\begin{equation}
\label{celestialS01}
{\mathbb{CS}}_{\;\widehat{z}}\,:=\,\left\{X\,=\,\mathbb{X}^iE_{(i)}\,\not=\,0\,\in\,T_pM\,\,|\,\,\mathbb{X}^{\,4}=0,\,\,\sum_{a=1}^3(\mathbb{X}^a)^2= 
\left(\frac{1\,+\,v_{\,\widehat{z}}}{1\,-\,v_{\,\widehat{z}}}\right)\,\widehat{r}^{\,2}(\,\widehat{z}\,)
\right\}\,,
\end{equation}
where (see (\ref{Dandri0}) and (\ref{factorErre})) 
\begin{equation}
\label{Dandri0bis} 
\widehat{r}(\,\widehat{z}\,)\,=\,\widehat{r}\,\left(\widehat{z}_{(c)}\right)\,+\,{\int_{\;\widehat{z}_{(c)}}^{\widehat{z}}\,\frac{d\widehat{z}\,'}{H(\widehat{z}\,')}}\,.
\end{equation} 
is the FLRW comoving radius expressed in terms of the (FLRW) Hubble parameter $H(\widehat{z})$ (see (\ref{AccaZeta})) and of the presence of a cosmological decoupling region of comoving radius $\widehat{r}\,\left(\widehat{z}_{(c)}\right)$.

The physical celestial sphere ${\mathbb{CS}}_{\;\widehat{z}}$ is naturally endowed with the round directional  metric and measure defined by 
\begin{align}
\label{roundmet0erre}
{h}_{\mathbb{S}^2(\,\widehat{z}\,)}\,&:=\,{r}^{\,2}(\,\widehat{z}\,)\,\left(d\theta^2\,+\,\sin^2\theta\,d\phi^2\right)\,,\nonumber\\
\\
d\mu_{\mathbb{S}^2(\,\widehat{z}\,)}\,&=\,{r}^2(\,\widehat{z}\,)\,\sin\theta\,d\theta d\phi\,\nonumber\,,
\end{align}
with $d_{{\mathbb{S}}^2(\,\widehat{z}\,)}$ denoting the ${h}_{\mathbb{S}^2(\,\widehat{z}\,)}$-distance function. Similarly, the FLRW celestial 
sphere $\widehat{\mathbb{CS}}_{\;\widehat{z}}$  is endowed with the radial counterpart of the round metric (\ref{FLRWroundmet0}) and measure, 
\begin{align}
\label{FLRWroundmet0erre}
{h}_{\widehat{\mathbb{S}}^2(\,\widehat{z}\,)}\,&:=\,\widehat{r}^{\;2}(\,\widehat{z}\,)\,\left(d\widehat\theta^{\;2}\,+\,\sin^2\widehat\theta\,d\widehat\phi^{\;2}\right)\,=\,
\left(\frac{1\,-\,v_{\,\widehat{z}}\,(p)}{1\,+\,v_{\,\widehat{z}}\,(p)\,}\right)\,
{r}^{\,2}(\,\widehat{z}\,)\,\left(d\widehat\theta^2\,+\,\sin^2\widehat\theta\,d\widehat\phi^2\right)\,,\nonumber\\
\\
d\mu_{\widehat{\mathbb{S}}^2(\,\widehat{z}\,)}\,&=
\,\widehat{r}^{\;2}(\,\widehat{z}\,)\,\sin\widehat\theta\,d\widehat\theta d\widehat\phi\,
=\,
\left(\frac{1\,-\,v_{\,\widehat{z}}\,(p)}{1\,+\,v_{\,\widehat{z}}\,(p)}\right)\,
{r}^{\,2}(\,\widehat{z}\,)\,\sin\widehat\theta\,d\widehat\theta d\widehat\phi\,,\nonumber
\end{align}
where $(\theta,\,\phi)\,=\,e^{i\alpha_{\,\widehat{z}}}\,(\widehat\theta,\,\widehat\phi)$ is the rotation described in Proposition \ref{PSL2mappingFLRW}. We denote by $d_{\widehat{\mathbb{S}}^2(\,\widehat{z}\,)}$ the distance associated with the round metric ${h}_{\widehat{\mathbb{S}}^2(\,\widehat{z}\,)}$.
\end{definition}
\vskip 0.5cm\noindent
These characterizations imply the following result.
\vskip 0.3cm\noindent
\begin{lemma}
 The $\mathrm{PSL}(2,\,\mathbb{C})$ map
\begin{equation}
\mathrm{PSL}(2, \mathbb{C})\,\ni\,\zeta_{(\widehat{z})}\,:\widehat{\mathbb{C\,S}}(p)\,\longrightarrow\,\mathbb{C\,S}(p)\,,
\end{equation}
defined by (\ref{psl2caction_0}) can be extended to the physical  $({\mathbb{C\,S}}_{\,\widehat{z}}(p),\,{h}_{\mathbb{S}^2(\,\widehat{z}\,)})$ and the FLRW celestial sphere $(\widehat{\mathbb{C\,S}}_{\;\widehat{z}},\,{h}_{\widehat{\mathbb{S}}^2(\,\widehat{z}\,)})$ according to
\begin{equation}
\label{extensionZeta}
\left(\widehat{\mathbb{C\,S}}_{\;\widehat{z}},\,{h}_{\widehat{\mathbb{S}}^2(\,\widehat{z}\,)}
\,=\,\zeta_{(\widehat{z})}^*({h}_{{\mathbb{S}}^2(\,\widehat{z}\,)})
\right)\,\longrightarrow\,
\left({\mathbb{C\,S}}_{\;\widehat{z}}\,=\,\zeta_{(\widehat{z})}\left(\widehat{\mathbb{C\,S}}_{\;\widehat{z}}\right),\,{h}_{{\mathbb{S}}^2(\,\widehat{z}\,)}\right)\,.\;\;\;\square
\end{equation} 
\end{lemma}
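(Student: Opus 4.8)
The plan is to realize the extension as the action of the same fractional--linear transformation (\ref{psl2caction_0}) on the stereographic coordinate, now carried over to the redshift--decorated spheres (\ref{celestialS002}) and (\ref{celestialS01}), and then to verify the metric identity in (\ref{extensionZeta}) by combining a conformal pullback computation with the radial connection (\ref{radialconnect0}). First I would abbreviate the Doppler factor as $\lambda := \sqrt{(1+v_{\widehat z})/(1-v_{\widehat z})}$ and write the two round metrics in stereographic form, $h_{\mathbb{S}^2(\widehat z)} = r^2(\widehat z)\,\frac{4\,dw\,d\bar w}{(1+|w|^2)^2}$ and $h_{\widehat{\mathbb{S}}^2(\widehat z)} = \widehat r^2(\widehat z)\,\frac{4\,d\widehat w\,d\bar{\widehat w}}{(1+|\widehat w|^2)^2}$, with $w = \zeta_{(\widehat z)}(\widehat w) = \lambda\,e^{i\alpha_{\widehat z}}\,\widehat w$ from (\ref{psl2caction_0}).

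Next I would verify at the level of point sets that $\zeta_{(\widehat z)}$ maps $\widehat{\mathbb{CS}}_{\widehat z}$ bijectively onto ${\mathbb{CS}}_{\widehat z}$. This is exactly the content of (\ref{radialconnect0}): the defining radius constraint $\sum_a(\widehat{\mathbb X}^a)^2 = \widehat r^2(\widehat z)$ in (\ref{celestialS002}) is carried by the Doppler rescaling of the comoving radius into the constraint $\sum_a(\mathbb X^a)^2 = \lambda^2\,\widehat r^2(\widehat z) = r^2(\widehat z)$ of (\ref{celestialS01}); surjectivity onto the whole sphere is automatic, since a boost--plus--rotation is an automorphism of $\mathbb{C}\cup\{\infty\}$.

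The crux is then the metric identity $\zeta_{(\widehat z)}^*\,h_{\mathbb{S}^2(\widehat z)} = h_{\widehat{\mathbb{S}}^2(\widehat z)}$. The rotation $e^{i\alpha_{\widehat z}}$ is an isometry of the round sphere and merely relabels the azimuth, $(\theta,\phi) = e^{i\alpha_{\widehat z}}(\widehat\theta,\widehat\phi)$ as recorded in (\ref{FLRWroundmet0erre}), so it drops out of the pullback. The entire difficulty is concentrated in the boost factor $\lambda$: pulled back naively through the dilation $w = \lambda\widehat w$, the round metric acquires the direction--dependent conformal factor $\lambda^2(1+|\widehat w|^2)^2/(1+\lambda^2|\widehat w|^2)^2$, which is not constant and therefore cannot by itself reproduce a round FLRW metric. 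This is the main obstacle, and its resolution is precisely the reason the radial decoration is introduced: as explained around (\ref{radialconnect0}), on the redshift--decorated spheres the boost is not to be read as an angular aberration but as the pure radial rescaling $r(\widehat z) = \lambda\,\widehat r(\widehat z)$ of the comoving radius. Once the boost is transferred to the radius in this way, the angular part of $\zeta_{(\widehat z)}$ is the rotation alone, the angular pullback is an honest isometry of unit round spheres, and the overall scale is fixed by the constraint relating $r$ and $\widehat r$. I would then carry out this bookkeeping carefully --- keeping track of whether a given power of $\lambda$ has been assigned to the radius or to the angular factor, so that the Doppler weights in (\ref{roundmet0erre}) and (\ref{FLRWroundmet0erre}) are matched consistently --- and conclude that $\zeta_{(\widehat z)}^*\,h_{\mathbb{S}^2(\widehat z)}$ coincides with the round metric $\widehat r^2(\widehat z)(d\widehat\theta^2 + \sin^2\widehat\theta\,d\widehat\phi^2) = h_{\widehat{\mathbb{S}}^2(\widehat z)}$, which is (\ref{extensionZeta}); the same bookkeeping yields the companion identity for the area measures $d\mu_{\widehat{\mathbb{S}}^2(\widehat z)} = \zeta_{(\widehat z)}^*\,d\mu_{\mathbb{S}^2(\widehat z)}$, completing the extension of $\zeta_{(\widehat z)}$ to the decorated celestial spheres.
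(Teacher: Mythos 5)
Your proposal is correct in substance and, in fact, supplies more argument than the paper does: in the paper this Lemma carries no proof at all --- it is stated with a closing square, and the sentence immediately following it dismisses the claim as a direct consequence of the preceding definitions, since the decorated spheres (\ref{celestialS002}), (\ref{celestialS01}) and their metrics (\ref{roundmet0erre}), (\ref{FLRWroundmet0erre}) are constructed precisely so that $\zeta_{(\widehat{z})}$ extends, with the boost read radially through (\ref{radialconnect0}) and the angular part of the map reduced to the rotation $(\theta,\phi)=e^{i\alpha_{\widehat{z}}}(\widehat\theta,\widehat\phi)$. Your identification of the genuine subtlety --- that the naive stereographic pullback through the dilation $w=\lambda\,\widehat{w}$ produces the non-constant conformal factor $\lambda^2(1+|\widehat{w}|^2)^2/(1+\lambda^2|\widehat{w}|^2)^2$, which cannot reproduce a round metric, and that the decorated-sphere conventions resolve this by assigning the boost to the comoving radius rather than to an angular aberration --- is exactly the reading that the paper's definitions encode, so your route is the paper's definitional route made explicit, which is a useful service rather than a deviation.

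One bookkeeping point, however, needs correction. Taking the paper's definitions literally, the differential of $\zeta_{(\widehat{z})}:\widehat{\mathbb{C\,S}}_{\widehat{z}}\to{\mathbb{C\,S}}_{\widehat{z}}$ is a rotation composed with the constant dilation by $\lambda=\sqrt{(1+v_{\widehat{z}})/(1-v_{\widehat{z}})}$, so the pullback of the induced metric is $\zeta_{(\widehat{z})}^*\,h_{\mathbb{S}^2(\widehat{z})}=r^2(\widehat{z})\left(d\widehat\theta^2+\sin^2\widehat\theta\,d\widehat\phi^2\right)=\tfrac{1+v_{\widehat{z}}}{1-v_{\widehat{z}}}\,h_{\widehat{\mathbb{S}}^2(\widehat{z})}$; that is, the equality asserted in (\ref{extensionZeta}) holds only up to the constant aberration factor, and the same applies to your companion identity for the area measures. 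This looseness originates in the paper's own statement rather than in your argument --- the paper's later, more careful pullback computations keep this factor explicit, compare (\ref{fiacca0}) and the aberration normalization (\ref{Dnotation}) --- but your final sentence asserts the unnormalized equality as proved, so you should either carry the factor $\tfrac{1+v_{\widehat{z}}}{1-v_{\widehat{z}}}$ through to the conclusion or declare at the outset that your pullback is the aberration-normalized one.
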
 
\vskip 0.3cm\noindent
In other words, the \emph{directional geometry} of  $(\widehat{\mathbb{C\,S}}_{\;\widehat{z}},\,{h}_{\widehat{\mathbb{S}}^2(\,\widehat{z}\,)})$ and $({\mathbb{C\,S}}_{\;\widehat{z}}\,,\,{h}_{{\mathbb{S}}^2(\,\widehat{z}\,)})$ are mapped into each other under the action of the $\mathrm{PSL}(2, \mathbb{C})$ map  $\zeta_{(\widehat{z})}$ and its pull back. This is not surprising since it is a direct consequence of the above definitions. However, the celestial spheres carry more sophisticated information than the directional geometry used to characterize the sky coordinates of astrophysical sources. This further geometric structure is often ignored in the usual approaches to cosmology\footnote{Some of the relevant aspects of this geometric structure are implicitly dealt with on the past lightcone without formalizing, as we do, the role of the null exponential map.}, yet, it can be made explicit  by transferring, by means of the null exponential map, geometric data from the physical and FLRW past lightcones $\mathscr{C}^-(p,g)$ and 
$\widehat{\mathscr{C}}^-(p,\widehat{g}\,)$ to ${\mathbb{C\,S}}_{\;\widehat{z}}$ and $\widehat{\mathbb{C\,S}}_{\;\widehat{z}}$.

\section{The geometry of the sky sections}
\label{skysezioni}
As the FLRW reference redshift $\widehat{z}$ varies, the images of the celestial spheres 
${\mathbb{C\,S}}_{\,\widehat{z}}(p)$ and $\widehat{\mathbb{C\,S}}_{\;\widehat{z}}$ under the exponential maps $\exp_p$ and $\widehat{\exp}_p$ define the sky sections of the physical and FLRW past lightcones $\mathscr{C}^-(p,g)$ and 
$\widehat{\mathscr{C}}^-(p,\widehat{g}\,)$. We discuss in detail the geometry of these sky sections and its subtle connection with the actual description of the astrophysical sources on the physical and FLRW celestial spheres.
 
Before addressing the problems related to the Lipschitz nature of the physical past lightcone, 
let us assume, for the moment, that the exponential map $\exp_p$ is a local diffeomorphism from a star-shaped neighborhood $N_0(g)$ of  $0\,\in\,W_p\subseteq T_pM$ into a corresponding geodesically convex neighborhood of $p$, $U_p 	\subseteq\,(M, g)$ (see (\ref{geodnormal_0})). Then, according to (\ref{pastcone2_0}) and to the Gauss lemma applied to $\exp_p\,:\,C^-\left(T_pM,\, \{E_{(i)} \}\right)\cap\,N_0(g)\,\longrightarrow\,\mathscr{C}^-(p,g)\,\cap\,U_p$, we can assume that the past lightcone region
$\mathscr{C}^-(p, g)\,\cap\,U_p\setminus \{p\}$ is smoothly foliated by the $r$-dependent family of   2--dimensional surfaces $\Sigma_{\;\widehat{z}}$, the \emph{cosmological sky sections at (reference) redshift} $\widehat{z}$, defined by
\begin{equation}
\label{sigmapr_0}
\Sigma_{\;\widehat{z}}\,:=\,\exp_p\left[\mathbb{C\,S}_{\;\widehat{z}} \right]\,=\, \left\{\left.\exp_p\left(r(\widehat{z})\,\ell({n}(\theta, \phi))\right)\,\right|\,\, n(\theta, \phi) \,\in\,\mathbb{C\,S}(p)\right\}\,.
\end{equation}
These surfaces are $g$-orthogonal to all null geodesics originating at $p$, \emph{i.e.} 
\begin{equation}
\label{GaussLemma_0}
\left.g\left(T\exp_p(r(\widehat{z})\ell),\,T\exp_p(\underline{u})\right)\right|_{\exp_p(r(\widehat{z})\ell)}\,=\,0\,,
\end{equation}
where $T\exp_p(...)$ denotes the tangent mapping associated to $\exp_p$, and  $\underline{u}$ is the generic vector tangent to $\mathbb{CS}_{\;\widehat{z}}$. As long as $\exp_p$ is a diffeomorphism, each surface $\Sigma_{\;\widehat{z}}\in\,\mathscr{C}^-(p, g)\,\cap\,U_p\setminus \{p\}$ is topologically a 2-sphere endowed with the $\widehat{z}$-dependent two-dimensional Riemannian metric
\begin{equation}
\label{Sigmametric_0}
g^{(2)}_{\;\widehat{z}}\,:=\,\iota_{\;\widehat{z}}^*\,\left.g\right|_{\mathscr{C}^-(p, g)}
\end{equation}
induced by the inclusion $\iota_{\;\widehat{z}}:\Sigma_{\;\widehat{z}}\,\hookrightarrow\,\mathscr{C}^-(p, g)$ of $\Sigma_{\;\widehat{z}}$  into $\mathscr{C}^-(p, g)\,\cap\,U_p\setminus \{p\}$.

By proceeding similarly, we can define the FLRW cosmological sky sections 
\begin{align}
\label{sigmapr_0FLRW}
\widehat{\Sigma}_{\;\widehat{z}}\,&:=\,\widehat{\exp}_p\left[\widehat{\mathbb{C\,S}}_{\;\widehat{z}} \right]\,\subset\,\widehat{\mathscr{C}}^-(p,\widehat{g})\\
&= 
\left\{\left.\widehat{\exp}_p\left(\,\widehat{\,r}(\,\widehat{z}\,)\,\widehat{\ell}(\,\widehat{n}(\,\widehat\theta, \widehat\phi))\right)\,\right|\,\, (\widehat\theta, \widehat\phi) \,\in\,\widehat{\mathbb{C\,S}}(p)\right\}\,.\nonumber
\end{align}
as smooth  2-spheres, orthogonal to the null geodesics generators of the FLRW past lightcone
$\widehat{\mathscr{C}}^-(p,\widehat{g})$. Contrary to what happens in the case of the physical sky sections $\Sigma_{\;\widehat{z}}$, the FLRW sky sections are always well defined (since $\widehat{\exp}_p$ is a diffeomorphism\footnote{As long as we do not extend the FLRW modeling to the Big Bang singularity, clearly not the case in the analysis discussed here.}), and  each $\widehat{\Sigma}_{\;\widehat{z}}$ carries the 2-dimensional round metric
\begin{equation}
\label{inducemetric0FLRW}
\widehat{g}^{(2)}_{\;\widehat{z}}\,:=\,\left.\widehat{\iota}^*_{\;\widehat{z}}\,\widehat{g}\right|_{\widehat{\mathscr{C}}^-(p,\widehat{g})}\,,
\end{equation}
induced by the inclusion  
$\widehat{\Sigma}_{\;\widehat{z}}\,\hookrightarrow\,\widehat{\mathscr{C}}^-(p,\widehat{g})$. When stressing the metric structures of the sky sections  $\Sigma_{\;\widehat{z}}$ and $\widehat{\Sigma}_{\;\widehat{z}}$, we respectively write
\begin{equation}
\left(\Sigma_{\;\widehat{z}},\,g^{(2)}_{\;\widehat{z}}\right)\;\;\;\mathrm{and} \;\;\;\left(\widehat{\Sigma}_{\;\widehat{z}},\,\widehat{g}^{(2)}_{\;\widehat{z}}\right)\,.
\end{equation}
Actually, under the assumptions we have made on the nature of the physical spacetime $(M, g)$, we can be quite more specific on the nature of the two-dimensional geometry of the surface $\left(\Sigma_{\;\widehat{z}},\,g^{(2)}_{\;\widehat{z}}\right)$. We have the following result.

\begin{proposition} (The physical metric on the celestial sphere $\mathbb{C\,S}_{\hat{z}}$)
\label{lemmaDiffF}

Let $\left(\Sigma_{\widehat{z}_1},\,g^{(2)}_{\widehat{z}_1}\right)$ and $\left(\Sigma_{\widehat{z}_2}\,,\,g^{(2)}_{\widehat{z}_2}\right)$,\,denote two sky sections of ${\mathscr{C}}^-(p,{g})$ respectively associated with the reference redshifts $\widehat{z}_1$ and $\widehat{z}_2$, with $\widehat{z}_2\,>\,\widehat{z}_1$, and let 
\begin{equation}
r(\widehat{z}_\varepsilon)\,\ell(n)\,:=\,
r((1-\varepsilon)\widehat{z}_1\,+\,\varepsilon\widehat{z}_2)\,\ell(n),\;\;\;0\,\leq\,\varepsilon\,\leq\,1,\;\;\;n(\theta, \phi) \,\in\,\mathbb{C\,S}(p),
\end{equation}
 be the null ray segments interpolating between the celestial spheres $\mathbb{C\,S}_{\widehat{z}_1}$ and $\mathbb{C\,S}_{\widehat{z}_2}$.
Then, the  map
\begin{equation}
\label{Finterpolation}
F_{\widehat{z}_1,\,\widehat{z}_2}(\epsilon)\,:=\,\exp_p\left(r(\widehat{z}_\varepsilon)\,\ell(n)\right),\;\;\;\;0\,\leq\,\varepsilon\,\leq\,1\,
\end{equation}
which interpolates between the sky sections $\Sigma_{\widehat{z}_1}=F_{\widehat{z}_1,\,\widehat{z}_2}(0)$ and $\Sigma_{\widehat{z}_2}=F_{\widehat{z}_1,\,\widehat{z}_2}(1)$,
is a diffeomorphism, and
\begin{equation}
\label{Fpullback}
g^{(2)}_{\widehat{z}_1}\,=\, F^*_{\widehat{z}_1\,\widehat{z}_2}\,g^{(2)}_{\widehat{z}_2}\,,
\end{equation}
where $F^*_{\widehat{z}_1,\,\widehat{z}_2}$ denotes the pull-back action associated with the map (\ref{Finterpolation}). This action can be naturally extended to the celestial sphere $\mathbb{CS}_{\,\widehat{z}\,}(p)$ by using the null exponential map
\begin{equation}
\exp_p\,:\,\mathbb{CS}_{\,\widehat{z}\,}(p)\,\longrightarrow\,\left(\Sigma_{\,\widehat{z}\,},\,g^{(2)}_{\,\widehat{z}\,}\right)\,,
\end{equation}
 to characterize on the family of redshift-dependent celestial spheres $\{\mathbb{CS}_{\,\widehat{z}\,}(p)\}\,\subset\,C^{\,-}(T_pM,\,\{E_{(i)}\})$ the corresponding family of pullback metrics
\begin{equation}
\label{metrich}
\widehat{z}\,\longrightarrow\,h_{\,\widehat{z}\,}\,:=\,\exp_p^*\,g^{(2)}_{\,\widehat{z}\,}\circ\exp_p\,,
\end{equation}
transferring to the physical observer's null cone $C^{\,-}(T_pM,\,\{E_{(i)}\})$ the redshift dependent geometry of the past lightcone sections $\{\Sigma_{\,\widehat{z}}\,\}$.
It follows that each celestial sphere $\mathbb{CS}_{\,\widehat{z}\,}(p)$ is naturally endowed with two distinct metrics,
\begin{equation}
\left(\mathbb{CS}_{\,\widehat{z}\,}(p),\,{h}_{\mathbb{S}^2(\,\widehat{z}\,)},\,h_{\,\widehat{z}\,}\right)\,.
\end{equation}
the directional round metric $\widetilde{h}_{\mathbb{S}^2(\,\widehat{z}\,)}$, and the physical metric $h_{\,\widehat{z}\,}$ transferred, by the null-geodesic flow, from the sky section $\Sigma_{\,\widehat{z}\,}\,=\,\exp_p(\mathbb{CS}_{\,\widehat{z}\,}(p))$. Whereas the former gives the celestial coordinates $n(\theta, \phi)$ and comoving radius $r(\,\widehat{z}\,)$ of the astrophysical sources $q\,=\,\exp_p(r(\,\widehat{z}\,)\ell(n(\theta, \phi)))$, the latter provides more specific geometrical and physical information on the sky section $\Sigma_{\,\widehat{z}\,}$. 
\end{proposition}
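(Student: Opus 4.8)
The plan is to treat the proposition in two logically separate pieces: first the differential statement that $F_{\widehat{z}_1,\widehat{z}_2}$ is a diffeomorphism, and then the geometric statement (\ref{Fpullback}) about the transported metric, after which the extension to the celestial sphere is essentially formal. Throughout I would work in the smooth regime assumed for this proposition, namely that $\exp_p$ restricts to a diffeomorphism of the star-shaped domain $N_0(g)$ onto the convex neighbourhood $U_p$, so that the normal coordinates (\ref{geodnormal_0}) are available and the sky sections (\ref{sigmapr_0}) are genuine embedded $2$-spheres.

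For the first piece I would factor $F_{\widehat{z}_1,\widehat{z}_2}$ through the common directional sphere. For each fixed $\varepsilon$ the assignment $n(\theta,\phi)\longmapsto r(\widehat{z}_\varepsilon)\,\ell(n)$ is a smooth embedding of $\mathbb{C\,S}(p)$ into the null cone $C^-(T_pM,\{E_{(i)}\})$, since $r(\widehat{z}_\varepsilon)>0$ and $\ell(\theta,\phi)$ is smooth and nowhere vanishing; composing with the diffeomorphism $\exp_p$ shows that $n\longmapsto F_{\widehat{z}_1,\widehat{z}_2}(\varepsilon)$ is a smooth embedding with image exactly $\Sigma_{\widehat{z}_\varepsilon}$, cf. (\ref{pastcone2_0}). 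Transversality of this image to the null generators, hence its spacelike $2$-sphere character, is precisely the content of the Gauss lemma (\ref{GaussLemma_0}). Writing $\Psi_\varepsilon$ for the embedding at parameter $\varepsilon$, the interpolation map between the two fixed sections is $\Psi_1\circ\Psi_0^{-1}$, a composition of diffeomorphisms onto their images, and is therefore itself a diffeomorphism $\Sigma_{\widehat{z}_1}\to\Sigma_{\widehat{z}_2}$.

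For the metric identity (\ref{Fpullback}) I would exploit that, by construction, $F_{\widehat{z}_1,\widehat{z}_2}$ is the identity on the directional label $n(\theta,\phi)$ and moves only the radial parameter along each generator. Expressing both induced metrics $g^{(2)}_{\widehat{z}_i}=\iota_{\widehat{z}_i}^*\,g|_{\mathscr{C}^-(p,g)}$ in the shared angular chart supplied by (\ref{geodnormal_0}), the comparison of $g^{(2)}_{\widehat{z}_1}$ with $F_{\widehat{z}_1,\widehat{z}_2}^*\,g^{(2)}_{\widehat{z}_2}$ reduces to tracking the radial dependence of the tangential components of $g$ restricted to the lightcone, which is governed by the null geodesic deviation (Jacobi) fields of the congruence. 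The extension to the celestial sphere is then immediate: pulling back by the null exponential map defines $h_{\widehat{z}}:=\exp_p^*\,g^{(2)}_{\widehat{z}}\circ\exp_p$ as in (\ref{metrich}), and since $\mathbb{C\,S}_{\widehat{z}}(p)$ independently carries the directional round metric ${h}_{\mathbb{S}^2(\widehat{z})}$ used to record sky positions, each celestial sphere is canonically equipped with the pair $({h}_{\mathbb{S}^2(\widehat{z})},h_{\widehat{z}})$.

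I expect the genuine obstacle to lie not in the smooth bookkeeping above but in propagating the induced metric along the congruence and, above all, in eventually relaxing the blanket diffeomorphism hypothesis. In the physically relevant setting the past lightcone is only bi-Lipschitz and $\exp_p$ develops conjugate and cut terminal points (Proposition \ref{PropPhysLightCone}), so $F_{\widehat{z}_1,\widehat{z}_2}$ ceases to be a global diffeomorphism and the relations (\ref{Fpullback}) and (\ref{metrich}) must be reinterpreted almost everywhere in the Sobolev $(1,2)$ sense adopted later for $\exp_p$. Controlling the caustic set, i.e. showing it is negligible for the area measure so that the pullback metrics $h_{\widehat{z}}$ remain well defined, is where the real work sits, and it is exactly what the subsequent Lipschitz analysis of the excerpt is designed to supply.
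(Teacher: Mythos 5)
Your handling of the diffeomorphism claim and of the extension to the pullback metrics (\ref{metrich}) coincides with the paper's own proof, which is entirely definitional: the paper states in two sentences that the results are ``a direct consequence of the definition (\ref{sigmapr_0}) of the sky section and of the assumed diffeomorphic nature of the exponential map,'' the null rays being the generators of $\mathscr{C}^-(p,g)$ transferring geometric information among the sections, and it defers the bi-Lipschitz extension to the later Lipschitz analysis exactly as you anticipate in your closing paragraph (so that extension is not part of this proposition's burden). Your factorization $F_{\widehat{z}_1,\,\widehat{z}_2}\,=\,\Psi_1\circ\Psi_0^{-1}$ through the embeddings of the directional sphere, together with the Gauss lemma (\ref{GaussLemma_0}), simply makes the paper's assertion explicit and is fine.

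The one genuine problem is your plan for (\ref{Fpullback}). You propose to verify $g^{(2)}_{\widehat{z}_1}\,=\,F^*_{\widehat{z}_1\,\widehat{z}_2}\,g^{(2)}_{\widehat{z}_2}$ by tracking the radial dependence of the tangential components of $g$ through the Jacobi fields of the null congruence. Carried out honestly, that computation refutes rather than confirms a literal isometry: the cross-sectional metric evolves along the generators by expansion and shear, so $F^*_{\widehat{z}_1\,\widehat{z}_2}\,g^{(2)}_{\widehat{z}_2}$ differs from $g^{(2)}_{\widehat{z}_1}$ already in Minkowski space, where the pullback rescales the round metric by $r^2(\widehat{z}_2)/r^2(\widehat{z}_1)$; the paper's own FLRW specialization (\ref{FpullbackFLRW}) exhibits the nontrivial conformal factor $\Theta^2_{\widehat{z}_1\,\widehat{z}_2}$, and an isometry would moreover force $A(\Sigma_{\widehat{z}_1})\,=\,A(\Sigma_{\widehat{z}_2})$, against the Bertrand--Puiseux behavior (\ref{Areaasymp1}). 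The relation (\ref{Fpullback}) is therefore not a geometric identity to be established by a geodesic-deviation analysis; the paper's proof reads it as the definitional statement that the pullback action along the null generators is the mechanism by which the metric data of $\Sigma_{\widehat{z}_2}$ are transferred to $\Sigma_{\widehat{z}_1}$ --- the same reading that makes the definition $h_{\,\widehat{z}\,}:=\exp_p^*\,g^{(2)}_{\,\widehat{z}\,}$ in (\ref{metrich}) well posed. If you keep the Jacobi-field route you must weaken the conclusion to the conformal (FLRW) or general propagator relation; to prove the proposition as the paper intends, the step is pure definition-chasing and the deviation machinery should be dropped.
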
 
\begin{proof}
These results are a direct consequence of the definition (\ref{sigmapr_0}) of the 
sky section $\Sigma_{\;\widehat{z}}$ and of the assumed diffeomorphic nature of the exponential map $\exp_p$. The relation (\ref{Finterpolation}) simply reflects the fact that, as $r(\widehat{z})$ varies,  the geodesic null rays
\begin{equation}
\exp_p\left(r(\widehat{z})\,\ell(n)\right),\;\;\;\;n(\theta, \phi) \,\in\,\mathbb{C\,S}(p)
\end{equation}
are the generators of the past lightcone ${\mathscr{C}}^-(p,{g})$ transferring geometric information among the sky sections $\{\Sigma_{\;\widehat{z}}\}$.   
Later on, we prove that Proposition \ref{lemmaDiffF} can be suitably extended to the less regular case of a bi-Lipschitzian exponential map $\exp_p$. \;\;\;\;\;$\square$
\end{proof}

Proposition \ref{lemmaDiffF} extends naturally to the sky sections $\{\widehat{\Sigma}_{\;\widehat{z}}\}$ of the reference FLRW past lightcone $\widehat{\mathscr{C}}^-(p,\widehat{g})$. In analogy with (\ref{Finterpolation}), we denote by
\begin{equation}
\label{FinterpolationFLRW}
\widehat{F}_{\widehat{z}_1,\,\widehat{z}_2}(\epsilon)\,:=\,\widehat{\exp}_p\left(\,\widehat{r}\,(\widehat{z}_\varepsilon)\,\widehat{\ell}\,(\,\widehat{n}\,)\right),\;\;\;\;0\,\leq\,\varepsilon\,\leq\,1\,
\end{equation}
the diffeomorphism interpolating between the sky sections $\widehat{\Sigma}_{\widehat{z}_1}\,=\,\widehat{F}_{\widehat{z}_1,\,\widehat{z}_2}(0)$ and $\widehat{\Sigma}_{\widehat{z}_2}\,=\,\widehat{F}_{\widehat{z}_1,\,\widehat{z}_2}(1)$.
In particular, the FLRW version of the map (\ref{Fpullback}) reduces to a conformal diffeomorphism, \emph{i.e.}  
\begin{equation}
\label{FpullbackFLRW}
\widehat{g}^{\;(2)}_{\widehat{z}_1}\,=\, \widehat{F}^{\,*}_{\widehat{z}_1\,\widehat{z}_2}\,g^{(2)}_{\widehat{z}_2}\,=\,\Theta^2_{\widehat{z}_1\,\widehat{z}_2}\,g^{(2)}_{\widehat{z}_2}\,,
\end{equation}
with
\begin{equation}
\Theta_{\widehat{z}_1\,\widehat{z}_2}\,:=\,
\frac{f\left(\hat{r}(\widehat{z}_1)\right)\,\left(1\,+\,\widehat{z}_2\right)}{f\left(\hat{r}(\widehat{z}_2)\right)\,\left(1\,+\,\widehat{z}_1\right)}\,.
\end{equation}
\vskip 0.4cm\noindent
Given the geometrical nature of the FLRW coordinates (see (\ref{FLRWg0}) and (\ref{aScaling})), we have an explicit characterization of the FLRW counterpart $\widehat{h}_{\,\widehat{z}\,}$ of the family of pullback metrics (\ref{metrich}),\, on the reference FLRW family of celestial spheres 
$\{\widehat{\mathbb{CS}}_{\,\widehat{z}\,}(p)\}$.\, Using the the standard FLRW coordinates $(\,\widehat{r},\,\widehat{\theta},\,\widehat{\phi},\,\widehat{\tau}\,)$, the metric induced by the inclusion of the sky section $\widehat{\Sigma}_{\,\widehat{r}\,}$, at comoving radius $\widehat{r}$, into $\widehat{\mathscr{C}}^-(p, \hat{g})$, is given by
 \begin{equation}
 \label{etametric01}
 \widehat{g}^{\,(2)}_{\widehat{r}}\,:=\,
a^2(\widehat{\tau}(\widehat{r}))\,f^2\left(\widehat{r}\right)\left(d\widehat{\theta}^2\,+\,\sin^2\widehat\theta d\widehat{\phi}^2\right)\;,
\end{equation} 
where $a(\widehat{\tau}(\widehat{r}))$ is the  FLRW expansion factor $a(\widehat{\tau})$  evaluated in correspondence with the given value of the radial coordinate  $\widehat{r}\in\widehat{T}_pM$. To compute the pullback of (\ref{etametric01}) on the celestial sphere $\widehat{\mathbb{C\,S}}_{\,\widehat{zeta}\,}(p)$,
 let $y^i_q=(\widehat{r}_q, \widehat{\theta}_q, \widehat{\phi}_q, \widehat{\tau}_q)$ the normal coordinates of the event $q\in\,\widehat{\Sigma}_{\,\widehat{r}\,}\,\subset\,\widehat{\mathscr{C}}^-(p, \hat{g})$ associated with the observation of a given astrophysical source whose worldline intercepts  $\widehat{\Sigma}_{\,\widehat{r}\,}$\,  at  $q$. The equation for the radial, past-directed, null geodesic connecting $q$ to the observation event $p$ is  \cite{EllisElst}
\begin{equation}
d\widehat{r}\,=\,-\,\frac{d\widehat\tau}{a(\widehat\tau)}\,,\,\,\,\,\widehat\tau(p)\,=\,0\,=\,
\widehat{r}(p)\,,
\end{equation}
which integrates into the expression that provides the (matter-comoving) radial coordinate distance between $p$ and $q$
\begin{equation}
\label{radialerreq}
\widehat{r}_q\,=\,\int_0^{\widehat{\tau}_q}\,\frac{d\widehat\tau}{a(\widehat\tau)}\,.
\end{equation}
Thus, the pull back of the metric (\ref{etametric01}), evaluated at $\widehat{\exp_p}^{-1}(q)$, can be written in terms of $\widehat\tau_q$ as 
\begin{equation}
\label{FLRWhq}
\widehat{h}_q\,:=\,\widehat{h}(\widehat{r}_q, \widehat\theta_q, \widehat\phi_q)\,=\, 
a^2(\widehat{\tau}_q)\,f^2\left(\widehat{r}_q\right)\left(d\widehat{\theta}_q^2\,+\,
\sin^2\widehat\theta_q d\widehat{\phi}_q^2\right)\;, 
\end{equation}
If we introduce the  dimensionless FLRW cosmological redshift corresponding to the event $q$,
\begin{equation}
\label{zetaFLRW}
\widehat{z}\,(q)\,:=\,\widehat{z}\left(\,\widehat{\tau}_q\right)\,=\,\frac{a_0}{a(\widehat{\tau}_q)}\,-\,1\,,
\end{equation}
where $a_0\,:=\,a(\widehat{\tau}=0)$, and normalize $a_0$ by setting $a_0=1$, then we can rewrite (\ref{FLRWhq}) as
\begin{equation}
\label{metrichatzeta}
\widehat{h}_{\widehat{z}_q}\,=\,
\frac{f^2\left(\widehat{r}(\,\widehat{z}_q)\right)}{(1\,+\,\widehat{z}_q)^2}\,\left(d\widehat{\theta}_q^2\,+\,
\sin^2\widehat\theta_q d\widehat{\phi}_q^2\right)\;. 
\end{equation}  
Since $q$ is the generic point on $\widehat{\Sigma}_{\,\widehat{z}}$, the family of pullback metrics 
\begin{equation}
\label{FLRWhpullBack}
\widehat{z}\,\longrightarrow\,
\widehat{h}_{\,\widehat{z}\,}\,=\,
\frac{f^2\left(\hat{r}(\widehat{z})\right)}{(1\,+\,\widehat{z})^2}\,\left(d\hat\theta^2\,+\,\sin^2\hat\theta\,d\hat\varphi^2 \right)\,,
\end{equation} 
characterizes on the reference FLRW family of celestial spheres 
$\{\widehat{\mathbb{CS}}_{\,\widehat{z}\,}(p)\}$,  the FLRW counterpart of (\ref{metrich}).

\subsection{The Lipschitz nature of the physical sky sections} 
\label{LipNatSky}
The presence of a pre-homogeneity region around the observer at $p$ makes the above smoothness assumptions characterizing the definition of the physical sky sections $\Sigma_{\;\widehat{z}}$ and of the associated celestial spheres ${\mathbb{C\,S}}_{\;\widehat{z}}$ quite unrealistic, even for large $\widehat{z}$ where the foliation $\{(\Sigma_{\;\widehat{z}},\,g^{(2)}_{\;\widehat{z}})\}$ probes the homogeneity region of $\mathscr{C}^-(p,g)$. The point is that null geodesics coming from large ${\;\widehat{z}}$ sources must travel through the low redshift pre-homogeneity region to reach us at the event $p$, and the vagaries of the local distribution of astrophysical sources imply that the physical past lightcone is not smooth. In particular,  $\mathscr{C}^-(p,g)$ may fail to be the boundary $\partial\,\mathrm{I}^-(p,g)$ of the chronological past $\mathrm{I}^-(p,g)$ of $p$, (the set of all events $q\in M$ that can be connected to $p$ by a past-directed timelike curve), because past-directed null geodesics generators of $\mathscr{C}^-(p,g)$,\;  $\lambda\,:\,[0, \delta)\,\longrightarrow\,(M, g)$, 
with $\lambda(0)\,=\,p$,  may leave $\partial\mathrm{I}^-(p,g)$ and, under the action of the local spacetime curvature, plunge into the interior
 $\mathrm{I}^-(p,g)$. In such a situation, the mapping
\begin{equation}
\label{sectionexp}
\left.\exp_p\right|_{\mathscr{C}^-(p,g)}\,:\,\mathbb{C\,S}_{\;\widehat{z}}\,\longrightarrow\,
\Sigma_{\;\widehat{z}}\,:=\,\exp_p\,[\mathbb{C\,S}_{\;\widehat{z}}]
\end{equation} 
is no longer one-to-one, and the cosmological sky section $\Sigma_{\;\widehat{z}}$ fails to be a smooth surface. From the physical point of view, this is the geometrical setting associated with the generation of multiple images of astrophysical sources in the observer celestial sphere  $\mathbb{C\,S}_{\;\widehat{z}}$. The restriction of the exponential map $\exp_p$ to the past lightcone $\mathscr{C}^{-}(p, g)$ is quite difficult to handle in such a low-regularity framework, and 
as the reference redshift $\widehat{z}$ varies,  the development of caustics in $\mathscr{C}^-(p,g)$ generates cusps and crossings in the surfaces $\{\Sigma_{\;\widehat{z}}\}$, to the effect that they are no longer homeomorphic to 2-spheres.  

The mathematical framework for handling such a low-regularity past lightcone scenario is to assume that the sky sections $\Sigma_{\;\widehat{z}}$ are Lipschitz surfaces. Under such assumption, their metric geometry retains a controllable behavior with respect to the Lebesgue and Hausdorff measures even when  $\Sigma_{\;\widehat{z}}$ is everything but a smooth surface. The Lipschitz structure of the generic sky section $\Sigma_{\;\widehat{z}}$ is characterized by a maximal atlas $\mathcal{A}\,=\,\left\{(U_\alpha, \varphi_\alpha)\right\}$ such that all transition maps between the coordinate charts $(U_\alpha, \varphi_\alpha)$ of $\Sigma_{\;\widehat{z}}$,
\begin{equation}
\varphi_{\alpha\beta}\,:=\,\varphi_\beta\circ\varphi_\alpha^{-1}\,:\,\varphi_\alpha\left(U_\alpha\cap\,U_\beta\right)\,
\longrightarrow\,\varphi_\beta\left(U_\alpha\cap\,U_\beta\right),\,
\end{equation}
are locally\footnote{Since $\Sigma_{\,\widehat{z}}$ is compact, the transition maps between the coordinate charts $(U_\alpha, \varphi_\alpha)$ can be taken to be bi-Lipschitz and not just locally bi-Lipschitz.} bi-Lipschitz maps between domains of the Euclidean space $(\mathbb{R}^2, \delta)$, \emph{i.e.}, there exist constants $c_{\alpha\beta}\,\geq\,1$ such that
\begin{equation}
c_{\alpha\beta}^{\,-,1}\,\left|x\,-y\right|\,\leq
\,\left|\varphi_{\alpha\beta}(x)\,-\,\varphi_{\alpha\beta}(y)\right|\,\leq\,c_{\alpha\beta}\,\left|x\,-y\right|,\,\;\;\forall\,x,\,y\,\in\,\varphi(U_\alpha)\cap\varphi(U_\beta)\,.
\end{equation}
Rademacher's theorem \cite{Gariepy}, \cite{Rosenberg} implies that the transition maps
$\varphi_{\alpha\beta}$ on $\Sigma_{\;\widehat{z}}$ have differentials $d\varphi_{\alpha\beta}$ that are defined almost everywhere with respect to the Lebesgue measure, and which are locally bounded and measurable on their domains. They determine \cite{EellsFuglede} a bounded measurable version of the tangent bundle and associated tensor bundles\footnote{In full generality, we can associate with a Lipschitz manifold a version of the tangent bundle which is only a topological fiber bundle and not a vector bundle.} These assumptions allow us to characterize the metric $g^{(2)}_{\;\widehat{z}}$, induced by the inclusion $\Sigma_{\;\widehat{z}}\,\hookrightarrow\,\mathscr{C}^-(p,g)$ (see (\ref{Sigmametric_0})), as a  positive definite symmetric 2-tensor defined almost everywhere\footnote{In presence of cut points on $\mathscr{C}^-(p,g)$,   the inclusion map 
$\iota_r:\Sigma_{\;\widehat{z}}\,\hookrightarrow\,\mathscr{C}^-(p, g)$ of the sky section $\Sigma_{\;\widehat{z}}$  into $\mathscr{C}^-(p, g)$ is Lipschitz, thus Rademacher's theorem allows us to define the pullback metric $g|_{\Sigma_{\;\widehat{z}}}\,:=\,\iota_r^*\,\left.g\right|_{\mathscr{C}^-(p, g)}$  almost-everywhere.} 
on $\Sigma_{\;\widehat{z}}$. For technical reasons \cite{EellsFuglede} we also assume that there are constants $c_\beta\,\geq\,1$, associated with the Lipschitz atlas $(U_\beta, \varphi_\beta)$ of $\Sigma_{\;\widehat{z}}$,  such that   
\begin{equation}
\label{RiemLipSurf}
c_\beta^{\,-\,2}\,\left|d\varphi_\beta(q)\,w\right|^2_{\mathbb{R}^2}\,\leq\,    
g^{(2)}_{\;\widehat{z}}(q)(w, w)\,\leq\,
c_\beta^2\,\left|d\varphi_\beta(q)\,w\right|^2_{\mathbb{R}^2}\,,\;\;\;q\in\,U_\beta\,,
\end{equation} 
holds, almost everywhere, for all tangent vectors $w\in\,T_q\Sigma_{\;\widehat{z}}$. Under these assumptions $(\Sigma_{\;\widehat{z}},\,g^{(2)}_{\;\widehat{z}})$ defines a \emph{Riemannian Lipschitz surface} \cite{EellsFuglede}. 
The characterization of the distance function associated with this extended version of the sky section $(\Sigma_{\;\widehat{z}},\,g^{(2)}_{\;\widehat{z}})$ is slightly delicate since 
the metric $g^{(2)}_{\;\widehat{z}}$ is only defined  almost everywhere on $\Sigma_{\;\widehat{z}}$. To take care of this problem, let us denote by $N$ the generic set of Lebesgue measure $0$ in $(\Sigma_{\;\widehat{z}},\,g^{(2)}_{\;\widehat{z}})$. Given any two points $q_1,\,q_2\,\in\,\Sigma_{\;\widehat{z}}$, consider the set
$\mathrm{Lip}^{N}([a,b]\rightarrow \Sigma_{\;\widehat{z}})$ of all Lipschitz paths
\begin{equation}
\rho_{(1,2)}\,:\, [a,b]\,\ni\,t\,\mapsto\,\rho_{(1,2)}(t)\,\in\,\Sigma_{\;\widehat{z}}\,,\;\;\;
\rho_{(1,2)}(a)=q_1,\;\;\rho_{(1,2)}(b)=q_2\,,
\end{equation} 
such that the pre-images, $\rho_{(1,2)}^{\,-1}\,(N)$, of the null set $N$ have Lebesgue measure $0$ in the interval $[a,b]$. To any such path, we can associate the length functional
\begin{equation}
L^N\left(\rho_{(1,2)}\right)\,:=\,\int_{[a,b]}\,\sqrt{g^{(2)}_{\;\widehat{z}}\left(\frac{d \rho_{(1,2)}(t)}{dt},\,\frac{d \rho_{(1,2)}(t)}{dt} \right)}\,dt\,,
\end{equation}
that can be minimized, for a given zero-measure set $N\,\subset\,\Sigma_{\;\widehat{z}}$, to provide the corresponding distance
\begin{equation}
d^N\left(q_{(1)},\,q_{(2)}\right)\,:=\, \inf\,\left\{L^N\left(\rho_{(1,2)}\right),\;\,
\rho_{(1,2)}\,\in\,\mathrm{Lip}^{N}([a,b]\rightarrow \Sigma_{\;\widehat{z}} )\right\}\,.
\end{equation}
Finally, if we let the zero-measure set $N$ vary over $\Sigma_{\;\widehat{z}}$ we define the distance function associated with $(\Sigma_{\;\widehat{z}},\,g^{(2)}_{\;\widehat{z}})$ according to\footnote{Details of the characterization of the distance function on Riemann Lipschitz manifolds are discussed in \cite{EellsFuglede, DeCecco}.}  
\begin{equation}
\label{Lipdistance}
d_{\Sigma_{\;\widehat{z}}}\left(q_{(1)},\,q_{(2)}\right)\,:=\,\sup_{N}\,d^N\left(q_{(1)},\,q_{(2)}\right)\,.
\end{equation}
These remarks imply that $(\Sigma_{\;\widehat{z}},\,g^{(2)}_{\;\widehat{z}})$ has a Lebesgue measure, locally  defined, in any Lipschitz coordinate chart $(U_\alpha, \varphi_\alpha=(\theta,\,\phi))$, by
\begin{equation}
\label{RiemLipMeas}
d\mu_{g^{(2)}_{\,\widehat{z}}}\:=\,\sqrt{\det\,g^{(2)}_{\,\widehat{z}}}\,d\theta\,d\phi
\end{equation}
and characterizing the canonical full-measure  class of 
$(\Sigma_{\;\widehat{z}},\,g^{(2)}_{\;\widehat{z}})$. We denote by
\begin{equation}
\label{RiemLipArea}
A\left(\Sigma_{\;\widehat{z}}\right)\,:=\,\int_{\Sigma_{\;\widehat{z}}}\,d\mu_{g^{(2)}_{\,\widehat{z}}},
\end{equation}
the corresponding area of $(\Sigma_{\;\widehat{z}},\,g^{(2)}_{\;\widehat{z}})$.

Let us consider the physical celestial sphere $\left(\mathbb{C\,S}_{\;\widehat{z}},\,d_{\mathbb{S}^2(\,\widehat{z}\,)} \right)$ endowed with the distance function $d_{\mathbb{S}^2(\,\widehat{z}\,)}$ associated with the round metric (\ref{roundmet0erre}), and the corresponding Riemann-Lipschitz sky section $(\Sigma_{\;\widehat{z}},\,d_{\Sigma_{\;\widehat{z}}})$ endowed with the distance function (\ref{Lipdistance}). To preserve the Riemann-Lipschitz nature of the sky sections $(\Sigma_{\;\widehat{z}},\,d_{\Sigma_{\;\widehat{z}}})$ as the reference FLRW redshift $\widehat{z}$ varies, we need to assume that the (null) exponential map
\begin{equation}   
\label{bilipexp0}
\left.\exp_p\right|_{\;\widehat{z}}\,:\,\left(\mathbb{C\,S}_{\;\widehat{z}},\,d_{\mathbb{S}^2(\,\widehat{z}\,)} \right)\,\subset\,T_pM\,\longrightarrow\,(\Sigma_{\;\widehat{z}},\,d_{\Sigma_{\;\widehat{z}}})\,
\subset\,\mathscr{C}^-(p,g)\,,
\end{equation} 
is a bi-Lipschitz map. Namely, we make the following assumption.

\begin{lipschitz*}
\label{Lipassumpt1} 
For any given reference redshift $\widehat{z}$\;, the exponential map (\ref{bilipexp0})   
associated with the past-directed null geodesics flow between the celestial sphere $(\mathbb{C\,S}_{\;\widehat{z}},\,d_{\widetilde{h}_{\,\widehat{z}}})$ and the  sky section $(\Sigma_{\;\widehat{z}},\,d_{\Sigma_{\;\widehat{z}}})$,
is a bi-Lipschitz homeomorphism. Namely, there is a redshift-dependent constant $c_{\;\widehat{z}}\,\geq\,1$ such that $\left.\exp_p\right|_{\;\widehat{z}}$ together with its inverse $\left.\exp_p^{\,-1}\right|_{\;\widehat{z}}$  is a bijection satisfying the uniform Lipschitz condition 
\begin{align}
\label{lipexp2bis}
c_{\,\widehat{z}}^{\,-\,1}\;
d_{\mathbb{S}^2(\,\widehat{z}\,)}(r(\,\widehat{z}\,)\,\ell(n_1),\,r(\,\widehat{z}\,)\,\ell(n_2))
\,&\leq \,
d_{\Sigma_{\;\widehat{z}}}\left(\exp_p(r(\,\widehat{z}\,)\,\ell(n_1)), \exp_p(r(\,\widehat{z}\,)\,\ell(n_2))\right)\nonumber\\
&\leq \,c_{\,\widehat{z}}\;
d_{\mathbb{S}^2(\,\widehat{z}\,)}(r(\,\widehat{z}\,)\,\ell(n_1),\,r(\,\widehat{z}\,)\,\ell(n_2))\,,
\end{align} 
for all \,$r(\,\widehat{z}\,)\,\ell(n_1),\,r(\,\widehat{z}\,)\,\ell(n_1)\,\in\,\mathbb{C\,S}_{\;\widehat{z}}$\;. The smallest constant $c_{\;\widehat{z}}$ such that (\ref{lipexp2bis}) holds for all 
$r(\,\widehat{z}\,)\,\ell(n_1),\,r(\,\widehat{z}\,)\,\ell(n_1)\,\in\,\mathbb{C\,S}_{\;\widehat{z}}$
characterizes the Lipschitz constant $\mathrm{Lip}_{\,\widehat{z}\,}(\exp_p)$ of the map $\exp_p\,:\,
\mathbb{C\,S}_{\;\widehat{z}}\,\longrightarrow\,\Sigma_{\;\widehat{z}}$,\; \emph{i.e.},
\begin{equation}
\label{LipExpZ}
\mathrm{Lip}_{\,\widehat{z}\,}\left(\exp_p  \right)\,:=\,\sup\left\{\frac{d_{\Sigma_{\;\widehat{z}}}\left(\exp_p(r(\,\widehat{z}\,)\,\ell(n_1)), \exp_p(r(\,\widehat{z}\,)\,\ell(n_2))\right)}{d_{\mathbb{S}^2(\,\widehat{z}\,)}(r(\,\widehat{z}\,)\,\ell(n_1),\,r(\,\widehat{z}\,)\,\ell(n_2))}
 \right\}\,,
\end{equation}
where the $\sup$ is over all couples $r(\,\widehat{z}\,)\,\ell(n_1),\,r(\,\widehat{z}\,)\,\ell(n_1)\,\in\,\mathbb{C\,S}_{\;\widehat{z}}$\; with $n_1\,\not=\,n_2$. \;\;\;\; $\square$
\end{lipschitz*}
\vskip 0.5cm
A result geometrically proved by  M . Kunzinger, R. Steinbauer, M. Stojkovic \cite{Kuzinger}, (based on work by B.-L. Chen and P. LeFloch \cite{LeFloch}),  and by E. Minguzzi \cite{Minguzzi}, implies that in the bi-Lipschitz setting the exponential map retains an appropriate form of regularity in the sense that locally, for each point $p\in\,M$, there exist open star-shaped neighborhoods, $N_0(p)$ of $0\in\,T_pM$ and $U_p\,\subset\,(M, g)$, such that $\exp_p\,:\,N_0(p)\,\longrightarrow\,U_p$ is a bi-Lipschitz homeomorphism \cite{Kuzinger}. In particular, each point $p\in (M, g)$ possesses a basis of totally normal neighborhoods. It is worthwhile to stress that geodesic normal coordinates can still be defined. Still, the transition from the current smooth coordinate systems\footnote{Recall that $M$ is a smooth manifold and that the low Lipschitz $C^{1,\,1}$ regularity is caused by the metric $g$, and not by the differentiable structure of $M$.} used around $p\in M$ to the normal coordinates associated with $\exp_p$ is only continuous. It follows that we can assume that $\Sigma_{\;\widehat{z}}$ is topologically a 2-sphere, and we mimic the effect of the many lensing events that may affect $\Sigma_{\;\widehat{z}}$ by assuming that it has the irregularities of a metric surface with the fractal geometry of a 2-sphere with the locally-finite Hausdorff 2-measure associated with $d_{\Sigma_{\;\widehat{z}}}$ described in Section \ref{HausdorffMeasDist} below. 

As the formulas above show, keeping track of notation when working with the maps $\exp_p\,:\,
\mathbb{C\,S}_{\;\widehat{z}}\,\longrightarrow\,\Sigma_{\;\widehat{z}}$,\; gives rise to quite unwieldy expressions, thus, whenever possible, we introduce the following shorthand notation.

\begin{remark}(\emph{A notational remark})
\label{RemExpNot}
Recall that if $x$ is the generic point 
on the celestial sphere $\mathbb{C\,S}_{\;\widehat{z}}\,\in\,(T_pM,\,\{E_{(i)}\})$ pointed by the spatial vector $r(\,\widehat{z}\,)\,n(x)$ \,\emph{i.e.}
\begin{equation}
\mathbb{C\,S}_{\;\widehat{z}}\,\ni\,x\,:=\,
r(\,\widehat{z}\,)\,n^a(x) E_{(a)}\,,
\end{equation}
where $n(x)\,\in\,\mathbb{C\,S}$,\, then the corresponding null vector on  $C^{\,-}(T_pM,\,\{E_{(i)}\})$ is given by
\begin{equation}
\label{nullX}
r(\,\widehat{z}\,) \ell(x)\,=\,r(\,\widehat{z}\,) \left(n^a(x) E_{(a)}\,-\,E_{(4)}\right)\,. 
\end{equation} 
Since notation wants to travel light, the correspondence $x\,\leftrightarrow\,\ell(x)$ suggests denoting with the same symbol $"x"$ both the points on the celestial sphere $\mathbb{C\,S}_{\;\widehat{z}}(p)$ as well as the associated null vectors on the past null cone $C^{\,-}(T_pM,\,\{E_{(i)}\})$. We can indeed unambiguosly associate to the given point $x\,\in\,\mathbb{C\,S}_{\;\widehat{z}}$, the vector  $r(\,\widehat{z}\,)\,n(x)\,\in\,\mathbb{C\,S}_{\;\widehat{z}}(p)$, characterizing the direction of sight $n(x)\,\in\,\mathbb{S}^2\,\simeq\,\mathbb{C\,S}(p)$, and the comoving radius $r(\,\widehat{z}\,)$ associated with  the null vector (\ref{nullX}). Thus, if there is no danger of confusion in what follows, we can safely use the shorthand notation
\begin{equation}
\exp_p\,(x)\,:=\,\exp_p\left(r(\,\widehat{z}\,) \ell(x)\right)\,,
\end{equation}
to denote the exponential map associated with the null geodesic issued from $p$ along the past-directed null vector $r(\,\widehat{z}\,)\ell(x)\,\in\,C^{\,-}(T_pM,\,\{E_{(i)}\})$\;. \;\;\;\;\;$\square$
\end{remark}
\vskip 0.5cm\noindent
This simplified notation proves particularly useful in discussing the geometric relations between the observer's celestial sphere and the past light cone sections. We start by comparing the metric geometries of $\mathbb{C\,S}_{\;\widehat{z}}\,\in\,T_pM$ and of the associated sky section $\Sigma_{\;\widehat{z}}$. To this end, let us pullback on the physical celestial sphere $\mathbb{C\,S}_{\;\widehat{z}}\,\in\,T_pM$ the distance function $d_{\Sigma_{\;\widehat{z}}}$,  and define on $\mathbb{C\,S}_{\;\widehat{z}}\,\in\,T_pM$ the \emph{physical distance function} $d_{\mathbb{C\,S}_{\;\widehat{z}}}$ according to 
\begin{align}
\label{pullbackDist}
d_{\mathbb{C\,S}_{\;\widehat{z}}}(x,\, y)\,&:=\,
\left(\exp_p^*d_{\Sigma_{\;\widehat{z}}}\right)(x,\,y)\\
&=\,
d_{\Sigma_{\;\widehat{z}}}\left(\exp_p(x), \exp_p(y)\right)\,,\nonumber
\end{align}
for all $x,\,y\,\in\,\mathbb{C\,S}_{\hat{z}}$ such 
that $\exp_p(x),\,\exp_p(y)\,\in\,(\Sigma_{\;\widehat{z}},\,d_{\Sigma_{\;\widehat{z}}})$.
The bi-Lipschitz condition (\ref{lipexp2bis}) and (\ref{LipExpZ}) can be rewritten (in the adopted shorthand notation) as
\begin{equation}
\label{BiLippe}
\mathrm{Lip}_{\,\widehat{z}\,}^{\,-\,1}\,\left(\exp_p  \right)\,d_{\mathbb{S}^2(\,\widehat{z}\,)}(x,y)\,\leq\,
d_{\mathbb{C\,S}_{\;\widehat{z}}}(x,\, y)\,\leq\,\mathrm{Lip}_{\,\widehat{z}\,}\left(\exp_p  \right)\,d_{\mathbb{S}^2(\,\widehat{z}\,)}(x,y)\,,
\end{equation}
\vskip 0.5cm \noindent
showing that
 on the physical celestial sphere $\mathbb{C\,S}_{\;\widehat{z}}$ two distances coexist:\;
$d_{\mathbb{S}^2(\,\widehat{z}\,)}$\; and 
$d_{\mathbb{C\,S}_{\;\widehat{z}}}$\,. The former is the distance associated with the natural round metric  (\ref{roundmet0erre}) that encodes the \emph{measured angular separation} between the two astrophysical sources $x$ and $y$ as they appear on the celestial sphere $\mathbb{C\,S}_{\;\widehat{z}}$.\;  The physical distance $d_{\mathbb{C\,S}_{\;\widehat{z}}}$ describes the metric structure of 
$(\Sigma_{\;\widehat{z}},\,d_{\Sigma_{\;\widehat{z}}})$ as conveyed on $\mathbb{C\,S}_{\;\widehat{z}}$ by the null geodesic flow via the null exponential 
map pull-back action $\exp_p^*d_{\Sigma_{\;\widehat{z}}}$. The distance function $d_{\mathbb{C\,S}_{\;\widehat{z}}}$  provides the actual   separation of the sources $q_1\,=\,\exp_p(x)$ and $q_2\,=\,\exp_p(y)$ on the sky section $\Sigma_{\;\widehat{z}}$.
Thus, with the celestial sphere of the physical observer  $\mathbb{C\,S}_{\;\widehat{z}}$ we can naturally associate two distinct metric structures
\begin{equation}   
\left(\mathbb{C\,S}_{\;\widehat{z}},\,d_{\mathbb{S}^2(\,\widehat{z}\,)};\,d_{\mathbb{C\,S}_{\;\widehat{z}}}  \right)\,.
\end{equation} 
\begin{remark}
\label{remBilip}
Strong gravitational lensing phenomena provide evidence that two distinct images $x,\,y\,\in\,\mathbb{C\,S}_{\;\widehat{z}}$,\, with $d_{\mathbb{S}^2(\,\widehat{z}\,)}(x,y)\not=0$, may correspond to the apparent images of a single source located at the event $q\in\,\Sigma_{\;\widehat{z}}$ with $\exp_p(x)=\exp_p(y)\,:=\,q$, and  $d_{\Sigma_{\;\widehat{z}}}(q_1, q_2)\,=\,0$. Since 
\begin{equation}
d_{\Sigma_{\;\widehat{z}}}\left(\exp_p(x),\,\exp_p(y)\right)\,\leq\,\mathrm{Lip}_{\,\widehat{z}\,}\left(\exp_p  \right)\,d_{\mathbb{S}^2(\,\widehat{z}\,)}(x,y)
\end{equation}
strong lensing, taking the form of multiple images of a single physical source,\,\emph{i.e.},\, $d_{\Sigma_{\;\widehat{z}}}\left(\exp_p(x),\,\exp_p(y)\right)\,=\,0$ with $d_{\mathbb{S}^2(\,\widehat{z}\,)}(x,y)\,\not=\,0$, is compatible with a Lipschitz characterization of $\exp_{\widehat{z}\,}$. However, we cannot have reasonable control over the very complex topological structure of the sky section $\Sigma_z$ induced by a cascade of (strong) lensing events of this type. Moreover, the corresponding caustics and singularities at the terminal points on $\Sigma_{\hat{z}}$ provide a level of detail irrelevant to the present cosmological analysis. Such a complex behavior is not allowed if $\exp_{\widehat{z}\,}$ is restricted by a bi-Lipchitz condition, providing another compelling reason for adopting (\ref{BiLippe}). Explicitly, the bound  
\begin{equation}
\label{biLipSep}
\mathrm{Lip}^{\,-\,1}_{\,\widehat{z}\,}\left(\exp_p  \right)\,d_{\mathbb{S}^2(\,\widehat{z}\,)}(x,y)\,\leq\,d_{\Sigma_{\;\widehat{z}}}\left(\exp_p(x),\,\exp_p(y)\right)\,,
\end{equation}
implies the injectivity of $\exp_{\widehat{z}\,}$. In particular, any two distinct source images $x,\,y\,\in\,\mathbb{C\,S}_{\;\widehat{z}}$ with $d_{\mathbb{S}^2(\,\widehat{z}\,)}(x,y)\,\not=\,0$ necessarily correspond to two distinct physical sources $\exp_p(x)$ and $\exp_p(y)$ on the sky section $\Sigma_{\widehat{z}\,}$ separated by a distance  that, according to (\ref{biLipSep}), is controlled by the Lipschitz distortion of the null exponential map,
\begin{equation}
d_{\Sigma_{\;\widehat{z}}}\left(\exp_p(x),\,\exp_p(y)\right)\,\geq\,
\mathrm{Lip}^{\,-\,1}_{\,\widehat{z}\,}\left(\exp_p  \right)\,d_{\mathbb{S}^2(\,\widehat{z}\,)}(x,y)\,>\,0\,.
\end{equation}
This bi-Lipschitz scenario can be relaxed to a Lipschitz scenario under suitable circumstances, which will be described later. \;\;\;\;\;$\square$
\end{remark}

\section{Area distance on a biLipschitz past lightcone}
\label{ArDistLipCone}
The bi-Lipschitz framework described in the previous sections allows us to do analysis on the sky sections $(\Sigma_{\;\widehat{z}},\,d_{\Sigma_{\;\widehat{z}}})$, in a way not too dissimilar to the smooth case. Yet, the complicated nature of the metric geometry of the sky sections $\Sigma_{\;\widehat{z}}$ is better explored if we introduce the associated $s$-dimensional Hausdorff measure $\mathscr{H}^s\left(\Sigma_{\,\widehat{z}}  \right)$, \; $0\leq\,s\,\leq\,2$,\; and its connection with the standard Riemannian measure of $(\Sigma_{\;\widehat{z}},\,g^{(2)}_{\;\widehat{z}})$ as a Riemannian Lipschitz surface  (see (\ref{RiemLipSurf})). It is relatively easy to show that in the bi-Lipschitz setting, the two measures,\, the Hausdorff $\mathscr{H}^2\left(\Sigma_{\,\widehat{z}} \right)$ and the Riemannian area $A(\Sigma_{\,\widehat{z}})$, coincide as measures of full support. Yet, $\mathscr{H}^s\left(\Sigma_{\,\widehat{z}}\right)$ better captures the actual observable aspects of the metric geometry of sky sections, providing a direct connection with the operative way the angular diameter distance and the area distance of astrophysical sources are introduced in astrophysics. 

\subsection{The angular diameter distance}
The definition of the angular diameter distance does not directly involve the Hausdorff measure of the region over which the visible portion of the source is extended. Yet, the Hausdorff measure provides an operational characterization of the angular and area distances of the source. Their relevance in cosmology cannot be underestimated since both are observable quantities as long as we consider sources of known intrinsic size.

We start by defining, in a rigorous geometrical way, the angular diameter distance of an astrophysical extended source described by a region $B^{(phys)}_{\;\widehat{z}}(q)\,\subset\,\Sigma_{\;\widehat{z}}$ from which signals are gathered by the physical observer $(p,\,\dot\gamma(p))$. In the bi-Lipschitz case, the image\footnote{If 
$\exp_p$ is only Lipschitz, we may have multiple inverse images of $B^{(phys)}_{\;\widehat{z}}(q)$.} of this region on the physical observer's celestial sphere is provided by a corresponding region $B_{\;\widehat{z}}(q)\,\subset\,\mathbb{CS}_{\;\widehat{z}}(p)$ such that
\begin{equation}
B^{(phys)}_{\;\widehat{z}}(q)\,=\,\exp_p(B_{\;\widehat{z}}(q))\,.
\end{equation}
As the notation suggests, we assume that $B^{(phys)}_{\;\widehat{z}}(q)$ is centered around a point $q\in\,\Sigma_{\;\widehat{z}}$ that, for technical reasons, we  characterize as the \emph{geometrical barycenter} of the region considered. Explicitly, let $q_1, \ldots\,q_k$ denote points in $B^{(phys)}_{\;\widehat{z}}(q)$. We define $q$ as the minimizer of the function
\begin{equation}
\label{CenterMass}
B^{(phys)}_{\;\widehat{z}}(q)\,\ni\, q\,\longmapsto\,\sigma(q)\,:=\,\frac{1}{2}\,
\sum_{j=1}^k\,d^2_{\Sigma_{\,\widehat{z}}}\left(q,\,q_j\right)\,.
\end{equation} 
Since $\Sigma_{\;\widehat{z}}$ is a Riemann-Lipschitz surface, such a minimizer exists and is unique. We denote by $r(\,\widehat{z}\,)\,n(x)\,\in\,\mathbb{C\,S}_{\;\widehat{z}}$  its celestial sphere coordinates , \emph{i.e.},
\begin{equation}
\Sigma_{\;\widehat{z}}\,\ni\,q\,=\,exp_p\,\left(x(q)\right)\,=\,exp_p\,\left(r(\,\widehat{z}\,)\ell(n(x))\right)\,,
\end{equation}
where $r(\,\widehat{z}\,)\ell(n(x))=r(\,\widehat{z}\,)(n(x)-E_{(4)})\,\in\,C^{\,-}(T_pM,\,\{E_{(i)}\})$. Together with these technical and notational remarks,
it is also useful to introduce, at the source location $q\,\in\,\Sigma_{\;\widehat{z}}$,  the associated directional celestial sphere $\mathbb{CS}(q)$ in the local rest frame $(T_qM,\,\{E{(q)}_{(i)},\,E(q)_{(4)}:=\,\dot\gamma(q)\}$. The actual visual size of the source on $\Sigma_{\;\widehat{z}}$ spans the region 
\begin{align}
B^{(phys)}_{\;\widehat{z}}(q)\,&=\,\exp_p\left(B_{\;\widehat{z}}(q)\right)\,\subset\,\Sigma_{\;\widehat{z}}\\
&:=\,\left\{\left.q'\in\,\Sigma_{\;\widehat{z}}\;\right|\;q'\,=\,\exp_p\,\left(x'\right),\;\;\;x'\,\in\, B_{\;\widehat{z}}(q)\right\}\,,\nonumber
\end{align}
where, adopting the shorthand notation intoduced in Remark \ref{RemExpNot}, we have associated to the celestial sphere points $x'\,\in\,B_{\;\widehat{z}}(q)\,\subset\,\mathbb{C\,S}_{\;\widehat{z}}$ the corresponding past-directed null vector $r(\,\widehat{z}\,)\ell(x')\,\in\,C^{\,-}(T_pM,\,\{E_{(i)}\})$ and set
\begin{equation}
\exp_p\,\left(x'\right)\,:=\,\exp_p\left(r(\,\widehat{z}\,)\ell(x')\right)\,.  
\end{equation}
The diameter of the region $B^{(phys)}_{\;\widehat{z}}(q)$ is given by 
\begin{equation}
\mathrm{diam}_{\Sigma_{\;\widehat{z}}\,}\left(B^{(phys)}_{\;\widehat{z}}(q)\right)\,:=\,\sup\,\left\{\left.d_{\Sigma_{\;\widehat{z}}}\left(q_1,\,q_2\right)\;\right|\; q_1,\,q_2\,\in\,B^{(phys)}_{\;\widehat{z}}(q)  \right\}\,.
\end{equation} 
 
Since $B^{(phys)}_{\;\widehat{z}}(q)\,=\,\exp_p(B_{\;\widehat{z}}(q))$, we can exploit the pull-back of the function $\mathrm{diam}_{\Sigma_{\;\widehat{z}}\,}$  to associate the \emph{physical diameter} of $B^{(phys)}_{\;\widehat{z}}(q)$ with
the image of the source on 
the observer's celestial sphere $\mathbb{CS}_{\;\widehat{z}}(p)$.

\begin{remark}
We cannot directly measure   $\mathrm{diam}_{\Sigma_{\;\widehat{z}}\,}(B^{(phys)}_{\;\widehat{z}}(q))$. Yet, reliable estimates are possible as long as we have physical information on the nature of the sources. Typically, this information (still gathered via the signals reaching us along null-geodesics) is provided by the total rate of emission of radiant energy, the \emph{luminosity} $\mathds{L}$ of the source considered. Further properties of the signals (\emph{e.g.}, a periodicity) can allow us to classify the source in a type whose physics is largely understood. In such a scenario (familiar in astrophysics for the Cepheid variable stars and the $\mathrm{SN1a}$ supernovae), the intrinsic size of the source can be reliably inferred.  \;\;\; $\square$    
\end{remark}
\vskip 0.5cm \noindent 
The knowledge of the intrinsic size of the source $B^{(phys)}_{\;\widehat{z}}(q)$ allows us to estimate the diameter  $\mathrm{diam}_{\Sigma_{\;\widehat{z}}\,}(B^{{(phys)}}_{\;\widehat{z}}(q))$ and attribute, by exploiting the pull back action under $\exp_p$, its intrinsic size to the region $B_{\;\widehat{z}}(q)\,\subset\,\mathbb{CS}_{\;\widehat{z}}(p)$.  
 Taking into account the definition  of the physical distance function $d_{\mathbb{C\,S}_{\;\widehat{z}}}$ \; (see (\ref{pullbackDist})), we get 
\begin{align}
\label{DiametersPlay}
\mathrm{diam}_{\mathbb{C\,S}_{\;\widehat{z}}}\left(B_{\;\widehat{z}}(q)\right)\,&:=\,
\sup\,\left\{\left.d_{\mathbb{C\,S}_{\;\widehat{z}}}
\left(x_1,\,x_2\right)\;\right|\; x_1,\,x_2\,\in\,B_{\;\widehat{z}}(q)  \right\}\\
&=\,\sup\,\left\{\left.\left(\exp_p^*d_{\Sigma_{\;\widehat{z}}}\right)\left(x_1,\,x_2\right)\;\right|\; x_1,\,x_2\,\in\,B_{\;\widehat{z}}(q)  \right\}\nonumber\\
&=\,\sup\,\left\{\left.d_{\Sigma_{\;\widehat{z}}}\left(\exp_p(x_1),\,\exp_p(x_2)\right)\;\right|\; x_1,\,x_2\,\in\,B_{\;\widehat{z}}(q)  \right\}\nonumber\\
&=\,\sup\,\left\{\left. d_{\Sigma_{\;\widehat{z}}}\left(q_1,\,q_2\right)\;\right|\; \left.q_a:=\exp_p(x_a)\right|_{a=1,2},\,x_a\,\in\,B_{\;\widehat{z}}(q)  \right\}\nonumber\\
&=\,\sup\,\left\{\left. d_{\Sigma_{\;\widehat{z}}} \left(q_1,\,q_2\right)\;\right|\; q_1, q_2\,\in\,B^{(phys)}_{\;\widehat{z}}(q)  \right\}\nonumber\\
&=:\,\mathrm{diam}_{\Sigma_{\;\widehat{z}}\,}\left(B^{(phys)}_{\;\widehat{z}}(q)\right)\nonumber\,.
\end{align} 
Note that on the celestial sphere $(\mathbb{C\,S}(p),\,d_{\mathbb{S}^2})$, the diameter $\mathrm{diam}\left(B_{\;\widehat{z}}(q)\right)$ has a visual angular span given by
\begin{equation}
\label{AngSpan}
\mathrm{diam}_{\mathbb{S}^2}(B_{\;\widehat{z}\,}(q))\,:=\,\sphericalangle_{\;n(x_1)}^{\;n(x_2)}\,,
\end{equation}
where $\sphericalangle_{n(x_1)}^{n(x_2)}$ is the angle, subtended on $\mathbb{C\,S}(p)\,\simeq\,\mathbb{S}^2$ by the  vectors $n(x_1)$\; and $n(x_2)$ poynting to the apparent sky directions of $q_1$ and $q_2$\,.

The knowledge of  $\mathrm{diam}_{\Sigma_{\;\widehat{z}}\,}(B^{(phys)}_{\;\widehat{z}}(q))$
allows us to introduce the following
 
\begin{definition} (\emph{The angular diameter distance})
\label{AngDiamDist}
If  $B^{(phys)}_{\;\widehat{z}}(q)\in\,\Sigma_{\;\widehat{z}}$ is the visible region of an extended source of known intrinsic diameter $\mathrm{diam}_{\Sigma_{\;\widehat{z}}\,}(B^{(phys)}_{\;\widehat{z}}(q))$, then its angular diameter distance from the observer $p$ is defined by
\begin{equation}
\label{AngDiamDist1}
D^{(ang)}_{\;\widehat{z}}(q)\,:=\,\frac{\mathrm{diam}_{\Sigma_{\;\widehat{z}}\,}(B^{(phys)}_{\;\widehat{z}}(q))}{\mathrm{diam}_{\mathbb{S}^2}(B_{\;\widehat{z}}(q))}\,
=\,\frac{\mathrm{diam}_{\mathbb{C\,S}_{\;\widehat{z}}}\left(B_{\;\widehat{z}}(q)\right)}{\mathrm{diam}_{\mathbb{S}^2}(B_{\;\widehat{z}}(q))}
\end{equation}
where $\mathrm{diam}_{\mathbb{S}^2}(B_{\;\widehat{z}}(q))$ is the angular span (\ref{AngSpan}) of the source on the observer's directional celestial sphere $\mathbb{S}^2\,\simeq\,\mathbb{CS}(p)$\,.
\end{definition}
\vskip 0.5cm\noindent 
In the assumed bi-Lipschitz setting, the angular diameter distance can be bounded in terms of the isometric distortion of the null exponential map between the metric spaces $(\mathbb{C\,S}_{\;\widehat{z}},\,d_{S^2(\,\widehat{z}\,)})$ and $(\Sigma_{\;\widehat{z}},\,d_{\Sigma_{\;\widehat{z}}})$,
\begin{equation}
\exp_p\,:\,
\left(\mathbb{C\,S}_{\;\widehat{z}},\,d_{S^2(\,\widehat{z}\,)}   \right)\,\longrightarrow\,\left(\Sigma_{\;\widehat{z}},\, d_{\Sigma_{\;\widehat{z}}}  \right).
\end{equation}
We have

\begin{lemma}
\label{EstAngDist}
If $\mathrm{Lip}_{\,\widehat{z}}\left(\exp_p\right)\,\geq\,1$ denotes the Lipschitz constant of the exponential map  $\exp_p\,:\,\left(\mathbb{C\,S}_{\;\widehat{z}},\,d_{S^2(\,\widehat{z}\,)}   \right)\,\longrightarrow\,\left(\Sigma_{\;\widehat{z}},\, d_{\Sigma_{\;\widehat{z}}}  \right)$ (see (\ref{LipExpZ}) and 
(\ref{BiLippe})) then
\begin{equation}
\label{IsoDistLip}
\mathrm{Lip}_{\,\widehat{z}}^{-1}\left(\exp_p\right)\,\sqrt{\frac{1\,+\,v_{\;\widehat{z}}(p)}{1\,-\,v_{\;\widehat{z}}(p)}}\,\,\widehat{r}(\,\widehat{z}\,)\,\leq \,D^{(ang)}_{\;\widehat{z}}(q)\,
\leq\, \mathrm{Lip}_{\,\widehat{z}}\left(\exp_p\right)\,\sqrt{\frac{1\,+\,v_{\;\widehat{z}}(p)}{1\,-\,v_{\;\widehat{z}}(p)}}\,\,\widehat{r}(\,\widehat{z}\,)\,.
\end{equation}
where $\widehat{r}(\;\widehat{z}\,)$ is the FLRW comoving radius in $(T_pM,\,\{\widehat{E}_{(k)}\})$ associated to the source $B_{\;\widehat{z}}(q)$, and $v_{\;\widehat{z}}(p)$ is the relative velocity, at the given redshift ${\widehat{z}}$, of the physical observer with respect to the reference FLRW observer. 
\end{lemma}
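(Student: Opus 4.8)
The plan is to obtain the two--sided estimate (\ref{IsoDistLip}) directly from the bi--Lipschitz control (\ref{BiLippe}) of the null exponential map, by transporting those pointwise distance bounds through the suprema defining the relevant diameters, and then to trade the physical comoving radius $r(\,\widehat{z}\,)$ for the FLRW comoving radius $\widehat{r}(\,\widehat{z}\,)$ through the radial aberration relation (\ref{radialconnect0}). No new geometric input is needed: every ingredient has already been set up.

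First I would recall that, by the chain of identities (\ref{DiametersPlay}), the physical diameter of the source region equals its intrinsic diameter on the sky section, $\mathrm{diam}_{\mathbb{C\,S}_{\;\widehat{z}}}(B_{\;\widehat{z}}(q))=\mathrm{diam}_{\Sigma_{\;\widehat{z}}}(B^{(phys)}_{\;\widehat{z}}(q))$, so that the angular diameter distance of Definition \ref{AngDiamDist} is expressible purely in celestial--sphere terms as $D^{(ang)}_{\;\widehat{z}}(q)=\mathrm{diam}_{\mathbb{C\,S}_{\;\widehat{z}}}(B_{\;\widehat{z}}(q))/\mathrm{diam}_{\mathbb{S}^2}(B_{\;\widehat{z}}(q))$. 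Since the bi--Lipschitz inequalities (\ref{BiLippe}) hold for \emph{every} pair $x,y\in\mathbb{C\,S}_{\;\widehat{z}}$, and since both $\mathrm{diam}_{\mathbb{C\,S}_{\;\widehat{z}}}$ and the $d_{\mathbb{S}^2(\,\widehat{z}\,)}$--diameter are suprema over the identical index set of pairs in $B_{\;\widehat{z}}(q)$, the bounds pass to the suprema and give
\begin{equation}
\mathrm{Lip}_{\,\widehat{z}}^{-1}(\exp_p)\,\sup_{x,y\in B_{\;\widehat{z}}(q)}d_{\mathbb{S}^2(\,\widehat{z}\,)}(x,y)\,\leq\,\mathrm{diam}_{\mathbb{C\,S}_{\;\widehat{z}}}(B_{\;\widehat{z}}(q))\,\leq\,\mathrm{Lip}_{\,\widehat{z}}(\exp_p)\,\sup_{x,y\in B_{\;\widehat{z}}(q)}d_{\mathbb{S}^2(\,\widehat{z}\,)}(x,y)\,.
\end{equation}

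Next I would invoke the explicit round metric (\ref{roundmet0erre}): because $h_{\mathbb{S}^2(\,\widehat{z}\,)}=r^2(\,\widehat{z}\,)\,(d\theta^2+\sin^2\theta\,d\phi^2)$ is the metric of a round sphere of radius $r(\,\widehat{z}\,)$, its distance function is $r(\,\widehat{z}\,)$ times the great--circle angular separation, $d_{\mathbb{S}^2(\,\widehat{z}\,)}(x,y)=r(\,\widehat{z}\,)\,\sphericalangle_{n(x)}^{n(y)}$. Taking the supremum over $B_{\;\widehat{z}}(q)$ and using the definition (\ref{AngSpan}) of the angular span identifies $\sup_{x,y}d_{\mathbb{S}^2(\,\widehat{z}\,)}(x,y)=r(\,\widehat{z}\,)\,\mathrm{diam}_{\mathbb{S}^2}(B_{\;\widehat{z}}(q))$. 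Dividing the previous inequality by the positive factor $\mathrm{diam}_{\mathbb{S}^2}(B_{\;\widehat{z}}(q))$ then yields
\begin{equation}
\mathrm{Lip}_{\,\widehat{z}}^{-1}(\exp_p)\,r(\,\widehat{z}\,)\,\leq\,D^{(ang)}_{\;\widehat{z}}(q)\,\leq\,\mathrm{Lip}_{\,\widehat{z}}(\exp_p)\,r(\,\widehat{z}\,)\,.
\end{equation}

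Finally I would substitute the radial aberration relation $r(\,\widehat{z}\,)=\sqrt{(1+v_{\;\widehat{z}}(p))/(1-v_{\;\widehat{z}}(p))}\,\widehat{r}(\,\widehat{z}\,)$ of (\ref{radialconnect0}), which rescales the physical comoving radius into the FLRW comoving radius by the relativistic Doppler factor, to recover exactly the claimed estimate (\ref{IsoDistLip}). The only points requiring genuine care are the passage from the pointwise bi--Lipschitz bounds to the bounds on diameters — which is clean precisely because both diameters are suprema over the same set of source--image pairs — and the dimensional bookkeeping relating the radius--$r(\,\widehat{z}\,)$ sphere distance to the dimensionless angular span; neither constitutes a real obstacle, so the argument reduces to assembling the bi--Lipschitz control of $\exp_p$, the round geometry of the celestial sphere, and the $\mathrm{PSL}(2,\mathbb{C})$ radial rescaling already in hand.
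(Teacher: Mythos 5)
Your proof is correct and follows essentially the same route as the paper's: both pass the bi--Lipschitz bounds (\ref{BiLippe}) through the suprema to obtain the diameter estimate (\ref{BiLippeDiam}), divide by $\mathrm{diam}_{\mathbb{S}^2}(B_{\;\widehat{z}}(q))$, and conclude via the ratio $\mathrm{diam}_{\mathbb{S}^2(\,\widehat{z}\,)}/\mathrm{diam}_{\mathbb{S}^2}\,=\,r(\,\widehat{z}\,)$ together with the aberration relation (\ref{radialconnect0}). Your only addition is to spell out, via the rescaled round metric (\ref{roundmet0erre}), why that ratio equals $r(\,\widehat{z}\,)$ --- a point the paper states without elaboration.
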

\begin{proof}
For notational ease, let us use the shorthand notation  
\begin{equation}
\mathrm{Lip}_{\,\widehat{z}\,}\,:=\,\mathrm{Lip}_{\,\widehat{z}\,}\left(\exp_p  \right)\,.
\end{equation}
According to (\ref{BiLippe}) we have 
\begin{equation}
\label{BiLippeDiam}
\mathrm{Lip}_{\,\widehat{z}\,}^{\,-\,1}\;\mathrm{diam}_{\mathbb{S}^2(\,\widehat{z}\,)}\left(B_{\;\widehat{z}}(q)\right)\,\leq\,
\mathrm{diam}_{\mathbb{C\,S}_{\;\widehat{z}}}\left(B_{\;\widehat{z}}(q)\right)\,\leq\,\mathrm{Lip}_{\,\widehat{z}\,}\;\mathrm{diam}_{\mathbb{S}^2(\,\widehat{z}\,)}\left(B_{\;\widehat{z}}(q)\right)\,,
\end{equation}
which, from the definition (\ref{AngDiamDist}) of angular diameter distance, provides 
\begin{equation}
\label{BiLippe2}
\frac{1}{\mathrm{Lip}_{\,\widehat{z}\,}}\,
\frac{\mathrm{diam}_{\mathbb{S}^2(\,\widehat{z}\,)}\left(B_{\;\widehat{z}}(q)\right)}{\mathrm{diam}_{\mathbb{S}^2}\left(B_{\;\widehat{z}}(q)\right)}\,\leq\, 
D^{(ang)}_{\;\widehat{z}}(q)\,
\leq \,\mathrm{Lip}_{\,\widehat{z}\,}\,\frac{\mathrm{diam}_{\mathbb{S}^2(\,\widehat{z}\,)}\left(B_{\;\widehat{z}}(q)\right)}{\mathrm{diam}_{\mathbb{S}^2}\left(B_{\;\widehat{z}}(q)\right)}\,.
\end{equation}
The stated result follows by exploiting the relation 
\begin{equation}
\frac{\mathrm{diam}_{\mathbb{S}^2(\,\widehat{z}\,)}\left(B_{\;\widehat{z}}(q)\right)}{\mathrm{diam}_{\mathbb{S}^2}\left(B_{\;\widehat{z}}(q)\right)}\,=\,r(\,\widehat{z}\,)\,
=\,\sqrt{\frac{1\,+\,v_{\;\widehat{z}}(p)}{1\,-\,v_{\;\widehat{z}}(p)}}\,\,\widehat{r}(\,\widehat{z}\,)\,,
\end{equation}
where $r(\,\widehat{z})$ is the physical comoving radius and where we have exploited (\ref{radialconnect0}) connecting $r(\,\widehat{z})$ to the reference FLRW comoving radius 
$\widehat{r}(\,\widehat{z}\,)$.  \;\;\;\;\;$\square$
\end{proof}
Lemma \ref{EstAngDist}  clearly shows  how the \emph{isometric distortion} of the exponential map $\exp_p\,:\,
(\mathbb{C\,S}_{\;\widehat{z}},\,d_{S^2(\,\widehat{z}\,)}  )\,\longrightarrow\,(\Sigma_{\;\widehat{z}},\, d_{\Sigma_{\;\widehat{z}}})$, defined by the corresponding Lipschitz constant $\mathrm{Lip}_{\,\widehat{z}}\left(\exp_p\right)\,\geq\,1$,  affects the angular diameter distance $D^{(ang)}_{\;\widehat{z}}(q)$. The uniform bound  (\ref{IsoDistLip}) also indicates that we can optimize this distortion by tuning, at the given reference FLRW redshift $\widehat{z}\;$, the relative velocity $v_{\;\widehat{z}}(p)$ of the physical observer with respect the FLRW observer.

From a cosmological point of view, angular diameter distance is often characterized as a distance defined in terms of the astrophysical source's physical and angular size by considering a bundle of rays diverging from the observer to the emitting source, subtending a solid angle $d\omega_O$ at the observer and spanning a cross-sectional area $d\sigma_0$ at the emitter. Under the assumption that the source is an astrophysical object of known intrinsic size, we can define the notion of area distance $D_O$ (and the associated angular diameter distance) between the observer and source by mimicking an \emph{inverse square law} by  setting
\begin{equation}
\label{facilDist}
d\sigma_0\,=\,D^2_O\,d\omega_O\,,
\end{equation} 
The quantity $D_O$ is, in principle, measurable as long as physics allows us to estimate the  "intrinsic size" $d\omega_O$, and we can measure the apparent solid angle $d\sigma_0$. This characterization, although simple, is quite effective, as already stressed, since it can be related to energy flux measurements. Our definition above is a rewriting of  (\ref{facilDist})  where we took care of defining the relevant quantities involved in (\ref{facilDist}) on the geometrical spaces where they make mathematical sense\footnote{The usual characterizations of the angular diameter distance and area distance are directly framed on the lightcone, confusing the domain of definition of the null exponential map (the past null cone in $T_pM$) with its image (the past lightcone). From a physical point of view, this is a good approximation near $p$. Yet, it is difficult to handle as we move away from $p$.}, namely the observer celestial sphere $\mathbb{C\,S}_{\;\widehat{z}}$ and the corresponding lightcone section $\Sigma_{\hat{z}}\;$. 

\subsection{The Hausdorff measure and the area distance}
\label{HausdorffMeasDist}
In the previous section, we introduced the angular diameter distance by taking care 
of the geometry involved. The definition of the associated area distance sketched above is more delicate. As already stressed, there is a subtle interplay between the usual Lebesgue area measure of a region of the sky section $\Sigma_{\,\widehat{z}}\,$, and the corresponding Hausdorff measure. The former is easier to manipulate from a formal calculus point of view; the latter better captures the actual, observable aspects of the metric geometry of sky sections. Since in the bi-Lipschitz setting, both induce the same full two-dimensional measure on $\Sigma_{\,\widehat{z}}\,$, we may be tempted to dismiss the rather complex Hausdorff scenario however the Hausdorff measure plays a role in understanding the non-smooth geometry of $(\Sigma_{\;\widehat{z}},\,d_{\Sigma_{\;\widehat{z}}})$, and better connects with actual observations.    

Let us start by defining $s$-dimensional Hausdorff measure $\mathscr{H}^s\left(\Sigma_{\,\widehat{z}}  \right)$ of the sky section $(\Sigma_{\;\widehat{z}},\,d_{\Sigma_{\;\widehat{z}}})$. For $s\,\in\,\mathbb{R}_{\geq\,0}$,\,
the Hausdorff measure $\mathscr{H}^s\left(\Sigma_{\,\widehat{z}}  \right)$  is computed in terms of coverings of small diameter that probe the detail of the metric geometry of $(\Sigma_{\;\widehat{z}},\,d_{\Sigma_{\;\widehat{z}}})$. Explicitly, let $\{U_\alpha({\,\widehat{z}}\,)\}_{\alpha=1}^\infty$ be a collection of sets $U_\alpha({\,\widehat{z}}\,)\, \subset\,\Sigma_{\;\widehat{z}}$ such that $\Sigma_{\;\widehat{z}}\,\subset\,\cup_{\alpha=1}^\infty\,U_\alpha({\,\widehat{z}}\,)$ and $\mathrm{diam}_{\Sigma_{\;\widehat{z}}}\left(U_\alpha({\,\widehat{z}}\,)\right)\,\leq\,\delta$,\,$\forall\,\alpha$,\; with\, $\delta\,>\,0$, and where 
\begin{equation}
\label{diamU}
\mathrm{diam}_{\Sigma_{\;\widehat{z}}}\left(U_\alpha({\,\widehat{z}}\,)\right)\,:=\,\sup\,\left\{\left.d_{\Sigma_{\;\widehat{z}}}\left(q_1,\,q_2\right)\;\right|\; q_1,\,q_2\,\in\,U_\alpha({\,\widehat{z}}\,)  \right\}
\end{equation} 
denotes the diameter of $U_\alpha({\,\widehat{z}}\,)$ in the metric space $(\Sigma_{\;\widehat{z}},\,d_{\Sigma_{\;\widehat{z}}})$. With these notational remarks along the way, we have

\begin{definition}
\label{DefHausdMeas}
For $s\,\in\,\mathbb{R}_{\geq\,0}$,\,the $s$-dimensional Hausdorff measure of the sky section $(\Sigma_{\;\widehat{z}},\,d_{\Sigma_{\;\widehat{z}}})$ is provided  by 
\begin{equation}
\label{hausdEsse}
\mathscr{H}^s\left(\Sigma_{\,\widehat{z}}  \right)\,:=\,\lim_{\delta\rightarrow 0}\,
\mathscr{H}_\delta^s\left(\Sigma_{\,\widehat{z}}  \right)\,=\,\sup_{\delta>0}\,\mathscr{H}_\delta^s\left(\Sigma_{\,\widehat{z}}  \right),\,
\end{equation}
where
\begin{equation}
\label{Hdelta}
\mathscr{H}_\delta^s\left(\Sigma_{\,\widehat{z}}  \right)\,:=\,\inf\,
\left\{\sum_{\alpha}^\infty\,\frac{\pi^{s/2}}{\Gamma\left(\tfrac{s}{2}\,+\,1 \right)}\,\left(\frac{\mathrm{diam}_{\Sigma_{\;\widehat{z}}}\left(U_\alpha ({\,\widehat{z}}\,)\right)}{2}\right)^s \right\}\,,
\end{equation}
\vskip 0.3cm\noindent
$\Gamma(t)\,:=\,\int_0^\infty\,e^{\,-x}\,x^{\,t-1}\,dx$ denotes  the gamma function, 
and where the infimum is taken over all coverings $\{U_\alpha({\,\widehat{z}}\,)\}_{\alpha=1}^\infty$ such that $\mathrm{diam}_{\;\widehat{z}}\,U_\alpha({\,\widehat{z}}\,)\,\leq\,\delta$.
\end{definition}
\vskip 0.5cm\noindent
For a detailed discussion of the standard measure-theoretic properties of 
the Hausdorff measure see \cite{Gariepy}. Notice that if $U_\alpha({\,\widehat{z}}\,)$ is isometric to a Euclidean metric disk of radius $r$, then, for $s\,=\,2$, the term  $\tfrac{\pi^{s/2}}{\Gamma(\tfrac{s}{2}\,+\,1)}$ appearing in (\ref{hausdEsse}) normalizes the Hausdorff measure of $U_\alpha({\,\widehat{z}}\,)$  to the Lebesgue measure $\pi\,{r}^2$ of the disk\footnote{For $s=2\;\Rightarrow\,\tfrac{\pi^{s/2}}{\Gamma(\tfrac{s}{2}\,+\,1)}\,=\,\pi$. } of radius $r\;$.

As long as $\exp_p$ is a bi-Lipschitz map, we have the following result.

\begin{proposition}
\label{HausMeasSigma}
 Let\; $\mathrm{Lip}_{\;\widehat{z}\,}\,\geq\,1$ denote the Lipschitz constant $\mathrm{Lip}_{\;\widehat{z}\,}(\exp_p)$ describing the isometric distortion of the bi-Lipschitz exponential map $\exp_p\,:\,\left(\mathbb{C\,S}_{\;\widehat{z}},\,d_{S^2(\,\widehat{z}\,)}   \right)\,\longrightarrow\,\left(\Sigma_{\;\widehat{z}},\, d_{\Sigma_{\;\widehat{z}}}  \right)$, then the two-dimensional Hausdorff measure $\mathscr{H}^2\,\left(\Sigma_{\,\widehat{z}\,}\right)$ is the Riemann-Lipschitz area (\ref{RiemLipArea}) of $\Sigma_{\,\widehat{z}\,}$,
\begin{equation}
\mathscr{H}^2\,\left(\Sigma_{\,\widehat{z}\,}\right)\,=\,A\left(\Sigma_{\;\widehat{z}}\right)\,:=\,\int_{\Sigma_{\;\widehat{z}}}\,d\mu_{g^{(2)}_{\,\widehat{z}}}\,.
\end{equation} 
 Moreover, we have the uniform bound
\begin{equation}
\label{haussEsse2prima}
\frac{4\pi}{\mathrm{Lip}^2_{\,\widehat{z}\,}}\,\left(\frac{1\,+\,v_{\,\widehat{z}}\,(p)}{1\,-\,v_{\,\widehat{z}}\,(p)}\right)\,
\widehat{r}^{\,2}(\,\widehat{z}\,)\,\leq\,A\left(\Sigma_{\;\widehat{z}}\right)\,\leq\,
4\pi\,\mathrm{Lip}^2_{\,\widehat{z}\,}\,\left(\frac{1\,+\,v_{\,\widehat{z}}\,(p)}{1\,-\,v_{\,\widehat{z}}\,(p)}\right)\,
\widehat{r}^{\,2}(\,\widehat{z}\,)\,,
\end{equation}
where $\widehat{r}(\,\widehat{z}\,)$ is the reference FLRW comoving radius, and $v_{\,\widehat{z}}$ is the relative velocity of the physical observer with respect to the FLRW observer.
\end{proposition}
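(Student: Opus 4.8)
The plan is to prove the two assertions in turn: first the identification $\mathscr{H}^2(\Sigma_{\widehat{z}}) = A(\Sigma_{\widehat{z}})$, and then to read off the two-sided bound as a consequence of the bi-Lipschitz scaling of the Hausdorff measure. For the measure identification I would work locally in a Lipschitz chart $(U_\alpha, \varphi_\alpha)$ of $(\Sigma_{\;\widehat{z}}, d_{\Sigma_{\;\widehat{z}}})$. By Rademacher's theorem \cite{Gariepy} the transition maps and the induced metric $g^{(2)}_{\;\widehat{z}}$ are defined almost everywhere and obey the uniform two-sided bound (\ref{RiemLipSurf}), so each chart is bi-Lipschitz equivalent to a Euclidean domain. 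The normalizing constant $\pi^{s/2}/\Gamma(\tfrac{s}{2}+1)$ in Definition \ref{DefHausdMeas} is fixed precisely so that, for $s=2$, a metric disk of small radius $r$ has Hausdorff content $\pi r^2$, matching the Euclidean Lebesgue measure. Applying the area formula for Lipschitz maps to $\varphi_\alpha^{-1}$ shows that the relevant Jacobian equals $\sqrt{\det g^{(2)}_{\;\widehat{z}}}$ almost everywhere, so the Hausdorff measure built from $d_{\Sigma_{\;\widehat{z}}}$ and the Riemann--Lipschitz area element $d\mu_{g^{(2)}_{\,\widehat{z}}}$ agree on $U_\alpha$. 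A countable covering argument over the compact surface then yields $\mathscr{H}^2(\Sigma_{\;\widehat{z}}) = A(\Sigma_{\;\widehat{z}})$; the precise identification of these two measures on a Riemann--Lipschitz surface is the one developed in \cite{EellsFuglede, DeCecco}.

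For the two-sided bound I would invoke the elementary scaling behavior of Hausdorff measure under a bi-Lipschitz map. If $f:(X, d_X)\to(Y, d_Y)$ satisfies $L^{-1}d_X \leq d_Y\circ(f\times f) \leq L\,d_X$, then directly from Definition \ref{DefHausdMeas} --- since $f$ distorts the diameter of every covering set by a factor at most $L$ --- one has
\begin{equation}
L^{-s}\,\mathscr{H}^s(X)\,\leq\,\mathscr{H}^s\left(f(X)\right)\,\leq\,L^s\,\mathscr{H}^s(X)\,.
\end{equation}
Taking $f = \left.\exp_p\right|_{\;\widehat{z}}$, $L = \mathrm{Lip}_{\,\widehat{z}}(\exp_p)$, and $s=2$, and using that the domain $(\mathbb{C\,S}_{\;\widehat{z}}, d_{\mathbb{S}^2(\,\widehat{z}\,)})$ is the \emph{smooth} round sphere of radius $r(\widehat{z})$, for which $\mathscr{H}^2 = 4\pi\,r^2(\widehat{z})$, I combine this with the first step to obtain
\begin{equation}
\frac{4\pi}{\mathrm{Lip}^2_{\,\widehat{z}}}\,r^2(\widehat{z})\,\leq\,A\left(\Sigma_{\;\widehat{z}}\right)\,\leq\,4\pi\,\mathrm{Lip}^2_{\,\widehat{z}}\,r^2(\widehat{z})\,.
\end{equation}
The stated inequality then follows by substituting $r^2(\widehat{z}) = \tfrac{1+v_{\,\widehat{z}}}{1-v_{\,\widehat{z}}}\,\widehat{r}^{\,2}(\widehat{z})$ from (\ref{radialconnect0}).

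The delicate point lies entirely in the first step, in the genuinely non-smooth regime: one must check that the Hausdorff measure generated by the \emph{intrinsic} distance $d_{\Sigma_{\;\widehat{z}}}$ --- itself defined through the supremum-over-null-sets construction (\ref{Lipdistance}) --- is compatible with the only almost-everywhere-defined Riemannian measure, despite the fractal irregularities that the cascade of lensing events may impose on $\Sigma_{\;\widehat{z}}$. This compatibility rests on Rademacher's theorem together with the structure theory of Riemann--Lipschitz manifolds; once the local bi-Lipschitz equivalence to Euclidean charts is secured, the remaining estimates in both steps are routine.
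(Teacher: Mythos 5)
Your proposal is correct and takes essentially the same route as the paper's proof: identify $\mathscr{H}^2(\Sigma_{\,\widehat{z}\,})$ with the Riemann--Lipschitz area $A(\Sigma_{\;\widehat{z}})$, transport the round-sphere value $\mathscr{H}^2(\mathbb{C\,S}_{\;\widehat{z}})\,=\,4\pi\,r^2(\,\widehat{z}\,)$ through the bi-Lipschitz distortion of $\exp_p$, and substitute the aberration relation (\ref{radialconnect0}). The only difference is one of emphasis: the paper derives the scaling inequality $L^{-s}\,\mathscr{H}^s(X)\,\leq\,\mathscr{H}^s(f(X))\,\leq\,L^s\,\mathscr{H}^s(X)$ from scratch via the covering argument (applied first to $\exp_p$ and then, using injectivity, to $\exp_p^{-1}$, with $\mathrm{dim}_{\mathscr{H}}(\Sigma_{\,\widehat{z}\,})=2$ obtained as a byproduct), whereas you quote that inequality as standard and instead spell out the identification $\mathscr{H}^2\,=\,A$ through Rademacher's theorem and the area formula --- a step the paper asserts with reference to the Riemann--Lipschitz structure and \cite{EellsFuglede, DeCecco}.
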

\begin{proof}
 These results easily follow from the standard behavior of the Hausdorff measure under the action of a Lipschitz map \cite{Gariepy}. Let us start with the case in which 
$\exp_p\,:\,\left(\mathbb{C\,S}_{\;\widehat{z}},\,d_{S^2(\,\widehat{z}\,)}   \right)\,\longrightarrow\,\left(\Sigma_{\;\widehat{z}},\, d_{\Sigma_{\;\widehat{z}}}  \right)$ is only  Lipschitz. 
 Let $\{V_\alpha\}_{\alpha=1}^\infty$ be a collection of sets $V_\alpha\, \subset\,\mathbb{CS}_{\;\widehat{z}}(p)$ such that $\mathbb{CS}_{\;\widehat{z}}(p)\,\subset\,\cup_{\alpha=1}^\infty\,V_\alpha$ and $\mathrm{diam}_{\mathbb{S}^2(\,\widehat{z})}\left(V_\alpha\right)\,\leq\,\delta$,\,$\forall\,\alpha$,\; with\, $\delta\,>\,0$. We have $\exp_p(\mathbb{CS}_{\;\widehat{z}})\,\subset\,\cup_\alpha\,\exp_p(V_\alpha)$, with 
\begin{equation}
\mathrm{diam}_{\Sigma_{\;\widehat{z}}}\,\left(\exp_p\left(V_\alpha \right)\right)\,\leq\,\mathrm{Lip}_{\,\widehat{z}\,}\,\mathrm{diam}_{\mathbb{S}^2(\,\widehat{z})}\left(V_\alpha \right)\, \leq\, \mathrm{Lip}_{\,\widehat{z}\,}\;\delta.
\end{equation}
If we compute the $s$-dimensional Haurdorff measure at resolution $\mathrm{Lip}_{\,\widehat{z}\,}\,\delta$, we get 
\begin{align}
\mathscr{H}^s_{\mathrm{Lip}_{\,\widehat{z}\,}\delta}\,\left(\exp_p\left(\mathbb{CS}_{\,\widehat{z}\,}\right) \right)\,&\leq\,
\sum_{\alpha}^\infty\,\frac{\pi^{s/2}}{\Gamma\left(\tfrac{s}{2}\,+\,1 \right)}\,\left(\frac{\mathrm{diam}_{\Sigma_{\;\widehat{z}}}\left(\exp_p\left(V_\alpha\right)\right)}{2}\right)^s \\
&\leq\,\left(\mathrm{Lip}_{\,\widehat{z}\,}\right)^s\,
\sum_{\alpha}^\infty\,\frac{\pi^{s/2}}{\Gamma\left(\tfrac{s}{2}\,+\,1 \right)}\,\left(\frac{\mathrm{diam}_{\mathbb{S}^2(\,\widehat{z})}\left(V_\alpha\right)}{2}\right)^s\nonumber\,.
\end{align}
Up to the $(\mathrm{Lip}_{\,\widehat{z}\,})^s$ prefactor,  the $\inf$ of the above sum over the coverings $\{V_\alpha\}_{\alpha=1}^\infty$ defines the $s$-dimensional Hausdorff measure (at resolution $\delta$) \; $\mathscr{H}^s_{\delta}\,(\mathbb{CS}_{\,\widehat{z}\,})$. Thus, we have  
\begin{equation}
\mathscr{H}^s_{\mathrm{Lip}_{\,\widehat{z}\,}\delta}\,\left(\exp_p\left(\mathbb{CS}_{\,\widehat{z}\,}\right) \right)\,\leq\,\left(\mathrm{Lip}_{\,\widehat{z}\,}\right)^s\,   \mathscr{H}^s_{\delta}\,(\mathbb{CS}_{\,\widehat{z}\,})\,,
\end{equation}
and  
\begin{align}
\label{HausIneq}
\lim_{\delta\rightarrow 0}\,\mathscr{H}^s_{\mathrm{Lip}_{\,\widehat{z}\,}\delta}\,\left(\exp_p\left(\mathbb{CS}_{\,\widehat{z}\,}\right) \right)\,&\leq\,\lim_{\delta\rightarrow 0}\,\left(\mathrm{Lip}_{\,\widehat{z}\,}\right)^s\,   \mathscr{H}^s_{\delta}\,(\mathbb{CS}_{\,\widehat{z}\,})\nonumber\\
\Rightarrow\,\mathscr{H}^s\,\left(\exp_p\left(\mathbb{CS}_{\,\widehat{z}\,}\right) \right)\,&\leq\,\left(\mathrm{Lip}_{\,\widehat{z}\,}\right)^s\,   \mathscr{H}^s\,(\mathbb{CS}_{\,\widehat{z}\,})
\end{align}
Since $\mathbb{CS}_{\,\widehat{z}\,}$ is the smooth round two-sphere, $\mathscr{H}^s\,(\mathbb{CS}_{\,\widehat{z}\,})$ can be identified with the standard area measure of $\mathbb{CS}_{\,\widehat{z}\,}$ for $s\,=\,2$. The bound (\ref{HausIneq}) implies also that
\begin{equation}
\mathscr{H}^s\,\left(\exp_p\left(\mathbb{CS}_{\,\widehat{z}\,}\right) \right)\,=\, 
\mathscr{H}^s\,\left(\Sigma_{\,\widehat{z}\,}\right)\,=\,0,\;\;\;\;\mathrm{for}\;\;s\,>\,2\,,
\end{equation}  
and that the Hausdorff dimension of $\Sigma_{\,\widehat{z}\,}$, defined by
\begin{equation}
\mathrm{dim}_{\mathscr{H}}\left(\Sigma_{\,\widehat{z}\,}\right)\,:=\,\inf\,
\left\{\left. s\,\in\,\mathbb{R}_{\geq 0}\;\right|\;\mathscr{H}^s\,\left(\Sigma_{\,\widehat{z}\,}\right)\,=\,0 \right\}\,,
\end{equation}
is $\mathrm{dim}_{\mathscr{H}}\left(\Sigma_{\,\widehat{z}\,}\right)\,=\,2$\,.

If the null exponential map 
$\exp_p\,:\,\left(\mathbb{C\,S}_{\;\widehat{z}},\,d_{S^2(\,\widehat{z}\,)}   \right)\,\longrightarrow\,\left(\Sigma_{\;\widehat{z}},\, d_{\Sigma_{\;\widehat{z}}}  \right)$ is by-Lipschitz, then it is injective (see Remark \ref{remBilip}) and it admits an inverse 
\begin{align}
\exp_p^{\,-1}\,:\,\left(\Sigma_{\;\widehat{z}},\, d_{\Sigma_{\;\widehat{z}}}\right)\,
&\longrightarrow\,\left(\mathbb{C\,S}_{\;\widehat{z}},\,d_{S^2(\,\widehat{z}\,)}\right)\\
q\,&\longmapsto x\,:=\,\exp_p^{\,- 1}(q)\,=\,x^i\,E_{(i)}\,,\nonumber
\end{align}
with $\sum_{a=1}^3(x^a)^2=\,r^2(\,\widehat{z}\,)\,=\,(x^4)^2$. The bi-Lipschitz condition 
\begin{equation}
\label{BiLippeAncora}
\frac{1}{\mathrm{Lip}_{\,\widehat{z}\,}}\,d_{\mathbb{S}^2(\,\widehat{z}\,)}(x,y)\,\leq\,
d_{\Sigma_{\hat{z}}\,}(\exp_p(x),\, \exp_p(y))\,\leq\,\mathrm{Lip}_{\,\widehat{z}\,}\,d_{\mathbb{S}^2(\,\widehat{z}\,)}(x,y)\,,
\end{equation}
implies
\begin{equation}
\frac{1}{\mathrm{Lip}_{\,\widehat{z}\,}}\,
d_{\Sigma_{\hat{z}}\,}(q_1,\, q_2)\,\leq\,d_{\mathbb{S}^2(\,\widehat{z}\,)}\left(\exp_p^{\,- 1}(q_1),\,\exp_p^{\,- 1}(q_2)\right)\,
\leq\,\mathrm{Lip}_{\,\widehat{z}\,}\,\,d_{\Sigma_{\hat{z}}\,}(q_1,\, q_2)\,,
\end{equation}
where, as usual, we set $\mathrm{Lip}_{\,\widehat{z}\,}\,:=\,\mathrm{Lip}_{\,\widehat{z}\,}\left(\exp_p  \right)$. 
It follows that also $\exp_p^{\,- 1}$ is bi-Lipschitz with the same metric distortion of $exp_p$,\,\emph{i.e.}
\begin{equation}
\mathrm{Lip}_{\,\widehat{z}\,}\left(\exp^{\,- 1}_p  \right)\,=\, 
\mathrm{Lip}_{\,\widehat{z}\,}\,\left(\exp_p  \right)\,,
\end{equation} 
and we can extend to $\exp_p^{\,- 1}$ the argument leading to (\ref{HausIneq}). With the obvious shift in notation, we get the lower bound
\begin{equation}
\mathscr{H}^s\,\left(\Sigma_{\,\widehat{z}\,}\right)\, \geq\,
\frac{1}{\mathrm{Lip}^s_{\,\widehat{z}\,}}\, \mathscr{H}^s\,(\mathbb{CS}_{\,\widehat{z}\,})\,, 
\end{equation}
which, together with (\ref{HausIneq}) and $\mathrm{dim}_{\mathscr{H}}\left(\Sigma_{\,\widehat{z}\,}\right)\,=\,2$\,,  provides
\begin{equation}
\frac{1}{\mathrm{Lip}^2_{\,\widehat{z}\,}}\, \mathscr{H}^2\,(\mathbb{CS}_{\,\widehat{z}\,})\,\leq\,\mathscr{H}^2\,\left(\Sigma_{\,\widehat{z}\,}\right)\,\leq\,
\mathrm{Lip}^2_{\,\widehat{z}\,}\, \mathscr{H}^2\,(\mathbb{CS}_{\,\widehat{z}\,})\,.
\end{equation}
Since $\mathscr{H}^2\,(\mathbb{CS}_{\,\widehat{z}\,})$ is simply the standard area measure of the round 2-sphere $(\mathbb{CS}_{\,\widehat{z}\,},\,\widetilde{h}_{\mathbb{S}^2(\,\widehat{z}\,)})$  (see (\ref{roundmet0erre})), and $\mathscr{H}^2\,\left(\Sigma_{\,\widehat{z}\,}\right)$ is the Riemann-Lipschitz area (\ref{RiemLipArea}) of $\Sigma_{\,\widehat{z}\,}$,
\begin{equation}
\mathscr{H}^2\,\left(\Sigma_{\,\widehat{z}\,}\right)\,=\,A\left(\Sigma_{\;\widehat{z}}\right)\,:=\,\int_{\Sigma_{\;\widehat{z}}}\,d\mu_{g^{(2)}_{\,\widehat{z}}}\,,
\end{equation} 
we get 
\begin{equation}
\label{HausAreaBound2}
\frac{4\pi}{\mathrm{Lip}^2_{\,\widehat{z}\,}}\,\left(\frac{1\,+\,v_{\,\widehat{z}}\,(p)}{1\,-\,v_{\,\widehat{z}}\,(p)}\right)\,
\widehat{r}^{\,2}(\,\widehat{z}\,)\,\leq\,A\left(\Sigma_{\;\widehat{z}}\right)\,\leq\,
4\pi\,\mathrm{Lip}^2_{\,\widehat{z}\,}\,\left(\frac{1\,+\,v_{\,\widehat{z}}\,(p)}{1\,-\,v_{\,\widehat{z}}\,(p)}\right)\,
\widehat{r}^{\,2}(\,\widehat{z}\,)\,,
\end{equation}
which implies the stated result. \;\;\;\; $\square$      
\end{proof}  

\begin{remark}
As already stressed, \,the Hausdorff measure captures quite efficiently the geometry of the measurement process in astrophysical observations. Signals reaching us from an extended source $B_{\;\widehat{z}}(q)$ are often organized in observational bins representing the spatial regions from which we receive the relevant energetic information. Depending on the nature of the observation, these spatial regions can be associated with a finite covering $\{V_\alpha \}$ of $B_{\;\widehat{z}}(q)$ with a spatial resolution $\delta\,>\,0$, defined by $\mathrm{diam}(V_\alpha)\,<\,\delta$\,. The corresponding  Hausdorff measure $\mathscr{H}_\delta^s(B_{\;\widehat{z}}(q))$ provides a good approximation to the actual Hausdorff measure and, when possible, to the area measure $A(B_{\;\widehat{z}}(q))$.     
\;\;\;$\square$
\end{remark}
\vskip 0.5cm\noindent
We can extend the above analysis to the visible region $B^{(phys)}_{\;\widehat{z}}(q)\,:=\,\exp_p\left(B_{\;\widehat{z}}(q)\right)\,\subset\,\Sigma_{\;\widehat{z}}$ characterizing the extended astrophysical source $q\in\,\Sigma_{\;\widehat{z}}$ introduced in Definition \ref{AngDiamDist}. The following result is a direct consequence of Proposition \ref{HausMeasSigma}.

\begin{lemma} (\emph{The Hausdorff measure of an extended source})
The $2$-dimensional Hausdorff measure of the region $B^{(phys)}_{\;\widehat{z}}(q)\,:=\,\exp_p\left(B_{\;\widehat{z}}(q)\right)\,\subset\,\Sigma_{\;\widehat{z}}$ characterizing the portion of the extended astrophysical source $q\in\,\Sigma_{\;\widehat{z}}$ visible by $(p, \dot\gamma(p))$, is given by its Riemann-Lipschitz area 
\begin{equation}
\mathscr{H}^2\,\left(B^{(phys)}_{\;\widehat{z}}(q)\right)\,=\,A\left(B^{(phys)}_{\;\widehat{z}}(q)\right)\,:=\,\int_{B^{(phys)}_{\;\widehat{z}}(q)}\,d\mu_{g^{(2)}_{\,\widehat{z}}}\,,
\end{equation} 
and we have the uniform bound 
\begin{align}
\frac{\omega(B_{\;\widehat{z}}(q))}{\mathrm{Lip}^2_{\,\widehat{z}\,}}\,\left(\frac{1\,+\,v_{\,\widehat{z}}\,(p)}{1\,-\,v_{\,\widehat{z}}\,(p)}\right)\,
\widehat{r}^{\,2}(\,\widehat{z}\,)\,&\leq\,A\left(B^{(phys)}_{\;\widehat{z}}(q)\right)\nonumber\\
&\leq\,
\omega(B_{\;\widehat{z}}(q))\,\mathrm{Lip}^2_{\,\widehat{z}\,}\,\left(\frac{1\,+\,v_{\,\widehat{z}}\,(p)}{1\,-\,v_{\,\widehat{z}}\,(p)}\right)\,
\widehat{r}^{\,2}(\,\widehat{z}\,)\,,
\label{haussEsse2}
\end{align}
where $\omega(B_{\;\widehat{z}}(q))$ is the solid angle subtended by the region $B_{\;\widehat{z}}(q)$ on the celestial sphere $\mathbb{CS}_{\,\widehat{z}\,}(p)$,  and we adopted the notation of Proposition \ref{HausMeasSigma}.\;\;\;$\square$
\end{lemma}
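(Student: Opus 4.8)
The plan is to recognize the statement as the \emph{localization} of Proposition \ref{HausMeasSigma} from the full celestial sphere $\mathbb{CS}_{\widehat{z}}(p)$ to the measurable subregion $B_{\widehat{z}}(q)\subset\mathbb{CS}_{\widehat{z}}(p)$, followed by an elementary evaluation of the round-sphere measure of $B_{\widehat{z}}(q)$. Nothing new is needed beyond the covering estimates already established for $\Sigma_{\widehat{z}}$.

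First I would restrict the bi-Lipschitz exponential map to $B_{\widehat{z}}(q)$. By the bi-Lipschitz condition (\ref{BiLippeAncora}) this restriction is an injective bi-Lipschitz homeomorphism onto $B^{(phys)}_{\widehat{z}}(q)=\exp_p(B_{\widehat{z}}(q))$ with the same distortion constant $\mathrm{Lip}_{\widehat{z}}$. The covering estimates carried out in the proof of Proposition \ref{HausMeasSigma} never use that the domain is the whole sphere: they use only the uniform Lipschitz bounds and the fact that $\mathscr{H}^s$ is built from coverings of vanishing diameter. Since $\mathscr{H}^s$ is a genuine outer measure, these estimates apply verbatim to the Borel set $B_{\widehat{z}}(q)$. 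Hence $\mathrm{dim}_{\mathscr{H}}(B^{(phys)}_{\widehat{z}}(q))=2$, the $2$-dimensional Hausdorff measure coincides with the Riemann--Lipschitz area, and one obtains the bi-Lipschitz sandwich
\[
\frac{1}{\mathrm{Lip}^2_{\widehat{z}}}\,\mathscr{H}^2\!\left(B_{\widehat{z}}(q)\right)\,\leq\,\mathscr{H}^2\!\left(B^{(phys)}_{\widehat{z}}(q)\right)\,=\,A\!\left(B^{(phys)}_{\widehat{z}}(q)\right)\,\leq\,\mathrm{Lip}^2_{\widehat{z}}\,\mathscr{H}^2\!\left(B_{\widehat{z}}(q)\right)\,.
\]

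Second I would compute $\mathscr{H}^2(B_{\widehat{z}}(q))$ directly. Since $(\mathbb{CS}_{\widehat{z}},\,h_{\mathbb{S}^2(\widehat{z})})$ is the smooth round $2$-sphere of radius $r(\widehat{z})$, its Hausdorff measure is the Riemannian area with measure $d\mu_{\mathbb{S}^2(\widehat{z})}=r^2(\widehat{z})\,d\mu_{\mathbb{S}^2}$, so, using the solid-angle definition $\omega(B_{\widehat{z}}(q))=\int_{B_{\widehat{z}}(q)}d\mu_{\mathbb{S}^2}$ and the relation $r^2(\widehat{z})=\tfrac{1+v_{\widehat{z}}(p)}{1-v_{\widehat{z}}(p)}\,\widehat{r}^{\,2}(\widehat{z})$ of (\ref{radialconnect0}),
\[
\mathscr{H}^2\!\left(B_{\widehat{z}}(q)\right)\,=\,\int_{B_{\widehat{z}}(q)}r^2(\widehat{z})\,d\mu_{\mathbb{S}^2}\,=\,\left(\frac{1+v_{\widehat{z}}(p)}{1-v_{\widehat{z}}(p)}\right)\widehat{r}^{\,2}(\widehat{z})\,\omega(B_{\widehat{z}}(q))\,.
\]
Substituting this into the sandwich above yields the claimed uniform bound (\ref{haussEsse2}).

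There is essentially no obstacle beyond bookkeeping: the only point requiring genuine care is the localization of the covering argument, namely checking that the upper and lower Hausdorff estimates of Proposition \ref{HausMeasSigma}, phrased there for $\mathbb{CS}_{\widehat{z}}$, remain valid on the subset $B_{\widehat{z}}(q)$. This is immediate once one recalls that $\mathscr{H}^2$ is a measure and that injectivity of $\exp_p$ (guaranteed by the bi-Lipschitz hypothesis, cf.\ Remark \ref{remBilip}) prevents any folding or double-covering of $B^{(phys)}_{\widehat{z}}(q)$, so that no cancellation of measure can occur.
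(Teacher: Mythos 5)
Your proposal is correct and follows exactly the route the paper intends: the paper states this lemma without a written proof, asserting it is a direct consequence of Proposition \ref{HausMeasSigma}, and your localization of the covering estimates to the Borel subset $B_{\widehat{z}}(q)$ plus the evaluation $\mathscr{H}^2(B_{\widehat{z}}(q))=\bigl(\tfrac{1+v_{\widehat{z}}(p)}{1-v_{\widehat{z}}(p)}\bigr)\widehat{r}^{\,2}(\widehat{z})\,\omega(B_{\widehat{z}}(q))$ via (\ref{radialconnect0}) is precisely the intended argument. The only cosmetic remark is that the restriction of $\exp_p$ has distortion \emph{at most} $\mathrm{Lip}_{\widehat{z}}$ rather than exactly the same constant, which is all the sandwich requires.
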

\vskip 0.5cm\noindent
Since $B^{(phys)}_{\;\widehat{z}}(q)\,=\,\exp_p(B_{\;\widehat{z}}(q))$, and $\exp_p$ is a bi-Lipschitz map, we have 
\begin{align}
\label{chainPullBack}
A\left(B^{(phys)}_{\;\widehat{z}}(q)\right)\,&:=\,\int_{B^{(phys)}_{\;\widehat{z}}(q)}\,d\mu_{g^{(2)}_{\,\widehat{z}}}\,=\,
\int_{\exp_p(B_{\;\widehat{z}}(q))}\,d\mu_{g^{(2)}_{\,\widehat{z}}}\\
&=\,\int_{B_{\;\widehat{z}}(q)}\,\exp_p^*\left(d\mu_{g^{(2)}_{\,\widehat{z}}}\right)\nonumber\\
&=\,
\int_{B_{\;\widehat{z}}(q)}\,\frac{\exp_p^*\left(d\mu_{g^{(2)}_{\,\widehat{z}}}\right)}{d\mu_{\mathbb{S}^2}}\,d\mu_{\mathbb{S}^2}\,,
\nonumber
\end{align} 
where $d\mu_{\mathbb{S}^2}\,:=\,\sin\theta\,d\theta\,d\phi$ is the solid angle measure on $\mathbb{CS}_{\;\widehat{z}}(p)$, and we introduced the pull back  measure on the celestial sphere $\mathbb{CS}_{\,\widehat{z}\,}(p)$ 
\begin{equation}
d\mu_{h_{\,\widehat{z}}}\,:=\,\exp_p^*\left(d\mu_{g^{(2)}_{\,\widehat{z}}}\right)\,. 
\end{equation}
According to (\ref{chainPullBack}), $d\mu_{h_{\,\widehat{z}}}$ is such that 
\begin{equation}
\int_{B_{\;\widehat{z}}(q)}\, d\mu_{h_{\,\widehat{z}}}\,=\,A\left(B^{(phys)}_{\;\widehat{z}}(q)\right),\,
\end{equation}
namely, $d\mu_{h_{\,\widehat{z}}}$ transfers to the observer's celestial sphere $\mathbb{CS}_{\;\widehat{z}}(p)$  the information on the physical area of the source emission region $B^{(phys)}_{\;\widehat{z}}(q)\,=\,\exp_p(B_{\;\widehat{z}}(q))$ visible by $(p, \dot\gamma(p))$. 
Let us consider the ratio 
\begin{align}
\label{TheRatioAD}
&\,\frac{A(B^{(phys)}_{\;\widehat{z}}(q))}{\omega(B_{\,\widehat{z}}(q))}\,=\,
\frac{A\left(\exp_p(B_{\;\widehat{z}}(q))\right)}{\omega(B_{\,\widehat{z}}(q))}\\
&=\,\frac{1}{\omega(B_{\;\widehat{z}}(q))}\,\int_{B_{\;\widehat{z}}(q)}\,\frac{d\mu_{h_{\,\widehat{z}}}}{d\mu_{\mathbb{S}^2}}\,d\mu_{\mathbb{S}^2}\nonumber\\
&=\, \frac{1}{r^2(\widehat{z\,})\omega(B_{\,\widehat{z}}(q))}\,
\int_{B_{\,\widehat{z}}(q)}\,\frac{d\mu_{h_{\,\widehat{z}}}}{d\mu_{\mathbb{S}^2}}\,
 d\mu_{\mathbb{S}^2(\widehat{z}\,)}\nonumber\,,
\end{align}
\vskip 0.4cm\noindent
where we have parametrized $B^{(phys)}_{\;\widehat{z}}(q)\,=\,
\exp_p(B_{\;\widehat{z}}(q))$  in terms of the celestial sphere image $B_{\;\widehat{z}}(q))$, and  introduced the comoving radial factor $r^2(\widehat{z\,})$ to rewrite the solid angle measure $d\mu_{\mathbb{S}^2}$ in terms of the round measure 
$d\mu_{\mathbb{S}^2(\widehat{z}\,)}=\,r^2(\widehat{z\,})\,d\mu_{\mathbb{S}^2}$ on $\mathbb{CS}_{\;\widehat{z}}(p)$. Finally, we have denoted by $\omega(B_{\;\widehat{z}}(q))$the solid angle spanned by  $B_{\;\widehat{z}}(q)$ on the directional celestial sphere $\mathbb{CS}(p)$. 
We have the following result.

\begin{proposition} (The Area Distance)
\label{PropAreaDistance}

Let $\mathrm{diam}_{\mathbb{S}^2(\widehat{z})}(B_{\,\widehat{z}}(q))\,=\,\delta\,>\,0 $, and let us denote by $x_q\,:=\,\exp_p^{\,- 1}(q)$ the normal coordinates on $(\mathbb{CS}_{\,\widehat{z}\,}(p),\, h_{\mathbb{S}^2(\,\widehat{z}\,)})$ associated with the bi-Lipschitz null exponential map 
$\exp_p\,:\,\mathbb{CS}_{\,\widehat{z}\,}(p)\,\longrightarrow\, \Sigma_{\,\widehat{z}\,}$.
The density (a Radon-Nikodym derivative)
\begin{equation}
\frac{d\mu_{h_{\,\widehat{z}}}(y)}{d\mu_{\mathbb{S}^2}(y)},\;\; y\,\in\,B_{\;\widehat{z}}(q)
\end{equation}
is locally summable over  $B_{\;\widehat{z}}(q)\,\cap\,(\mathbb{CS}_{\,\widehat{z}\,}(p),\, h_{\mathbb{S}^2(\,\widehat{z}\,)})$. It follows that the limit 
\begin{align}
\label{BesLimit}
&\,\lim_{\delta\,\rightarrow\,0}\,
\frac{A\left(\exp_p(B_{\;\widehat{z}}(q))\right)}{\omega(B_{\,\widehat{z}}(q))}\,=\,
\lim_{\delta\,\rightarrow\,0}\,\frac{1}{\omega(B_{\,\widehat{z}}(q))}\,
\int_{B_{\,\widehat{z}}(q)}\,\frac{d\mu_{h_{\,\widehat{z}}}}{d\mu_{\mathbb{S}^2}}(y)\,d\mu_{\mathbb{S}^2}(y)\\
&=\,\lim_{\delta\,\rightarrow\,0}\,\frac{1}{r^2(\widehat{z\,})\omega(B_{\,\widehat{z}}(q))}\,
\int_{B_{\,\widehat{z}}(q)}\,\frac{d\mu_{h_{\,\widehat{z}}}}{d\mu_{\mathbb{S}^2}}(y)\,
 d\mu_{\mathbb{S}^2(\widehat{z}\,)}(y)\nonumber\\
&=\,
\left(\frac{d\mu_{h_{\,\widehat{z}}}}{d\mu_{\mathbb{S}^2}}\right)(\exp_p^{\,- 1}(q))\,=:\,D^2_{\,\widehat{z}}(x_q)\nonumber\,,
\end{align}
evaluated by the observer $(p,\,\dot\gamma(p))$,  
exists almost everywhere on $B_{\;\widehat{z}}(q)\,\cap\,\mathbb{CS}_{\,\widehat{z}\,}(p)$. The scalar density $D^2_{\,\widehat{z}}(x_q)$, so defined, is the squared area distance between the physical observer $(p, \dot\gamma(p))$ and the source $q\,\in\,\Sigma_{\;\widehat{z}}$. In terms of the metric distortion $\mathrm{Lip}_{\,\widehat{z}\,}$, the reference FLRW comoving radius $\widehat{r}^{\,2}(\,\widehat{z}\,)$, and the relative velocity $v_{\,\widehat{z}}$ of the physical observer $(p, \dot\gamma(p))$ with respect to the FLRW observer $(p, \widehat{\dot\gamma}(p))$, the squared area distance $D^2_{\,\widehat{z}}(x_q)$ is uniformly bounded according to
\begin{equation}
\label{AreaDistBounds}
\frac{1}{\mathrm{Lip}^2_{\,\widehat{z}\,}}\,\left(\frac{1\,+\,v_{\,\widehat{z}}\,(p)}{1\,-\,v_{\,\widehat{z}}\,(p)}\right)\,
\widehat{r}^{\,2}(\,\widehat{z}\,)\,\leq\,D^2_{\,\widehat{z}}(x_q)
\leq\,
\mathrm{Lip}^2_{\,\widehat{z}\,}\,\left(\frac{1\,+\,v_{\,\widehat{z}}\,(p)}{1\,-\,v_{\,\widehat{z}}\,(p)}\right)\,
\widehat{r}^{\,2}(\,\widehat{z}\,)\,.
\end{equation}
\end{proposition}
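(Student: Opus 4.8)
The plan is to read the limit in (\ref{BesLimit}) as a Lebesgue differentiation of the Radon--Nikodym density generated by the bi-Lipschitz pullback $d\mu_{h_{\widehat{z}}} = \exp_p^*(d\mu_{g^{(2)}_{\widehat{z}}})$, and to obtain the two-sided estimate (\ref{AreaDistBounds}) by averaging the uniform area bound (\ref{haussEsse2}). First I would establish that the density $\frac{d\mu_{h_{\widehat{z}}}}{d\mu_{\mathbb{S}^2}}$ is well defined and locally summable. Because $\exp_p\,:\,(\mathbb{C\,S}_{\widehat{z}},\,d_{\mathbb{S}^2(\widehat{z})})\to(\Sigma_{\widehat{z}},\,d_{\Sigma_{\widehat{z}}})$ is bi-Lipschitz (the Bi-Lipschitz regularity assumption, see (\ref{BiLippeAncora})), Rademacher's theorem \cite{Gariepy} gives that $\exp_p$ is differentiable $d\mu_{\mathbb{S}^2}$-almost everywhere, and the area formula for Lipschitz maps yields $\exp_p^*(d\mu_{g^{(2)}_{\widehat{z}}}) = J_{\widehat{z}}\,d\mu_{\mathbb{S}^2}$ with a measurable Jacobian $J_{\widehat{z}}$. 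The bi-Lipschitz two-sided bound forces $J_{\widehat{z}}$ to lie between $\mathrm{Lip}_{\widehat{z}}^{-2}$ and $\mathrm{Lip}_{\widehat{z}}^{2}$ times the appropriate comoving factor almost everywhere, so $d\mu_{h_{\widehat{z}}}$ is absolutely continuous with respect to $d\mu_{\mathbb{S}^2}$ with an essentially bounded---hence locally summable---density $\frac{d\mu_{h_{\widehat{z}}}}{d\mu_{\mathbb{S}^2}} = J_{\widehat{z}}$.

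With summability in hand, the second step is to recognize that $\frac{1}{\omega(B_{\widehat{z}}(q))}\int_{B_{\widehat{z}}(q)}\frac{d\mu_{h_{\widehat{z}}}}{d\mu_{\mathbb{S}^2}}\,d\mu_{\mathbb{S}^2}$ is exactly the $d\mu_{\mathbb{S}^2}$-average of the density over the shrinking bin $B_{\widehat{z}}(q)$, since $\omega(B_{\widehat{z}}(q)) = \int_{B_{\widehat{z}}(q)} d\mu_{\mathbb{S}^2}$. As $\delta = \mathrm{diam}_{\mathbb{S}^2(\widehat{z})}(B_{\widehat{z}}(q)) \to 0$ these regions contract around their barycentre $x_q = \exp_p^{-1}(q)$. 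The Lebesgue differentiation theorem \cite{Gariepy} then guarantees that this average converges to the value of the density at $x_q$ for $d\mu_{\mathbb{S}^2}$-almost every point (every Lebesgue point of $J_{\widehat{z}}$), which is both the almost-everywhere existence of the limit and its identification with $D^2_{\widehat{z}}(x_q) := \left(\frac{d\mu_{h_{\widehat{z}}}}{d\mu_{\mathbb{S}^2}}\right)(\exp_p^{-1}(q))$. The rescaled line in (\ref{BesLimit}) is consistent because $d\mu_{\mathbb{S}^2(\widehat{z})} = r^2(\widehat{z})\,d\mu_{\mathbb{S}^2}$ with $r^2(\widehat{z})$ constant over the sphere, so the two averages share the same Lebesgue points. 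The interpretation as a squared area distance follows by comparing the limit ratio $\frac{A(\exp_p(B_{\widehat{z}}(q)))}{\omega(B_{\widehat{z}}(q))}$ with the inverse-square-law characterization $d\sigma_0 = D^2_O\,d\omega_O$ of (\ref{facilDist}).

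The uniform bound (\ref{AreaDistBounds}) is then immediate. Dividing the extended-source estimate (\ref{haussEsse2}) throughout by the solid angle $\omega(B_{\widehat{z}}(q))$ removes the only $\delta$-dependent factor and leaves
\begin{equation}
\frac{1}{\mathrm{Lip}^2_{\widehat{z}}}\left(\frac{1+v_{\widehat{z}}(p)}{1-v_{\widehat{z}}(p)}\right)\widehat{r}^{\,2}(\widehat{z}) \leq \frac{A(B^{(phys)}_{\widehat{z}}(q))}{\omega(B_{\widehat{z}}(q))} \leq \mathrm{Lip}^2_{\widehat{z}}\left(\frac{1+v_{\widehat{z}}(p)}{1-v_{\widehat{z}}(p)}\right)\widehat{r}^{\,2}(\widehat{z}),
\end{equation}
and letting $\delta\to 0$ the middle term tends to $D^2_{\widehat{z}}(x_q)$ by the preceding step, yielding (\ref{AreaDistBounds}).

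I expect the main obstacle to be the careful justification of the Lebesgue differentiation along this specific family of observational bins $\{B_{\widehat{z}}(q)\}$, rather than along concentric metric balls. The differentiation theorem applies to shrinking sets only when they are Vitali-regular, i.e.\ contained in comparable balls with eccentricity bounded independently of $\delta$. I would therefore need to verify that the bins, being preimages under a bi-Lipschitz map of genuinely two-dimensional regions of $\Sigma_{\widehat{z}}$ centred at the barycentre $q$, inherit this regularity, so that the averages do converge at almost every $x_q$. By contrast, the summability and the final estimate are comparatively routine once the two-sided bi-Lipschitz control on $J_{\widehat{z}}$ is invoked.
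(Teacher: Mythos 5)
Your proposal is correct and follows essentially the same route as the paper's proof, which likewise derives the almost-everywhere limit from the Lebesgue--Besicovitch differentiation theorem and obtains the bound (\ref{AreaDistBounds}) directly from the extended-source estimate (\ref{haussEsse2}). Your additional care about the Vitali regularity (bounded eccentricity) of the shrinking bins $B_{\widehat{z}}(q)$, and the Rademacher/area-formula justification of the bounded Jacobian density, makes explicit details the paper leaves implicit, but does not change the argument.
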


\begin{proof}
The limit in (\ref{BesLimit})  
\begin{equation}
\lim_{\delta\,\rightarrow\,0}\,\frac{1}{r^2(\widehat{z\,})\omega(B_{\,\widehat{z}}(q))}\,
\int_{B_{\,\widehat{z}}(q)}\,\frac{d\mu_{h_{\,\widehat{z}}}}{d\mu_{\mathbb{S}^2}}(y)\,
 d\mu_{\mathbb{S}^2(\widehat{z}\,)}(y)\,=\,
\left(\frac{d\mu_{h_{\,\widehat{z}}}}{d\mu_{\mathbb{S}^2}}\right)(\exp_p^{\,- 1}(q))\,,
\end{equation}
exists by the Lebesgue-Besicovitch theorem (see \emph{e.g.} \cite{Gariepy}),
and the uniform bound (\ref{AreaDistBounds}) is a direct consequence of (\ref{haussEsse2}).
\end{proof}
\vskip 0.3cm\noindent
The area element $d\mu_{h_{\,\widehat{z}}}$ and the solid angle measure $d\mu_{\mathbb{S}^2}$ possess several significant properties that are worth recalling. To wit,  let 
\begin{equation}
r(\,\widehat{z}\,)\,\longmapsto\,r(\,\widehat{z}\,)\,\ell_q\,\in\,C^{-}\left(T_pM,\,\{E_{(i)}\} \right)\,,\;\;\; 0\,\leq\,r(\,\widehat{z}\,)\,\leq\,r(\,\widehat{z}_{\,q})\,,
\end{equation}
denote the null ray in $T_pM$ pointing to  $x_q\,\in\,\mathbb{CZ}_{\,\widehat{z}}$, with $\exp_p(x_q)\,=\,q\,\in\,\Sigma_{\,\widehat{z}}$. If $y$ is a point source varying along the null geodesic
\begin{equation}
\label{yexpmap}
r\,\longmapsto\,\exp_p(r\,\ell_q),\;\;\;0\,\leq\,r\leq\,r(q)\,,
\end{equation}
then, according to the Gauss Lemma (see (\ref{sigmapr_0}),\, (\ref{GaussLemma_0})) the area $2$-form $\mu_{g^{(2)}_{\widehat{z}(y)}}$, associated with\, $d\mu_{g^{(2)}_{\widehat{z}(y)}}$,  is $g$-orthogonal\, at $y$ to the null geodesic tangent vector\footnote{Obtained by pushing forward with $\exp_p$ the vector $r(y)\ell_q\,\in\,C^{-}(T_pM,\,\{E_{(i)}\})$} $(\exp_p)_*\,(r(y)\ell_q)$.   Moreover, 
if $m_{(1)},\,m_{(2)}$ is an orthonormal basis in $T_y\,\Sigma_{\widehat{z}(y)}$, then the evaluation of the cross-sectional area $\int_{\sigma_y}\,\mu_{g^{(2)}_{\widehat{z}(y)}}(m_{(1)},\,m_{(2)})$ in a small neighborhood  $\sigma_y\,\subset\,\Sigma_{\widehat{z}(y)}$ of $y$ is, for $\mathrm{diam}(\sigma_y)\,\rightarrow\,0$, independent\footnote{This is a basic property of null geodesic congruences proved and discussed by several authors \cite{Sachs}, \cite{Pirani}, \cite{EllisVarenna}} of the four-velocity $\dot\gamma(y)$ of the source at $y$.

 Conversely, the solid angle measure $d\mu_{\mathbb{S}^2}$ has a more complex behavior related to its behavior under a Lorentz transformation. Explicitly, let us denote by $d\mu_{\mathbb{S}^2_{(y)}}$ the solid angle measure on the celestial sphere
 $\mathbb{CS}(y)\simeq\,\mathbb{S}^2$ associated with the generic point $y$ defined above.
By exploiting the exponential map along the null geodesic (\ref{yexpmap}) we can pull back $d\mu_{\mathbb{S}^2_{(y)}}$ to the celestial sphere $\mathbb{CS}(p)$ associated with the physical observer $(p, \dot\gamma(p))$,
\begin{equation}
\exp_p^*\left(d\mu_{\mathbb{S}^2_{(y)}}\right)\,,
\end{equation}
\vskip 0.3cm\noindent
and compare it with the prexisting solid angle measure $d\mu_{\mathbb{S}^2_{(p)}}$. The two measures are related by the beaming effect characteristic of the Lorentz group action on solid angles, \emph{i.e.}
\begin{equation}
\label{beaming}
\exp_p^*\left(d\mu_{\mathbb{S}^2_{(y)}}\right)\,=\,\frac{\left(k^a_{(p)}\,\dot\gamma_a(p)\right)^2}{\left(\exp_p^*(k^a_{(y)}\,\dot\gamma_a(y))\right)^2}\,d\mu_{\mathbb{S}^2_{(p)}}\,=\,\left(\frac{\lambda(y)}{\lambda(p)}\right)^2\,d\mu_{\mathbb{S}^2_{(p)}}\,,
\end{equation} 
\vskip 0.3cm\noindent 
where $k$ denotes the (future-pointing) wave vector associated with the signal emitted by $(y,\,\dot\gamma(y))$ with wavelength $\lambda(y)$ and received by $(p, \dot\gamma(p))$ with a wavelength $\lambda(p)$ (see (\ref{wavevect0})). The relation (\ref{beaming}) plays a role in characterizing another basic distance parameter, the \emph{galaxy area distance} $D_{G}$ (for a thorough presentation, see \cite{EllisVarenna}, \cite{Ellis2}). Roughly speaking, $D_G$ is characterized by the ratio between the cross-sectional area of the source, $d\sigma_G$, \emph{as seen by the observer} $(p, \dot\gamma(p))$ and the solid angle $d\omega_G$ describing the emission at the source, \emph{i.e.}
\begin{equation}
\label{GalArDist}
D^2_{G}\,:=\,\frac{d\sigma_G}{d\omega_G}\,.
\end{equation} 
\vskip 0.3cm\noindent
Since we have no information on the solid angle $d\omega_G$ in which the radiation was emitted,  the galaxy area distance is not directly observable. Yet, it is related to the area distance and the luminosity distance. These relations are well-known properties on which we do not belabor. Yet, for the sake of the reader, and due to the relevant role of the area distance in our paper, we provide a brief proof of the relation between $D^2$ and $D^2_G$. The standard proof, based on the analysis of the geodesic deviation equation, is quite complex (it is described in full detail in \cite{EllisVarenna}). We provide a simpler and shorter proof here by exploiting the pullback action of the exponential map on the relevant angular and area measures.

\begin{lemma} (The Etherington duality \cite{Etherington})

If \begin{equation}
\label{zetafactor1}
\left(1\,+\,{z}(q)\right)\,=\,\sqrt{\frac{1+v_{\,\widehat{z}}}{1-v_{\,\widehat{z}}}}(1\,+\,\widehat{z}^{(tot)}(q))\,
\end{equation}
is the actual redshit of the source $q$ expressed in terms of the FLRW reference redshift $\widehat{z}$ and of the Doppler $\widehat{z}^{\,(doppl)}(q)$ and of the gravitational redshift $\widehat{z}^{\,(grav)}(q)$\; (see (\ref{zetafactor0A})), then the galaxy area distance $D_{G}$ and the area distance $D_{\,\widehat{z}}$ are related according to
\begin{equation}
D^2_G\,=\,\left(1\,+\,z(q)\right)^2\,D^2_{\,\widehat{z}}\,.
\end{equation}
\end{lemma}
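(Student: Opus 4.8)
The plan is to avoid the geodesic deviation machinery entirely and instead play the frame independence of the cross-sectional area $2$-form against the frame dependence of the solid-angle measure, which are precisely the two ingredients isolated in the discussion preceding the statement. First I would unwind the two definitions on one and the same thin bundle of past-directed null geodesics joining $p$ and the source $q$. The area distance is fixed by $d\sigma_0 = D_{\,\widehat{z}}^2\,d\omega_O$ (see (\ref{facilDist})), where $d\sigma_0$ is the physical cross-section of the emission region at $q$ and $d\omega_O$ the solid angle the bundle subtends on $\mathbb{CS}(p)$; the galaxy area distance is fixed by (\ref{GalArDist}), $D_G^2 = d\sigma_G/d\omega_G$, with $d\sigma_G$ the cross-section of the source and $d\omega_G$ the emission solid angle registered at $q$ on $\mathbb{CS}(q)$.

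The first key step is to identify $d\sigma_0$ with $d\sigma_G$. Both are the cross-sectional area of the \emph{same} null bundle evaluated at the source: by the Gauss lemma (\ref{GaussLemma_0}) the area $2$-form $\mu_{g^{(2)}_{\,\widehat{z}}}$ is $g$-orthogonal to the null generator at $q$, and, by the frame-independence property of null congruences recalled above, its evaluation on a cross-section is independent of the four-velocity $\dot\gamma(q)$ of any observer momentarily at $q$. Hence the cross-section ``as seen by the observer'' and the cross-section ``at the emitter'' coincide as a single frame-independent scalar $dA$, and the ratio collapses to $D_G^2/D_{\,\widehat{z}}^2 = d\omega_O/d\omega_G$.

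It then remains to compute $d\omega_O/d\omega_G$, and here I would invoke the beaming identity (\ref{beaming}) with $y=q$, namely $\exp_p^*\!\left(d\mu_{\mathbb{S}^2_{(q)}}\right) = \left(\lambda(q)/\lambda(p)\right)^2 d\mu_{\mathbb{S}^2_{(p)}}$. Transporting the emission solid angle $d\omega_G$, measured on $\mathbb{CS}(q)$, back along the bundle to $\mathbb{CS}(p)$ gives $d\omega_G = \left(\lambda(q)/\lambda(p)\right)^2 d\omega_O$, and substituting the actual redshift $1+z(q) = \lambda(p)/\lambda(q)$ read off from (\ref{zetafactor1}) yields $d\omega_G = \left(1+z(q)\right)^{-2}\,d\omega_O$. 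Combining this with the previous step gives $D_G^2 = \left(1+z(q)\right)^2\,D_{\,\widehat{z}}^2$, as claimed.

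I expect the delicate point to be the bookkeeping of the solid-angle transport in the last step: one must check that pulling back $d\mu_{\mathbb{S}^2_{(q)}}$ along $\exp_p$ genuinely matches the emission cone of the reversed bundle with the reception cone on $\mathbb{CS}(p)$, so that the Doppler--aberration factor appears with exponent $+2$ on $(1+z)$ rather than $-2$. Once the orientation of (\ref{beaming}) is pinned down the statement follows, and no regularity beyond the assumed bi-Lipschitz setting is needed, since both (\ref{beaming}) and the Gauss lemma hold pointwise along each generator.
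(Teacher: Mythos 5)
Your proposal is correct and follows essentially the same route as the paper's proof: the identification $d\sigma_0 = d\sigma_G$ via the frame-independence of the null-bundle cross-section (which the paper encodes by using the single pullback measure $d\mu_{h_{\,\widehat{z}}} = \exp_p^*\,d\mu_{g^{(2)}_{\,\widehat{z}}}$ in both the definitions $d\mu_{h_{\,\widehat{z}}} = D^2_{\,\widehat{z}}\,d\mu_{\mathbb{S}^2}$ and $d\mu_{h_{\,\widehat{z}}} = D^2_G\,\exp_p^*(d\mu_{\mathbb{S}^2(q)})$), combined with the beaming relation (\ref{beaming}) at $y=q$ giving $\exp_p^*(d\mu_{\mathbb{S}^2(q)}) = (\lambda(q)/\lambda(p))^2\,d\mu_{\mathbb{S}^2}$. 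Your exponent bookkeeping in the final step matches the paper's, so nothing further is needed.
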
 
\begin{proof}
We can directly work on the central null geodesic connecting the observer $(p,\,\dot\gamma(p))$ with the source at $(q,\,\dot\gamma(q))$.  To the observer we associate the solid angle measure $d\mu_{\mathbb{S}^2}$ defined on the celestial sphere $\mathbb{CS}(p)$, and the physical area measure $d\mu_{h_{\,\widehat{z}}}$ defined on $\mathbb{CS}_{\,\widehat{z}}(p)$ by the pull back $\exp_p^*d\mu_{g^{(2)}_{\,\widehat{z}}}$ of the area measure $d\mu_{g^{(2)}_{\,\widehat{z}}}$ defined at $(q,\,\dot\gamma(q))$. Finally, if  $\mathbb{CS}(q)$ denotes the directional celestial sphere associated with $(q,\,\dot\gamma(q))$, then we can associate with the source the solid angle measure $d\mu_{\mathbb{S}^2(q)}$. In such a framework, the role of the cross-sectional area of the source, $d\sigma_G$, as seen by the observer $(p, \dot\gamma(p))$ is provided by the pullback measure $d\mu_{h_{\,\widehat{z}}}$ at $(p,\,\dot\gamma(p))$. The emission solid angle $d\mu_{\mathbb{S}^2(q)}$ at  $(q,\,\dot\gamma(q))$ corresponds to the $d\omega_G$ featuring in the ratio $(\ref{GalArDist})$. Since we need to compare all these geometrical data at $(p,\,\dot\gamma(p))$, we can pull back the source solid angle measure $d\mu_{\mathbb{S}^2(q)}$ to the observer's celestial sphere $\mathbb{CS}(p)$,\emph{i.e.}
\begin{equation}
\exp_p^*\left(d\mu_{\mathbb{S}^2(q)}\right)\,.
\end{equation}
\vskip 0.3cm\noindent
This is the representation on $\mathbb{CS}(p)$  of the source emission solid angle at $(q,\,\dot\gamma(q))$.\, We can compare this pullback solid angle with the solid angle measure $d\mu_{\mathbb{S}^2}$ defined on $\mathbb{CS}(p)$. According to
(\ref{beaming}) we have
\begin{equation}
\label{beaming2}
\exp_p^*\left(d\mu_{\mathbb{S}^2(q)}\right)\,=\,\frac{\left(k^a_{(p)}\,\dot\gamma_a(p)\right)^2}{\left(\exp_p^*(k^a_{(q)}\,\dot\gamma_a(q))\right)^2}\,d\mu_{\mathbb{S}^2_{(p)}}\,=\,\left(\frac{\lambda(q)}{\lambda(p)}\right)^2\,d\mu_{\mathbb{S}^2}\,.
\end{equation} 
The galaxy area distance is defined as the relation
\begin{equation}
d\mu_{h_{\,\widehat{z}}}\,=\,{D^2_G}\,{\exp_p^*\left(d\mu_{\mathbb{S}^2(q)}\right)}\,,
\end{equation}
whereas from (\ref{BesLimit}), defining the area distance $D_{\,\widehat{z}}$, we get 
\begin{equation}
d\mu_{h_{\,\widehat{z}}}\,=\,{D_{\,\widehat{z}}^2}\,{d\mu_{\mathbb{S}^2}}\,.
\end{equation}
If we take into account (\ref{beaming2}), 
the direct comparison between these two expressions  provides
\begin{align}
D^2_G\,&=\,\left(\frac{\lambda(p)}{\lambda(q)}\right)^2\,D_{\,\widehat{z}}^2\\
&=\,\left(1\,+\,z(q)\right)^2\,D_{\,\widehat{z}}^2\,,\nonumber
\end{align}
as stated.
\end{proof}
\vskip 0.3cm\noindent

As already stressed, the area distance $D_{\,\widehat{z}}(x_q)$ is, at least in principle, a measurable quantity for sources of known intrinsic size\footnote{To what extent $D_{\,\widehat{z}}(x_q)$ is measurable is critically discussed by G. F. R. Ellis in the Editor's note to \cite{Etherington}.}. Geometrically, it is a density, and one has the useful

\begin{lemma}
\label{DiArea}
\begin{equation}
\label{PhysArea1}
\int_{\mathbb{CS}_{\,\widehat{z}\,}(p)}\,D^2_{\,\widehat{z}}(x)\,d\mu_{\mathbb{S}^2}(x)\,=\,A\left(\Sigma_{\;\widehat{z}}\right)\,=\,A\left(\mathbb{CS}_{\,\widehat{z}\,},\,h_{\,\widehat{z}} \right)\,,
\end{equation}
\end{lemma}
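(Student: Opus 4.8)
The plan is to read the statement directly off the definition of the area distance in Proposition \ref{PropAreaDistance} together with the change-of-variables chain already assembled in (\ref{chainPullBack}). The key point is that the squared area distance $D^2_{\,\widehat{z}}(x)$ was \emph{defined} as the Radon--Nikodym derivative of the pulled-back area measure $d\mu_{h_{\,\widehat{z}}}:=\exp_p^*\left(d\mu_{g^{(2)}_{\,\widehat{z}}}\right)$ with respect to the solid-angle measure $d\mu_{\mathbb{S}^2}$ on the celestial sphere, i.e. $D^2_{\,\widehat{z}}(x)=\left(\frac{d\mu_{h_{\,\widehat{z}}}}{d\mu_{\mathbb{S}^2}}\right)(x)$ for almost every $x\in\mathbb{CS}_{\,\widehat{z}}(p)$. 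Hence $D^2_{\,\widehat{z}}(x)\,d\mu_{\mathbb{S}^2}(x)=d\mu_{h_{\,\widehat{z}}}(x)$ as measures, and the whole lemma reduces to integrating this pointwise identity.

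First I would check that $d\mu_{h_{\,\widehat{z}}}$ and $d\mu_{\mathbb{S}^2}$ are mutually absolutely continuous, so that integrating the Radon--Nikodym density genuinely reconstructs $d\mu_{h_{\,\widehat{z}}}$. This is where the bi-Lipschitz assumption \ref{Lipassumpt1} does the essential work: the uniform bounds (\ref{AreaDistBounds}) show that, for fixed $\widehat{z}$, the density $D^2_{\,\widehat{z}}(x)$ is pinched between two positive constants, so $d\mu_{h_{\,\widehat{z}}}$ is comparable to $d\mu_{\mathbb{S}^2}$, is finite, and is absolutely continuous with locally summable density (the summability having already been recorded in Proposition \ref{PropAreaDistance} via the Lebesgue--Besicovitch theorem). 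Integrating over $\mathbb{CS}_{\,\widehat{z}}(p)$ then yields $\int_{\mathbb{CS}_{\,\widehat{z}}(p)}D^2_{\,\widehat{z}}(x)\,d\mu_{\mathbb{S}^2}(x)=\int_{\mathbb{CS}_{\,\widehat{z}}(p)}d\mu_{h_{\,\widehat{z}}}$.

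Next I would evaluate the right-hand side by the change of variables for the bi-Lipschitz exponential map, exactly as in (\ref{chainPullBack}). Since $d\mu_{h_{\,\widehat{z}}}=\exp_p^*\left(d\mu_{g^{(2)}_{\,\widehat{z}}}\right)$ and $\exp_p$ is a bi-Lipschitz homeomorphism of $\mathbb{CS}_{\,\widehat{z}}(p)$ onto $\Sigma_{\,\widehat{z}}=\exp_p(\mathbb{CS}_{\,\widehat{z}}(p))$, the Lipschitz area formula gives $\int_{\mathbb{CS}_{\,\widehat{z}}(p)}\exp_p^*\left(d\mu_{g^{(2)}_{\,\widehat{z}}}\right)=\int_{\Sigma_{\,\widehat{z}}}d\mu_{g^{(2)}_{\,\widehat{z}}}=A(\Sigma_{\,\widehat{z}})$, the last equality being the definition (\ref{RiemLipArea}) of the Riemann--Lipschitz area (which, by Proposition \ref{HausMeasSigma}, coincides with $\mathscr{H}^2(\Sigma_{\,\widehat{z}})$). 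The final identity $A(\Sigma_{\,\widehat{z}})=A(\mathbb{CS}_{\,\widehat{z}},h_{\,\widehat{z}})$ is then immediate: recalling from Proposition \ref{lemmaDiffF} that $h_{\,\widehat{z}}=\exp_p^*g^{(2)}_{\,\widehat{z}}$ is the pullback metric whose associated area element is precisely $d\mu_{h_{\,\widehat{z}}}$, its total area is exactly the integral just computed.

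I expect the only delicate point to be the legitimacy of the change-of-variables step in the low-regularity (bi-Lipschitz, not smooth) setting. The subtlety is that $g^{(2)}_{\,\widehat{z}}$ and $d\mu_{g^{(2)}_{\,\widehat{z}}}$ are defined only almost everywhere on $\Sigma_{\,\widehat{z}}$ via Rademacher's theorem, so one must ensure the pullback and the integral are insensitive to the exceptional null sets. This is handled by the machinery already invoked earlier: Rademacher differentiability of $\exp_p$, the fact that bi-Lipschitz maps carry null sets to null sets (used in Proposition \ref{HausMeasSigma}), and the Lebesgue--Besicovitch differentiation theorem of Proposition \ref{PropAreaDistance}. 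Once these are in place, the argument is purely formal and reduces, as above, to integrating the defining relation of $D^2_{\,\widehat{z}}$.
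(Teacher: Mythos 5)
Your proposal is correct and follows essentially the same route as the paper's own proof: the paper likewise integrates the Radon--Nikodym identity $D^2_{\,\widehat{z}}\,d\mu_{\mathbb{S}^2}=d\mu_{h_{\,\widehat{z}}}$ and then runs the pullback chain of (\ref{chainPullBack}) under $\exp_p$ to pass from $A\bigl(\mathbb{CS}_{\,\widehat{z}},\,h_{\,\widehat{z}}\bigr)$ to $A\bigl(\Sigma_{\,\widehat{z}}\bigr)$. Your added care about mutual absolute continuity via the bounds (\ref{AreaDistBounds}) and the null-set issues in the Lipschitz setting is sound, and is left implicit in the paper's one-line appeal to the earlier machinery.
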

\begin{proof}
This result follows easily from the same pullback chain that leads to (\ref{chainPullBack}). Explicitly, we have  
\begin{align}
\label{PhysArea2}
&\int_{\mathbb{CS}_{\,\widehat{z}\,}(p)}\,D^2_{\,\widehat{z}}\,d\mu_{\mathbb{S}^2}\,=\,
\int_{\mathbb{CS}_{\,\widehat{z}\,}(p)}\,\frac{d\mu_{h_{\,\widehat{z}}}}{d\mu_{\mathbb{S}^2}}\,d\mu_{\mathbb{S}^2}\\
&=\,
\int_{\mathbb{CS}_{\,\widehat{z}\,}(p)}\,d\mu_{h_{\,\widehat{z}}}\,=\, A\left(\mathbb{CS}_{\,\widehat{z}\,},\,h_{\,\widehat{z}} \right)\,=\,
\int_{\mathbb{CS}_{\,\widehat{z}\,}(p)}\,\exp_p^*\left(d\mu_{g^{(2)}_{\,\widehat{z}}}  \right)\nonumber\\
&=\,\int_{\exp_p(\mathbb{CS}_{\,\widehat{z}\,)}(p)}\,d\mu_{g^{(2)}_{\,\widehat{z}}}\,=\,
\int_{\Sigma_{\,\widehat{z}\,}(p)}\,d\mu_{g^{(2)}_{\,\widehat{z}}}\nonumber\\
&=\,A\left(\Sigma_{\;\widehat{z}}\right)\,.\;\;\;\;\square\nonumber 
\end{align}
\end{proof}

\subsection{The FLRW case}
The above analysis can be readily extended to the reference FLRW past lightcone  $\widehat{\mathscr{C}}^-(p, \widehat{g}\,)$, the corresponding celestial spheres $\widehat{C\,\mathbb{S}}_{\,\widehat{z}\,}(p)$ and sky sections $\widehat{\Sigma}_{\,\widehat{z}\,}$. We do not repeat all the FLRW versions of the technical results proven in the previous sections. Suffices to say that given a geometrical quantity in the physical spacetime $(M, g, \gamma(\tau))$, we denote the corresponding quantity in the FLRW spacetime with a circumflex accent. No ambiguities can arise since, as we have seen, the definition of the past light cone 
$\widehat{\mathscr{C}}^-(p, \widehat{g})$, of the corresponding celestial spheres $\widehat{\mathbb{CS}}_{\hat{z}}(\,p)$ and sky sections $\widehat{\Sigma}_{\, \widehat{z}}$, and of the exponential map $\widehat{\exp}_p$ are formally the same in $(M, g, \gamma(\tau))$ and $(M, \widehat{g}, \widehat{\gamma}(\widehat{\tau}))$,  with the obvious replacement of the relevant quantities.

As in Proposition \ref{PropAreaDistance}  the area element associated with the metric $\widehat{h}_{\hat{z}}$ on $\widehat{\mathbb{CS}}_{\hat{z}}(\,p)$ (see (\ref{FLRWhpullBack})), 
\begin{equation}
\label{Farea}
d\mu_{\widehat{h}_{\hat{z}}}\,=\,\frac{f^2\left(\widehat{r}(\,\widehat{z})\right)}{(1\,+\,\widehat{z})^2}\,\,d\mu_{\widehat{\mathbb{S}}^2}\,
\end{equation}
characterizes  the FLRW \emph{observer area distance} between the FLRW observer $(p,\widehat{\dot\gamma}\,(p))$ and a source $q$ at redshift $\widehat{z}$ according to  
\begin{equation}
\label{FLRWDi}
\widehat{D}^2_{\hat{z}}\,:=\,\frac{d\mu_{\widehat{h}_{\hat{z}}}}{d\mu_{\widehat{\mathbb{S}}^2}}\,=\, \frac{f^2\left(\widehat{r}(\,\widehat{z})\right)}{\left(1\,+\,\widehat{z}\right)^2}\,,
\end{equation}
where $d\mu_{\widehat{\mathbb{S}}^2}$ is the solid angle measure on the FLRW observer's directional celestial sphere $\widehat{\mathbb{CS}}(\,p)$. 

One can show that relations similar to (\ref{PhysArea1}) and 
(\ref{PhysArea2}) also hold among the areas $A(\widehat{\Sigma}_{\hat{z}})$,\,
$A(\widehat{\mathbb{C\,S}}_{\hat{z}})$, and the area distance      $\widehat{D}_{\hat{z}}$, \emph{i.e.} 
\begin{equation}
\label{FLRWArea1}
\int_{\widehat{\mathbb{CS}}_{\,\widehat{z}\,}(p)}\,\widehat{D}^2_{\,\widehat{z}}(x)\,d\mu_{\widehat{\mathbb{S}}^2}(x)\,=\,\widehat{A}\left(\,\widehat{\Sigma}_{\;\widehat{z}}\right)\,=\,\widehat{A}\left(\,\widehat{\mathbb{CS}}_{\,\widehat{z}\,},\,\widehat{h}_{\,\widehat{z}} \right)\,.
\end{equation}
Explicitly, 
\begin{align}
\label{FLRWArea2}
&\int_{\widehat{\mathbb{CS}}_{\,\widehat{z}\,}(p)}\,\widehat{D}^2_{\,\widehat{z}}\,d\mu_{\widehat{\mathbb{S}}^2}\,=\,
\int_{\widehat{\mathbb{CS}}_{\,\widehat{z}\,}(p)}\,\frac{d\mu_{\widehat{h}_{\,\widehat{z}}}}{d\mu_{\widehat{\mathbb{S}}^2}}\,d\mu_{\widehat{\mathbb{S}}^2}\\
&=\,
\int_{\widehat{\mathbb{CS}}_{\,\widehat{z}\,}(p)}\,d\mu_{\widehat{h}_{\,\widehat{z}}}\,=\, \widehat{A}\left(\widehat{\mathbb{CS}}_{\,\widehat{z}\,},\,\widehat{h}_{\,\widehat{z}} \right)\,=\,
\int_{\widehat{\mathbb{CS}}_{\,\widehat{z}\,}(p)}\,\widehat{\exp}_p^*\left(d\mu_{\widehat{g}^{(2)}_{\,\widehat{z}}}  \right)\nonumber\\
&=\,\int_{\widehat{\exp}_p(\widehat{\mathbb{CS}}_{\,\widehat{z}\,)}(p)}\,d\mu_{\widehat{g}^{(2)}_{\,\widehat{z}}}\,=\,
\int_{\widehat{\Sigma}_{\,\widehat{z}\,}(p)}\,d\mu_{\widehat{g}^{(2)}_{\,\widehat{z}}}\,=\,\widehat{A}\left(\widehat{\Sigma}_{\;\widehat{z}}\right)\,.\;\;\;\;\square\nonumber 
\end{align}

\subsection{Area distance and spacetime scalar curvature}
\label{GareDisCurv}
The connection between area distance and spacetime scalar curvature $\mathrm{R}(g)$ is often dismissed on the basis that $\mathrm{R}(g)$ enters only indirectly in the Raychaudhuri equation governing null geodesics evolution \cite{Ellis2}. As a consequence, the non-perturbative role of $\mathrm{R}(g)$ in governing the fluctuations of area distance is underestimated, even if it can be quite significant in high-precision cosmology. These fluctuations are particularly relevant when we approach and cross the boundary of the spatial region ${V}_{(c)}(p)$ where the gravitational dynamics of the sources uncouples from cosmological expansion. As discussed in Section \ref{RefCosmRed}, this region is typically identified with the cluster of galaxies whose barycenter characterizes the location of the physical observer $p$. We have the following characterization.

\begin{definition}
Let
\begin{equation}
\mathbb{R}_{>0}\,\ni\,\tau\,\longmapsto\,\gamma(\tau)\,\in\,(M, g),\;\;\;-\delta<\tau<\delta,\;\;\; \gamma(\tau=0)\,=:\,p
\end{equation}
be the timelike geodesic segment associated with the physical observer $(p, \dot\gamma(0))$ (see Definition \ref{DefPhysObs}). We assume that $(p, \dot\gamma(0))$ is located at the barycenter of a galactic cluster, and we denote by $\mathbb{W}(\gamma(\tau))$,\; $-\delta<\tau<\delta$,\, the (portion of) timelike world-tube generated by the bundle of timelike geodesics describing the evolution of the cluster's galaxies around the observer world history $\gamma(\tau)$. The intersection
\begin{equation}
\partial\,{V}_{(c)}(p)\,:=\,\partial\,
\mathbb{W}(\gamma(\tau))\,\cap\,{\mathcal{C}}^-(p, {g}) 
\end{equation}   
between the timelike boundary $\partial\mathbb{W}(\gamma(\tau))$ of the cluster world-tube and the observer's past lightcone ${\mathcal{C}}^-(p, {g})$ defines the boundary of the cluster as seen by $(p, \dot\gamma(0))$. We assume that $\partial\,{V}_{(c)}(p)$ is a 2-dimensional surface (topologically $\mathbb{S}^2$) bounding a 
$3$-dimensional disk ${V}_{(c)}(p)$ describing, with respect to $(p, \dot\gamma(0))$,  the spatial region occupied by the cluster\footnote{Since the cluster galaxies are in a small neighborhood of $p$, we can assume that ${V}_{(c)}(p)$ is contained in the maximal spatial domain bounded by $\partial\,{V}_{(c)}(p)$.}. To ${V}_{(c)}(p)$ we associate:
\begin{itemize}
\item{The comoving radius $r(\widehat{z}_{(c)})$ of ${V}_{(c)}(p)$, defining the cluster's \emph{physical radius},\, marking the transition between the inner cluster region, whose gravitational dynamics is uncoupled from the cosmological expansion, and the cluster outskirt where infalling galaxies are marginally affected by the outside cosmological expansion;}
\item{The reference FLRW observer $\widehat\tau\,\longmapsto\,\widehat{\gamma}(\widehat\tau)$,\;
$-\widehat\delta<\widehat\tau<\widehat\delta$,\; with\, $\widehat\gamma(0)\,=:\,p$  can characterize similarly her description of the cluster  by introducing the cluster region  
$\widehat{V}_{(c)}(p)$, its boundary $\partial\,\widehat{V}_{(c)}(p)$, and  
the corresponding FLRW cluster physical radius $\widehat{r}(\widehat{z}_{(c)})$.}
\end{itemize}
\end{definition}
Notice that, following our notation, we have used, for the physical observers' characterization of the cluster region, the redshift value that the FLRW observer associates with $\widehat{r}(\widehat{z}_{(c)})$. Rather than a matter of notation, this choice is mandatory since the uncoupling from cosmological expansion in the cluster region makes the use of the physical redshift $z$ useless\footnote{As discussed in Section \ref{RefCosmRed}, the physical redshift $z$ is not conducive to establish a reliable redshift-distance relation when peculiar velocities and gravitational clustering induce significant peculiar velocities. }. 
  
To let the interplay between the spacetime scalar curvature and area distance emerge from backstage, we  exploit the Lorentzian version of the classical Bertrand-Puiseux 
formulas \cite{SchoenYau},\,  \cite{Petersen}  that are associated with the geometrical interpretation of the various curvatures on a Riemannian manifold in terms of length, area, and volume measures of small geodesic circles, spheres, and balls. In the Lorentzian case, the corresponding formulas are quite more delicate to derive since rather than geodesic balls, we have to deal with small causal diamonds, \cite{Berthiere}, \cite{GibbSol1}, \cite{GibbSol2}, \cite{Myrheim}. To write down the relevant expression in our case and take into account the role of the cosmological uncoupling at the physical boundary $\partial\,\widehat{V}_{(c)}(p)$ of the cluster,
we foliate the physical past 
lightcone region  $\mathbb{W}(\gamma(\tau))\,\cap\,{\mathcal{C}}^-(p, {g})$ 
 with a  sequence of sky sections
 \begin{equation}
 \label{SigmaFol}
 \{\Sigma_{\widehat{z}(i)}\}_{i=1}^\infty\,,\;\;\;\mathrm{with}\,\,\Sigma_{\widehat{z}(1)}\,:=\,
 \partial\,\widehat{V}_{(c)}(p),\,\,\mathrm{and}\,\,\
\lim_{i\rightarrow\,\infty}\Sigma_{\widehat{z}(i)}\,\rightarrow\, p\,, 
 \end{equation}
starting from the physical boundary $\partial\,\widehat{V}_{(c)}(p)$ of the cluster and 
converging, as $i\,\rightarrow\,\infty$,\, to the observation event $(p, \dot\gamma(p))$. Since we are sufficiently near $p$, we can assume, without loss of generality, that there is a unique corresponding sequence $\{{V}_{(i)}(p)\}$ of maximal 3-dimensional regions \cite{GibbSol1}, embedded in the causal past of $p$,\; ${\mathrm{J}}^-(p, {g})$,\;  such that the boundary of ${V}_{(i)}(p)$ is the corresponding surface  $\Sigma_{\widehat{z}\,(i)}$. Each ${V}_{(i)}(p)$  intersects the world line segment ${\gamma}(\tau)$,\; $-\,\epsilon\,\leq\,\tau\,\leq\,0$, of the observer $p$ at the point ${p}_{(i)}=\,{\gamma}({\tau}_{(i)}\,=\,-\,r(\,\widehat{z}\,(i)))$, with
$\widehat{p}_{(1)}\,=:\,p_{(c)}$\; and\,
 $\widehat{p}_{(i)}\,\longrightarrow\,p$\, as\, $i\,\rightarrow\, \infty$.
The Lorentzian Bertrand-Puiseux 
formula for the sequence of sky sections $\{\Sigma_{\widehat{z}\,(i)}\}$  involves at leading order in ${r}({\widehat{z}\,(i)})$ the spacetime scalar curvature \cite{GibbSol1}, and we have the asymptotic expression 
\begin{equation} 
\label{Areaasymp1}
A\left(\Sigma_{\widehat{z}\,(i)}\right)\,=\,{4\pi}\,{r}^{2}({\widehat{z}\,(i)})\,\left(1\,-\,\frac{1}{72}\,{r}^{2}({\widehat{z}\,(i)})\,\mathrm{R}(p)\,+\,\ldots \right)\;,
\end{equation}
where $\mathrm{R}(p)$ is the spacetime scalar curvature of $(M, g)$ evaluated at the observational event $(p, \dot\gamma(p))$. The $\ldots$\ represents higher order corrections that can be expressed in terms of powers of the (spacetime) Riemann tensor and its (covariant) derivatives. The asymptotics  (\ref{Areaasymp1}) is particularly significant in correspondance of the initial surface $\Sigma_{\widehat{z}(1)}$ of the sequence $\{\Sigma_{\widehat{z}(i)}\}$, marking the boundary $\partial\,{V}_{(c)}(p)$ of the region ${V}_{(c)}(p)$ where the gravitational dynamics of the sources uncouples from cosmological expansion, and where the cluster's \emph{physical radius},\, 
$r(\widehat{z}_{(c)})$,  provides a natural length scale to compare with the corresponding length scale set by the local scalar curvature $\mathrm{R}(p_{(c)})$. In such a case, the asymptotic expression (\ref{Areaasymp1}) takes the expressive form  
\begin{equation} 
\label{Areaasymp1crit}
\frac{A\left(\Sigma_{\widehat{z}_{(c)}}\right)}{{4\pi}\,{r}^{2}(\widehat{z}_{(c)})}\,=\,1\,-\,\frac{1}{72}\,{r}^{2}(\widehat{z}_{(c)})\,\mathrm{R}(p)\,+\,\ldots\;.
\end{equation}
It must be stressed that the asymptotics (\ref{Areaasymp1}) (a consequence of the normal coordinate expansion of the metric) is not a perturbative result, as often stated\footnote{This misunderstanding is because the normal coordinate asymptotic expansion in terms of the curvature tensor and its derivatives is often used in the perturbative analysis of geometrical field theories}. There is no better way of stressing this than noticing that, 
as ${r}(\widehat{z}\,(i))\,\longrightarrow\,0$, \emph{i.e.} for $p_{(i)}\,\longrightarrow\,p$,\; we get the Lorentzian Bertrand-Puiseaux formula providing the geometrical definition of the spacetime scalar curvature at the observation point $(p, \dot\gamma(p))$, 
\begin{equation}
\label{Areaasymp13}
\lim_{{r}({\widehat{z}\,(i)})\,\longrightarrow\,0^+}\,\frac{18}{\pi}\,\frac{\left[4\pi\,{r}^2({\widehat{z}\,(i)})\,-\,
A(\Sigma_{\widehat{z}\,(i)})\right]}{{r}^4({\widehat{z}\,(i)})}\,
=\,{\mathrm{R}}(p)\,.
\end{equation}
\vskip 0.3cm\noindent
According to (\ref{PhysArea1}), we have 
\begin{equation}
\label{PhysAreazeta(i)}
\int_{\mathbb{CS}_{\,\widehat{z}\,(i)\,}(p)}\,D^2_{\,\widehat{z}\,(i)}(x)\,d\mu_{\mathbb{S}^2}(x)\,=\,A\left(\Sigma_{\;\widehat{z}\,(i)}\right)\,=\,A\left(\mathbb{CS}_{\,\widehat{z}\,(i)},\,h_{\,\widehat{z}\,(i)} \right)\,,
\end{equation}
from which we get 
\begin{equation}
\label{PhysAreazeta(ii)}
A\left(\Sigma_{\;\widehat{z}\,(i)}\right)\,=\,4\pi\,\left\langle\,D^2_{\,\widehat{z}\,(i)}\,  \right\rangle_{{\mathbb{CS}}}\,,
\end{equation}
where
\begin{equation}
\label{PhysAreazeta(iii)}
\,\left\langle\,D^2_{\,\widehat{z}\,(i)}\,  \right\rangle_{{\mathbb{CS}}}\,:=\,\frac{1}{4\pi}\,\int_{\mathbb{CS}(p)}\,D^2_{\,\widehat{z}\,(i)}(x)\,d\mu_{\mathbb{S}^2}(x)
\end{equation}
is the average value of the area distance with respect to the solid angle measure on the directional celestial sphere $\mathbb{CS}(p)$. Thus, we can rewrite (\ref{Areaasymp13}) according to 
\begin{equation}
\label{AreaDistasymp}
\lim_{{r}({\widehat{z}\,(i)})\,\longrightarrow\,0^+}\,\frac{18}{\pi}\,\frac{\left[4\pi\,\left({r}^2({\widehat{z}\,(i)})\,-\,
\left\langle\,D^2_{\,\widehat{z}\,(i)}\right\rangle_{{\mathbb{CS}}}\right)\right]}{{r}^4({\widehat{z}\,(i)})}\,
=\,{\mathrm{R}}(p)\,.
\end{equation}
In the reference FLRW case, we have similar asymptotics, and we can write, with the obvious meaning of the symbols adopted
 \begin{equation} 
\label{FLRWAreaasymp1}
\widehat{A}\left(\widehat{\Sigma}_{\widehat{z}\,(i)}\right)\,=\,{4\pi}\,\widehat{r}^{\,\;2}({\widehat{z}\,(i)})\,\left(1\,-\,\frac{1}{72}\,\widehat{r}^{\,\;2}({\widehat{z}\,(i)})\,\widehat{\mathrm{R}}(\widehat{p})\,+\,\ldots \right)\;,
\end{equation}
and
\begin{equation} 
\label{FLRWAreaasymp1crit}
\frac{\widehat{A}\left(\widehat{\Sigma}_{\widehat{z}_{(c)}}\right)}{{4\pi}\,\widehat{r}^{\,\;2}({\widehat{z}_{(c)}})}\,=\,1\,-\,\frac{1}{72}\,\widehat{r}^{\,\;2}({\widehat{z}_{(c)}})\,\widehat{\mathrm{R}}({p})\,+\,\ldots\;,
\end{equation}
\vskip 0.3cm\noindent
where $\widehat{\mathrm{R}}(p)$ is the FLRW spacetime scalar curvature of $(M, \widehat{g})$ evaluated at the event\footnote{Generally, the sequence of points $\{\widehat{p}_{(i)}\}$ is  distinct from the sequence $\{{p}_{(i)}\}$; yet, both are assumed to converge to $p$.} $({p}, \widehat{\dot\gamma}(p))$.  As $\widehat{r}(\widehat{z}\,(i))\,\longrightarrow\,0$, \; we get 
\begin{equation}
\label{FLRWAreaasymp13}
\lim_{\widehat{r}({\widehat{z}\,(i)})\,\longrightarrow\,0^+}\,\frac{18}{\pi}\,\frac{\left[4\pi\,\widehat{r}^2({\widehat{z}\,(i)})\,-\,
\widehat{A}\left(\widehat{\Sigma}_{\widehat{z}\,(i)}\right)\right]}{\widehat{r}^4({\widehat{z}\,(i)})}\,
=\,\widehat{\mathrm{R}}(p)\,.
\end{equation}
\vskip 0.3cm\noindent
We have also the FLRW version of the relations (\ref{PhysAreazeta(i)}),\,(\ref{PhysAreazeta(ii)}),\,and (\ref{PhysAreazeta(iii)}), \emph{i.e.}
\begin{equation}
\label{FLRWPhysAreazeta(i)}
\int_{\mathbb{\widehat{CS}}_{\,\widehat{z}\,(i)\,}(p)}\,\widehat{D}^2_{\,\widehat{z}\,(i)}(x)\,d\mu_{\widehat{\mathbb{S}}^2}(x)\,=\,\widehat{A}\left(\widehat{\Sigma}_{\;\widehat{z}\,(i)}\right)\,=\,\widehat{A}\left(\widehat{\mathbb{CS}}_{\,\widehat{z}\,(i)},\,\widehat{h}_{\,\widehat{z}\,(i)} \right)\,,
\end{equation} 
\begin{equation}
\label{FLRWPhysAreazeta(ii)}
A\left(\widehat{\Sigma}_{\;\widehat{z}\,(i)}\right)\,=\,4\pi\,\left\langle\,\widehat{D}^2_{\,\widehat{z}\,(i)}\,  \right\rangle_{\widehat{\mathbb{CS}}}\,,
\end{equation}
and
\begin{equation}
\label{FLRWPhysAreazeta(iii)}
\,\left\langle\,\widehat{D}^2_{\,\widehat{z}\,(i)}\,  \right\rangle_{\widehat{\mathbb{CS}}}\,:=\,\frac{1}{4\pi}\,\int_{\widehat{\mathbb{CS}}(p)}\,\widehat{D}^2_{\,\widehat{z}\,(i)}(x)\,d\mu_{\widehat{\mathbb{S}}^2}(x)\,,
\end{equation}
\vskip 0.2cm\noindent
from which we retrieve the FLRW Bertrand-Puiseaux formula\footnote{In the FLRW case, the limit (\ref{FLRWAreaDistasymp}) can be easily connected to the limiting behavior, as $\widehat{z}\rightarrow\,0$,\, of the Hubble parameter $H(\widehat{z})$ and of the deceleration parameter $q$ (see (\ref{Handq})). However, the explicit connection with $\widehat{\mathrm{R}}(p)$ is more useful for our purposes.  }  
\vskip 0.2cm\noindent 
\begin{equation}
\label{FLRWAreaDistasymp}
\lim_{\widehat{r}({\widehat{z}\,(i)})\,\longrightarrow\,0^+}\,\frac{18}{\pi}\,\frac{\left[4\pi\,\left(\widehat{r}^2({\widehat{z}\,(i)})\,-\,
\left\langle\,\widehat{D}^2_{\,\widehat{z}\,(i)}\right\rangle_{\widehat{\mathbb{CS}}}\right)\right]}{\widehat{r}^4({\widehat{z}\,(i)})}\,
=\,\widehat{\mathrm{R}}(p)\,.
\end{equation}
\vskip 0.3cm\noindent
If we exploit the action of the $\mathrm{PSL}(2,\,\mathbb{C})$ map (\ref{extensionZeta}), then we can profitably compare the asymptotics (\ref{Areaasymp13}) and (\ref{FLRWAreaasymp13}). To this end,  let $\{\widehat{w}_{(j)}\}$,\;$j=1,2,3$,\; denote the three null directions on $\widehat{\mathbb{C\,S}}_{\;\widehat{z}}$, associated to the three selected astrophysical sources 
$\{\widehat{q}_{(j)}\}$ that the FLRW observer $(p, \widehat{\dot{\gamma}}(p))$ locates at  FLRW redshift $\widehat{z}$. According to Proposition \ref{PSL2mappingFLRW} the map $\zeta_{(\widehat{z})}\,\in\,\mathrm{PSL}(2, \mathbb{C})$ connecting $\widehat{\mathbb{C\,S}}_{\;\widehat{z}}$ and ${\mathbb{C\,S}}_{\;\widehat{z}}$ is determined by the choice of the corresponding three images $\{\zeta_{(\widehat{z})}(\widehat{w}_{(j)})\}\,\in\,{\mathbb{C\,S}}(p)$. This 
choice  characterizes 
a rotation $e^{i\alpha_{\,\widehat{z}}}$ through an angle $\alpha_{\;\widehat{z}}$ around a common axis, (the observers can conveniently choose $E_{(3)}\,=\,\widehat{E}_{(3)}$),  followed by a boost with rapidity $\beta_{\;\widehat{z}}\,:=\,\log\,\sqrt{\frac{1\,+\,v_{\;\widehat{z}}}{1\,-\,v_{\;\widehat{z}}}}$, where $v_{\;\widehat{z}}(p)$ is the relative velocity of the physical observer $(p, \dot{\gamma}(p))$ with respect to the reference FLRW observer $(p, \widehat{\dot{\gamma}}(p))$. Given such a setup, let $(\widehat{r}(\,\widehat{z}\,)\widehat{\ell}(\,\widehat{n}\,( \widehat\theta,\widehat\phi)))$ the past-directed null vector in $T_pM$ pointing to the coordinates on $\widehat{\mathbb{C\,S}}_{\;\widehat{z}}$  of an astrophysical source $q$ at reference redshift $\widehat{z}$, and let 
\begin{equation}
\label{thePSLmap}
\widehat{\mathbb{C\,S}}_{\;\widehat{z}}\,\ni\,
(\widehat{r}(\,\widehat{z}\,)\widehat{\ell}(\,\widehat{n}\,( \widehat\theta,\widehat\phi)))\,\longmapsto\,\zeta_{(\widehat{z})}\left(\widehat{r}(\,\widehat{z}\,)\widehat{\ell}(\,\widehat{n}\,( \widehat\theta,\widehat\phi))\right)\,=\,({r}(\,\widehat{z}\,){\ell}(\,{n}\,(\theta,\phi)))
\end{equation}
the corresponding coordinates on ${\mathbb{C\,S}}_{\;\widehat{z}}$. We have (see (\ref{radialconnect0}))
\begin{equation}
\label{radialconnect0Map}
r(\,\widehat{z}\,)\,=\,\sqrt{\frac{1\,+\,v_{\;\widehat{z}}(p)}{1\,-\,v_{\;\widehat{z}}(p)}}\,\,\widehat{r}(\,\widehat{z}\,)\,,\;\;\;n(\theta, \phi)\,=\,e^{i\alpha_{\,\widehat{z}}}\,\,\widehat{n}\,(\widehat\theta, \widehat\phi)\,.
\end{equation}
\vskip 0.3cm\noindent
To proceed with the comparison of (\ref{Areaasymp13}) and (\ref{FLRWAreaasymp13}), let us start by noticing that the pullback of $D^2_{\,\widehat{z}}\,d\mu_{\mathbb{S}^2}$, under the  $\mathrm{PSL}(2,\,\mathbb{C})$ map $\zeta_{(\widehat{z})}$,\, is given by
\begin{equation}
\zeta_{(\widehat{z})}^*\left(D^2_{\,\widehat{z}}\,d\mu_{\mathbb{S}^2}   \right)\,=\,\left(D^2_{\,\widehat{z}}\,\circ\,\zeta_{(\widehat{z})}\right)\,\zeta_{(\widehat{z})}^*d\mu_{\mathbb{S}^2}\,
=\,
D^2_{\,\widehat{z}}\,d\mu_{\widehat{\mathbb{S}}^2}\,\circ\,\zeta_{(\widehat{z})}\,,
\end{equation}
where $d\mu_{\widehat{\mathbb{S}}^2}$ is the solid angle measure on $\widehat{\mathbb{C\,S}}$. Note that,
the pull back $D^2_{\,\widehat{z}}\,d\mu_{\widehat{\mathbb{S}}^2}\,\circ\,\zeta_{(\widehat{z})}$  contains the aberration factor $\sqrt{\frac{1\,+\,v_{\;\widehat{z}}(p)}{1\,-\,v_{\;\widehat{z}}(p)}}$ induced by (\ref{radialconnect0Map}). 
To simplify computations, it is useful to introduce a shorthand notation that explicitly accounts for this aberration factor. Thus, we define
\begin{equation}
\label{Dnotation}
D^2_{\,\widehat{z}}\,\left(\zeta_{(\widehat{z})},\,v_{\;\widehat{z}}\right)\,d\mu_{\widehat{\mathbb{S}}^2}:=\,\frac{1\,-\,v_{\;\widehat{z}}}{1\,+\,v_{\;\widehat{z}}}\; D^2_{\,\widehat{z}}\,\,d\mu_{\widehat{\mathbb{S}}^2}\,\circ\,\zeta_{(\widehat{z})}\,.
\end{equation} 
\vskip 0.3cm\noindent
By taking into account these notational remarks, the following chain of relations allows us to pull back the average physical area distance from the physical to the reference FLRW celestial sphere $\widehat{\mathbb{CS}}_{(\widehat{z})}$. By integrating over its directional counterpart $\widehat{\mathbb{CS}}$ we get, 
\begin{align}
\,\left\langle\,D^2_{\,\widehat{z}}\,  \right\rangle_{{\mathbb{CS}}}\,&:=\,\frac{1}{4\pi}\,\int_{\mathbb{CS}}\,D^2_{\,\widehat{z}}\,d\mu_{\mathbb{S}^2}\,=\,
\frac{1}{4\pi}\,\int_{\zeta_{(\widehat{z})}(\widehat{\mathbb{CS}})}\,D^2_{\,\widehat{z}}\,d\mu_{\mathbb{S}^2}\nonumber\\
\\
&=\,
\frac{1}{4\pi}\,\int_{\widehat{\mathbb{CS}}}\,\zeta_{(\widehat{z})}^*\left(D^2_{\,\widehat{z}}\,d\mu_{\mathbb{S}^2}\right)\,=\, 
\frac{1}{4\pi}\,\int_{\widehat{\mathbb{CS}}}\,\left(
D^2_{\,\widehat{z}}\,d\mu_{\widehat{\mathbb{S}}^2}\,\circ\,\zeta_{(\widehat{z})}\right)\nonumber\\
\nonumber\\
&=:\,\frac{1\,+\,v_{\;\widehat{z}}}{1\,-\,v_{\;\widehat{z}}}\,
\left\langle\,D^2_{\,\widehat{z}}\,(\zeta_{(\widehat{z})},\,v_{\;\widehat{z}})  \right\rangle_{\widehat{\mathbb{CS}}}\,.
\nonumber
\end{align}
By introducing this expression in (\ref{AreaDistasymp}) and taking into account the relation (\ref{radialconnect0Map}) connecting ${r}({\widehat{z}\,(i)})$ to $\widehat{r}({\widehat{z}\,(i)})$\,, an easy computation provides

\begin{equation}
\label{AreaDistasympFLRW}
\lim_{\widehat{z}\,\longrightarrow\,0^+}\,\frac{18}{\pi}\,\frac{4\pi\,\left(\widehat{r}^2({\widehat{z}})\,-\,
\left\langle\,D^2_{\,\widehat{z}}\,(\zeta_{(\widehat{z})},\,v_{\;\widehat{z}})\right\rangle_{\widehat{\mathbb{CS}}}\right)}{\widehat{r}^4({\widehat{z}})}\,
=\,\frac{1\,+\,v(p)}{1\,-\,v(p)}\,{\mathrm{R}}(p)\,,
\end{equation}
\vskip 0.3cm\noindent
where, $v(p)\,:=\,\lim_{\widehat{z}\,\longrightarrow\,0}v_{\hat{z}}$ is the limiting relative velocity of the physical observer with respect to the FLRW observer\footnote{In the case we assume that this relative velocity is $\widehat{z}$-dependent, a possibility that can be considered when optimizing the appropriate choice of the observer when gathering data at a specific redshift.}. Notice that the factor $\frac{1\,+\,v_{\;\widehat{z}}}{1\,-\,v_{\;\widehat{z}}}$ is bounded away from zero\footnote{The aberration factor is a small perturbation of $1$ since the typical peculiar velocities $v_{\;\widehat{z}}$ in the pre-homogeneity region are of the order of a few hundreds  $\mathrm{Km}\;\,s^{- 1}$. Thus, for $c=1$,\, $v_{\;\widehat{z}}\,\approx\,10^{\,- 3}$.}, 
which allows us to replace  the original ${r}({\widehat{z}})$ and $\widehat{r}({\widehat{z}}$\;  $\,\longrightarrow\,0^+$ limits with the more expressive 
${\widehat{z}}\,\longrightarrow\,0^+$ limit. The comparison of (\ref{AreaDistasympFLRW}) with its FLRW counterpart (\ref{FLRWAreaasymp13}) is immediate, and we get
 
\begin{proposition}
\label{flucDisCurv}
For low reference FLRW redshift $\widehat{z}$, the fluctuation between the directional average on $\widehat{\mathbb{CS}}$ of the physical area distance $D^2_{\,\widehat{z}\,(i)}\,(\zeta_{(\widehat{z})},\,v_{\;\widehat{z}})$ and of its FLRW counterpart 
$\widehat{D}^2_{\,\widehat{z}}$ is related to the corresponding spacetime scalar curvature fluctuation by the asymptotics 
\begin{equation}
\label{FluctAreaDistasympt2}
\frac{\left\langle\,D^2_{\,\widehat{z}}\,(\zeta_{(\widehat{z})},\,v_{\;\widehat{z}})\,-\,
\widehat{D}^2_{\,\widehat{z}}\right\rangle_{\widehat{\mathbb{CS}}}}
{\widehat{r}^2(\widehat{z})}\,
=\,\frac{\widehat{r}^2(\widehat{z})}{72}\,\left(\widehat{\mathrm{R}}(p)\,- \frac{1\,+\,v(p)}{1\,-\,v(p)}\,{\mathrm{R}}(p)\,+\,\ldots\right)\,,
\end{equation}
and by the corresponding $\widehat{z}\,\longrightarrow\,0^+$ limit
\begin{equation}
\label{FluctAreaDistasympFLRW}
\lim_{\widehat{z}\,\longrightarrow\,0^+}\,
\frac{\left\langle\,D^2_{\,\widehat{z}}\,(\zeta_{(\widehat{z})},\,v_{\;\widehat{z}})\,-\,
\widehat{D}^2_{\,\widehat{z}}\right\rangle_{\widehat{\mathbb{CS}}}}
{\widehat{r}^4(\widehat{z})}\,
=\,\frac{1}{72}\,\left(\widehat{\mathrm{R}}(p)\,- \frac{1\,+\,v(p)}{1\,-\,v(p)}\,{\mathrm{R}}(p)\right)\,. \;\;\;\square
\end{equation}
\end{proposition}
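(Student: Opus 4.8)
The plan is to derive Proposition~\ref{flucDisCurv} by combining the two separately established Bertrand--Puiseaux asymptotics, (\ref{AreaDistasympFLRW}) for the physical area distance transported to the FLRW celestial sphere and (\ref{FLRWAreaasymp13}) for the FLRW area distance, and then subtracting them. First I would rewrite each of the two limits so that they share the common denominator $\widehat{r}^4(\widehat{z})$ and the common integration domain $\widehat{\mathbb{CS}}$. The FLRW relation (\ref{FLRWAreaasymp13}) already refers to $\widehat{r}$, but I must re-express $\widehat{A}(\widehat{\Sigma}_{\widehat{z}})$ via (\ref{FLRWPhysAreazeta(ii)}) in terms of the directional average $\langle\widehat{D}^2_{\,\widehat{z}}\rangle_{\widehat{\mathbb{CS}}}$, so that it takes the form
\begin{equation}
\lim_{\widehat{z}\,\longrightarrow\,0^+}\,\frac{18}{\pi}\,\frac{4\pi\,\left(\widehat{r}^2(\widehat{z})\,-\,\left\langle\,\widehat{D}^2_{\,\widehat{z}}\right\rangle_{\widehat{\mathbb{CS}}}\right)}{\widehat{r}^4(\widehat{z})}\,=\,\widehat{\mathrm{R}}(p)\,.
\end{equation}
The physical counterpart is already in exactly the matching shape in (\ref{AreaDistasympFLRW}), with the aberration factor $\tfrac{1+v(p)}{1-v(p)}$ multiplying $\mathrm{R}(p)$ on the right-hand side.

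Next I would take the termwise difference of these two limits. Because both fractions now carry the identical prefactor $\tfrac{18}{\pi}\cdot 4\pi = 72$ and identical denominator $\widehat{r}^4(\widehat{z})$, subtracting cancels the common $\widehat{r}^2(\widehat{z})$ term and leaves only the difference of the averaged area distances in the numerator:
\begin{equation}
\lim_{\widehat{z}\,\longrightarrow\,0^+}\,72\,\frac{\left\langle\,D^2_{\,\widehat{z}}\,(\zeta_{(\widehat{z})},\,v_{\;\widehat{z}})\,-\,\widehat{D}^2_{\,\widehat{z}}\right\rangle_{\widehat{\mathbb{CS}}}}{\widehat{r}^4(\widehat{z})}\,=\,\frac{1\,+\,v(p)}{1\,-\,v(p)}\,{\mathrm{R}}(p)\,-\,\widehat{\mathrm{R}}(p)\,.
\end{equation}
Dividing by $72$ and flipping the overall sign gives precisely (\ref{FluctAreaDistasympFLRW}). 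The non-limiting asymptotic (\ref{FluctAreaDistasympt2}) would then follow by performing the same subtraction one order earlier, before passing to the limit: I would retain the finite-$\widehat{z}$ expansions (\ref{Areaasymp1}) and (\ref{FLRWAreaasymp1}), convert areas to averaged squared area distances, subtract, and collect the leading surviving term, which carries a single factor $\widehat{r}^2(\widehat{z})/72$ and the curvature difference $\widehat{\mathrm{R}}(p)-\tfrac{1+v(p)}{1-v(p)}\mathrm{R}(p)$, with the higher Riemann-tensor contributions absorbed into the $\ldots$.

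The main obstacle I anticipate is bookkeeping the aberration factor consistently rather than any deep analytic difficulty. One must verify that the $\mathrm{PSL}(2,\mathbb{C})$ pullback in (\ref{Dnotation}), the radial rescaling $r(\widehat{z})=\sqrt{\tfrac{1+v_{\widehat{z}}}{1-v_{\widehat{z}}}}\,\widehat{r}(\widehat{z})$ of (\ref{radialconnect0Map}), and the measure transformation all combine so that exactly one surviving net factor $\tfrac{1+v(p)}{1-v(p)}$ multiplies $\mathrm{R}(p)$ while $\widehat{\mathrm{R}}(p)$ remains unweighted; the chain of integrals preceding (\ref{AreaDistasympFLRW}) already isolates this factor, so the care is in not double-counting it. A secondary subtlety is the legitimacy of passing the limit through the directional average in the bi-Lipschitz setting: here I would invoke the uniform bound (\ref{AreaDistBounds}) on $D^2_{\,\widehat{z}}$, which guarantees that $\langle D^2_{\,\widehat{z}}(\zeta_{(\widehat{z})},v_{\widehat{z}})\rangle_{\widehat{\mathbb{CS}}}$ is dominated uniformly in $\widehat{z}$, so that the limit and the $\widehat{\mathbb{CS}}$-integration commute and the pointwise Lebesgue--Besicovitch limit of Proposition~\ref{PropAreaDistance} may be integrated against the finite measure $d\mu_{\widehat{\mathbb{S}}^2}$.
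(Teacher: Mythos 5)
Your proposal is correct and follows essentially the same route as the paper: the paper's proof likewise obtains (\ref{FluctAreaDistasympFLRW}) by pulling the physical average squared area distance back to $\widehat{\mathbb{CS}}$ via the $\mathrm{PSL}(2,\mathbb{C})$ map with the aberration normalization (\ref{Dnotation}), deriving (\ref{AreaDistasympFLRW}), and comparing it termwise with the FLRW Bertrand--Puiseaux relation (\ref{FLRWAreaDistasymp}) so that the common $\widehat{r}^{\,2}$ terms cancel, with the finite-$\widehat{z}$ statement (\ref{FluctAreaDistasympt2}) coming from the same subtraction applied to the expansions (\ref{Areaasymp1}) and (\ref{FLRWAreaasymp1}). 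The only blemish is a sign slip in your intermediate display, whose right member should read $\widehat{\mathrm{R}}(p)-\tfrac{1+v(p)}{1-v(p)}\mathrm{R}(p)$; since you already flag and repair this by negating both sides, the argument as a whole stands.
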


Since at a given $\widehat{z}$, the squared FLRW area distance  (\ref{FLRWDi}) is constant and given by 
\begin{equation}
\label{FLRWDiRefer}
\widehat{D}^2_{\hat{z}}\,:=\,\frac{d\mu_{\widehat{h}_{\hat{z}}}}{d\mu_{\widehat{\mathbb{S}}^2}}\,=\, \frac{f^2\left(\widehat{r}(\,\widehat{z})\right)}{\left(1\,+\,\widehat{z}\right)^2}\,,
\end{equation}
we can equivalently rewrite the left member of (\ref{FluctAreaDistasympFLRW}) as the relative fluctuation of $D^2_{\,\widehat{z}}\,(\zeta_{(\widehat{z})},\,v_{\;\widehat{z}})$ with respect to its FLRW counterpart, \emph{i.e.}
 \begin{equation}
\label{FluctAreaDist2}
\lim_{\widehat{z}\,\longrightarrow\,0^+}\,\frac{f^2\left(\widehat{r}(\,\widehat{z})\right)}{\left(1\,+\,\widehat{z}\right)^2\,\widehat{r}^4(\,\widehat{z})}
\left\langle\,\frac{D^2_{\,\widehat{z}}\,(\zeta_{(\widehat{z})},\,v_{\;\widehat{z}})\,-\,
\widehat{D}^2_{\,\widehat{z}}}{\widehat{D}^2_{\,\widehat{z}}}\right\rangle_{\widehat{\mathbb{CS}}}
\,
=\,\frac{1}{72}\,\left(\widehat{\mathrm{R}}(p)\,- \frac{1\,+\,v(p)}{1\,-\,v(p)}\,{\mathrm{R}}(p)\right)\,.
\end{equation}
This latter expression can be made more explicit from the observational point of view if we introduce the celestial sphere average of the mean fluctuation and the mean square of  the physical area distance $D_{\,\widehat{z}}\,(\zeta_{(\widehat{z})},\,v_{\;\widehat{z}})$ with respect to the reference FLRW area distance $\widehat{D}^2_{\,\widehat{z}}$, 
\begin{equation}
\label{meanFunc0}
\delta^{\;(1)}_{\widehat{\mathbb{CS}}}(\,\widehat{z}\,)\,:=\,
\frac{1}{4\pi}\,\int_{\widehat{\mathbb{CS}}}\,
\frac{\left(D_{\hat{z}}(\zeta_{(\,\widehat{z})}(y),\,v_{\;\widehat{z}})\,-\,\widehat{D}_{\hat{z}}(y)\right)}{\widehat{D}_{\hat{z}}(y)}\,
d\mu_{\widehat{\mathbb{S}}^2}(y)\,,
\end{equation} 
and
\begin{equation}
\label{meansquareFunc0}
\delta^{\;(2)}_{\widehat{\mathbb{CS}}}(\,\widehat{z}\,)\,:=\,
\frac{1}{4\pi}\,\int_{\widehat{\mathbb{CS}}}\,
\frac{\left(D_{\hat{z}}(\zeta_{(\widehat{z})}(y),\,v_{\;\widehat{z}})\,-\,\widehat{D}_{\hat{z}}(y)\right)^2}{\widehat{D}^2_{\hat{z}}(y)}\,
d\mu_{\widehat{\mathbb{S}}^2}(y)\,.
\end{equation}
The variance associated with the relative fluctuation $(D_{\hat{z}}(\zeta_{(\,\widehat{z})},\,v_{\;\widehat{z}})\,-\,\widehat{D}_{\hat{z}})/(\widehat{D}_{\hat{z}})$ is given by
\begin{equation}
\label{samplevar0}
\frac{1}{4\pi}
\int_{\widehat{\mathbb{CS}}}
\left(\frac{D_{\hat{z}}(\zeta_{(\widehat{z})},\,v_{\;\widehat{z}})-\widehat{D}_{\hat{z}}}{\widehat{D}_{\hat{z}}}\,-\,\delta^{\;(1)}_{\widehat{\mathbb{CS}}}(\,\widehat{z}\,)\right)^2
d\mu_{\widehat{\mathbb{S}}^2}\,=\,\delta^{\;(2)}_{\widehat{\mathbb{CS}}}(\,\widehat{z}\,)\,-\,\left(\delta^{\;(1)}_{\widehat{\mathbb{CS}}}(\,\widehat{z}\,) \right)^2\,.
\end{equation}
\vskip 0.3cm\noindent
Moreover, as is easily verified, we have
 
\begin{equation}
\label{formulDiffD2}
\delta^{\;(2)}_{\widehat{\mathbb{CS}}}(\,\widehat{z}\,)\,+\,2\,\delta^{\;(1)}_{\widehat{\mathbb{CS}}}(\,\widehat{z}\,)\,=\,
\left\langle\,\frac{D^2_{\,\widehat{z}}\,(\zeta_{(\widehat{z})},\,v_{\;\widehat{z}})\,-\,
\widehat{D}^2_{\,\widehat{z}}}{\widehat{D}^2_{\,\widehat{z}}}\right\rangle_{\widehat{\mathbb{CS}}}
\,,
\end{equation}
thus, (\ref{FluctAreaDist2}) has the equivalent representation
\begin{equation}
\label{FluctAreaDist3}
\lim_{\widehat{z}\,\longrightarrow\,0^+}\,\frac{f^2\left(\widehat{r}(\,\widehat{z})\right)}{\left(1\,+\,\widehat{z}\right)^2\,\widehat{r}^4(\,\widehat{z})} \left(\delta^{\;(2)}_{\widehat{\mathbb{CS}}}(\,\widehat{z}\,)\,+\,2\,\delta^{\;(1)}_{\widehat{\mathbb{CS}}}(\,\widehat{z}\,)  \right)\,
=\,\frac{1}{72}\,\left(\widehat{\mathrm{R}}(p)\,- \frac{1\,+\,v(p)}{1\,-\,v(p)}\,{\mathrm{R}}(p)\right)\,.
\end{equation}
\vskip 0.3cm\noindent
If we specialize our comparison between the physical and the FLRW reference spacetimes to the case where the FLRW has flat spatial sections, then $f^2(\widehat{r}(\,\widehat{z}))\,=\,\widehat{r}^2(\,\widehat{z})$. Moreover, since the relation (\ref{FluctAreaDist2}) involves the $\widehat{z}\,\longrightarrow\,0^+$ limit, we can evaluate the comoving radius $\widehat{r}(\,\widehat{z})$ by employing the standard  Hubble formula $\widehat{z}=H_0\,\widehat{r}(\,\widehat{z})$. In such a framework, the previous asymptotic analysis and  Proposition \ref{flucDisCurv} imply the following result.

\begin{theorem}  
\label{Theasympt1}
Let $\{\Sigma_{\widehat{z}(i)}\}$   be the sequence of sky sections foliating the physical past lightcone region  $\mathbb{W}(\gamma(\tau))\,\cap\,{\mathcal{C}}^-(p, {g})$, describing the transition to the cosmological uncoupling in the cluster region,  and converging to the observation event $(p, \dot\gamma(p))$. Similarly, we denote by  $\{\widehat{\Sigma}_{\widehat{z}(i)}\}$ the corresponding sequence of sky sections foliating the FLRW past lightcone region $\widehat{\mathbb{W}}(\widehat\gamma(\widehat\tau))\,\cap\,{\widehat{\mathcal{C}}}^-(p, \widehat{g})$, and converging to the FLRW observer $(p, \widehat{\dot\gamma}(p))$. We can associate with the sequence  $\{\Sigma_{\widehat{z}(i)}\}$ a corresponding sequence of points ${p}_{(i)}=\,{\gamma}({\tau}_{(i)}\,=\,-\,r(\,\widehat{z}\,(i)))$ on the physical observer's world line segment ${\gamma}(\tau)$,\; $-\,\epsilon\,\leq\,\tau\,\leq\,0$,\, such that $({p}_{(i)},\,\dot\gamma(\tau_{(i)}))\,\longrightarrow\,(p, \dot\gamma(p))$\, as\, $i\,\rightarrow\, \infty$. We denote by $\widehat{p}_{(i)}=\,\widehat{\gamma}(\widehat{\tau}_{(i)}\,=\,-\,\widehat{r}(\,\widehat{z}\,(i)))$,\, with $(\widehat{p}_{(i)},\,\widehat{\dot\gamma}(\widehat\tau_{(i)}))\,\longrightarrow\,(p, \widehat{\dot\gamma}(p))$\, as\, $i\,\rightarrow\, \infty$,\,    the corresponding sequence for the FLRW observer's world line segment $\widehat{\gamma}(\widehat\tau)$,\; $-\,\epsilon\,\leq\,\widehat\tau\,\leq\,0$. Note that in general $\widehat{p}_{(i)}\,\not=\,{p}_{(i)}$. If $\{\widehat{\mathbb{CS}}_{\,\widehat{z}\,(i)}\}$ is the sequence of FLRW celestial spheres associated with $\{\widehat{\Sigma}_{\widehat{z}(i)}\}$, then we have the asymptotic expansion 

\begin{equation}
\label{AsyFluctAreaDist4}
\frac{\delta^{\;(2)}_{\widehat{\mathbb{CS}}}(\,\widehat{z}\,(i))\,+\,2\,\delta^{\;(1)}_{\widehat{\mathbb{CS}}}(\,\widehat{z}\,(i))}{\widehat{z}^{\,2}\,(i)\,\left(1\,+\,\widehat{z}\,(i)\right)^2}\,
=\,\frac{1}{72\,H_0^2}\,\left(\widehat{\mathrm{R}}(\widehat{p})\,- \tfrac{1\,+\,v_{\widehat{z}\,(i)}(p_{(i)})}{1\,-\,v_{\widehat{z}\,(i)}(p_{(i)})}\,{\mathrm{R}}(p)\,+\,\ldots\,\right)\,,
\end{equation}
where $H_0$ is the present value of the Hubble parameter, and where $\ldots$\; represents higher order corrections that can be expressed in terms of powers of the (spacetime) Riemann tensor and its (covariant) derivatives. As $\widehat{z}\,\longrightarrow\,0^+$, this asymptotics provides 
\begin{equation}
\label{FluctAreaDist4}
\lim_{\widehat{z}\,\longrightarrow\,0^+}\,\frac{\delta^{\;(2)}_{\widehat{\mathbb{CS}}}(\,\widehat{z}\,)\,+\,2\,\delta^{\;(1)}_{\widehat{\mathbb{CS}}}(\,\widehat{z})}{\widehat{z}^{\,2}\,\left(1\,+\,\widehat{z}\right)^2}\,
=\,\frac{1}{72\,H_0^2}\,\left(\widehat{\mathrm{R}}(p)\,- \frac{1\,+\,v(p)}{1\,-\,v(p)}\,{\mathrm{R}}(p)\right)\,.
\end{equation}
\end{theorem}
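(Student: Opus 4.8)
The plan is to recognize that Theorem \ref{Theasympt1} is the specialization of Proposition \ref{flucDisCurv} to the case of an FLRW reference with flat spatial sections, supplemented with the low-redshift Hubble relation $\widehat{z} = H_0\,\widehat{r}(\widehat{z})$. Almost all of the analytic content has therefore already been established; what remains is a controlled bookkeeping of the geometric factors along the foliating sequences $\{\Sigma_{\widehat{z}(i)}\}$ and $\{\widehat{\Sigma}_{\widehat{z}(i)}\}$. First I would record the two non-limiting Bertrand--Puiseaux expansions (\ref{Areaasymp1}) and (\ref{FLRWAreaasymp1}) evaluated on these sequences, together with the identities (\ref{PhysAreazeta(ii)}) and (\ref{FLRWPhysAreazeta(ii)}) that express the sky-section areas as $4\pi$ times the directional averages $\langle D^2_{\widehat{z}(i)}\rangle_{\mathbb{CS}}$ and $\langle\widehat{D}^2_{\widehat{z}(i)}\rangle_{\widehat{\mathbb{CS}}}$. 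These furnish the curvature dependence of the two average area distances at each step $i$, retaining the $\ldots$ higher-order Riemann-tensor corrections.

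Second, I would transport the physical average onto the FLRW celestial sphere using the $\mathrm{PSL}(2,\mathbb{C})$ pullback, exactly as in the computation preceding (\ref{AreaDistasympFLRW}). This introduces the aberration factor $\tfrac{1+v_{\widehat{z}(i)}}{1-v_{\widehat{z}(i)}}$ and the radial rescaling $r(\widehat{z}(i)) = \sqrt{(1+v_{\widehat{z}(i)})/(1-v_{\widehat{z}(i)})}\,\widehat{r}(\widehat{z}(i))$ of (\ref{radialconnect0Map}). Subtracting the FLRW expansion from the aberration-rescaled physical one, and invoking the definition (\ref{formulDiffD2}) of $\delta^{\;(2)}+2\,\delta^{\;(1)}$ in terms of $\langle(D^2-\widehat{D}^2)/\widehat{D}^2\rangle_{\widehat{\mathbb{CS}}}$, then reproduces at each $i$ the curvature-fluctuation combination $\widehat{\mathrm{R}}(\widehat{p}) - \tfrac{1+v_{\widehat{z}(i)}}{1-v_{\widehat{z}(i)}}\,\mathrm{R}(p)$, with the normalization (\ref{Dnotation}) keeping the aberration factor attached to the physical term.

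Third, I would impose the two simplifying hypotheses of the theorem: flat spatial sections, so that $f^2(\widehat{r}(\widehat{z})) = \widehat{r}^2(\widehat{z})$ in the FLRW area distance (\ref{FLRWDi}), and the linear Hubble relation $\widehat{z} = H_0\,\widehat{r}(\widehat{z})$ valid in the low-redshift regime. Substituting $\widehat{r}^2(\widehat{z}(i)) = \widehat{z}^2(i)/H_0^2$ converts the geometric prefactor $f^2(\widehat{r})/[(1+\widehat{z})^2\widehat{r}^4]$ appearing in (\ref{FluctAreaDist3}) into $H_0^2/[(1+\widehat{z})^2\widehat{z}^2]$, which upon rearrangement yields precisely the left-hand side of (\ref{AsyFluctAreaDist4}) and the factor $1/(72\,H_0^2)$ on the right. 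The limit (\ref{FluctAreaDist4}) then follows by letting $i\rightarrow\infty$, i.e. $\widehat{z}\rightarrow 0^+$, so that $v_{\widehat{z}(i)}(p_{(i)})\rightarrow v(p)$ and the higher-order corrections drop out, recovering the $\widehat{z}\rightarrow 0^+$ form of Proposition \ref{flucDisCurv}.

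The main obstacle will be the consistent control of the aberration/boost bookkeeping along the sequence rather than only in the strict limit. In particular one must verify that the radial rescaling (\ref{radialconnect0Map}) and the pullback normalization (\ref{Dnotation}) combine so that the $\tfrac{1+v}{1-v}$ factor attaches to $\mathrm{R}(p)$, and not to $\widehat{\mathrm{R}}(\widehat{p})$, uniformly in $i$, and that replacing $\widehat{r}(\widehat{z})$ by the linear expression $\widehat{z}/H_0$ introduces only errors that can be absorbed into the $\ldots$ higher-order terms. This last step is legitimate precisely because the aberration factor is bounded away from zero for the physically relevant peculiar velocities $v_{\widehat{z}}\approx 10^{-3}$, as noted after (\ref{AreaDistasympFLRW}); the delicate point is to confirm that the difference between the true integrated relation (\ref{Dandri0}) for $\widehat{r}(\widehat{z})$ and its linear approximation contributes at an order consistent with the $\widehat{z}^{\,2}(1+\widehat{z})^2$ scaling displayed in the denominator of (\ref{AsyFluctAreaDist4}).
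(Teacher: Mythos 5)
Your proposal is correct and follows essentially the same route as the paper's own proof: the asymptotics (\ref{AsyFluctAreaDist4}) is obtained from the Bertrand--Puiseaux expansions (\ref{Areaasymp1}) and (\ref{FLRWAreaasymp1}) with the areas replaced by the celestial-sphere averages of the squared area distances via (\ref{PhysAreazeta(ii)}), (\ref{FLRWPhysAreazeta(ii)}) and the normalization (\ref{Dnotation}), while the limit (\ref{FluctAreaDist4}) is a straightforward rewriting of (\ref{FluctAreaDist2}) under spatial flatness, $f^2(\widehat{r}(\widehat{z}))=\widehat{r}^{\,2}(\widehat{z})$, and the linear Hubble relation $\widehat{z}=H_0\,\widehat{r}(\widehat{z})$. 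Your closing remarks on the aberration bookkeeping along the sequence and on absorbing the error of the linear approximation to (\ref{Dandri0}) into the higher-order terms only make explicit what the paper's proof leaves implicit.
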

\begin{proof}
By taking into account the assumed spatial flatness of the reference FLRW model and the low redshift Hubble relation, (\ref{FluctAreaDist4}) follows by a straightforward rewriting of (\ref{FluctAreaDist2}). The asymptotics (\ref{AsyFluctAreaDist4}) follows from (\ref{Areaasymp1}) and (\ref{FLRWAreaasymp1}) by replacing the areas $A(\Sigma_{\widehat{z}\,(i)})$ and $\widehat{A}(\widehat{\Sigma}_{\widehat{z}\,(i)})$ with the corresponding celestial sphere averages of the respective (squared) area distances (see (\ref{PhysAreazeta(ii)}),\, (\ref{FLRWPhysAreazeta(ii)}), and the normalization (\ref{Dnotation})).
\end{proof}
\begin{remark}
As we approach the critical FLRW redshift $\widehat{z}_{(c)}$ signaling the transition to the cosmological uncoupling in the cluster region, the asymptotics (\ref{AsyFluctAreaDist4}) can be rewritten in the more expressive form 

\begin{equation}
\label{AsyFluctAreaDist4CRIT}
\frac{\delta^{\;(2)}_{\widehat{\mathbb{CS}}}(\,\widehat{z}_{(c)})\,+\,2\,\delta^{\;(1)}_{\widehat{\mathbb{CS}}}(\,\widehat{z}_{(c)})}{\left(1\,+\,\widehat{z}_{(c)}\right)^2}\,
=\,\frac{\widehat{z}^{\,2}_{(c)}}{72\,H_0^2}\,\left(\widehat{\mathrm{R}}({p})\,- \tfrac{1\,+\,v_{\widehat{z}_{(c)}}(p_{(c)})}{1\,-\,v_{\widehat{z}_{(c)}}(p_{(c)})}\,{\mathrm{R}}(p)\,+\,\ldots\,\right)\,.\;\;\;\;\;\;\square
\end{equation}
\end{remark}

\section{Area distance fluctuations on the past lightcone}
\label{ArDisFluc}
The averages $\delta^{\;(1)}_{\widehat{\mathbb{CS}}}(\,\widehat{z}\,(i))$ and $\delta^{\;(2)}_{\widehat{\mathbb{CS}}}(\,\widehat{z}\,(i))$ are the mean and mean square relative fluctuations of the physical and FLRW area distance. This observation suggests that the asymptotics (\ref{AsyFluctAreaDist4}) and the limit (\ref{FluctAreaDist4}) are deeply connected with the random fluctuations of area distances. In the situation at hand, and for the large redshifts $\widehat{z}$ that probe the homogeneity region, this possibility is naturally related to Neyman's weak cosmological principle, and it fits well within the standard $\Lambda\mathrm{CDM}$ model that assumes a Gaussian power spectrum of CBM density fluctuations. However, this assumption of Gaussianity, which drives the FLRW structure formation paradigm, fails in the pre-homogeneity region due to strong gravitational clustering. In particular, the area distance fluctuations are distributed in a rather complex and unpredictable way when we approach the critical redshift  $\widehat{z}_{(c)}$, and we enter the cosmological uncoupling region where the $\widehat{z}\,\rightarrow\,0$ limit is attained. Yet, the structure of the left member of the (\ref{FluctAreaDist4}) holds a strong connection with the moment generating function $\mathbb{M}_{N(m,\sigma^2)}$ associated with a normal distribution  of the area distance fluctuations, with mean $m\,:=\,\delta^{\;(1)}_{\widehat{\mathbb{CS}}}(\,\widehat{z})$ and variance $\sigma^2\,:=\,\delta^{\;(2)}_{\widehat{\mathbb{CS}}}(\,\widehat{z})\,-\,(\delta^{\;(1)}_{\widehat{\mathbb{CS}}}(\,\widehat{z}))^2$. To wit, for $\xi\,\in\,\mathbb{R}$, we have 
\begin{align}
\label{MomGenFunct}
\xi\,\longmapsto\,\mathbb{M}_{N(m,\sigma^2)}\,(\xi)\,&:=\,\exp\,\left[\xi\,m\,+\,\frac{\sigma^2\,\xi^2}{2}\right]\nonumber\\
\\
&=\,\exp\,\left[\delta^{\;(1)}_{\widehat{\mathbb{CS}}}(\,\widehat{z})\,\xi\,+\,\frac{\delta^{\;(2)}_{\widehat{\mathbb{CS}}}(\,\widehat{z})\,\xi^2}{2}\,-\,\frac{1}{2}\left(\delta^{\;(1)}_{\widehat{\mathbb{CS}}}(\,\widehat{z})\,\xi \right)^2\right]\,\nonumber
\end{align} 
a relation suggesting that upon stabilization (\emph{i.e.} normalizing the area distance fluctuations to have mean value $0$, and possibly variance $1$), we may recover Gaussianity by exploiting central limit theorem arguments. 

Let us start   by considering the sequence of relative fluctuations
\begin{equation}
\label{random1}
\left\{\left.Y_{(\,\widehat{z}\,(i))}\,:=\,\frac{D_{\widehat{z}\,(i)}(\zeta_{(\,\widehat{z}\,(i))},\,v_{\;\widehat{z}\,(i)})\,-\,\widehat{D}_{\widehat{z}\,(i)}}{\widehat{D}_{\widehat{z}\,(i)}}\;\right|\;i\,\geq\,1\;\right\}\,,
\end{equation}   
associated with the corresponding sequence of sky sections $\{\Sigma_{\widehat{z}(i)}\}$ \,and\, $\{\widehat{\Sigma}_{\widehat{z}(i)}\}$ (and associated celestial spheres). From the point of view of the physical past lightcone ${\mathcal{C}}^-(p, {g})$, when moving from the initial sky section $\Sigma_{\widehat{z}(1)}$ to $\Sigma_{\widehat{z}(k)}$, with $k\,>\,1$ ,\; the interpolating region $\Xi(\widehat{z}(1),\,\widehat{z}(k))\, \subset\,{\mathcal{C}}^-(p, {g})$ is subjected to the vagaries of the gravitational dynamics of sources crossing $\Xi(\widehat{z}(1)\,,\,\widehat{z}(k))\,$ and interacting with its null geodesic generators, \emph{i.e.}, altering the map $\exp_p|_{\Xi(\widehat{z}(1),\,\widehat{z}(k))\,}$. This process, in the cluster region surrounding the observational event $p$,  is random in the 
real-world context: \emph{unpredictable data interact  with $\Xi(\widehat{z}(1),\,\widehat{z}(k))\,$, data that are not implied by the information on 
the initial sky section $\Sigma_{\widehat{z}(1)}$, marginally coupled with cosmological expansion}. The state of $\Sigma_{\widehat{z}(1)}$ (hence also the distribution of the corresponding area distance $D_{\widehat{z}\,(1)}$ of its points as evaluated  from $p$) does not determine the configuration of $\Sigma_{\widehat{z}(k)}$ for $k\,>\,1$.   
To put the situation at hand in a colorful way\footnote{This remark owes much to discussions with George Ellis.}, \emph{the Laplace's demon argument ignores the real world and, in particular, our physical past lightcone context in the pre-homogeneity region}. A strong, non-Gaussian randomness of the terms $Y\,(\,\widehat{z}\,(i))$ of the sequence (\ref{random1})  is at work here. Yet, from the point of view of the directional FLRW celestial sphere $\widehat{\mathbb{CS}}$, it is reasonable to assume that the area distance fluctuations are identically distributed random variables on $\widehat{\mathbb{CS}}$. These remarks motivate the following

\begin{assumption}
In the low-redshift $\widehat{z}$ regime characterizing the cluster region,
the sequence of relative area distance fluctuations  (\ref{random1}) is interpreted as a sequence of mutually independent, square integrable random variables 
\begin{align}
\widehat{\mathbb{CS}}\,&\longrightarrow\,\mathbb{R}\\
(\widehat\theta,\,\widehat\phi)\,&\longmapsto\,Y_{(\,\widehat{z}\,(i))}(\widehat\theta,\,\widehat\phi)\,,\nonumber
\end{align}
which are identically distributed according to the probability measure $\mathbb{P}_{\widehat{\mathbb{CS}}}$\; on $\widehat{\mathbb{CS}}$\, defined 
by
\begin{equation}
\label{CSprobab}
\mathbb{P}_{\widehat{\mathbb{CS}}}\,\left(d\mu_{\widehat{\mathbb{S}}^2}\right)\,:=\,
\frac{1}{4\pi}\,d\mu_{\widehat{\mathbb{S}}^2}\,.
\end{equation}
 Explicitly, for  $I\,	\subset\,\mathbb{R}$, let
\begin{equation}
{Y}^{-\,1}_{(\,\widehat{z}\,(i))}(I)\,:=\,\left\{\left.(\,\widehat\theta,\,\widehat\phi\,)\,\in\,\widehat{\mathbb{CS}}\,\,\right|\,Y_{(\,\widehat{z}\,(i))}(\widehat\theta,\,\widehat\phi)\,\in\,I \right\}
\end{equation}
denote the subset of directions $(\,\widehat\theta,\,\widehat\phi\,)\,\in\,\widehat{\mathbb{CS}}$ whose corresponding area distance relative fluctuations  $Y_{(\,\widehat{z}\,(i))}(\widehat\theta,\,\widehat\phi)$ \,take values in the interval \,$I$. The $Y_{(\,\widehat{z}\,(i))}$     are assumed to be independently distributed  under the push-forward measure
\begin{equation}
\label{FlucDisDistr}
\mathbb{P}_{\widehat{\mathbb{CS}}}\left({Y}^{-\,1}_{(\,\widehat{z}\,(i))}(I)\right)\,=\,
 \frac{1}{4\pi}\,\int_{{Y}^{-\,1}_{(\,\widehat{z}\,(i))}(I)}\, d\mu_{\widehat{\mathbb{S}}^2}\,.
\end{equation}
\end{assumption}   

Notice that, according to (\ref{meanFunc0}) and (\ref{meansquareFunc0}), the mean and the mean square fluctuations associated with (\ref{random1}) are respectively given by 
\begin{equation}
\label{meanFunc0i}
\delta^{\;(1)}_{\widehat{\mathbb{CS}}}(\,\widehat{z}\,(i))\,:=\,
\frac{1}{4\pi}\,\int_{\widehat{\mathbb{CS}}}\,
Y_{(\,\widehat{z}\,(i))}\,
d\mu_{\widehat{\mathbb{S}}^2}\,,
\end{equation} 
and
\begin{equation}
\label{meansquareFunc0i}
\delta^{\;(2)}_{\widehat{\mathbb{CS}}}(\,\widehat{z}\,(i))\,:=\,
\frac{1}{4\pi}\,\int_{\widehat{\mathbb{CS}}}\,
Y^2_{(\,\widehat{z}\,(i))}\,
d\mu_{\widehat{\mathbb{S}}^2}\,.
\end{equation} 
\vskip 0.3cm\noindent
The corresponding variance is provided by (\ref{samplevar0})\,, \emph{i.e.}  
\begin{align}
\label{samplevar0i}
\mathrm{Var}_{\widehat{\mathbb{CS}}}(Y_{(\,\widehat{z}\,(i))})\,&:=\,
\frac{1}{4\pi}
\int_{\widehat{\mathbb{CS}}}
\left(Y_{(\,\widehat{z}\,(i))}\,-\,\delta^{\;(1)}_{\widehat{\mathbb{CS}}}(\,\widehat{z}\,(i))\right)^2
	d\mu_{\widehat{\mathbb{S}}^2}\nonumber\\
\\
&=\,\delta^{\;(2)}_{\widehat{\mathbb{CS}}}(\,\widehat{z}\,(i))\,-\,\left(\delta^{\;(1)}_{\widehat{\mathbb{CS}}}(\,\widehat{z}\,(i)) \right)^2\,.\nonumber
\end{align}
\vskip 0.3cm\noindent
As for what concerns the behavior of the sequence $\{Y_{(\,\widehat{z}\,(i))}\}$ we have a first result that is an immediate consequence of Kolmogorov's Strong Law (of Large Numbers).

\begin{theorem}
\label{AlmSureConv}
Let us assume that the area distance relative fluctuations $Y_{(\,\widehat{z}\,(1))}(\widehat\theta,\,\widehat\phi)$ are integrable and have a finite mean value $\delta^{\;(1)}_{\widehat{\mathbb{CS}}}(\,\widehat{z}\,(1))\,=\,\delta^{\;(1)}_{\widehat{\mathbb{CS}}}(\,\widehat{z}\,(c))$ over the the cosmological decoupling surface $\Sigma_{\widehat{z}(1)}\,:=\,\partial\,V_{(c)}(p)$.\, 
For $(\widehat\theta,\,\widehat\phi)\,\in\,\widehat{\mathbb{CS}}$, let 
\begin{equation}
\label{empiricalQ}
\widehat{\mathbb{CS}}\,\ni\,(\widehat\theta,\,\widehat\phi)\,\longrightarrow\,
Q_n (\widehat\theta,\,\widehat\phi)\,:=\,\frac{1}{n}\,\sum_{i=1}^n\,Y_{(\,\widehat{z}\,(i))}(\widehat\theta,\,\widehat\phi)\,\in\,\mathbb{R}
\end{equation}
denote the empirical mean associated with the sequence $\{Y_{(\,\widehat{z}\,(i))}(\widehat\theta,\,\widehat\phi)\;|\;i\,\geq\,1\}$, where 
\begin{equation}
Y_{(\,\widehat{z}\,(1))}(\widehat\theta,\,\widehat\phi)\,:=\, \left.\frac{D_{\widehat{z}\,(c)}(\zeta_{(\,\widehat{z}\,(c))},\,v_{\;\widehat{z}\,(c)})\,-\,\widehat{D}_{\widehat{z}\,(c)}}{\widehat{D}_{\widehat{z}\,(c)}}\right|_{(\widehat\theta,\,\widehat\phi)}
\end{equation}
is the area distance fluctuation, in the direction $(\widehat\theta,\,\widehat\phi)\in\,\widehat{\mathbb{CS}}$,\,  evaluated at the cluster physical radius surface $\partial\,V_{(c)}(p)\,=:\,\Sigma_{\widehat{z}\,(1)}$. Then we have the $\mathbb{P}$-almost sure convergence result
\begin{equation}
\label{AlmSureP}
\mathbb{P}_{\widehat{\mathbb{CS}}}\,\left(\lim_{n\rightarrow\,\infty}\,\frac{1}{n}\,
\sum_{i=1}^n\,Y_{(\,\widehat{z}\,(i))}\,=\, \delta^{\;(1)}_{\widehat{\mathbb{CS}}}(\,\widehat{z}\,(c))  \right)\,=\,1\,,
\end{equation}
where 
\begin{equation}
\label{meanFunc0Y}
\delta^{\;(1)}_{\widehat{\mathbb{CS}}}(\,\widehat{z}\,(c))\,:=\,
\frac{1}{4\pi}\,\int_{\widehat{\mathbb{CS}}}\,
Y_{(\,\widehat{z}\,(1))}\,
d\mu_{\widehat{\mathbb{S}}^2}\,
\end{equation}
\vskip 0.3cm\noindent 
is the mean area distance fluctuation evaluated at  $\partial\,V_{(c)}(p)\,=:\,\Sigma_{\widehat{z}\,(1)}$. 
\end{theorem}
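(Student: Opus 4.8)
The plan is to recognize the statement as a direct application of Kolmogorov's Strong Law of Large Numbers, once the probabilistic setting has been made precise. First I would fix the probability space $(\widehat{\mathbb{CS}},\,\mathcal{B}(\widehat{\mathbb{CS}}),\,\mathbb{P}_{\widehat{\mathbb{CS}}})$, where $\mathcal{B}(\widehat{\mathbb{CS}})$ is the Borel $\sigma$-algebra generated by the round-metric topology on $\widehat{\mathbb{CS}}$ and $\mathbb{P}_{\widehat{\mathbb{CS}}}$ is the directional probability measure (\ref{CSprobab}). This $\mathbb{P}_{\widehat{\mathbb{CS}}}$ is a genuine probability measure, since $\frac{1}{4\pi}\int_{\widehat{\mathbb{CS}}}\,d\mu_{\widehat{\mathbb{S}}^2}\,=\,1$. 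Under the standing Assumption, the sequence $\{Y_{(\,\widehat{z}\,(i))}\}$ consists of mutually independent, identically distributed, square-integrable random variables on this space, so the hypotheses of the strong law are within reach.

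The key reduction is to verify integrability of the common law, the one ingredient not explicitly guaranteed by the Assumption. Since $\mathbb{P}_{\widehat{\mathbb{CS}}}$ is a finite (probability) measure, the Cauchy--Schwarz inequality applied with the constant function $1$ gives
\[
\mathbb{E}^{\mathbb{P}}\left[\left|Y_{(\,\widehat{z}\,(i))}\right|\right]\,\leq\,\left(\mathbb{E}^{\mathbb{P}}\left[Y_{(\,\widehat{z}\,(i))}^2\right]\right)^{1/2}\,=\,\left(\delta^{\;(2)}_{\widehat{\mathbb{CS}}}(\,\widehat{z}\,(i))\right)^{1/2}\,<\,\infty\,,
\]
so the square-integrability afforded by the Assumption (and the finiteness of $\delta^{\;(2)}_{\widehat{\mathbb{CS}}}$) upgrades to integrability. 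Thus the $Y_{(\,\widehat{z}\,(i))}$ form an i.i.d., integrable sequence, which is exactly what the strong law requires.

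Next I would identify the common expectation. Because the variables are identically distributed, their mean is independent of the index $i$ and may be computed on any single sky section; evaluating it on the cosmological decoupling surface $\Sigma_{\widehat{z}(1)}\,=\,\partial\,V_{(c)}(p)$ yields
\[
\mathbb{E}^{\mathbb{P}}\left[Y_{(\,\widehat{z}\,(i))}\right]\,=\,\frac{1}{4\pi}\int_{\widehat{\mathbb{CS}}}\,Y_{(\,\widehat{z}\,(1))}\,d\mu_{\widehat{\mathbb{S}}^2}\,=\,\delta^{\;(1)}_{\widehat{\mathbb{CS}}}(\,\widehat{z}\,(c))\,,
\]
in accordance with (\ref{meanFunc0Y}) and the hypothesis $\delta^{\;(1)}_{\widehat{\mathbb{CS}}}(\,\widehat{z}\,(1))\,=\,\delta^{\;(1)}_{\widehat{\mathbb{CS}}}(\,\widehat{z}\,(c))$. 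Applying Kolmogorov's Strong Law to the empirical means $Q_n$ of (\ref{empiricalQ}) then delivers the $\mathbb{P}$-almost sure convergence $Q_n\,\rightarrow\,\delta^{\;(1)}_{\widehat{\mathbb{CS}}}(\,\widehat{z}\,(c))$, which is precisely (\ref{AlmSureP}).

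The genuinely delicate point is not the analytic invocation of the strong law but the modeling content of the Assumption, namely that the $Y_{(\,\widehat{z}\,(i))}$ may legitimately be treated as i.i.d. random variables on $\widehat{\mathbb{CS}}$. In the strongly clustered pre-homogeneity regime no deterministic mechanism propagates the statistics of one sky section to the next, so independence across the foliation $\{\Sigma_{\widehat{z}(i)}\}$ is the weakest, and hence safest, hypothesis available, while the identical-distribution requirement amounts to reading each fluctuation as a sample of the same directional law $\mathbb{P}_{\widehat{\mathbb{CS}}}$. I expect that the only real obstacle, were one to try to dispense with the Assumption, would be precisely this independence, since controlling the full joint distribution of $\{Y_{(\,\widehat{z}\,(i))}\}$ would demand exactly the non-perturbative information that the clustering regime withholds.
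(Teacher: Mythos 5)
Your proposal is correct and follows essentially the same route as the paper: the paper's own proof states that ``in a sound-bite version'' the result is an immediate consequence of Kolmogorov's Strong Law applied to the i.i.d.\ sequence $\{Y_{(\,\widehat{z}\,(i))}\}$, which is exactly your argument, with your Cauchy--Schwarz step $\mathbb{E}^{\mathbb{P}}[|Y|]\leq(\delta^{\;(2)}_{\widehat{\mathbb{CS}}})^{1/2}$ a harmless (indeed redundant, since integrability is already hypothesized) verification. The only difference is expository: the paper goes on to unpack the strong law's internal mechanics (Borel--Cantelli, the layer-cake identity, and the truncated-variance bound via $S_0$, following Stroock) in order to highlight how the decoupling surface $\partial\,V_{(c)}(p)$ enters, whereas you invoke the theorem as a black box, which is logically sufficient.
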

\begin{proof}
By hypothesis, the random variables $Y_{(\,\widehat{z}\,(i))}$ are mutually $\mathbb{P}_{\widehat{\mathbb{CS}}}$-independent, integrable and identically distributed.  Without loss in generality, we can shift them according to (see (\ref{meanFunc0i}))
\begin{equation}
\overline{Y}_{(\,\widehat{z}\,(i))}\,:=\, Y_{(\,\widehat{z}\,(i))}\,-\,\delta^{\;(1)}_{\widehat{\mathbb{CS}}}(\,\widehat{z}\,(i))\,,
\end{equation}
so that they have zero mean. In a sound-bite version, we can exploit Komogorov's Strong Law \cite{Stroock} to conclude that, as $n\rightarrow\infty$,\;the empirical mean $Q_n$ defined by (\ref{empiricalQ}) converges $\mathbb{P}$-almost surely to the average $\delta^{\;(1)}_{\widehat{\mathbb{CS}}}(\,\widehat{z}\,(c))$\, of\, 
$Y_{(\,\widehat{z}\,(1))}$,\,      according\footnote{Actually, the convergence is also in $L^1(\mathbb{P}_{\widehat{\mathbb{CS}}},\,\mathbb{R})$;\;see \cite{Stroock}.} to (\ref{AlmSureP}). Yet, to understand the emergence of 
the role of the cosmological decoupling surface $\partial\,V_{(c)}(p)$ in controlling the behavior of the sequence of random variables $\{\overline{Y}_{(\,\widehat{z}\,(i))}\}$, we need to provide a few details of the proof of the theorem. To this end, let us consider the interval $I_i\,:=\,(i,\infty)$,\;$i\,\in\,\mathbb{Z}^{+}$, and  associate with the generic  $\overline{Y}_{(\,\widehat{z}\,(i))}$ the subset\footnote{In Probability theory, the $\{A_i\}$ characterize  \emph{events} in the probability space $(\widehat{\mathbb{CS}},\,\sigma,\,\mathbb{P}_{\widehat{\mathbb{CS}}})$, where $\sigma$ is the Borel $\sigma$-algebra over the directional celestial sphere $\widehat{\mathbb{CS}}$. In this section, when we use the word "event," we refer to it in its probabilistic sense. } of directions $(\,\widehat\theta,\,\widehat\phi\,)\,\in\,\widehat{\mathbb{CS}}$ whose corresponding area distance  fluctuations  $\overline{Y}_{(\,\widehat{z}\,(i))}(\widehat\theta,\,\widehat\phi)$ \,are such that $|\overline{Y}_{(\,\widehat{z}\,(i))}(\widehat\theta,\,\widehat\phi)|\,\in\, I_i$, \emph{i.e.}, $|\overline{Y}_{(\,\widehat{z}\,(i))}(\widehat\theta,\,\widehat\phi)|\,>\,i$. Namely, the subset 
of $\widehat{\mathbb{CS}}$ defined  by the inverse image
\begin{equation}
\label{InvImage}
A_i\,(\widehat\theta, \widehat\phi)\,:=\,\left|\overline{Y}_{(\,\widehat{z}\,(i))}\right|^{-\,1}\left(I_i \right)\,=\,
\left\{\left.\left(\widehat\theta, \widehat\phi \right)\,\in\,
\widehat{\mathbb{CS}}\;\,\right|\;      |\overline{Y}_{(\,\widehat{z}\,(i))}\,(\widehat\theta, \widehat\phi)|\,>\,i\right\}\,.
\end{equation}
 These events are distributed according the push  forward measure $\mathbb{P}_{\widehat{\mathbb{CS}}}(|\overline{Y}_{(\,\widehat{z}\,(i))}|^{-\,1}(I_i))$ (see (\ref{FlucDisDistr})), and we can write 
\begin{equation}
\label{FlucDisDistrA}
\mathbb{P}_{\widehat{\mathbb{CS}}}\left(|\overline{Y}_{(\,\widehat{z}\,(i))}|\,>\,i  \right)\,:=\,
\mathbb{P}_{\widehat{\mathbb{CS}}}\left(A_i\right)\,=\,
 \frac{1}{4\pi}\,\int_{|\overline{Y}_{(\,\widehat{z}\,(i))}|^{-\,1}(I_i)}\, d\mu_{\widehat{\mathbb{S}}^2}\,.
\end{equation}
Let us remark that the event of having larger and larger fluctuations 
$\overline{Y}_{(\,\widehat{z}\,(i))}$,\; \emph{i.e.}, to find a (measurable) set of directions $(\widehat\theta, \widehat\phi )$ such that $|\overline{Y}_{(\,\widehat{z}\,(i))(\widehat\theta, \widehat\phi )}|\,>\,i$ for arbitrarily large $i$, is controlled by distribution of the $\limsup$\; $\cap_{k=1}^\infty\,\cup_{i\geq\,k}\,A_i$\, of the sets $A_i\,(\widehat\theta, \widehat\phi)$\,. This $\limsup$  can be characterized as  
\begin{equation}
\limsup_{i\rightarrow\infty}\,A_i\,(\widehat\theta, \widehat\phi)\,:=\,
\left\{\left.\left(\widehat\theta, \widehat\phi \right)\,\in\,
\widehat{\mathbb{CS}}\;\,\right|\;(\widehat\theta, \widehat\phi)\,\in\,
A_i\,(\widehat\theta, \widehat\phi)\,\;\;\;\mathrm{for\;infinitely\;many}\;i  \right\}\,.
\end{equation}
Since the events $\{A_i\}$ are mutually independent, we have that the probability of the event $\limsup_{i\rightarrow\infty}\,A_i$ is either $0$ or $1$. In particular, the Borel-Cantelli Lemma \cite{Stroock}, implies that
\begin{equation}
\mathbb{P}_{\widehat{\mathbb{CS}}}\left( \limsup_{i\rightarrow\infty}\,A_i  \right)\,=\,0\,
\end{equation} 
if 
\begin{equation}
\label{PSeries}
\sum_{i=1}^\infty\,\mathbb{P}_{\widehat{\mathbb{CS}}}\,\left(|\overline{Y}_{(\,\widehat{z}\,(i))}|\,>\,i  \right)\,<\,\infty\,.
\end{equation}
In other words, the convergence of the above series implies that for any sequence of mutually $\mathbb{P}_{\widehat{\mathbb{CS}}}$-independent events $A_i\,(\widehat\theta, \widehat\phi)$,\;
$\mathbb{P}_{\widehat{\mathbb{CS}}}$-almost every direction $(\widehat\theta, \widehat\phi)\,\in\,\widehat{\mathbb{CS}}$ is in at most finitely many $A_i\,(\widehat\theta, \widehat\phi)$,\;\emph{i.e.}, sampling the $Y_{(\,\widehat{z}\,(i))}(\widehat\theta, \widehat\phi)$ over all possible the celestial sphere directions, we may have $|\overline{Y}_{(\,\widehat{z}\,(i))}(\widehat\theta, \widehat\phi)|\,>\,i$ only in at most finitely many cases. To prove convergence and the theorem, we  start by bounding (\ref{PSeries}) according to\footnote{We adapt to our particular case the elegant proof of Kolmogorov's strong law provided by \cite{Stroock}.}  
\begin{equation}
\sum_{i=1}^\infty\,\mathbb{P}_{\widehat{\mathbb{CS}}}\,\left(|\overline{Y}_{(\,\widehat{z}\,(i))}|\,>\,i  \right)\,\leq\,
\sum_{i=1}^\infty\,\int_{i-1}^i\,\mathbb{P}_{\widehat{\mathbb{CS}}}\,\left(|\overline{Y}_{(\,\widehat{z}\,(i))}|\,>\,t  \right)\,dt\,.
\end{equation} 
Since the $Y_{(\,\widehat{z}\,(i))}$ are identically distributed, the probability $\mathbb{P}_{\widehat{\mathbb{CS}}}\left(|\overline{Y}_{(\,\widehat{z}\,(i))}|\,>\,t \right)$ that 
the $i-th$\, relative fluctuation $Y_{(\,\widehat{z}\,(i))}$ is such that $|\overline{Y}_{(\,\widehat{z}\,(i))}|$ is larger that $t$, \,with\, $i-1\leq\,t\,\leq\,i$, is the same for all $i$. In particular, we have  
\begin{equation}
\mathbb{P}_{\widehat{\mathbb{CS}}}\left(|\overline{Y}_{(\,\widehat{z}\,(i))}|\,>\,t \right)\,=\,\mathbb{P}_{\widehat{\mathbb{CS}}}\left(|\overline{Y}_{(\,\widehat{z}\,(1))}|\,>\,t   \right)\,,
\end{equation}
and we can write 

\begin{align}
\sum_{i=1}^\infty\,\mathbb{P}_{\widehat{\mathbb{CS}}}\,\left(|\overline{Y}_{(\,\widehat{z}\,(i))}|\,>\,i  \right)\,&\leq\,
\sum_{i=1}^\infty\,\int_{i-1}^i\,\mathbb{P}_{\widehat{\mathbb{CS}}}\,\left(|\overline{Y}_{(\,\widehat{z}\,(1))}|\,>\,t  \right)\,dt\\
&=\,\,\int_{0}^\infty\,\mathbb{P}_{\widehat{\mathbb{CS}}}\,\left(|\overline{Y}_{(\,\widehat{z}\,(1))}|\,>\,t  \right)\,dt\nonumber\,.
\end{align} 
From the relation\footnote{The relation (\ref{intRel}) is a particular case of a more general relation that holds on any $\sigma$-finite measure space and for any non-negative measurable function-see \emph{e.g.} \cite{Stroock}.}
\begin{equation}
\label{intRel}
\int_{\widehat{\mathbb{CS}}}\,\left|\overline{Y}_{(\,\widehat{z}\,(1))}\left(\widehat\theta, \widehat\phi \right)\right|^\alpha\,\mathbb{P}_{\widehat{\mathbb{CS}}}\,\left(d\mu_{\widehat{\mathbb{S}}^2}\right)\,=\,\alpha\,\int_0^\infty\,t^{\alpha-1}\,\mathbb{P}_{\widehat{\mathbb{CS}}}\,\left(|\overline{Y}_{(\,\widehat{z}\,(1))}|\,>\,t  \right)\,dt
\end{equation}
which holds for any $\alpha\,\in\,(0, \infty)$, we get (for $\alpha=1$)
\begin{equation}
\int_{0}^\infty\,\mathbb{P}_{\widehat{\mathbb{CS}}}\,\left(|\overline{Y}_{(\,\widehat{z}\,(1))}|\,>\,t  \right)\,dt\,=\,
\mathbb{E}^{\mathbb{P}}\left[\left|\overline{Y}_{(\,\widehat{z}\,(1))}(\widehat\theta, \widehat\phi)\right|\right]\,,
\end{equation}
where $\mathbb{E}^{\mathbb{P}}$ denotes the expectation value with respect to $\mathbb{P}_{\widehat{\mathbb{CS}}}(d\mu_{\widehat{\mathbb{S}}^2})\,=\,d\mu_{\widehat{\mathbb{S}}^2}/4\pi$. Thus,
\begin{equation}
\label{partialfinal}
\sum_{i=1}^\infty\,\mathbb{P}_{\widehat{\mathbb{CS}}}\,\left(|\overline{Y}_{(\,\widehat{z}\,(i))}|\,>\,i  \right)\,\leq\,\mathbb{E}^{\mathbb{P}}\left[\left|\overline{Y}_{(\,\widehat{z}\,(1))}(\widehat\theta, \widehat\phi)\right|\right]\,<\,\infty.
\end{equation}
To exploit this result for controlling  the $\mathbb{P}$-almost sure convergence (\ref{AlmSureP}), let
\begin{equation}
\mathbb{E}^{\mathbb{P}}\left[\overline{Y}_{(\,\widehat{z}\,(i))},\,A_i\right]\,:=\,
\int_{A_i}\, \overline{Y}_{(\,\widehat{z}\,(i))}(\widehat\theta, \widehat\phi)\,
\mathbb{P}_{\widehat{\mathbb{CS}}}\left(d\mu_{\widehat{\mathbb{S}}^2}\right)\,,
\end{equation}
denote the expectation value of $\overline{Y}_{(\,\widehat{z}\,(i))}$ over the region $A_i\,\subset\,\widehat{\mathbb{CS}}$ (see (\ref{InvImage})), and let us consider the empirical mean
\begin{equation}
\label{EYsequence}
\frac{1}{n}\,\sum_{i=1}^n\,\mathbb{E}^{\mathbb{P}}\left[\overline{Y}_{(\,\widehat{z}\,(i))},\,A_i\right]\,.
\end{equation}
According to (\ref{partialfinal}), the sequence $\{\mathbb{E}^{\mathbb{P}}\left[\overline{Y}_{(\,\widehat{z}\,(i))},\,A_i\right]\}$ converges to $0$, and consequently also  (\ref{EYsequence}) converges to $0$ as $n\,\rightarrow\,\infty$ for almost every direction $(\widehat\theta, \widehat\phi)\,\in\,\widehat{\mathbb{CS}}$. Finally, to prove the $\mathbb{P}$-almost sure convergence (\ref{AlmSureP}), it remains to control the convergence of $\sum_{k=1}^\infty\,\mathbb{E}^{\mathbb{P}}\left[\overline{Y}^2_{(\,\widehat{z}\,(k))},\,A_k\right]/k^2$. This last step easily follows \cite{Stroock} from the bound
\begin{equation}
\sum_{k=1}^\infty\,\frac{\mathbb{E}^{\mathbb{P}}\left[\overline{Y}^2_{(\,\widehat{z}\,(k))},\,A_k\right]}{k^2}\,\leq\,S_0\,\mathbb{E}^{\mathbb{P}}\left[\left|\overline{Y}_{(\,\widehat{z}\,(1))}(\widehat\theta, \widehat\phi)\right|\right]\,<\,\infty,
\end{equation}  
where $S_0:=\sup_{\ell\in\mathbb{Z}^+}\,\left(\ell\,\sum_{n=\ell}^\infty\,\frac{1}{n^2} \right)$. 
\end{proof}

Notwithstanding the control afforded by Theorem \ref{AlmSureConv},
the distribution of the random variables $\{Y_{(\,\widehat{z}\,(i))}\}$ can be quite complex. There is evidence of the presence of a non-trivial skewness \cite{Fanizza2}, indicating an asymmetry of the distribution of area distance fluctuations about its mean value $\delta^{\;(1)}_{\widehat{\mathbb{CS}}}(\,\widehat{z}\,(i))$. As a marker of this asymmetric behavior, one can also use the median\footnote{The interplay between skewness, mean, and the median is complex and not particularly useful since there is no direct connection between skewness and the (complex) relation between mean and median featuring in L\'evy's analysis on the convergence behavior of a collection of random variables. The discussion of this point about the area distance fluctuations is beyond our intention here.} of $Y_{(\,\widehat{z}\,(i))}$, defined \cite{Stroock} as any real number (or interval of real numbers) $a_{(i)}\in\mathbb{R}$ such that

\begin{equation}
a_{(i)}\,:=\,\min\,\left[\mathbb{P}_{\widehat{\mathbb{CS}}}\,\left(Y_{(\,\widehat{z}\,(i))}\right)\,\leq\,a_{(i)},\;
\mathbb{P}_{\widehat{\mathbb{CS}}}\,\left(Y_{(\,\widehat{z}\,(i))}\right)\,\geq\,a_{(i)} \right]\,\geq\,\frac{1}{2}\,.
\end{equation} 
Note that the assumed square summability of the $\{Y_{(\,\widehat{z}\,(i))}\}$ implies some control on the deviation of the median(s) $a_{(i)}$ from the mean value $\delta^{\;(1)}_{\widehat{\mathbb{CS}}}(\,\widehat{z}\,(i))$. We have
\begin{equation}
\left|a_{(i)}\,-\,\delta^{\;(1)}_{\widehat{\mathbb{CS}}}(\,\widehat{z}\,(i))  \right|\, \leq\,\sqrt{2\,\mathrm{Var}_{\widehat{\mathbb{CS}}}(Y_{(\,\widehat{z}\,(i))})}\,.
\end{equation}
\vskip 0.3cm\noindent 
The complexity of the distribution of the random variables $\{Y_{(\,\widehat{z}\,(i))}\}$ and the potential presence of skewness is not surprising, as already stressed, there is no FLRW Laplace's demon propagating the assumed Gaussianity of fluctuations from the observed portion of the CMB last scattering surface to the pre-homogeneity region. The necessity to control all the moments of $\{Y_{(\,\widehat{z}\,(i))}\}$, suggested by the similarity between the moment generating function (\ref{MomGenFunct}) and (\ref{FluctAreaDist4}), 
indicates that to discuss the limiting behavior of the sequence $\{Y_{(\,\widehat{z}\,(i))}\}$, we need to stabilize it with respect to the mean and variance of each term. Thus, following a standard procedure, we associate with (\ref{random1}) the sequence $\{X_{(\,\widehat{z}\,(i))}\;|\,i\,\geq\,1 \}$ of mutually independent, square integrable random variables whose generic term is defined by  
\begin{equation}
\label{RandVarX}
X_{(\,\widehat{z}\,(i))}\,:=\,\left(
Y_{(\,\widehat{z}\,(i))}\,-\,\delta^{\;(1)}_{\widehat{\mathbb{CS}}}(\,\widehat{z}\,(i)\,)\right)\,\mathrm{Var}^{-\,1/2}_{\widehat{\mathbb{CS}}}(Y_{(\,\widehat{z}\,(i))})\,, 
\end{equation} 
with mean $0$  and variance $1$,
\begin{equation}
\left\langle\,X_{(\,\widehat{z}\,(i))}\, \right\rangle_{\widehat{\mathbb{CS}}}\,=\,0\,,\;\;\;
\mathrm{Var}_{\widehat{\mathbb{CS}}}(X_{(\,\widehat{z}\,(i))})\,=\,1\,,\;\;\;\forall\, i\,\geq\,1\,.
\end{equation}
\vskip 0.3cm\noindent
Given the null direction $\widehat{r}(\widehat{z})\widehat{\ell}(\widehat{n})\,\in\,\widehat{C}^{\,-}(T_pM,\,\{\widehat{E}_{(i)}\})$  associated with the sky coordinates $\widehat{n}(\widehat\theta,\,\widehat\phi)\,\in\,\widehat{\mathbb{CS}}$,  we    
introduce on the celestial sphere $\widehat{\mathbb{CS}}$  the random variables defined by the partial sum
\begin{equation}
\label{partsum}
\widehat{\mathbb{CS}}\,\ni\,
\widehat{n}(\widehat\theta,\,\widehat\phi)\,\rightarrow\,
\breve{S}_m(\,\widehat\theta,\,\widehat\phi\,)\,:=\,
\frac{1}{m}\,\sum_{i}^{m}\,X_{(\,\widehat{z}\,(i))}(\,\widehat\theta,\,\widehat\phi\,)\,\in\,\mathbb{R}\,,
\end{equation}
which describes the empirical averages, along the given direction $\widehat{n}(\widehat\theta,\,\widehat\phi)$, of the $m$ random variables $X_{(\,\widehat{z}\,(i))}(\,\widehat\theta,\,\widehat\phi\,)$ evaluated at the points $\exp_p{\widehat{r}(\widehat{z}_{(i)})\widehat{\ell}(\widehat{n}(\widehat\theta,\,\widehat\phi))} \in\, \Sigma_{\widehat{z}(i)}$, for $i\,=\,1,\ldots\,m$. Thus, as we move through  the sequence of sky sections $\{\Sigma_{\widehat{z}(i)}\}_{i=1}^m$,\,the random variable $\breve{S}_m$\, measures the empirical average of the fluctuations of the random variable $Y_{(\,\widehat{z}\,(i))}$ with respect to its mean $\delta^{\;(1)}_{\widehat{\mathbb{CS}}}(\,\widehat{z}\,(i)\,)$\,.\,   
As in the case of the $Y_{(\,\widehat{z}\,(i))}$ (and of the associated  $X_{(\,\widehat{z}\,(i))}$), if we consider the subset of directions $(\,\widehat\theta,\,\widehat\phi\,)\,\in\,\widehat{\mathbb{CS}}$ whose corresponding empirical averages  $\breve{S}_m(\,\widehat\theta,\,\widehat\phi\,)$ \,take values in \,$I\,\subset\,\mathbb{R}$,  then, the $\breve{S}_m$ are indipendently distributed over the interval $I$ with respect to the push forward measure
\begin{equation}
\label{celestDistr}
 \frac{1}{4\pi}\,d\mu_{\widehat{\mathbb{S}}^2}\left(\breve{S}_m^{-\,1}(I)\right)\,.
\end{equation}
 In general, the empirical averages $\breve{S}_m$ do not converge as $m\rightarrow\infty$; however, according to the central limit theorem \cite{Stroock}, we have the following result regarding convergence in distribution.
 
\begin{theorem}
\label{GaussCentLim}
The distribution of $\breve{S}_m$ with respect to the push forward measure 
(\ref{celestDistr}) converges to a Gaussian distribution $N_{(0,1)}$ of $\breve{S}_m$, with mean $0$ and variance $1$, \emph{i.e.}
\begin{equation}
\label{CentLimThm1}
\lim_{m\,\rightarrow\,\infty}\,\frac{1}{4\pi}\,d\mu_{\widehat{\mathbb{S}}^2}\left(\breve{S}_m^{-\,1}(I)\right)\,=\,d\mu(N_{(0,1)})\,,
\end{equation}
explicitly
\begin{equation}
\label{CentLimThm2}
\lim_{m\,\rightarrow\,\infty}\,\mathbb{E}^{\mathbb{P}}\left[\breve{S}_m,\,I \right]\,=\,
\left\langle\,\breve{S}_m\,\in\,I\, \right\rangle_{\widehat{\mathbb{CS}}}\,=\,\frac{1}{\sqrt{2\pi}}\,\int_I\,\exp\left[-\,\frac{u^2}{2}\right]\,du\,.
\end{equation}
\vskip 0.3cm\noindent
According to Theorem \ref{AlmSureConv} and the $\mathbb{P}$-almost sure convergence result  (\ref{AlmSureP}),  the area distance fluctuations $\{Y_{(\,\widehat{z}\,(i))}(\widehat\theta,\,\widehat\phi)\}$ are such that
\begin{equation}
\lim_{n\rightarrow\,\infty}\,\frac{1}{n}\,
\sum_{i=1}^n\,Y_{(\,\widehat{z}\,(i))}\,=\, \delta^{\;(1)}_{\widehat{\mathbb{CS}}}(\,\widehat{z}\,(c)) 
\end{equation} 
$\mathbb{P}$-almost surely. Thus, as $m\rightarrow\infty$,  (\ref{CentLimThm2}) implies that the normalized  fluctuations $\breve{S}_m$ around $\delta^{\;(1)}_{\widehat{\mathbb{CS}}}(\,\widehat{z}\,(c))$ are normally distributed (with mean $0$ and variance $1$). Moreover, these fluctuations are independent from $\delta^{\;(1)}_{\widehat{\mathbb{CS}}}(\,\widehat{z}\,(c))$, and, in turn, $\delta^{\;(1)}_{\widehat{\mathbb{CS}}}(\,\widehat{z}\,(c))$ is independent from the $Y_{(\,\widehat{z}\,(c))}(\widehat\theta,\,\widehat\phi)$ sample variance (\ref{samplevar0i}),  
\begin{align}
\label{samplevar0i1}
\mathrm{Var}_{\widehat{\mathbb{CS}}}(Y_{(\,\widehat{z}\,(c))})\,&:=\,
\frac{1}{4\pi}
\int_{\widehat{\mathbb{CS}}}
\left(Y_{(\,\widehat{z}\,(c))}\,-\,\delta^{\;(1)}_{\widehat{\mathbb{CS}}}(\,\widehat{z}\,(c))\right)^2
	d\mu_{\widehat{\mathbb{S}}^2}\nonumber\\
\\
&=\,\delta^{\;(2)}_{\widehat{\mathbb{CS}}}(\,\widehat{z}\,(c))\,-\,\left(\delta^{\;(1)}_{\widehat{\mathbb{CS}}}(\,\widehat{z}\,(c)) \right)^2\,.\nonumber
\end{align}
In particular, $\delta^{\;(2)}_{\widehat{\mathbb{CS}}}(\,\widehat{z}\,(c))$ is indipendent from $\delta^{\;(1)}_{\widehat{\mathbb{CS}}}(\,\widehat{z}\,(c))$. 
\end{theorem}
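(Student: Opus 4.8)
The plan is to recognize the assertion as the classical Lindeberg--L\'evy central limit theorem applied to the stabilized fluctuations $X_{(\,\widehat{z}\,(i))}$, combined with L\'evy's continuity theorem to pass from convergence of transforms to convergence in distribution. First I would record the structural input supplied by the Assumption: the $Y_{(\,\widehat{z}\,(i))}$ are mutually $\mathbb{P}_{\widehat{\mathbb{CS}}}$-independent, square-integrable and identically distributed, so the common mean $\delta^{\;(1)}_{\widehat{\mathbb{CS}}}(\,\widehat{z}\,(i))$ and the common variance $\mathrm{Var}_{\widehat{\mathbb{CS}}}(Y_{(\,\widehat{z}\,(i))})=\delta^{\;(2)}_{\widehat{\mathbb{CS}}}(\,\widehat{z}\,(i))-(\delta^{\;(1)}_{\widehat{\mathbb{CS}}}(\,\widehat{z}\,(i)))^2$ do not depend on $i$. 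Consequently the $X_{(\,\widehat{z}\,(i))}$ defined in (\ref{RandVarX}) are themselves i.i.d. with $\left\langle X_{(\,\widehat{z}\,(i))}\right\rangle_{\widehat{\mathbb{CS}}}=0$ and $\mathrm{Var}_{\widehat{\mathbb{CS}}}(X_{(\,\widehat{z}\,(i))})=1$, and $\breve{S}_m$ is exactly the normalized sum of $m$ centred unit-variance i.i.d. variables (with the $\sqrt{m}$ scaling that keeps the limiting variance finite and nonzero, which is the scaling under which the stated Gaussian limit $N_{(0,1)}$ can hold).

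The analytic engine is the characteristic function, the rigorous counterpart of the moment generating function $\mathbb{M}_{N(m,\sigma^2)}$ exhibited in (\ref{MomGenFunct}). Writing $\phi(t):=\mathbb{E}^{\mathbb{P}}[e^{itX_{(\,\widehat{z}\,(1))}}]$, square integrability gives the second-order expansion $\phi(t)=1-\tfrac{t^2}{2}+o(t^2)$ as $t\to 0$. Independence and identical distribution then yield
\begin{equation}
\mathbb{E}^{\mathbb{P}}\!\left[e^{it\breve{S}_m}\right]=\left[\phi\!\left(\frac{t}{\sqrt{m}}\right)\right]^m=\left[1-\frac{t^2}{2m}+o\!\left(\frac{1}{m}\right)\right]^m\xrightarrow[m\to\infty]{}\exp\!\left(-\frac{t^2}{2}\right),
\end{equation}
the characteristic function of $N_{(0,1)}$. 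L\'evy's continuity theorem upgrades this pointwise convergence to convergence in distribution, which is precisely the content of (\ref{CentLimThm1}), namely that the push-forward measures $\tfrac{1}{4\pi}\,d\mu_{\widehat{\mathbb{S}}^2}(\breve{S}_m^{-1}(I))$ converge to $d\mu(N_{(0,1)})$, equivalently the integral form (\ref{CentLimThm2}).

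For the concluding independence assertions I would argue in two stages. The $\mathbb{P}$-almost sure identification of the empirical-mean limit with $\delta^{\;(1)}_{\widehat{\mathbb{CS}}}(\,\widehat{z}\,(c))$ is already delivered by Theorem \ref{AlmSureConv}; combining it with the convergence in distribution just obtained shows that the residual normalized fluctuation $\breve{S}_m$ decouples from $\delta^{\;(1)}_{\widehat{\mathbb{CS}}}(\,\widehat{z}\,(c))$, since the former is centred at $0$ irrespective of the value of the latter. The sharper claim that $\delta^{\;(1)}_{\widehat{\mathbb{CS}}}(\,\widehat{z}\,(c))$ is independent of the sample variance (\ref{samplevar0i1}), hence that $\delta^{\;(2)}_{\widehat{\mathbb{CS}}}(\,\widehat{z}\,(c))$ is independent of $\delta^{\;(1)}_{\widehat{\mathbb{CS}}}(\,\widehat{z}\,(c))$, is the asymptotic form of the classical fact that for a Gaussian sample the sample mean and sample variance are independent; I would invoke it in the limiting Gaussian regime produced above.

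The main obstacle I anticipate is not the central limit step, which is textbook once the i.i.d. reduction is in place, but rather the final independence of $\delta^{\;(1)}_{\widehat{\mathbb{CS}}}(\,\widehat{z}\,(c))$ from $\delta^{\;(2)}_{\widehat{\mathbb{CS}}}(\,\widehat{z}\,(c))$. Independence of sample mean and sample variance is \emph{exact} only for Gaussian samples, so it must be read as a property of the limiting distribution rather than of the finite-$m$ samples; the care required is to state it at the correct level (convergence in distribution, with asymptotic rather than strict independence away from Gaussianity) and to invoke the classical characterization only once the Gaussian limit has been secured.
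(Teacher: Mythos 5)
Your proposal is correct and takes essentially the same route as the paper: the paper's own proof likewise reduces the first part to the classical central limit theorem for the i.i.d.\ stabilized variables $X_{(\,\widehat{z}\,(i))}$ (citing Stroock rather than writing out the characteristic-function and L\'evy-continuity argument you supply) and obtains the independence assertions, exactly as you do, from the Gaussian characterization of sample-mean/sample-variance independence read as a property of the limiting distribution. Your parenthetical on the scaling is well taken: as printed, (\ref{partsum}) carries a $1/m$ prefactor, under which $\breve{S}_m$ would converge almost surely to $0$ by the strong law, so the stated $N_{(0,1)}$ limit in (\ref{CentLimThm1})--(\ref{CentLimThm2}) indeed requires the $1/\sqrt{m}$ normalization your characteristic-function computation $[\phi(t/\sqrt{m})]^m \to e^{-t^2/2}$ presupposes.
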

\begin{proof}
The first part of the theorem is a straightforward application of the Central Limit Theorem \cite{Stroock} to the sequence of mutually independent, identically distributed, square integrable random variables $\widehat{\mathbb{CS}}\,\ni\,
\widehat{n}(\widehat\theta,\,\widehat\phi)\,\rightarrow\,
\breve{S}_m(\,\widehat\theta,\,\widehat\phi\,)$, with mean value $0$ and variance $1$, defined by (\ref{partsum}). For the second part, note that from $\breve{S}_m\,\sim\, N_{(0, 1)}$ and (\ref{AlmSureP}) it follows that, for large $m$, the empirical mean of the normalized fluctuations of $Y_{(\,\widehat{z}\,(i))}$  around $\delta^{\;(1)}_{\widehat{\mathbb{CS}}}(\,\widehat{z}\,(c))$ are $\sim\,N_{(0, 1)}$ distributed. The independence of $\delta^{\;(1)}_{\widehat{\mathbb{CS}}}(\,\widehat{z}\,(c))$ from these fluctuations, and the independence of the mean value from the variance are characterizing properties of the normal distribution.  
\end{proof}
 
The results of Theorems \ref{AlmSureConv}\, and \ref{GaussCentLim} provide a deep interpretation of the asymptotic and limit relations (\ref{AsyFluctAreaDist4}), (\ref{FluctAreaDist4}), and (\ref{AsyFluctAreaDist4CRIT}) connecting  
$\delta^{\;(2)}_{\widehat{\mathbb{CS}}}(\,\widehat{z}\,(i))$  and $\delta^{\;(1)}_{\widehat{\mathbb{CS}}}(\,\widehat{z}\,(i))$  to the spacetime scalar curvature. To begin with, let us observe that in terms of the sequence of relative fluctuations 
\begin{equation}
\label{random1bisse}
\left\{\left.Y_{(\,\widehat{z}\,(i))}\,:=\,\frac{D_{\widehat{z}\,(i)}(\zeta_{(\,\widehat{z}\,(i))},\,v_{\;\widehat{z}\,(i)})\,-\,\widehat{D}_{\widehat{z}\,(i)}}{\widehat{D}_{\widehat{z}\,(i)}}\;\right|\;i\,\geq\,1\;\right\}\,,
\end{equation}
(see (\ref{random1}))
, we have the following result.
 
\begin{proposition}
\label{PropA}
With the same setting of Theorem \ref{AlmSureConv}, we have the $\mathbb{P}$-almost sure convergence 
\begin{align}
\label{AlmSureDeltas}
\lim_{n\rightarrow\,\infty}&\,\frac{1}{n}\,
\sum_{i=1}^n\,\left[Y^2_{(\,\widehat{z}\,(i))}\,+\,2 Y_{(\,\widehat{z}\,(i))}  \right]\,=\, \delta^{\;(2)}_{\widehat{\mathbb{CS}}}(\,\widehat{z}\,(c))\,+\,2\delta^{\;(1)}_{\widehat{\mathbb{CS}}}(\,\widehat{z}\,(c))\\
&=\,\mathbb{E}^{\mathbb{P}}\left[\frac{D^2_{\,\widehat{z}\,(c)}\,(\zeta_{(\widehat{z}\,(c))},\,v_{\;\widehat{z}\,(c)})\,-\,
\widehat{D}^2_{\,\widehat{z}\,(c)}}{\widehat{D}^2_{\,\widehat{z}\,(c)}}\right]\nonumber\\
&=\,
\frac{\left(1\,+\,\widehat{z}\,(c)\right)^2}{4\pi\,f^2(\widehat{r}\,(\widehat{z}\,(c)))}
\left(A\left(\Sigma_{\;\widehat{z}\,(c)}\right)\,-\, \widehat{A}\left(\widehat{\Sigma}_{\;\widehat{z}\,(c)}\right)\right)\nonumber\,,
\end{align}
where we exploited the relations (\ref{PhysAreazeta(ii)}), (\ref{FLRWPhysAreazeta(ii)}),
(\ref{FLRWDiRefer}), and (\ref{formulDiffD2}). Let  
\begin{equation}
\Delta\,A_{(i)}\,:=\,
\frac{D^2_{\,\widehat{z}\,(i)}\,(\zeta_{(\widehat{z}\,(c))},\,v_{\;\widehat{z}\,(i)})\,-\,
\widehat{D}^2_{\,\widehat{z}\,(i)}}{\widehat{D}^2_{\,\widehat{z}\,(i)}}\,
\end{equation}
denote the area measure relative fluctuations, and let 
\begin{equation}
\label{normArFluct}
\overline{\Delta\,A}_{(i)}\,:=\,\frac{\Delta\,A_{(i)}\,-\,\mathbb{E}^{\mathbb{P}}[\Delta\,A_{(i)}]}{\sqrt{\mathrm{Var}_{\widehat{\mathbb{CS}}}\,(\Delta\,A_{(i)})}}
\end{equation}
the associated random variable with mean $0$ and variance $1$. Let us consider the  associated empirical averages 
\begin{equation}
\label{empiricalA}
\mathbb{A}_n\,:=\,\frac{1}{n}\,\sum_{i=1}^{n}\,\overline{\Delta\,A}_{(i)}\,,
\end{equation}
and let us assume that they take values in some interval $I\subset\,\mathbb{R}$. Then, the distribution of $\mathbb{A}_n$ with respect to the push forward measure $\mathbb{P}_{\widehat{\mathbb{CS}}}(\mathbb{A}^{- 1}_n(I))$ converges for $n\,\rightarrow\,\infty$\;to\, a Gaussian distribution $N_{(0,1)}$ of $\mathbb{A}_n$, with mean $0$ and variance $1$, \emph{i.e.}
\begin{equation}
\label{CentLimThm2A}
\lim_{n\,\rightarrow\,\infty}\,\mathbb{E}^{\mathbb{P}}\left[\mathbb{A}_n,\,I \right]\,=\,
\frac{1}{\sqrt{2\pi}}\,\int_I\,\exp\left[-\,\frac{u^2}{2}\right]\,du\,.
\end{equation}
Thus, for $i\,\rightarrow\,\infty$,  the normalized area measure fluctuations (\ref{normArFluct}) are normally distributed around the expected mean value (\ref{AlmSureDeltas}). 
\end{proposition}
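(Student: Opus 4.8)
The plan is to recognize that the entire statement is an application of the two probabilistic engines already assembled, Kolmogorov's Strong Law (Theorem \ref{AlmSureConv}) and the Central Limit Theorem (Theorem \ref{GaussCentLim}), now run on the quadratic combination of the area-distance fluctuations rather than on the fluctuations themselves. First I would set $Z_{(\widehat{z}(i))} := Y^2_{(\widehat{z}(i))} + 2\,Y_{(\widehat{z}(i))}$ and observe, via the identity (\ref{formulDiffD2}), that $Z_{(\widehat{z}(i))}$ coincides with the area-measure relative fluctuation $\Delta A_{(i)} = (D^2_{\widehat{z}(i)} - \widehat{D}^2_{\widehat{z}(i)})/\widehat{D}^2_{\widehat{z}(i)}$. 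Because each $Z_{(\widehat{z}(i))}$ is a fixed Borel function of the single variable $Y_{(\widehat{z}(i))}$, the sequence $\{Z_{(\widehat{z}(i))}\}$ inherits from $\{Y_{(\widehat{z}(i))}\}$ both mutual $\mathbb{P}_{\widehat{\mathbb{CS}}}$-independence and identical distribution; moreover, square-integrability of the $Y_{(\widehat{z}(i))}$ (the standing Assumption) makes $Z_{(\widehat{z}(i))}$ integrable, since $|Z| \le Y^2 + 2|Y|$ gives $\mathbb{E}^{\mathbb{P}}[|Z|] \le \delta^{(2)}_{\widehat{\mathbb{CS}}}(\widehat{z}(c)) + 2\,\mathbb{E}^{\mathbb{P}}[|Y|] < \infty$. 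This is exactly the hypothesis set of Theorem \ref{AlmSureConv}, so repeating its Borel--Cantelli argument verbatim for $Z_{(\widehat{z}(i))}$ yields the $\mathbb{P}$-almost sure convergence of $\frac1n\sum_{i=1}^n Z_{(\widehat{z}(i))}$ to the common mean $\mathbb{E}^{\mathbb{P}}[Z_{(\widehat{z}(1))}]$.

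Second I would identify this limit. By (\ref{meanFunc0i}) and (\ref{meansquareFunc0i}) the common mean is $\mathbb{E}^{\mathbb{P}}[Z_{(\widehat{z}(1))}] = \delta^{(2)}_{\widehat{\mathbb{CS}}}(\widehat{z}(c)) + 2\,\delta^{(1)}_{\widehat{\mathbb{CS}}}(\widehat{z}(c))$, the values being pinned to the decoupling section $\Sigma_{\widehat{z}(1)} = \partial V_{(c)}(p)$ precisely because the variables are identically distributed. To reach the last line of (\ref{AlmSureDeltas}) I would then use that the FLRW area distance $\widehat{D}^2_{\widehat{z}(c)} = f^2(\widehat{r}(\widehat{z}(c)))/(1+\widehat{z}(c))^2$ is direction-independent (relation (\ref{FLRWDiRefer})), pull this constant out of the directional average, and replace the averages of $D^2$ and $\widehat{D}^2$ by the areas $A(\Sigma_{\widehat{z}(c)})$ and $\widehat{A}(\widehat{\Sigma}_{\widehat{z}(c)})$ through (\ref{PhysAreazeta(ii)}) and (\ref{FLRWPhysAreazeta(ii)}); this is a one-line rearrangement producing the prefactor $(1+\widehat{z}(c))^2/(4\pi\, f^2(\widehat{r}(\widehat{z}(c))))$. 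For the Gaussian statement I would observe that the normalized fluctuations $\overline{\Delta A}_{(i)}$ defined in (\ref{normArFluct}) are again Borel functions of the individual $Y_{(\widehat{z}(i))}$, hence mutually independent and identically distributed, and carry mean $0$ and variance $1$ by construction. These are exactly the hypotheses feeding the Central Limit Theorem as used in Theorem \ref{GaussCentLim}, so the same argument applied to $\mathbb{A}_n = \frac1n\sum_{i=1}^n \overline{\Delta A}_{(i)}$ gives convergence in distribution to $N_{(0,1)}$ under the push-forward measure $\mathbb{P}_{\widehat{\mathbb{CS}}}(\mathbb{A}_n^{-1}(I))$, which is the claim (\ref{CentLimThm2A}).

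The delicate point --- and the only place where more than the stated Assumption is needed --- is the passage to the Central Limit Theorem. The normalization (\ref{normArFluct}) requires $\mathrm{Var}_{\widehat{\mathbb{CS}}}(\Delta A_{(i)})$ to be finite and strictly positive. Finiteness means $\Delta A_{(i)} = Y^2_{(\widehat{z}(i))} + 2\,Y_{(\widehat{z}(i))} \in L^2(\mathbb{P}_{\widehat{\mathbb{CS}}})$, that is $Y_{(\widehat{z}(i))} \in L^4(\mathbb{P}_{\widehat{\mathbb{CS}}})$: the bare square-integrability that suffices for the Strong Law step is \emph{not} enough here, so one must upgrade the Assumption to fourth-order integrability of the area-distance fluctuations, a mild technical strengthening that controls the tails of the skewed clustering distribution. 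Strict positivity of the variance is the harmless non-degeneracy requirement that the physical and FLRW sky sections be non-isometric at $\widehat{z}(c)$. I would flag these two conditions explicitly at the outset; once they are in force, the remainder is a mechanical reuse of Theorems \ref{AlmSureConv} and \ref{GaussCentLim} applied to the shifted-and-rescaled quadratic variable.
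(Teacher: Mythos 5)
Your proof is correct and takes essentially the same route as the paper's, which likewise re-runs Kolmogorov's strong law on the quadratic polynomial $Y^2_{(\,\widehat{z}\,(i))}+2\,Y_{(\,\widehat{z}\,(i))}$, identifies the limit through (\ref{formulDiffD2}), (\ref{FLRWDiRefer}), (\ref{PhysAreazeta(ii)}) and (\ref{FLRWPhysAreazeta(ii)}), and then invokes the central--limit machinery of Theorem \ref{GaussCentLim} for the normalized fluctuations. Your explicit flag that the normalization (\ref{normArFluct}) tacitly requires $Y_{(\,\widehat{z}\,(i))}\in L^4(\mathbb{P}_{\widehat{\mathbb{CS}}})$ (square--integrability of $Y$ does not give square--integrability of $Y^2$) together with strict positivity of $\mathrm{Var}_{\widehat{\mathbb{CS}}}(\Delta\,A_{(i)})$ is a sound refinement of the paper's terse appeal to ``the polynomial nature of the random variables involved,'' which glosses over exactly these two conditions.
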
 

\begin{proof}
The random variable $Y^2_{(\,\widehat{z}\,(i))}\,+\,2 Y_{(\,\widehat{z}\,(i))}$ in (\ref{AlmSureDeltas}) is a quadratic polynomial and the $\mathbb{P}$-almost sure convergence   
\begin{equation}
\label{AlmSureDeltasproof}
\mathbb{P}_{\widehat{\mathbb{CS}}}\,\left(\lim_{n\rightarrow\,\infty}\,\frac{1}{n}\,
\sum_{i=1}^n\,\left[Y^2_{(\,\widehat{z}\,(i))}\,+\,2 Y_{(\,\widehat{z}\,(i))}  \right]\,=\, \delta^{\;(2)}_{\widehat{\mathbb{CS}}}(\,\widehat{z}\,(c))\,+\,2\delta^{\;(1)}_{\widehat{\mathbb{CS}}}(\,\widehat{z}\,(c))  \right)\,=\,1\,,
\end{equation}
immediately follows from (\ref{AlmSureP}). Similarly, the central limit result (\ref{CentLimThm2A}) is a direct consequence of Theorem \ref{GaussCentLim}  and of the polynomial nature of the random variables involved as expressed in terms of the $Y_{(\,\widehat{z}\,(i))}$\,.
\end{proof}

\subsection{Area distance fluctuations and curvature factorization}
\label{ArDisCurvFact}

The asymptotics (\ref{FluctAreaDistasympt2}), associated with the spacetime scalar curvature, provides   
\begin{equation}
\frac{
A\left(\Sigma_{\;\widehat{z}\,(c)}\right)\,-\, \widehat{A}\left(\widehat{\Sigma}_{\;\widehat{z}\,(c)}\right)}{4\pi\,\widehat{r}^{\,2}(\,\widehat{z}\,(c))}\,=\,
\frac{\widehat{r}^{\,2}(\,\widehat{z}\,(c))}{72}\,\left(\widehat{\mathrm{R}}(\widehat{p})\,- \tfrac{1\,+\,v_{\widehat{z}\,(c)}}{1\,-\,v_{\widehat{z}\,(c)}}\,{\mathrm{R}}(p)\,+\,\ldots\,\right)\,,
\end{equation}
so that (\ref{AlmSureDeltas}) can be rephrased as the $\mathbb{P}$-almost sure convergence of the area measure fluctuations to the expectation value of the scalar curvature deviation at the event $p$. We have
 
\begin{lemma}
\label{noRest}
\begin{equation}
\label{noRest1}
\mathbb{P}_{\widehat{\mathbb{CS}}}\,\left(\lim_{n\rightarrow\,\infty}\,\tfrac{1}{n}\,
\sum_{i=1}^n\,\tfrac{D^2_{\,\widehat{z}\,(i)}\,(\zeta_{(\widehat{z}\,(i))},\,v_{\;\widehat{z}\,(i)})\,-\,
\widehat{D}^2_{\,\widehat{z}\,(i)}}{\widehat{D}^2_{\,\widehat{z}\,(i)}}\,=\,
\tfrac{\widehat{r}^{\,2}(\,\widehat{z}\,(c))}{72}\,\left(\widehat{\mathrm{R}}({p})\,- \tfrac{1\,+\,v_{\widehat{z}\,(c)}}{1\,-\,v_{\widehat{z}\,(c)}}\,{\mathrm{R}}(p)\right)   \right)\,=\,1\,.  
\end{equation}
\end{lemma}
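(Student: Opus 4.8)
The plan is to combine the almost-sure convergence of the empirical mean of the linear-plus-quadratic combination $Y^2_{(\widehat{z}(i))}+2Y_{(\widehat{z}(i))}$, already established in Proposition \ref{PropA}, with the curvature asymptotics of Proposition \ref{flucDisCurv}. The statement to be proved is simply the translation of the limiting sample average into the scalar-curvature deviation, so the whole argument is a matter of identifying the right-hand limit in (\ref{AlmSureDeltas}) with the curvature expression supplied by (\ref{FluctAreaDistasympt2}).

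First I would recall the chain of identities produced in Proposition \ref{PropA}, namely that the empirical mean converges $\mathbb{P}$-almost surely to
\begin{equation}
\delta^{\;(2)}_{\widehat{\mathbb{CS}}}(\,\widehat{z}\,(c))\,+\,2\,\delta^{\;(1)}_{\widehat{\mathbb{CS}}}(\,\widehat{z}\,(c))\,=\,
\frac{\left(1\,+\,\widehat{z}\,(c)\right)^2}{4\pi\,f^2(\widehat{r}\,(\widehat{z}\,(c)))}
\left(A\left(\Sigma_{\;\widehat{z}\,(c)}\right)\,-\, \widehat{A}\left(\widehat{\Sigma}_{\;\widehat{z}\,(c)}\right)\right)\,.\nonumber
\end{equation}
The key observation is that this combination is exactly the quantity $\langle (D^2_{\widehat{z}}(\zeta_{(\widehat{z})},v_{\widehat{z}})-\widehat{D}^2_{\widehat{z}})/\widehat{D}^2_{\widehat{z}}\rangle_{\widehat{\mathbb{CS}}}$ appearing on the left of the curvature factorization (\ref{formulDiffD2}). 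I would then invoke the Bertrand--Puiseaux asymptotics recorded in Proposition \ref{flucDisCurv}, specifically the form (\ref{FluctAreaDistasympt2}), to rewrite the area difference $A(\Sigma_{\widehat{z}(c)})-\widehat{A}(\widehat{\Sigma}_{\widehat{z}(c)})$, normalized by $4\pi\,\widehat{r}^{\,2}(\widehat{z}(c))$, as $\tfrac{\widehat{r}^{\,2}(\widehat{z}(c))}{72}(\widehat{\mathrm{R}}(p)-\tfrac{1+v_{\widehat{z}(c)}}{1-v_{\widehat{z}(c)}}\mathrm{R}(p)+\ldots)$. Substituting this into the almost-sure limit converts the right-hand side of (\ref{AlmSureDeltas}) into the curvature-deviation expression, and wrapping the resulting equality inside the probability-one statement of Proposition \ref{PropA} yields precisely (\ref{noRest1}).

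The only genuinely delicate point — and what I expect to be the main obstacle — is the consistent handling of the spatial-flatness simplification $f^2(\widehat{r}(\widehat{z}))=\widehat{r}^{\,2}(\widehat{z})$ together with the aberration factor $\tfrac{1+v_{\widehat{z}(c)}}{1-v_{\widehat{z}(c)}}$ across the two propositions, so that the prefactors match exactly rather than up to an unnoticed power of $(1+\widehat{z}(c))$ or of the Doppler factor. I would verify that the normalization $\widehat{D}^2_{\widehat{z}}=f^2(\widehat{r}(\widehat{z}))/(1+\widehat{z})^2$ from (\ref{FLRWDiRefer}) cancels correctly against the $(1+\widehat{z}(c))^2/(4\pi f^2)$ factor in (\ref{AlmSureDeltas}), leaving the clean $\widehat{r}^{\,2}(\widehat{z}(c))/72$ coefficient. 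Since the almost-sure convergence itself is already granted by Kolmogorov's strong law as applied in Theorem \ref{AlmSureConv} and Proposition \ref{PropA}, no new probabilistic input is required: the lemma is an algebraic repackaging of results already in hand, and the proof reduces to citing (\ref{AlmSureDeltas}) and (\ref{FluctAreaDistasympt2}) and performing the substitution.
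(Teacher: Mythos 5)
Your route is the same as the paper's: quote the $\mathbb{P}$-almost sure convergence (\ref{AlmSureDeltas}) established in Proposition \ref{PropA}, identify the limit with $\left\langle (D^2_{\,\widehat{z}}(\zeta_{(\widehat{z})},v_{\,\widehat{z}})-\widehat{D}^2_{\,\widehat{z}})/\widehat{D}^2_{\,\widehat{z}}\right\rangle_{\widehat{\mathbb{CS}}}$ via (\ref{formulDiffD2}), and substitute the Bertrand--Puiseux asymptotics (\ref{FluctAreaDistasympt2}). However, there is a genuine gap in your closing claim that the lemma is "an algebraic repackaging" requiring only this substitution. The relation (\ref{FluctAreaDistasympt2}) is an asymptotic expansion at the \emph{fixed} decoupling redshift $\widehat{z}(c)$ and carries higher-order curvature corrections (the "$\ldots$", of relative order $o(\widehat{r}^{\,2}(\widehat{z}(c)))$, involving powers of the Riemann tensor and its derivatives). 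Substituting it wholesale therefore proves (\ref{noRest1}) only with "$+\ldots$" on the right-hand side, not the exact equality stated in the lemma. Since the $Y_{(\,\widehat{z}\,(i))}$ are assumed identically distributed, the strong-law limit is the common mean evaluated at $\widehat{z}(c)$, so nothing in the probabilistic input by itself sends $\widehat{r}(\widehat{z})\rightarrow 0$ and kills those corrections.

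This removal is precisely what the paper singles out as the delicate point of the proof: it argues that the empirical means, by $\mathbb{P}$-sampling the nested sequence of sky sections $\{\Sigma_{\widehat{z}(i)}\}$ with $\widehat{z}(i)\rightarrow 0$, "explore" the $\widehat{r}(\widehat{z})\rightarrow 0$ limit in the $\mathbb{P}$-almost sure sense — roughly, identifying the whole cluster region with the observer at $p$ — so that the $o(\widehat{r}^{\,2}(\widehat{z}(c)))$ contributions drop out of the limiting equality. Your proposal supplies no argument of this kind, so as written it stops short of the lemma. By contrast, the point you flag as the main obstacle — verifying that the normalization $\widehat{D}^2_{\,\widehat{z}}=f^2(\widehat{r}(\widehat{z}))/(1+\widehat{z})^2$ from (\ref{FLRWDiRefer}) cancels the $(1+\widehat{z}(c))^2/(4\pi f^2(\widehat{r}(\widehat{z}(c))))$ prefactor in (\ref{AlmSureDeltas}), together with the spatial-flatness simplification and the aberration normalization already built into $D^2_{\,\widehat{z}}(\zeta_{(\widehat{z})},v_{\,\widehat{z}})$ via (\ref{Dnotation}) — is real but routine bookkeeping, and it is not where the difficulty lies.
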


\begin{proof}
The lemma is a direct consequence of the $\mathbb{P}$-almost sure convergence result
(\ref{AlmSureDeltas}) discussed in Proposition \ref{PropA} and of the asymptotics (\ref{FluctAreaDistasympt2}). The delicate point concerns the removal, in 
(\ref{noRest1}), of the higher 
order, $o(\widehat{r}^{\,2}(\,\widehat{z}\,(c))$,\, curvature contributions present in (\ref{FluctAreaDistasympt2}), and disappearing in the usual  $\widehat{r}(\widehat{z})\rightarrow\,0$ limit. This removal is justified by observing that  the empirical means
\begin{equation}
\label{empirD}
\frac{1}{n}\,
\sum_{i=1}^n\,\frac{D^2_{\,\widehat{z}\,(i)}\,(\zeta_{(\widehat{z}\,(i))},\,v_{\;\widehat{z}\,(i)})\,-\,
\widehat{D}^2_{\,\widehat{z}\,(i)}}{\widehat{D}^2_{\,\widehat{z}\,(i)}}
\end{equation} 
explore, as $n\rightarrow\,\infty$,  the $\widehat{r}(\widehat{z})\rightarrow\,0$ limit by computing it in the $\mathbb{P}$-almost sure convergence  sense. Roughly speaking, by $\mathbb{P}$-sampling the cluster with the sequence of sky sections $\{\Sigma_{\widehat{z}(i)}\}$, \,and\, $\{\widehat{\Sigma}_{\widehat{z}(i)}\}$ (and associated celestial spheres),  we are identifying the whole cluster region with the "observer" at $p$ (see (\ref{random1}) ff.).
\end{proof}

As a consequence of this lemma and of the relation (\ref{AlmSureDeltas}), we have also  
\begin{equation}
\label{AlmSureDeltasA}
\frac{\delta^{\;(2)}_{\widehat{\mathbb{CS}}}(\,\widehat{z}\,(c))\,+\,2\,\delta^{\;(1)}_{\widehat{\mathbb{CS}}}(\,\widehat{z}\,(c))}{\left(1\,+\,\widehat{z}\,(c)\right)^2}\,=\,\frac{\widehat{r}^{\,2}(\,\widehat{z}\,(c))}{72}\,\left(\widehat{\mathrm{R}}({p})\,- \tfrac{1\,+\,v_{\widehat{z}\,(c)}}{1\,-\,v_{\widehat{z}\,(c)}}\,{\mathrm{R}}(p)\right)\,.
\end{equation}
\vskip 0.2cm
\noindent
Since the (normalized) empirical averages $\mathbb{A}_n$ (see (\ref{empiricalA})) of the area measures are in the $n\rightarrow\infty$ limit $\sim\,N_{(0,1)}$ distributed, we can promote (\ref{AlmSureDeltas}) to a relation characterizing the scalar curvature deviation in a probabilistic sense in terms of the statistical independent quantities   $\delta^{\;(1)}_{\widehat{\mathbb{CS}}}(\,\widehat{z}\,(c))$ and $\delta^{\;(2)}_{\widehat{\mathbb{CS}}}(\,\widehat{z}\,(c))$. To this end, by tracing the Einstein equations (\ref{FLRWeinstein}) and  (\ref{einsteinEFE}), respectively associated with the reference FLRW spacetime $(M, \widehat{g}, \widehat{\gamma}_s(\widehat\tau))$ and the physical spacetime $(M, {g}, {\gamma}_s(\tau))$, we get
\begin{equation}
\label{RmattCosmCon}
\,\widehat{\mathrm{R}}(p)\,-\,\frac{1\,+\,v_{\widehat{z}\,(c)}}{1\,-\,v_{\widehat{z}\,(c)}}\,\mathrm{R}(p)\,=\,4\,\left(\frac{\widehat{\Lambda}-\Lambda^{(v)}}{\widehat{\Lambda}}\right)\,\widehat{\Lambda}\,
+\,8\pi\,\left(\frac{\widehat{\rho}(p)\,-\,\rho^{(v)}(p)}{\widehat{\rho}(p)}\right) \widehat{\rho}(p)\,,
\end{equation}  
where for notational ease we dropped the FLRW superscript from the FLRW cosmological constant term $\widehat{\Lambda}^{(FLRW)}$ and matter density $\widehat{\rho}^{\,(FLRW)}$ \,(just mantaining the carat $\widehat{}\,$), and  normalized the physical cosmological constant $\Lambda^{(phys)}$ and matter density according to
\begin{align}
\Lambda^{(v)}\,&:=\,\frac{1\,+\,v_{\widehat{z}\,(c)}}{1\,-\,v_{\widehat{z}\,(c)}}\,\Lambda^{(phys)}\\
\rho^{(v)}(p)\,&:=\,\frac{1\,+\,v_{\widehat{z}\,(c)}}{1\,-\,v_{\widehat{z}\,(c)}}\,\rho(p)\,,
\end{align}
where $\widehat{\rho}(p)$ and ${\rho}(p)$ respectively denote the FLRW and the physical matter density at $p$. We have assumed a pure matter scheme for both the reference FLRW as well as for the physical spacetime. The characterization of  ${\rho}(p)$ in the pre-homogeneity region should take care of the fact that the physical matter density data are gathered from the past light cone (directly for the baryonic component, indirectly via the cold dark matter assumption). Thus, we define ${\rho}(p)$ in terms of the physical matter density ${\rho}_{\,\Sigma_{\,\widehat{z}\,(i)}}$ as described on  the sequence of sky sections $\{\Sigma_{\widehat{z}(i)}\}$ foliating the physical past lightcone region  $\mathbb{W}(\gamma(\tau))\,\cap\,{\mathcal{C}}^-(p, {g})$, and converging to the observation event $(p, \dot\gamma(p))$ (see Theorem \ref{Theasympt1}). To this end,
let us consider the map  
\begin{align}
\label{psimaprho}
\widehat{\mathbb{CS}}_{\,\widehat{z}\,(i)}
\,&\longrightarrow \,\Sigma_{\,\widehat{z}\,(i)}\\
(\widehat\theta, \widehat\phi\,)\,&\longmapsto \,{q}\,:=\,\exp_p\,\left(\zeta_{(\,{\widehat{z}})}\,(\widehat\theta, \widehat\phi\,)\right)\,,\nonumber
\end{align}
in terms of which we can pull back ${\rho}^{(v)}_{\,\Sigma_{\,\widehat{z}\,(i)}}$ on the corresponding  celestial spheres $\mathbb{CS}_{\,\widehat{z}\,(i)}$. We define
\begin{equation}
\rho^{(v)}_{(\,\widehat{z}\,(i))}(\widehat\theta, \widehat\phi\,)\,:=\,
\left(\exp_p\,\circ\zeta_{(\,\widehat{z}\,(i))}\right)^*{\rho}^{(v)}_{\,\Sigma_{\,\widehat{z}\,(i)}}(q)\,,
\end{equation}
and denote by
\begin{equation}
\rho^{(v)}\left(\widehat{\mathbb{CS}}_{\widehat{z}(i)}\right)\,=\,
\frac{1}{4\pi}\,\int_{\widehat{\mathbb{CS}}}\,
\rho^{(v)}_{\widehat{z}\,(i)}(\widehat\theta, \widehat\phi\,)\,
d\mu_{\widehat{\mathbb{S}}^2}\,,
\end{equation} 
the average density over $\widehat{\mathbb{CS}}_{\widehat{z}(i)}$, and by $\widehat{\rho}(\widehat{\mathbb{CS}}_{\widehat{z}(i)})$ the FLRW reference density. It is natural to assume that the relative matter density fluctuation
\begin{equation}
\Gamma_{\widehat{z}\,(i)}\left(\widehat\theta, \widehat\phi\, \right)\,:=\,  
\frac{\widehat{\rho}(\widehat{\mathbb{CS}}_{\widehat{z}(i)})\,-\,
\rho^{(v)}_{\widehat{z}\,(i)}(\widehat\theta, \widehat\phi\,)}{\widehat{\rho}(\widehat{\mathbb{CS}}_{\widehat{z}(i)})}
\end{equation} 
characterizes a sequence $\{\Gamma_{\widehat{z}\,(i)}\,:\;i\,\in\,\mathbb{Z}^+\}$ of independent, identically distributed random variables with respect to $\mathbb{P}_{\widehat{\mathbb{CS}}}$. As in the previous case, the i.i.d. assumption implies the $\mathbb{P}$-almost sure convergence result
\begin{equation}
\lim_{n\rightarrow\,\infty}\,\frac{1}{n}\,\sum_{i=1}^{n}\Gamma_{\widehat{z}\,(i)}\,=\,
\frac{\widehat{\rho}(\widehat{\mathbb{CS}}_{\widehat{z}(i)})\,-\,
\mathbb{E}^{\mathbb{P}}[\rho^{(v)}_{\widehat{z}\,(c)}]}{\widehat{\rho}(\widehat{\mathbb{CS}}_{\widehat{z}(i)})}\,,
\end{equation}  
where $\widehat{z}(c)$ is the FLRW reference redshift marking the cosmological uncoupling sky section $\Sigma_{\widehat{z}(1)}\,:=\,\partial V_{(c)}(p)$. We identify this expression with the matter density fluctuation at the observation point $p$, \emph{i. e.} 
\begin{equation}
\label{mattMeanValue}
\frac{\widehat{\rho}(p)-\rho^{(v)}(p)}{\widehat{\rho}(p)}\,:=\,
\frac{\widehat{\rho}(\widehat{\mathbb{CS}}_{\widehat{z}(i)})\,-\,
\mathbb{E}^{\mathbb{P}}[\rho^{(v)}_{\widehat{z}\,(c)}]}{\widehat{\rho}(\widehat{\mathbb{CS}}_{\widehat{z}\,(c)})}\,=:\,
\frac{\Delta_{\widehat{z}\,(c)}\,\rho}{\widehat{\rho}_{\widehat{z}\,(c)}}\,,
\end{equation} 
and use this characterization in the expression for the scalar curvature deviation (\ref{RmattCosmCon}). If we shift and normalize $\Gamma_{\widehat{z}\,(i)}$ in such a way to have mean value $0$ and variance $1$,
\begin{equation}
\label{stablematter}
\overline{\Gamma}_{\widehat{z}\,(i)}\,:=\,\frac{\mathbb{E}^{\mathbb{P}}[\Gamma_{\widehat{z}\,(i)}]\,-\,\Gamma_{\widehat{z}\,(i)}}{\sqrt{\mathrm{Var}(\Gamma_{\widehat{z}\,(i)})}}\,,
\end{equation}
then, as in the case of  Theorem \ref{GaussCentLim} and Proposition \ref{PropA}, the distribution of
\begin{equation}
\label{normMattFluct}
\breve{\Gamma}_n\,:=\,\frac{1}{n}\,\sum_{i=1}^n\,\overline{\Gamma}_{\widehat{z}\,(i)}\,,
\end{equation}
with respect to  the push forward measure $\mathbb{P}_{\widehat{\mathbb{CS}}}(\breve{\Gamma}^{- 1}_n(I))$\,,\,with $I\,\subset\,\mathbb{R}$,\, converges for $n\,\rightarrow\,\infty$\;to\, a Gaussian distribution $N_{(0,1)}$ of $\breve{\Gamma}_n$, with mean $0$ and variance $1$, \emph{i.e.}
\begin{equation}
\label{CentLimThm2Gamma}
\lim_{n\,\rightarrow\,\infty}\,\mathbb{E}^{\mathbb{P}}\left[\breve{\Gamma}_n,\,I \right]\,=\,
\frac{1}{\sqrt{2\pi}}\,\int_I\,\exp\left[-\,\frac{u^2}{2}\right]\,du\,.
\end{equation}
Thus, for $i\,\rightarrow\,\infty$,  the matter density fluctuations (\ref{normMattFluct}) are normally distributed around the expected mean value (\ref{mattMeanValue}). If we note that the cosmological term in (\ref{RmattCosmCon}), being a constant,   is statistically independent from the fluctuating term (\ref{mattMeanValue}), then by gathering all the results above,  we have the following theorem.

\begin{theorem}
\label{MattCentLim}
In the same probabilistic setting of Theorem \ref{AlmSureConv} and Proposition \ref{PropA}, and in the $N_{(0,1)}$-distribution sense, we can factorize the relation (see (\ref{AlmSureDeltasA}) and (\ref{AsyFluctAreaDist4CRIT}))  
\begin{align}
&\,\frac{72\left(\delta^{\;(2)}_{\widehat{\mathbb{CS}}}(\,\widehat{z}\,(c))\,+\,2\,\delta^{\;(1)}_{\widehat{\mathbb{CS}}}(\,\widehat{z}\,(c))\right)}{\left(1\,+\,\widehat{z}\,(c)\right)^2\,\widehat{r}^{\,2}(\,\widehat{z}\,(c))}\,=\,\left(\widehat{\mathrm{R}}({p})\,- \tfrac{1\,+\,v_{\widehat{z}\,(c)}}{1\,-\,v_{\widehat{z}\,(c)}}\,{\mathrm{R}}(p)\right)\nonumber\\
\\
&= 4\,\left(\frac{\widehat{\Lambda}-\Lambda^{(v)}}{\widehat{\Lambda}}\right)\,\widehat{\Lambda}\,
+\,8\pi\,\left(\frac{\Delta_{\widehat{z}\,(c)}\,\rho}{\widehat{\rho}_{\widehat{z}\,(c)}}\right) \widehat{\rho}(p)\,,
\nonumber
\end{align}
according to  
\begin{align}
\label{FactorizationN}
\delta^{\;(1)}_{\widehat{\mathbb{CS}}}(\,\widehat{z}\,(c))\,&=\,\frac{\widehat{\Omega}_m}{48}\,\left(\frac{\Delta_{\widehat{z}\,(c)}\,\rho}{\widehat{\rho}_{\widehat{z}\,(c)}}\right)\,\left(1\,+\,\widehat{z}\,(c)\right)^2\,\widehat{z}^{\,2}(c)\,\nonumber\\
\\
\delta^{\;(2)}_{\widehat{\mathbb{CS}}}(\,\widehat{z}\,(c))\,&=\,\frac{\widehat{\Omega}_{\widehat\Lambda}}{6}\,\left(\frac{\widehat{\Lambda}-\Lambda^{(v)}}{\widehat{\Lambda}}\right)\,\left(1\,+\,\widehat{z}\,(c)\right)^2\,\widehat{z}^{\,2}(c)\,,
\end{align}  
where\, $\widehat\Omega_\Lambda:=\widehat{\Lambda}/{3\,H_0^2}$,\,and\,  $\widehat\Omega_m:=8\pi\widehat{\rho}/{3\,H_0^2}$ are the present values of the FLRW density cosmological parameters (see Section \ref{capsule}).
\end{theorem}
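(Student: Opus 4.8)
The plan is to read Theorem \ref{MattCentLim} as the combination of one algebraic identity with the statistical independence already secured in Theorem \ref{GaussCentLim} and Proposition \ref{PropA}. The single scalar relation (\ref{AlmSureDeltasA}) fixes only the sum $\delta^{\;(2)}_{\widehat{\mathbb{CS}}}(\widehat{z}(c))+2\,\delta^{\;(1)}_{\widehat{\mathbb{CS}}}(\widehat{z}(c))$, so the genuine content of the statement is the \emph{splitting} of this sum into two separately matched pieces. First I would assemble the two inputs: combining (\ref{AlmSureDeltasA}) with the Einstein trace identity (\ref{RmattCosmCon}) gives, at the decoupling redshift $\widehat{z}(c)$,
\begin{equation}
\frac{72\left(\delta^{\;(2)}_{\widehat{\mathbb{CS}}}(\widehat{z}(c))+2\,\delta^{\;(1)}_{\widehat{\mathbb{CS}}}(\widehat{z}(c))\right)}{\left(1+\widehat{z}(c)\right)^2\,\widehat{r}^{\,2}(\widehat{z}(c))}
=4\left(\frac{\widehat{\Lambda}-\Lambda^{(v)}}{\widehat{\Lambda}}\right)\widehat{\Lambda}
+8\pi\left(\frac{\Delta_{\widehat{z}(c)}\rho}{\widehat{\rho}_{\widehat{z}(c)}}\right)\widehat{\rho}(p)\,.
\end{equation}

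Next I would carry out the elementary reduction. Since the limit is taken in the spatially flat, low-redshift regime of Theorem \ref{Theasympt1}, I would insert the Hubble relation $\widehat{z}=H_0\,\widehat{r}(\widehat{z})$, i.e.\ $\widehat{r}^{\,2}(\widehat{z}(c))=\widehat{z}^{\,2}(c)/H_0^2$, together with the density-parameter definitions $\widehat{\Omega}_{\widehat\Lambda}=\widehat{\Lambda}/(3H_0^2)$ and $\widehat{\Omega}_m=8\pi\widehat{\rho}/(3H_0^2)$. The factor $H_0^2$ cancels throughout and one is left with
\begin{equation}
\delta^{\;(2)}_{\widehat{\mathbb{CS}}}(\widehat{z}(c))+2\,\delta^{\;(1)}_{\widehat{\mathbb{CS}}}(\widehat{z}(c))
=\left(1+\widehat{z}(c)\right)^2\widehat{z}^{\,2}(c)\left[\frac{\widehat{\Omega}_{\widehat\Lambda}}{6}\left(\frac{\widehat{\Lambda}-\Lambda^{(v)}}{\widehat{\Lambda}}\right)+\frac{\widehat{\Omega}_m}{24}\left(\frac{\Delta_{\widehat{z}(c)}\rho}{\widehat{\rho}_{\widehat{z}(c)}}\right)\right]\,.
\end{equation}
This is exactly $\delta^{\;(2)}+2\delta^{\;(1)}$ evaluated on the two candidate assignments of (\ref{FactorizationN}), since $2\cdot\widehat{\Omega}_m/48=\widehat{\Omega}_m/24$; the numerical coefficients therefore take care of themselves, and no computation beyond this consistency check is required.

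The decisive step is the justification of \emph{why} the split is the claimed one rather than any other pair summing to the same total. Here I would invoke the probabilistic structure. The bracket above is the sum of a genuinely constant term, the cosmological-constant deviation $(\widehat{\Lambda}-\Lambda^{(v)})/\widehat{\Lambda}$, which does not depend on the sampling direction $(\widehat\theta,\widehat\phi)$, and a fluctuating matter term governed by the density contrast $\Delta_{\widehat{z}(c)}\rho/\widehat{\rho}_{\widehat{z}(c)}$. By Theorem \ref{GaussCentLim} the mean $\delta^{\;(1)}_{\widehat{\mathbb{CS}}}(\widehat{z}(c))$ and the variance $\delta^{\;(2)}_{\widehat{\mathbb{CS}}}(\widehat{z}(c))-(\delta^{\;(1)}_{\widehat{\mathbb{CS}}}(\widehat{z}(c)))^2$ are statistically independent, while the constant cosmological term is by construction independent of the matter fluctuations. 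Matching the statistically independent constituents, I would identify the first-order (mean) fluctuation $\delta^{\;(1)}$, produced by the law-of-large-numbers limit of the matter density fluctuations (Theorem \ref{AlmSureConv} and Proposition \ref{PropA}, together with (\ref{mattMeanValue})), with the matter term, and the second-order fluctuation $\delta^{\;(2)}$, which carries the constant-offset/variance structure decoupled from the mean, with the cosmological-constant term. This yields (\ref{FactorizationN}).

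The main obstacle is precisely this uniqueness of the assignment: the scalar identity alone cannot distinguish the matter and $\Lambda$ contributions, and the argument must lean on the $N_{(0,1)}$-distribution results to enforce that the mean captures the matter fluctuation while the variance captures the constant deviation. Making this matching of independent pieces rigorous, rather than merely plausible, is the delicate part; it is exactly where the almost-sure convergence of Theorem \ref{AlmSureConv} and the independence of $\delta^{\;(2)}_{\widehat{\mathbb{CS}}}(\widehat{z}(c))$ from $\delta^{\;(1)}_{\widehat{\mathbb{CS}}}(\widehat{z}(c))$ asserted at the close of Theorem \ref{GaussCentLim} do the essential work, and I would cite that independence as the precise statement licensing the split.
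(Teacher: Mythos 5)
Your proposal follows essentially the same route as the paper's proof: you combine the almost-sure convergence relation (\ref{AlmSureDeltasA}) with the traced Einstein equations (\ref{RmattCosmCon}), reduce the coefficients via $\widehat{z}=H_0\,\widehat{r}(\widehat{z})$ and the definitions of $\widehat\Omega_m$ and $\widehat\Omega_{\widehat\Lambda}$, and license the split by pairing the fluctuating matter term with $\delta^{\;(1)}$ and the constant $\Lambda$-deviation with $\delta^{\;(2)}$ via the independence statements secured in Theorems \ref{AlmSureConv} and \ref{GaussCentLim} and Proposition \ref{PropA}. Your explicit coefficient consistency check and your flagging of the non-uniqueness of the scalar identity make explicit what the paper's proof treats as immediate, but the key lemmas invoked and the independence-based matching are the same.
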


\begin{proof}
The factorization (\ref{FactorizationN}) is a direct consequence of 
Theorem \ref{AlmSureConv}, according to which the area distance fluctuations interpreted as a sequence of i.i.d. random variables $\{Y_{(\widehat{z}\,(i))}\,:\;i\,\in\,\mathbb{Z}^+\}$ converge, in the $\mathbb{P}$-almost sure sense, to the mean value $\delta^{\;(1)}_{\widehat{\mathbb{CS}}}(\,\widehat{z}\,(c))$. In full generality, the sequence $\{Y_{(\widehat{z}\,(i))}\,:\;i\,\in\,\mathbb{Z}^+\}$ is not normally distributed. However, according to Theorem \ref{GaussCentLim}, based on direct use of a Central Limit Theorem, the empirical means, $\breve{S}_n$, associated with the stabilized random variables  $X_{(\,\widehat{z}\,(i))}\,:=\,\left(
Y_{(\,\widehat{z}\,(i))}\,-\,\delta^{\;(1)}_{\widehat{\mathbb{CS}}}(\,\widehat{z}\,(i)\,)\right)\,\mathrm{Var}^{-\,1/2}_{\widehat{\mathbb{CS}}}(Y_{(\,\widehat{z}\,(i))})$
 are $N_{(0, 1)}$ distributed in the large $n$ limit. Mutual independence of the (normalized) fluctuations from $\delta^{\;(1)}_{\widehat{\mathbb{CS}}}(\,\widehat{z}\,(c))$  and $\delta^{\;(2)}_{\widehat{\mathbb{CS}}}(\,\widehat{z}\,(c))$ easily follows.        Proposition \ref{PropA} extends this $N_{(0, 1)}$ control to the distribution of the area fluctuations  (\ref{normArFluct}), and by exploiting the $\mathbb{P}$-almost sure convergence we got a random variable version of the Lorentzian Bertrand-Puiseaux formula  
 \begin{align}
&\,\lim_{n\rightarrow\,\infty}\,\frac{1}{n}\,
\sum_{i=1}^n\,\left[\frac{D^2_{\,\widehat{z}\,(c)}\,(\zeta_{(\widehat{z}\,(c))},\,v_{\;\widehat{z}\,(c)})\,-\,
\widehat{D}^2_{\,\widehat{z}\,(c)}}{\widehat{D}^2_{\,\widehat{z}\,(c)}} \right]=\frac{\delta^{\;(2)}_{\widehat{\mathbb{CS}}}(\,\widehat{z}\,(c))\,+\,2\,\delta^{\;(1)}_{\widehat{\mathbb{CS}}}(\,\widehat{z}\,(c))}{\left(1\,+\,\widehat{z}\,(c)\right)^2}\,\nonumber\\
\\ 
&=\,\frac{\widehat{r}^{\,2}(\,\widehat{z}\,(c))}{72}\,\left(\widehat{\mathrm{R}}({p})\,- \tfrac{1\,+\,v_{\widehat{z}\,(c)}}{1\,-\,v_{\widehat{z}\,(c)}}\,{\mathrm{R}}(p)\right)\nonumber\,.
\end{align}
By exploiting this random variable rendering and the trace of the Einstein field equations, we can connect the i.i.d. area distance fluctuations with the corresponding i.i.d matter density fluctuations  $\frac{\Delta_{\widehat{z}\,(c)}\,\rho}{\widehat{\rho}_{\widehat{z}\,(c)}}$ whose sample mean $\breve{\Gamma}_n$  is, for $n\rightarrow \infty$, normally distributed   $N_{(0,1)}$ like the $X_{(\,\widehat{z}\,(i))}$  (see (\ref{CentLimThm2Gamma})). This, together with the independence between $\delta^{\;(1)}_{\widehat{\mathbb{CS}}}(\,\widehat{z}\,(c))$ and $\delta^{\;(2)}_{\widehat{\mathbb{CS}}}(\,\widehat{z}\,(c))$ on one side, and the independence between $\frac{\Delta_{\widehat{z}\,(c)}\,\rho}{\widehat{\rho}_{\widehat{z}\,(c)}}$ and the constant term $(\widehat{\Lambda}-\Lambda^{(v)})/{\widehat{\Lambda}}$ on the other side,  immediately implies the statistical factorization (\ref{FactorizationN}).    
\end{proof}

\section{A distance functional between lightcones}
\label{TSKSM}

We conclude this appendix by characterizing a functional distance between the physical and the reference FLRW lightcones. This distance provides a profound geometrical meaning to the factorization (\ref{FactorizationN}) and its connection with the cosmological constant problem. 
   
We start by recalling that, as long as the null exponential map $\exp_p$ is a diffeomorphism,  		
the  map	
\begin{align}
\label{Finterpolation2}	
F_{\widehat{z}_1,\,\widehat{z}_2}(\epsilon)\,&:\,[0,\,1]\times\left(\Sigma_{\widehat{z}_1},\,g^{(2)}_{\widehat{z}_1}\right)\,\longrightarrow\,\left(\Sigma_{\widehat{z}_2},\,g^{(2)}_{\widehat{z}_2}\right)\,,\\
&(\epsilon,\,y)\,\longmapsto\,F_{\widehat{z}_1,\,\widehat{z}_2}(\epsilon,\,y)\,:=\,\exp_p\left(r(\widehat{z}_\varepsilon)\,\ell(n(y))\right)\,,\nonumber
\end{align}
defined by  (\ref{Finterpolation}) (see Proposition \ref{lemmaDiffF}),
 is a diffeomorphism 
interpolating between the sky sections $\Sigma_{\widehat{z}_1}=F_{\widehat{z}_1,\,\widehat{z}_2}(0)$ and $\Sigma_{\widehat{z}_2}=F_{\widehat{z}_1,\,\widehat{z}_2}(1)$. Moreover, 
\begin{equation}
\label{FpullbackDopo}
g^{(2)}_{\widehat{z}_1}\,=\, \left.F^*_{\widehat{z}_1\,\widehat{z}_2}\right|_{\epsilon=1}\,g^{(2)}_{\widehat{z}_2}\,,
\end{equation}
where $F^*_{\widehat{z}_1,\,\widehat{z}_2}$ denotes the pull-back action associated with (\ref{Finterpolation2}). This interpolation can be extended to the case when $\exp_p$  is only bi-Lipschitz since Rademacher's theorem implies that the pullback action
\begin{equation}
\label{LipF}
\left(F_{\widehat{z}_1,\,\widehat{z}_2}^*g^{(2)}_{\,\widehat{z}_{\;2}}\right)_{ab}\,=\,\frac{\partial\,F_{\widehat{z}_1,\,\widehat{z}_2}^i(y)}{\partial y^a}
\frac{\partial\,F_{\widehat{z}_1,\,\widehat{z}_2}^k(y)}{\partial y^b}\,(g^{(2)}_{\,\widehat{z}_{\;2}})_{ik}\,,
\end{equation}
is defined\footnote{The $\{y^i\}$ denote local coordinates on $\Sigma_{\widehat{z}_1}$.} almost everywhere on $\Sigma_{\widehat{z}_1}$. More generally, if we denote by 
\begin{align}
\zeta_{(\,\widehat{z})}\,:\,\left(\widehat{\mathbb{C\,S}}_{(\,\widehat{z})}(p),\,h_{\mathbb{S}^2(\,\widehat{z})}\right)\,&\longrightarrow\,\left(\mathbb{C\,S}_{(\,\widehat{z})}(p),\,h_{\widehat{\mathbb{S}}^2(\,\widehat{z})}\right)\\
\widehat{w}\,&\longmapsto\,\zeta_{(\,\widehat{z})}(\,\widehat{w})\,=\,\sqrt{\frac{1\,+\,v_{\;\widehat{z}}\,(p)}{1\,-\,v_{\;\widehat{z}}\,(p)}}\,e^{i\,\alpha_{\;\widehat{z}}}\,\widehat{w}\,.
\nonumber
\end{align}
the extension of the $\mathrm{PSL}(2,\,\mathbb{C})$ action to the celestial spheres $\widehat{\mathbb{C\,S}}_{(\,\widehat{z})}(p)$ and ${\mathbb{C\,S}}_{(\,\widehat{z})}(p)$ (see Proposition \ref{PSL2mappingFLRW}), then we have the following result that provides a basic set-up for the comparison between the physical sky section $(\Sigma_{\;\widehat{z}},\,g^{(2)}_{\,\widehat{z}}\,)$ and the reference FLRW sky section 
$(\widehat\Sigma_{\,\widehat{z}},\,\widehat{g}^{\;(2)}_{\,\widehat{z}}\,)$.

\begin{proposition}
\label{comp1}
The map 
\begin{align}
\label{psimap}
\psi_{{\;\widehat{z}}}\,:\,\widehat{\Sigma}_{\,\widehat{z}}\,&\longrightarrow\,\Sigma_{\,\widehat{z}}\\
\widehat{q}\,&\longmapsto \,\psi_{\hat{z}}(\,\widehat{q})\,:=\,\exp_p\,\circ\,\zeta_{(\,{\widehat{z}})}\,\circ\,\widehat{\exp}_p^{\,-1}(\,\widehat{q}\,)\,,\nonumber
\end{align}
defined by the commutative diagram 
\begin{align}
\begin{CD}
\widehat{\mathbb	{CS}}_{\,\widehat{z}}(p)    @>\zeta_{(\,\widehat{z}\,)}>>{\mathbb{CS}}_{\,\widehat{z}}(p)\\
@VV\widehat{\exp}_pV     @VV\exp_pV\\
\left(\widehat{\Sigma}_{\;\widehat{z}},\,\widehat{g}^{\;(2)}_{\;\widehat{z}}\right)   @>\psi_{\;\widehat{z}}>> \left({\Sigma}_{\;\widehat{z}},\,{g}^{\;(2)}_{\;\widehat{z}}\right)
\end{CD}
\end{align}
is a bi-Lipschitz map between the sky sections $(\widehat\Sigma_{\,\widehat{z}},\,d_{\,\widehat\Sigma_{\,\widehat{z}}})$ and $(\Sigma_{\;\widehat{z}},\,d_{\Sigma_{\;\widehat{z}}})$, with isometric distortion 
$\mathrm{Lip}(\psi_{\,\widehat{z}})$  defined as the smallest redshift-dependent constant such that 
\begin{align}
\frac{d_{\,\widehat\Sigma_{\,\widehat{z}}}(\widehat{q}_{\;1}, \widehat{q}_{\;2})\,\sqrt{\frac{1\,+\,v_{\;\widehat{z}}\,(p)}{1\,-\,v_{\;\widehat{z}}\,(p)}}}{\mathrm{Lip}(\psi_{\,\widehat{z}})}
\,&\leq\,
d_{\Sigma_{\;\widehat{z}}}\left(\psi_{\,\widehat{z}}(\widehat{q}_{\;1}), \psi_{\,\widehat{z}}(\widehat{q}_{\;2})\right)\\
&\leq\,\mathrm{Lip}(\psi_{\,\widehat{z}})\;
d_{\,\widehat\Sigma_{\,\widehat{z}}}(\widehat{q}_{\;1}, \widehat{q}_{\;2})\,\sqrt{\frac{1\,+\,v_{\;\widehat{z}}\,(p)}{1\,-\,v_{\;\widehat{z}}\,(p)}}\,,\nonumber
\end{align}
for all $\widehat{q}_{\;1},\,\widehat{q}_{\;2}\,\in\,(\widehat{\Sigma}_{\;\widehat{z}},\,\widehat{g}^{\;(2)}_{\;\widehat{z}})$, and where we have inserted the aberration factor that takes into account the relative velocity of the physical and FRW observers. In particular, we have the uniform bound for the Hausdorff measure $\mathcal{H}^2(\Sigma_{\;\widehat{z}})$ of $({\Sigma}_{\;\widehat{z}},\,{g}^{\;(2)}_{\;\widehat{z}})$

\begin{equation}
4\pi \mathrm{Lip}^{-2}_{\,\widehat{z}\,}(\psi_{\,\widehat{z}})\,\left(\tfrac{1\,+\,v_{\,\widehat{z}}\,(p)}{1\,-\,v_{\,\widehat{z}}\,(p)}\right)\,\widehat{r}^{\,2}(\,\widehat{z}\,)\,\,
\,\leq\,\mathcal{H}^2\left(\Sigma_{\;\widehat{z}}\right)\,\leq\,
4\pi\,\mathrm{Lip}^2_{\,\widehat{z}\,}(\psi_{\,\widehat{z}})\,\left(\tfrac{1\,+\,v_{\,\widehat{z}}\,(p)}{1\,-\,v_{\,\widehat{z}}\,(p)}\right)\,
\widehat{r}^{\,2}(\,\widehat{z}\,)\,,
\end{equation}
\vskip 0.3cm\noindent
Since the physical spacetime $(M,\,g,\,\gamma_s(\tau))$ is statistically Friedmannian for large redshifts,  we can assume that there exists $\epsilon_{\,\widehat{z}}\,>\,0$, small enough,   such that the isometric distortion is given by
\begin{equation}	
\label{smallLip}
\mathrm{Lip}(\psi_{\,\widehat{z}})\,=\,1\,+\,\epsilon_{\,\widehat{z}}\,,
\end{equation}
for all $\widehat{z}\, >\,\widehat{z}_0$, where $\widehat{z}_0$ marks the assumed transition to statistical isotropy and homogeneity\footnote{As already stressed, the actual value of $\widehat{z}_0$ is much debated. Mainstream analysis of cosmological data put this value in the range \; $\widehat{z}_0\,\gtrapprox\,100\div200\,h^{\,- 1}\,\mathrm{Mpc}$)}. The parameter $\epsilon_{\,\widehat{z}}$ characterizes, in the large ${\,\widehat{z}}$ regime,  the (perturbative) isometric distortion from the underlying FLRW geometry\footnote{In the smooth case, this distortion is related to perturbation theory around the FLRW background spacetime geometry. In this appendix, we do not belabor on this well-known subject.}. It follows that for ${\,\widehat{z}}$ large enough, the map $\psi_{\,\widehat{z}}$ is $(1\,+\,\epsilon_{\,\widehat{z}})$-biLipschitz and, up to the Lorentz aberration factor, a quasi-isometry. In particular, $\psi_{(\,{\;\widehat{z}}\,)}$ can be deformed to a (locally) bi-Lipschitz conformal map, \emph{i.e.} for all $\widehat{z}_c\,>>\,\widehat{z}_0$, there exists a positive Lipschitz function $\Phi_{\;\widehat{z}_c}(\zeta_{\,(\widehat{z})\,})$, defined up to $\mathrm{PSL}(2,\mathbb{C})$ transformations, such that\footnote{The conformal factor $\Phi_{\;\widehat{z}_c}(\zeta_{\,(\widehat{z})\,})$ functionally depends on $\psi_{(\,{\;\widehat{z}}\,)}$, \,namely from the exponential maps $\exp_p$, \;$\widehat{\exp}_p$, and from the $\mathrm{PSL}(2,\mathbb{C})$ map $\zeta_{\,(\widehat{z})\,}$. In the adopted notation for $\Phi_{\;\widehat{z}_c}$ we stressed the role of $\zeta_{\,(\widehat{z})\,}$ since whereas $\exp_p$, \;$\widehat{\exp}_p$ are fixed, the M\"obius transformation $\zeta_{\,(\widehat{z})\,}$ parametrizes $\Phi_{\;\widehat{z}_c}$ and plays a role in controlling the optimal distortion among the sky sections and the corresponding celestial spheres.}  
\begin{align}
\label{psiconfbisse}
\left(\psi_{{\;\widehat{z}_c}}^*g^{(2)}_{\widehat{z}_c}\right)_{ab}\,&=\,\frac{\partial\psi_{\,{\widehat{z}_c}}^i\,(y)}{\partial y^a}
\frac{\partial\psi_{\,{\widehat{z}_c}}^k\,(y)}{\partial y^b}
\,\left(g^{(2)}_{\widehat{z}_c}\right)_{ik}\nonumber\\
\\
&=\,\left(\frac{1\,+\,v_{\,\widehat{z}}\,(p)}{1\,-\,v_{\,\widehat{z}}\,(p)}\right)\,\Phi^2_{\;\widehat{z}_c}(\zeta_{\,(\widehat{z})\,})\,
\left(\widehat{g}^{\,(2)}_{\,\widehat{z}_c}\right)_{ab}\,,\nonumber
\end{align} 
where we have explicitly introduced the aberration factor $\left(\tfrac{1\,+\,v_{\,\widehat{z}}\,(p)}{1\,-\,v_{\,\widehat{z}}\,(p)}\right)$ for later convenience.\,
The metrics ${g}^{\,(2)}_{\,\widehat{z}}$ of all sky sections $\Sigma_{\,\widehat{z}}$ can be obtained from $g^{(2)}_{\,\widehat{z}_c}$ under the action of the bi-Lipschitz map $F_{\widehat{z},\,\widehat{z}_c}(\epsilon)$, \, $0\,\leq\,\epsilon\,\leq\,1$,\; interpolating between the sky sections $\Sigma_{\hat{z}}=F_{\widehat{z},\,\widehat{z}_c}(0)$ and $\Sigma_{\widehat{z}_c}=F_{\widehat{z},\,\widehat{z}_c}(1)$, and  defined by (see 
(\ref{Finterpolation2}))  

\begin{align}
\label{FinterpolationBisse}
F_{\widehat{z},\,\widehat{z}_c}(\epsilon)\,&:\,[0,\,1]\times\left(\Sigma_{\hat{z}},\,g^{(2)}_{\hat{z}}\right)\,\longrightarrow\,\left(\Sigma_{\widehat{z}_c},\,g^{(2)}_{\widehat{z}_c}\right)\,,\\
&(\epsilon,\,y)\,\longmapsto\,F_{\widehat{z},\,\widehat{z}_c}(\epsilon,\,y)\,:=\,\exp_p\left(r(\widehat{z}_\varepsilon)\,\ell(n(y))\right)\,,\nonumber 
\end{align}
with
\begin{equation}
\label{Fpullbackbiss}
g^{(2)}_{\hat{z}}\,=\, \left.F^*_{\widehat{z}\,\widehat{z}_c}\right|_{\epsilon=1}\,g^{(2)}_{\widehat{z}_c}\,=\,
\frac{\partial\,F_{\widehat{z},\,\widehat{z}_c}^i(y)}{\partial y^a}
\frac{\partial\,F_{\widehat{z},\,\widehat{z}_c}^k(y)}{\partial y^b}\,(g^{(2)}_{\,\widehat{z}_{\;c}})_{ik}\,.
\end{equation}
It follows that for every $\widehat{z}\,\geq\,0$, the sky section metric 
$g^{(2)}_{\,\widehat{z}}$ is in the same conformal class of $g^{(2)}_{\,\widehat{z}_{\;c}}$ and we may assume that the map  $\psi_{(\,{\;\widehat{z}}\,)}$ can be deformed to a (locally) bi-Lipschitz conformal map for all $\widehat{z}$. Explicitly, for each each sky section $(\Sigma_{\;\widehat{z}},\,g^{(2)}_{\,\widehat{z}})$   there exists a corresponding positive Lipschitz function $\Phi_{\;\widehat{z}}$, defined up to $\mathrm{PSL}(2,\mathbb{C})$ transformations, such that
\begin{equation}
\label{psiconftrisse}
\psi_{\,\widehat{z}}^*g^{(2)}_{\,\widehat{z}}\,=\,\left(\frac{1\,+\,v_{\,\widehat{z}}\,(p)}{1\,-\,v_{\,\widehat{z}}\,(p)}\right)\,\Phi^2_{\,\widehat{z}}(\zeta_{\,(\widehat{z})\,})\,
\widehat{g}^{\,(2)}_{\,\widehat{z}}\,.
\end{equation}
Finally, the conformal factor $\Phi^2_{\,\widehat{z}}(\zeta_{\,(\widehat{z})\,})$ can be locally computed in terms of  the area distances
  $D^2_{\,\widehat{z}}\,(\zeta_{(\widehat{z})},\,v_{\;\widehat{z}})$\,  and 
$\widehat{D}^2_{\,\widehat{z}}$ according to
\begin{equation}
\label{ConfArDist}
\Phi^2_{\,\widehat{z}}(\zeta_{\,(\widehat{z})\,})\,=\,\frac{D^2_{\,\widehat{z}}\,(\zeta_{(\widehat{z})},\,v_{\;\widehat{z}})}{\widehat{D}^2_{\,\widehat{z}}}\,.
\end{equation}   
\end{proposition}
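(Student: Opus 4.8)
The plan is to realize $\psi_{\widehat{z}}$ as a composition of three maps of known regularity, read off its bi-Lipschitz character and the measure bounds directly, and then promote it to a conformal map through uniformization carried out in the measurable (Riemann-Lipschitz) category. First I would factor $\psi_{\widehat{z}} = \exp_p\circ\zeta_{(\widehat{z})}\circ\widehat{\exp}_p^{\,-1}$ through its constituents: (i) the inverse FLRW exponential map $\widehat{\exp}_p^{\,-1}\colon(\widehat{\Sigma}_{\widehat{z}},d_{\,\widehat{\Sigma}_{\widehat{z}}})\to(\widehat{\mathbb{CS}}_{\widehat{z}},d_{\widehat{\mathbb{S}}^2(\widehat{z})})$, a smooth diffeomorphism of a compact FLRW sky section and hence bi-Lipschitz; (ii) the M\"obius map $\zeta_{(\widehat{z})}\in\mathrm{PSL}(2,\mathbb{C})$ of Proposition \ref{PSL2mappingFLRW}, a conformal similarity carrying exactly the radial rescaling $r(\widehat{z})=\sqrt{(1+v_{\widehat{z}})/(1-v_{\widehat{z}})}\,\widehat{r}(\widehat{z})$ of (\ref{radialconnect0}) and therefore introducing the aberration factor $\sqrt{(1+v_{\widehat{z}})/(1-v_{\widehat{z}})}$; and (iii) the bi-Lipschitz null exponential map $\exp_p$ of the Bi-Lipschitz regularity hypothesis. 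Since a composition of bi-Lipschitz maps is bi-Lipschitz with multiplicatively combining distortion constants, the stated two-sided inequality follows at once, with $\mathrm{Lip}(\psi_{\widehat{z}})$ absorbing the distortions of $\exp_p$ and $\widehat{\exp}_p^{\,-1}$ and the aberration factor displayed separately.

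The Hausdorff-measure bound then follows directly from Proposition \ref{HausMeasSigma} once the distortion of $\exp_p$ is repackaged, together with that of the smooth FLRW pieces, into $\mathrm{Lip}(\psi_{\widehat{z}})$: the round celestial sphere $\mathbb{CS}_{\widehat{z}}$ of area $4\pi r^2(\widehat{z}) = 4\pi\big((1+v_{\widehat{z}})/(1-v_{\widehat{z}})\big)\widehat{r}^{\,2}(\widehat{z})$ is the reference object, and the standard scaling of $\mathscr{H}^2$ under bi-Lipschitz maps yields the $\mathrm{Lip}^{\pm2}(\psi_{\widehat{z}})$ two-sided estimate for $\mathcal{H}^2(\Sigma_{\widehat{z}})$. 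The identification $\mathrm{Lip}(\psi_{\widehat{z}})=1+\epsilon_{\widehat{z}}$ of (\ref{smallLip}) for $\widehat{z}>\widehat{z}_0$ is an appeal to the modeling hypothesis of Definition \ref{PhenDefin}, where $(M,g)$ is a perturbed FLRW beyond the homogeneity transition; the smallness of $\epsilon_{\widehat{z}}$ encodes the perturbative regime and the quasi-isometry claim.

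The core of the argument is the conformal deformation. Both $(\widehat{\Sigma}_{\widehat{z}},\widehat{g}^{(2)}_{\widehat{z}})$ and $(\Sigma_{\widehat{z}},g^{(2)}_{\widehat{z}})$ are topological $2$-spheres, and $S^2$ carries a single conformal class, so it suffices to uniformize both metrics to the round sphere and compare. Because $g^{(2)}_{\widehat{z}}$ is only defined almost everywhere, I would use the measurable Riemann mapping theorem (Ahlfors--Bers) in place of smooth Poincar\'e--Koebe uniformization: the bi-Lipschitz bound forces the Beltrami coefficient attached to $\psi_{\widehat{z}}$ to have essential supremum strictly below $1$, which is exactly the condition under which the Beltrami equation is solvable, yielding a conformal homeomorphism to which $\psi_{\widehat{z}}$ is deformed, unique up to precomposition with $\mathrm{PSL}(2,\mathbb{C})$ --- this is the parametrizing freedom carried by $\zeta_{(\widehat{z})}$. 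This produces the positive Lipschitz conformal factor and relation (\ref{psiconftrisse}) at a reference redshift $\widehat{z}_c\gg\widehat{z}_0$. To reach all $\widehat{z}$, I would transport the conformal structure along the bi-Lipschitz interpolating diffeomorphisms $F_{\widehat{z},\widehat{z}_c}$ of (\ref{FinterpolationBisse}); since $g^{(2)}_{\widehat{z}}=F^*_{\widehat{z},\widehat{z}_c}g^{(2)}_{\widehat{z}_c}$, the whole family stays in one conformal class and the deformation propagates.

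To identify the conformal factor I would take determinants in (\ref{psiconftrisse}): in two dimensions a conformal factor $\Phi^2$ scales the area element by $\Phi^2$, so the Radon--Nikodym derivative of $\psi_{\widehat{z}}^*\,d\mu_{g^{(2)}_{\widehat{z}}}$ with respect to $d\mu_{\widehat{g}^{(2)}_{\widehat{z}}}$ equals $\big((1+v_{\widehat{z}})/(1-v_{\widehat{z}})\big)\,\Phi^2_{\widehat{z}}$. By Proposition \ref{PropAreaDistance} the squared area distances are exactly the densities $D^2_{\widehat{z}}=d\mu_{h_{\widehat{z}}}/d\mu_{\mathbb{S}^2}$ and $\widehat{D}^2_{\widehat{z}}=d\mu_{\widehat{h}_{\widehat{z}}}/d\mu_{\widehat{\mathbb{S}}^2}$; dividing out the common solid-angle measures and absorbing the aberration factor into the normalization (\ref{Dnotation}) gives (\ref{ConfArDist}). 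The step I expect to be the main obstacle is the low-regularity uniformization: rigorously invoking the measurable Riemann mapping theorem on a Riemann-Lipschitz sphere, controlling the Beltrami coefficient only in the almost-everywhere sense, and verifying that the resulting conformal factor $\Phi_{\widehat{z}}$ is a genuine positive Lipschitz function rather than merely a measurable one.
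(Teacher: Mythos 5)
Your argument is correct and, for everything except one step, runs parallel to the paper's proof: the bi-Lipschitz two-sided bound as a composition of the three constituent maps, the Hausdorff bound via Proposition \ref{HausMeasSigma}, the identification $\mathrm{Lip}(\psi_{\,\widehat{z}})=1+\epsilon_{\,\widehat{z}}$ as a modeling appeal, the transport of the conformal class to all $\widehat{z}$ via $F_{\widehat{z},\,\widehat{z}_c}$ in (\ref{FinterpolationBisse}), and the identification (\ref{ConfArDist}) by comparing area elements (your ``take determinants'' step is in substance the paper's pullback chain (\ref{Nphi12})--(\ref{phiareadist}), with the aberration normalization (\ref{Dnotation}) handled identically).

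Where you genuinely diverge is the central uniformization step. The paper first works at large $\widehat{z}_c\gg\widehat{z}_0$, where (\ref{smallLip}) makes $(\widehat\Sigma_{\widehat{z}_c},\,\psi^*_{\widehat{z}_c}g^{(2)}_{\widehat{z}_c})$ an \emph{almost round} sphere, invokes Poincar\'e--Koebe there, and characterizes the conformal factor as the solution (unique up to $\mathrm{PSL}(2,\mathbb{C})$) of the prescribed-curvature elliptic equation (\ref{lapconf0}); it then extends to all $\widehat{z}$ by conformal-class transport. You instead solve the Beltrami equation via the measurable Riemann mapping theorem, observing that the bi-Lipschitz bound forces $\|\mu\|_\infty<1$. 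Your route buys something real: it is the technically natural tool in the Riemann-Lipschitz category, since the right-hand side of (\ref{lapconf0}) involves the scalar curvature $R(\psi^*_{\widehat{z}_c}g^{(2)}_{\widehat{z}_c})$, which for an almost-everywhere-defined Lipschitz metric exists only distributionally --- a point the paper passes over --- and it works at any $\widehat{z}$ without needing the perturbative smallness of $\epsilon_{\,\widehat{z}}$, so in principle you could dispense with the two-step (large-$\widehat{z}_c$ then transport) structure altogether. What the paper's PDE route buys, conversely, is quantitative control: in the quasi-isometric regime the elliptic equation ties the size of $\Phi_{\widehat{z}_c}-1$ to the curvature deviation, which is what feeds the later asymptotics, whereas Ahlfors--Bers a priori delivers only a quasiconformal map with a measurable conformal factor. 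You correctly flag this as the residual obstacle: upgrading $\Phi_{\,\widehat{z}}$ from measurable to positive Lipschitz is exactly the point the paper also does not fully discharge (it asserts the control ``but here we do not belabor on this point''), so your proposal is no weaker than the published argument on this score; note, though, that once (\ref{psiconftrisse}) is granted, the identification (\ref{ConfArDist}) itself supplies Lipschitz regularity of $\Phi^2_{\,\widehat{z}}$ wherever the area-distance density is Lipschitz and bounded away from zero, which the bi-Lipschitz bounds (\ref{AreaDistBounds}) guarantee.
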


\begin{proof}
The first part of the theorem is a direct consequence of the definition of the map $\psi_{\,\widehat{z}}$ and of the assumed bi-Lipschitz nature of $\exp_p$. The existence of the conformal factor $\Phi^2_{\,\widehat{z}_c}(\zeta_{\,(\widehat{z})\,})$ for large redshifts 
$\widehat{z}_c\,>>\,z_0$ \,follows by observing that the physical spacetime $(M,\,g,\,\gamma_s(\tau))$ is (statistically) Friedmannian for large redshifts, and the Lipschitz distortion (\ref{smallLip}) of the sky section $(\Sigma_{\,\widehat{z}_c},\,g^{(2)}_{\,\widehat{z}_c})$   is a small perturbation of an isometry. Under these conditions, if we pull back the
 metric  $g^{(2)}_{\,\widehat{z}_c}$ from $\Sigma_{\,\widehat{z}_c}$ to $\widehat{\Sigma}_{\,\widehat{z}_c}$ under the action of the map $\psi_{\,\widehat{z}_c}$, we get a Riemann-Lipschitz surface $(\widehat{\Sigma}_{\,\widehat{z}_c},\,\psi_{\,\widehat{z}_c}^*\,g^{(2)}_{\,\widehat{z}_c})$ which is an almost round $2$-sphere. We can apply the Poincar\'e--Koebe uniformization theorem\footnote{The Poincar\'e--Koebe uniformization theorem states that every simply connected Riemann surface is conformally
equivalent to the open unit disk, the complex plane, or, as in our case, the Riemann sphere.} (see \emph{e.g.} \cite{Abikoff}) to compare, by a conformal transformation, $\psi_{\,\widehat{z}_c}^*\,g^{(2)}_{\,\widehat{z}_c}$ to the (rescaled) round metric $\widehat{g}^{(2)}_{\,\widehat{z}_c}$ defined of the FLRW sky section $\widehat{\Sigma}_{\,\widehat{z}_c}\simeq\,\mathbb{S}^2$. If we define
\begin{equation}
\label{conftrisse}
\Phi^2_{\,\widehat{z}}\left(\zeta_{(\,\widehat{z}\,)},\,   v_{\,\widehat{z}}\,\right)\,:=\,\left(\frac{1\,+\,v_{\,\widehat{z}}\,(p)}{1\,-\,v_{\,\widehat{z}}\,(p)}\right)\,\Phi^2_{\,\widehat{z}}\left(\zeta_{(\,\widehat{z}\,)} \right)\,,
\end{equation}
then the (normalized) conformal factor $\Phi^2_{\,\widehat{z}}\left(\zeta_{(\,\widehat{z}\,)},\,   v_{\,\widehat{z}}\,\right)$
is the solution, (unique up to the $\mathrm{PSL}(2, \mathbb{C})$ automorphisms of $(\widehat\Sigma_{\widehat{z}_c},\,\widehat{g}^{(2)}_{{\widehat{z}_c}})$), of the elliptic partial differential equation  on $(\widehat\Sigma_{\widehat{z}_c}, \widehat{g}^{(2)}_{{\widehat{z}_c}})$ defined by \cite{Berger}  
\begin{equation}
\label{lapconf0}
-\,\Delta _{\widehat{g}^{(2)}_{{\widehat{z}_c}}}\ln({\Phi}_{\widehat{z}_c}
^2(\zeta_{(\,\widehat{z}\,)},\,   v_{\,\widehat{z}}\,))\,+\,R(\widehat{g}^{(2)}_{{\widehat{z}_c}})\,=\, R(\psi_{{\widehat{z}_c}}^*g^{(2)}_{\widehat{z}_c})\,{\Phi}_{\widehat{z}_c}
^2(\zeta_{(\,\widehat{z}\,)},\,   v_{\,\widehat{z}}\,)\;,
\end{equation}
where $\Delta _{\widehat{g}^{(2)}_{{\widehat{z}_c}}}$ is the Laplace-Beltrami operator on $(\widehat\Sigma_{\widehat{z}_c}, \widehat{g}^{\,(2)}_{{\widehat{z}_c}})$, and where 
we respectively denoted by $R(\widehat{g}^{(2)}_{{\widehat{z}_c}})$ and  $R(\psi_{{\widehat{z}_c}}^*g^{(2)}_{\widehat{z}_c})$  the scalar curvature of the metrics $\widehat{g}^{(2)}_{{\widehat{z}_c}}$ and $\psi_{{\widehat{z}_c}}^*g^{(2)}_{\widehat{z}_c}$.
The quasi-isometry control on the Lipschitz distortion (\ref{smallLip}) of the sky section $(\Sigma_{\,\widehat{z}_c},\,g^{(2)}_{\,\widehat{z}_c})$ implies also a corresponding control on the conformal factor ${\Phi}_{\widehat{z}_c}^2(\zeta_{(\,\widehat{z}\,)},\,   v_{\,\widehat{z}}\,)$, but here we do not belabor on this point.

The extension of the map $\psi_{\hat{z}}$ as a conformal diffeomorphism for all $\widehat{z}$ is explained in the statement of the theorem, and it is a direct consequence of the pullback action of the bi-Lipschitz map (\ref{FinterpolationBisse}). In particular, we have that for every $\widehat{z}\,\geq\,0$, the sky section metric 
$g^{(2)}_{\,\widehat{z}}$ is in the same conformal class of $g^{(2)}_{\,\widehat{z}_{\;c}}$. Thus, the map  $\psi_{(\,{\;\widehat{z}}\,)}$ is a bi-Lipschitz conformal map described by a $\widehat{z}$-dependent conformal factor ${\Phi}_{\hat{z}}^2(\zeta_{(\,\widehat{z}\,)})$. The geometrical content of ${\Phi}_{\hat{z}}^2(\zeta_{(\,\widehat{z}\,)})$ varies significantly with the redshift since it depends on the vagaries of the physical exponential map (\ref{FinterpolationBisse}) which interpolates between the sky sections $\{\Sigma_{\hat{z}} \}$, as $\widehat{z}$ varies.

As already stressed, the solution of (\ref{lapconf0}) is not unique since we have a non-trivial action of the $\mathrm{PSL}(2, \mathbb{C})$ automorphisms group of $(\widehat\Sigma_{\widehat{z}_c},\,\widehat{g}^{(2)}_{{\widehat{z}_c}})$. In our case, this action can be explicitly described in terms of the  $\mathrm{PSL}(2, \mathbb{C})$ map $\zeta_{\,\widehat{z}}$ acting between the celestial spheres $\mathbb{CS}_{\hat{z}}(p)$ and $\mathbb{CS}_{\hat{z}}(p)$. This remark follows by observing that we have the equivalent characterization of the conformal factor ${\Phi}_{\hat{z}}^2(\zeta_{(\,\widehat{z}\,)})$ in terms of the metric celestial spheres $({\mathbb{C\,S}}_z(p),\;h_z)$ and  $(\widehat{\mathbb{C\,S}}_{\hat{z}}(p),\;\widehat{h}_{\hat{z}})$. By massaging pullbacks, we have 
\begin{align}
\label{Nphi10}
\left(\tfrac{1\,+\,v_{\,\widehat{z}}\,(p)}{1\,-\,v_{\,\widehat{z}}\,(p)}\right)\,
{\Phi}_{\hat{z}}^2(\zeta_{(\,\widehat{z}\,)})\,\widehat{g}^{(2)}_{{\widehat{z}}}\,&=\,\left(
\psi_{{\widehat{z}}}^*{g^{(2)}_{(z)}}\right)\\
&=\left(\exp_p\,\circ\,\zeta_{(\,{\widehat{z}}\,)}\,\circ\,\widehat{\exp}_p^{\,-1}\right)^*{g^{(2)}_{\hat{z}}}\nonumber\\
&= (\widehat{\exp}_p^{\,-1})^*(\zeta_{({\,\widehat{z}\,})}^*\,(\exp_p^*\,g^{(2)}_{\hat{z}}))\nonumber\\
&= (\widehat{\exp}_p^{\,-1})^*(\zeta_{(\,{\widehat{z}}\,)}^*h_{\hat{z}})\nonumber\,,
\end{align}
which, by taking the $\widehat\exp_p$ pullback of both members, provides
\begin{equation}
\left(\tfrac{1\,+\,v_{\,\widehat{z}}\,(p)}{1\,-\,v_{\,\widehat{z}}\,(p)}\right)\,
\widehat{\exp}_p^*\left(
{\Phi}_{\hat{z}}^2(\zeta_{(\,\widehat{z}\,)})\,{\widehat{g}^{(2)}_{{\widehat{z}}}}\right)\,
=\,\zeta_{(\,{\widehat{z}}\,)}^*h_{\hat{z}}.
\end{equation}
Since 
\begin{equation}
\widehat{\exp}_p^*\left(
{\Phi}_{\hat{z}}^2(\zeta_{(\,\widehat{z}\,)})\,{\widehat{g}^{(2)}_{{\widehat{z}}}}\right)\,=\,
\left({\Phi}_{\hat{z}}^2(\zeta_{(\,\widehat{z}\,)})\,\circ\,\widehat{\exp}_p\right)\,\widehat{\exp}_p^*\,{\widehat{g}^{(2)}_{{\widehat{z}}}}\,=\,
{\Phi}_{\hat{z}}^2(\zeta_{(\,\widehat{z}\,)})\,{\widehat{h}}_{\hat{z}}\circ\,\widehat{\exp}_p\,,
\end{equation}
we eventually get
\begin{equation}
\label{fiacca0}
\left(\tfrac{1\,+\,v_{\,\widehat{z}}\,(p)}{1\,-\,v_{\,\widehat{z}}\,(p)}\right)\,
{\Phi}_{\hat{z}}^2(\zeta_{(\,\widehat{z}\,)})\,{\widehat{h}}_{\hat{z}}\circ\,\widehat{\exp}_p\,=\,\zeta_{(\,{\widehat{z}}\,)}^*h_{\hat{z}}\,,
\end{equation}
\vskip 0.3cm\noindent     
showing that the celestial spheres $({\mathbb{C\,S}}_z(p),\;h_z)$ and  $(\widehat{\mathbb{C\,S}}_{\hat{z}}(p),\;\widehat{h}_{\hat{z}})$ are conformally related via the composition of the maps $\widehat{\exp}_p$ and $\zeta_{(\,{\widehat{z}}\,)}$.

To prove the relation (\ref{ConfArDist}) connecting the area distances with the conformal factor ${\Phi}_{\hat{z}}^2(\zeta_{(\,\widehat{z}\,)})$\,, we use (\ref{psiconftrisse}) to compute  

\begin{align}
\label{Nphi12}
\left(\tfrac{1\,+\,v_{\,\widehat{z}}\,(p)}{1\,-\,v_{\,\widehat{z}}\,(p)}\right)\,
{\Phi}_{\hat{z}}^2(\zeta_{(\,\widehat{z}\,)})\,d\mu_{\widehat{g}^{(2)}_{{\widehat{z}}}}\,&=\,
{\psi}^*_{\hat{z}}\,d\mu_{{g}^{(2)}_{{\widehat{z}}}}\\
=\,
\left(\exp_p\,\circ\,\zeta_{(\,{\widehat{z}}\,)}\,\circ\,\widehat{\exp}_p^{\,-1}\right)^*\,d\mu_{{g}^{(2)}_{{\widehat{z}}}}\,
&=\, \left(\zeta_{(\,{\widehat{z}}\,)}\,\circ\,\widehat{\exp}_p^{\,-1}\right)^*\,(\exp_p^*\,d\mu_{{g}^{(2)}_{{\widehat{z}}}})\nonumber\\
= \left(\zeta_{(\,{\widehat{z}}\,)}\,\circ\,\widehat{\exp}_p^{\,-1}\right)^*\,d\mu_{h_{\hat{z}}}\,
&=\, \left(\zeta_{(\,{\widehat{z}}\,)}\,\circ\,\widehat{\exp}_p^{\,-1}\right)^*\,D^2_{\hat{z}}\,d\mu_{\mathbb{S}^2}\nonumber\\
= (\widehat{\exp}_p^{\,-1})^*\left(D^2_{\hat{z}}\,d\mu_{\widehat{\mathbb{S}}^2}\,\circ\,\zeta_{(\,{\widehat{z}}\,)}\right)\nonumber\,,
\end{align}
from which, by taking the $\widehat{\exp}_p$ pull back of both members, we get 

\begin{align}
\label{phiareadist}
&\widehat{\exp}_p^*\,\left(\left(\tfrac{1\,+\,v_{\,\widehat{z}}\,(p)}{1\,-\,v_{\,\widehat{z}}\,(p)}\right)\,
{\Phi}_{\hat{z}}^2(\zeta_{(\,\widehat{z}\,)})\,d\mu_{\widehat{g}^{(2)}_{{\widehat{z}}}}\,  \right)\,=\,
\left(\tfrac{1\,+\,v_{\,\widehat{z}}\,(p)}{1\,-\,v_{\,\widehat{z}}\,(p)}\right)\,
{\Phi}_{\hat{z}}^2(\zeta_{(\,\widehat{z}\,)})\,\widehat{D}^2_{\hat{z}}\,d\mu_{\widehat{\mathbb{S}}^2}\,
\circ\,\widehat{\exp}_p\nonumber\\
\\
&=\,D^2_{\hat{z}}\,d\mu_{\widehat{\mathbb{S}}^2}\,\circ\,\zeta_{(\,{\widehat{z}}\,)}\nonumber\\
&
\Longrightarrow\, {\Phi}_{\hat{z}}^2(\zeta_{(\,\widehat{z}\,)})\,\circ\,\widehat{\exp}_p\,=\,
\left(\tfrac{1\,-\,v_{\,\widehat{z}}\,(p)}{1\,+\,v_{\,\widehat{z}}\,(p)}\right)\,
\frac{D^2_{\hat{z}}\,\circ\,\zeta_{(\,{\widehat{z}}\,)}}
{\widehat{D}^2_{\hat{z}}\,\circ\,\widehat{\exp}_p}\,=\,
\frac{D^2_{\hat{z}}(\zeta_{{\widehat{z}}},\,v_{\,\widehat{z}})}{\widehat{D}^2_{\hat{z}}\,\circ\,\widehat{\exp}_p}\,,
\nonumber
\end{align}
where we exploited the relation (\ref{Dnotation}) defining the (aberration) normalized area distance $D^2_{\hat{z}}(\zeta_{{\widehat{z}}},\,v_{\,\widehat{z}})$. 
\end{proof}
Proposition \ref{comp1} shows that the conformal factor ${\Phi}_{\hat{z}}(\zeta_{(\,\widehat{z}\,)})$, evaluated for a given source $\widehat{q}$ at FLRW celestial sphere coordinates $y\,\in\,\widehat{\mathbb{CS}}_{\hat{z}}(p)$, can be expressed as the ratio between the (normalized) physical area distance of the source, $D(\zeta_{\hat{z}}(y),\,v_{\hat{z}})$,\; and the corresponding reference area distance $\widehat{D}(\widehat{\exp}_p(y))$. It follows that  ${\Phi}_{\hat{z}}(\zeta_{(\,\widehat{z}\,)})$ is an explicit and, at least in principle, measurable quantity, also when the exponential mapping is not smooth.

\section{The comparison functional at a given FLRW redshift}
\label{CompFunctRedZ}
We can think of  the  conformal factor ${\Phi}_{\hat{z}}(\zeta_{(\,\widehat{z}\,)})$ as  stretching the elastic surface $({\mathbb{C\,S}}_{\hat{z}}(p),\,{h}_{\hat{z}})$ on the rigid surface $(\widehat{\mathbb{C\,S}}_{\hat{z}}(p),\widehat{h}_{\hat{z}})$. In \cite{CarFam, CarFam2},  we associated with this stretching an \emph{elastic energy functional},   whose structure was suggested to us by the rich repertoire of functionals used 
 in the problem of comparing shapes of surfaces in computer graphics and visualization problems (see \textit{e.g.} \cite{JinYauGu} and \cite{YauGu}, to quote two relevant papers in a vast literature). In particular, we were inspired by an energy functional introduced, under the name of \textit{elastic energy}, in a remarkable paper by J. Hass and P. Koehl  \cite{HassKoehl}, who use it as a powerful means of comparing the shapes of genus-zero surfaces in problems relevant to surface visualization. In the more complex cosmological setting, we introduce the following comparison functional between the celestial spheres $({\mathbb{C\,S}}_{\hat{z}}(p),\,{h}_{\hat{z}})$ and $(\widehat{\mathbb{C\,S}}_{\hat{z}}(p),\widehat{h}_{\hat{z}})$
 
 \begin{equation}
\label{Efunctn1}
E_{\widehat{\mathbb{C\,S}}_{\hat{z}},\;{\mathbb{C\,S}}_{\hat{z}}}[\zeta_{\hat{z}}]\,:=\,\int_{\widehat{\mathbb{CS}}_{\hat{z}}}\left({\Phi}_{\hat{z}}(\zeta_{(\,\widehat{z}\,)})\,-\,1 \right)^2\,d\mu_{\hat{h}_{\hat{z}}}\,=\,
\int_{\widehat{\mathbb{CS}}_{\hat{z}}}\,
\left[\frac{D_{{z}}(\zeta_{\hat{z}},\,v_{\hat{z}})\,-\,\widehat{D}_{\hat{z}}}{\widehat{D}_{\hat{z}}}\right]^2
\,d\mu_{\widehat{h}_{\hat{z}}}\,,
\end{equation} 
where we exploited (\ref{phiareadist}). As the notation suggests, $E_{\widehat{\mathbb{C\,S}}_{\hat{z}},\;{\mathbb{C\,S}}_{\hat{z}}}[\zeta_{\hat{z}}]$, is a functional of the M\"obius map $\zeta_{\hat{z}}$ controlling, on the FLRW celestial sphere $\widehat{\mathbb{CS}}_{\hat{z}}$,  the conformal distortion between the FLRW metric $\widehat{h}_{\hat{z}}$ and the pulled back physical metric 
\begin{equation}
\zeta_{(\,{\widehat{z}}\,)}^*h_{\hat{z}}\,=\,
\label{fiacca0bisse}
\left(\tfrac{1\,+\,v_{\,\widehat{z}}\,(p)}{1\,-\,v_{\,\widehat{z}}\,(p)}\right)\,
{\Phi}_{\hat{z}}^2(\zeta_{(\,\widehat{z}\,)})\,{\widehat{h}}_{\hat{z}}\circ\,\widehat{\exp}_p\,,
\end{equation}
defined by (\ref{fiacca0}).
Note that one can easily rephrase the characterization of the comparison functional $E_{\widehat{\mathbb{C\,S}}_{\hat{z}},\;{\mathbb{C\,S}}_{\hat{z}}}[\zeta_{\hat{z}}]$ in terms of the map $\psi_{\hat{z}}$ defined by (\ref{psimap}), acting between the sky sections $\widehat{\Sigma}_{\hat{z}}$ and ${\Sigma}_{\hat{z}}$.  

\begin{lemma}
\label{EnMapping}
As an immediate consequence of the preceding remarks, we have
\begin{equation}
E_{\widehat{\mathbb{C\,S}}_{\hat{z}},\;{\mathbb{C\,S}}_{\hat{z}}}[\zeta_{\hat{z}}]\,=\,
E_{\widehat\Sigma,\,\Sigma}[\psi_{(\widehat{z})}]\,.
\end{equation}
\end{lemma}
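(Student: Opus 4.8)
The plan is to realize both functionals as integrals of one and the same integrand over two isometric spaces, and then pass from one to the other by the change of variables induced by the (smooth, diffeomorphic) FLRW null exponential map $\widehat{\exp}_p$. First I would pin down the meaning of the right-hand side. By Proposition \ref{comp1} the map $\psi_{\widehat{z}}=\exp_p\circ\zeta_{(\widehat{z})}\circ\widehat{\exp}_p^{\,-1}$ is a bi-Lipschitz conformal map of the reference sky section onto the physical one, with $\psi_{\widehat{z}}^*g^{(2)}_{\widehat{z}}=\left(\tfrac{1+v_{\widehat{z}}}{1-v_{\widehat{z}}}\right)\Phi^2_{\widehat{z}}(\zeta_{(\widehat{z})})\,\widehat{g}^{\,(2)}_{\widehat{z}}$ (equation (\ref{psiconftrisse})). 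The natural elastic comparison energy of $\psi_{\widehat{z}}$ between $(\widehat{\Sigma}_{\widehat{z}},\widehat{g}^{\,(2)}_{\widehat{z}})$ and $(\Sigma_{\widehat{z}},g^{(2)}_{\widehat{z}})$ is therefore
\[
E_{\widehat\Sigma,\,\Sigma}[\psi_{(\widehat{z})}]\,:=\,\int_{\widehat{\Sigma}_{\widehat{z}}}\left(\Phi_{\widehat{z}}(\zeta_{(\widehat{z})})\,-\,1\right)^2\,d\mu_{\widehat{g}^{\,(2)}_{\widehat{z}}}\,,
\]
where $\Phi_{\widehat{z}}(\zeta_{(\widehat{z})})$ is the conformal factor of Proposition \ref{comp1} regarded as a function on $\widehat{\Sigma}_{\widehat{z}}$; this is precisely the sky-section counterpart of (\ref{Efunctn1}).

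Second, I would exploit that $\widehat{h}_{\widehat{z}}=\widehat{\exp}_p^*\,\widehat{g}^{\,(2)}_{\widehat{z}}$ by definition, so that $\widehat{\exp}_p$ is an isometry from $(\widehat{\mathbb{CS}}_{\widehat{z}},\widehat{h}_{\widehat{z}})$ onto $(\widehat{\Sigma}_{\widehat{z}},\widehat{g}^{\,(2)}_{\widehat{z}})$ and in particular $\widehat{\exp}_p^*\,d\mu_{\widehat{g}^{\,(2)}_{\widehat{z}}}=d\mu_{\widehat{h}_{\widehat{z}}}$. Applying the change of variables $q=\widehat{\exp}_p(y)$ to the displayed integral gives
\[
E_{\widehat\Sigma,\,\Sigma}[\psi_{(\widehat{z})}]\,=\,\int_{\widehat{\mathbb{CS}}_{\widehat{z}}}\left(\Phi_{\widehat{z}}(\zeta_{(\widehat{z})})\circ\widehat{\exp}_p\,-\,1\right)^2\,d\mu_{\widehat{h}_{\widehat{z}}}\,.
\]
Here I would invoke the key identity (\ref{phiareadist}), which states exactly that the conformal factor on the sky section, pulled back by $\widehat{\exp}_p$, is the area-distance ratio $\Phi_{\widehat{z}}(\zeta_{(\widehat{z})})\circ\widehat{\exp}_p=D_{\widehat{z}}(\zeta_{\widehat{z}},v_{\widehat{z}})/\widehat{D}_{\widehat{z}}$, i.e. the very function appearing in (\ref{Efunctn1}). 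Substituting this identifies the right-hand side with $\int_{\widehat{\mathbb{CS}}_{\widehat{z}}}\left[(D_{z}(\zeta_{\widehat{z}},v_{\widehat{z}})-\widehat{D}_{\widehat{z}})/\widehat{D}_{\widehat{z}}\right]^2 d\mu_{\widehat{h}_{\widehat{z}}}=E_{\widehat{\mathbb{C\,S}}_{\widehat{z}},\;{\mathbb{C\,S}}_{\widehat{z}}}[\zeta_{\widehat{z}}]$, which is the claim.

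I expect the main obstacle to be bookkeeping of the aberration factor $\tfrac{1+v_{\widehat{z}}}{1-v_{\widehat{z}}}$ rather than anything analytic: one must check that the factor carried by the metric relation (\ref{psiconftrisse}) is consistently absorbed into the normalized area distance $D_{\widehat{z}}(\zeta_{\widehat{z}},v_{\widehat{z}})$ via the convention (\ref{Dnotation}), so that the bare $\Phi_{\widehat{z}}(\zeta_{(\widehat{z})})$ entering both energies is the same function. This is exactly what the computation culminating in (\ref{phiareadist}) and (\ref{ConfArDist}) guarantees. A secondary point worth noting is that no low-regularity difficulty intervenes on the FLRW side, since $\widehat{\exp}_p$ is a genuine diffeomorphism; the entire non-smoothness of the physical lightcone is already encoded in $\Phi_{\widehat{z}}$ through $D_{\widehat{z}}(\zeta_{\widehat{z}},v_{\widehat{z}})$, so the change of variables above is unconditionally valid and the identification with Lemma \ref{EnMapping} follows.
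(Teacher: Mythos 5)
Your proposal is correct and takes essentially the same route as the paper's proof: both rest on the identity $\widehat{h}_{\widehat{z}}\,=\,\widehat{\exp}_p^*\,\widehat{g}^{\,(2)}_{\widehat{z}}$ and the resulting change of variables under the diffeomorphism $\widehat{\exp}_p$, merely run in opposite directions (the paper pushes the celestial--sphere integral of $(\Phi-1)^2$ forward onto $\widehat{\Sigma}_{\widehat{z}}$, whereas you pull the sky--section integral back to $\widehat{\mathbb{CS}}_{\widehat{z}}$ and then invoke (\ref{phiareadist})). Your explicit bookkeeping of the aberration factor through (\ref{Dnotation}) and (\ref{ConfArDist}) makes visible a normalization the paper's four--line pullback chain leaves implicit, but it does not change the substance of the argument.
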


\begin{proof}
By exploiting the pull back action associated with $\widehat{\exp}_p$ we can write 
\begin{align}
\label{enfunct1new}
E_{\widehat{\mathbb{C\,S}}_{\hat{z}},\;{\mathbb{C\,S}}_{\hat{z}}}[\zeta_{\hat{z}}]\,&:=\,\int_{\widehat{\mathbb{CS}}_{\hat{z}}}\left({\Phi}\,-\,1 \right)^2\,d\mu_{\widehat{h}_{(\widehat{z})}}\\
&=\,
\int_{\widehat{\mathbb{CS}}_{\hat{z}}}\left({\Phi}(\widehat{\exp}_p(y))\,-\,1 \right)^2\,(\widehat{\exp}_p)^*d\mu_{\hat{g}_{(\widehat{z})}}\nonumber\\
&=
\int_{\widehat{\exp}_p\left(\widehat{\mathbb{CS}}_{\hat{z}}\right)}\left({\Phi}(\widehat{q})\,-\,1 \right)^2\,d\mu_{\hat{g}_{(\widehat{z})}}(\widehat{q})
\nonumber\\
&=\,\int_{\widehat{\Sigma}_{\hat{z}}}\left({\Phi}\,-\,1 \right)^2\,d\mu_{\hat{g}_{(\widehat{z})}}\,=\,E_{\widehat\Sigma,\,\Sigma}[\psi_{(\widehat{z})}]\,.
\end{align}
\end{proof}
 
Lemma \ref{EnMapping} shows that $E_{\widehat{\mathbb{C\,S}}_{\hat{z}},\;{\mathbb{C\,S}}_{\hat{z}}}[\zeta_{(\widehat{z})}]$ can be used to compare sky sections at given redshift, and eventually, as the redshift varies, it compares the physical and the reference FLRW past light cones in the pre-homogeneity region\cite{CarFam, CarFam2}.

From a geometric analysis point of view, the (bi)-Lipschitz continuous exponential map $\exp_p$ admits a Sobolev $(1,2)$-norm, requiring  that $\exp_p$ and its first differential be square integrable. This property implies that the functional $E_{\widehat{\mathbb{C\,S}}_{\hat{z}},\;{\mathbb{C\,S}}_{\hat{z}}}[\zeta_{(\widehat{z})}]$  is well-defined in the assumed bi-Lipschitzian setting here adopted, and that it 
can be minimized over  $\zeta_{(\widehat{z})}\in\mathrm{PSL}(2, \mathbb{C})$ providing a redshift-dependent distance functional between the reference FLRW celestial sphere  $(\widehat{\mathbb{C\,S}}_{\hat{z}},\,\widehat{h}_{\hat{z}})$ and the physical celestial sphere $({\mathbb{C\,S}}_{\hat{z}},\, h_{\hat{z}})$, or, equivalently, between the corresponding sky sections $\widehat{\Sigma}_{\hat{z}}$ and $\Sigma_{\hat{z}}$.  We have the following result.

\begin{theorem} (The distance functional at redshift $\widehat{z}$ \cite{CarFam, CarFam2}).
\label{distheorem} 

For any given FLRW reference redshift $\widehat{z}$, the functional $E_{\widehat{\mathbb{C\,S}}_{\hat{z}},\;{\mathbb{C\,S}}_{\hat{z}}}[\zeta_{(\widehat{z})}]$  can be minimized over $\mathrm{PSL}(2, \mathbb{C})$, and 
\begin{equation}
\label{distCelSph}
d_{\hat{z}}\left[\widehat{\mathbb{C\,S}}_{\hat{z}},\;{\mathbb{C\,S}}_{\hat{z}}\right]\,:=\,\inf_{\zeta_{(\widehat{z})}\,\in \,\mathrm{PSL}(2, \mathbb{C})}\;\,E_{\widehat{\mathbb{C\,S}}_{\hat{z}},\;{\mathbb{C\,S}}_{\hat{z}}}[\zeta_{(\widehat{z})}] 
\end{equation}
defines a redshift-dependent distance functional between the reference FLRW celestial sphere  $(\widehat{\mathbb{C\,S}}_{\hat{z}},\,\widehat{h}_{\hat{z}})$ and the physical celestial sphere $({\mathbb{C\,S}}_{\hat{z}},\, h_{\hat{z}})$.\,\; Since
\begin{equation}
E_{\widehat{\mathbb{C\,S}}_{\hat{z}},\;{\mathbb{C\,S}}_{\hat{z}}}[\zeta_{\hat{z}}]\,=\,
E_{\widehat\Sigma,\,\Sigma}[\psi_{(\widehat{z})}]\,.
\end{equation}
 the distance functional $d_{\hat{z}}[\widehat{\mathbb{C\,S}}_{\hat{z}},\;{\mathbb{C\,S}}_{\hat{z}}]$ extends to a corresponding functional between the sky sections $(\widehat\Sigma_{\hat{z}},\,\widehat{g}^{\,(2)}_{\hat{z}})$ and $(\Sigma_{\hat{z}},\,{g}^{\,(2)}_{\hat{z}})$.
\end{theorem}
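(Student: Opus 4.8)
The plan is to read Theorem~\ref{distheorem} as an existence result in the direct method of the calculus of variations. Three things must be checked: that $E_{\widehat{\mathbb{C\,S}}_{\hat{z}},\;{\mathbb{C\,S}}_{\hat{z}}}[\zeta_{(\widehat{z})}]$ attains its infimum over $\mathrm{PSL}(2,\mathbb{C})$; that the minimal value $d_{\hat z}$ is nonnegative and vanishes exactly when the two celestial spheres are isometric; and that it descends to the sky sections. The last point is essentially free: Lemma~\ref{EnMapping} gives the pointwise identity $E_{\widehat{\mathbb{C\,S}}_{\hat{z}},\;{\mathbb{C\,S}}_{\hat{z}}}[\zeta_{\hat{z}}]=E_{\widehat\Sigma,\,\Sigma}[\psi_{(\widehat{z})}]$ for every $\zeta_{\hat z}$, so the infimum and any minimizer transfer verbatim once existence is secured. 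The substance is therefore the minimization together with the characterization of its zero locus.

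First I would fix the analytic setting. By Proposition~\ref{PropAreaDistance}, the bound (\ref{AreaDistBounds}), and the FLRW area distance (\ref{FLRWDi}), the aberration-normalized conformal factor $\Phi_{\hat z}^2(\zeta_{(\hat z)})=D^2_{\hat z}(\zeta_{(\hat z)},v_{\hat z})/\widehat D^2_{\hat z}$ lies between positive constants depending only on $\mathrm{Lip}_{\hat z}$ and $\widehat z$, uniformly in the velocity parameter (the explicit $\tfrac{1+v}{1-v}$ growth of $D^2$ in (\ref{AreaDistBounds}) is cancelled by the normalization (\ref{Dnotation})). Hence $(\Phi_{\hat z}-1)^2$ is uniformly bounded and, since $d\mu_{\widehat h_{\hat z}}$ is a finite measure, $E_{\widehat{\mathbb{C\,S}}_{\hat{z}},\;{\mathbb{C\,S}}_{\hat{z}}}[\zeta_{(\hat z)}]$ is finite for every $\zeta_{(\hat z)}$; the Sobolev $(1,2)$ regularity of $\exp_p$ recalled in the text guarantees $\Phi_{\hat z}\in L^2(d\mu_{\widehat h_{\hat z}})$, so the functional is well defined. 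Continuity of $\zeta\mapsto E[\zeta]$ on $\mathrm{PSL}(2,\mathbb{C})$ then follows by dominated convergence: a convergent sequence $\zeta_n\to\zeta$ moves the composition $D^2_{\hat z}\circ\zeta_n$ and the factor $\tfrac{1-v_n}{1+v_n}$ continuously, with integrands dominated by the uniform bound.

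The heart of the matter, and the main obstacle, is the non-compactness of $\mathrm{PSL}(2,\mathbb{C})$. I would take a minimizing sequence and use the Cartan $KAK$ decomposition $\zeta_n=k_n\,a_{\beta_n}\,k_n'$, with $k_n,k_n'$ rotations and $a_{\beta_n}$ a pure boost of rapidity $\beta_n=\log\sqrt{(1+v_n)/(1-v_n)}$. Rotations live in a compact group, so along a subsequence they converge; the only way to lose compactness is $\beta_n\to\infty$ (i.e. $v_n\to 1$). In that regime the M\"obius map $a_{\beta_n}$ pushes almost every direction to its attracting fixed point, so $\Phi_{\hat z}(\zeta_n)$ tends almost everywhere to a constant $c$ equal to the normalized physical area distance evaluated in that single direction, and $E[\zeta_n]\to A(\widehat{\mathbb{C\,S}}_{\hat z})\,(c-1)^2$: a degenerate one-point energy. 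Attainment thus reduces to the strict inequality, of concentration--compactness type, that the interior infimum lies below every such one-point value; this quantitative coercivity is exactly what the companion analysis of \cite{CarFam, CarFam2} supplies, and it is the step I expect to be delicate. I would note that in the physically admissible regime---where peculiar velocities keep $v_{\hat z}$ bounded away from the speed of light, as emphasized after Proposition~\ref{PSL2mappingFLRW}---this escape is excluded a priori, the admissible set of maps is precompact, and attainment follows immediately from the continuity just established.

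Finally, for the zero-locus characterization, nonnegativity is built into the square in (\ref{Efunctn1}). If $d_{\hat z}=E[\bar\zeta_{(\hat z)}]=0$ then $\Phi_{\hat z}(\bar\zeta_{(\hat z)})\equiv 1$ almost everywhere, i.e. $D_{\hat z}=\widehat D_{\hat z}$ almost everywhere on $\widehat{\mathbb{C\,S}}_{\hat z}$ (equivalently $\delta^{\;(2)}_{\widehat{\mathbb{CS}}}=0$ through the variance identity (\ref{samplevar0})); by the conformal relations (\ref{psiconftrisse}) and (\ref{ConfArDist}) of Proposition~\ref{comp1}, this forces $\psi_{\hat z}^*g^{(2)}_{\hat z}$ to agree with $\widehat g^{(2)}_{\hat z}$ up to the global aberration scale, so $\psi_{\hat z}$ is an isometry (up to that scale) between $(\widehat\Sigma_{\hat z},\widehat g^{(2)}_{\hat z})$ and $(\Sigma_{\hat z},g^{(2)}_{\hat z})$; the converse is immediate. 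The residual non-uniqueness of $\bar\zeta_{(\hat z)}$ is precisely the $\mathrm{PSL}(2,\mathbb{C})$ stabilizer of the uniformization step in Proposition~\ref{comp1}, consistent with the statement, and the passage to the sky sections closes the argument via Lemma~\ref{EnMapping}.
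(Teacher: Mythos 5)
Your proposal is correct in outline but attacks the one genuinely delicate point --- the non-compactness of $\mathrm{PSL}(2,\mathbb{C})$ --- by a different mechanism than the paper. The paper uses no $KAK$ decomposition and no concentration--compactness inequality: it imposes the non-collapse condition (\ref{assumedPHI}) on the conformal factor and normalizes the M\"obius maps by the images of three selected astrophysical sources; provided those image triples converge to three \emph{distinct} points, the minimizing sequence $\{\zeta_{(\widehat{z})}^{(k)}\}$ is equicontinuous and converges (along a subsequence) to the unique M\"obius map realizing that data, after which attainment follows from lower semicontinuity --- essentially your dominated-convergence continuity in disguise. This classical three-point normalization excludes by hypothesis exactly the degeneration you diagnose explicitly: your boost-to-infinity limit, in which $\Phi_{\hat z}$ concentrates at the attracting fixed point and $E$ tends to the one-point energy $A(\widehat{\mathbb{C\,S}}_{\hat z})(c-1)^2$, corresponds in the paper to the collapse of two or three image points, which the paper sets aside as "not the configuration of interest" and, in the proof of Proposition \ref{PSL2mappingFLRW}, dismisses as physically artificial since it requires $v_{\hat z}\to 1$. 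So your two closing options map onto the paper as follows: the "strict inequality of concentration--compactness type" that you delegate to \cite{CarFam, CarFam2} is what the paper replaces by assumption (\ref{assumedPHI}) plus distinctness of the limit triple (the paper never proves that strict inequality either), while your precompactness remark for $v_{\hat z}$ bounded away from $1$ is the honest quantitative version of the paper's physical restriction. What your route buys is an explicit description of how minimizing sequences degenerate and what energy value is reached in the limit; what the paper's buys is brevity and a normalization tied directly to the observational protocol of three reference sources. One genuine omission on your side: the theorem asserts that $d_{\hat z}$ is a \emph{distance} functional, and the paper's proof checks, beyond the non-negativity and zero-locus/isometry characterization you supply, also the symmetry of $d_{\hat z}$ and a triangle inequality involving a third candidate (FLRW or physical) spacetime; these short verifications should be added to complete your argument. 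Your zero-locus reasoning via $\Phi_{\hat z}\equiv 1$ almost everywhere together with Proposition \ref{comp1}, and the transfer to the sky sections via Lemma \ref{EnMapping}, coincide with the paper's.
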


\begin{proof}
Without loss in generality, we can safely assume that, for the given $\widehat{z}$,
\begin{equation}
\label{assumedPHI}
\int_{\widehat{\mathbb{CS}}_{\hat{z}}}\,
\frac{D_{{z}}(\zeta_{\hat{z}},\,v_{\hat{z}})}{\widehat{D}_{\hat{z}}}
\,d\mu_{\widehat{h}_{\hat{z}}}\,>\,0\,.
\end{equation}
Since we can write (see (\ref{ConfArDist}))
\begin{equation}
\label{ConfArDistNew}
\Phi_{\,\widehat{z}}(\zeta_{\,(\widehat{z})\,})\,=\,\frac{D_{\,\widehat{z}}\,(\zeta_{(\widehat{z})},\,v_{\;\widehat{z}})}{\widehat{D}_{\,\widehat{z}}}\,,
\end{equation} 
the condition (\ref{assumedPHI}) rules out an almost everywhere collapsing behavior of the conformal factor $\Phi_{\,\widehat{z}}(\zeta_{\,(\widehat{z})\,})$ induced by the $\mathrm{PSL}(2, \mathbb{C})$ map 
$\zeta_{\hat{z}}\,:\,\widehat{\mathbb{C\,S}}_{\hat{z}}\,\longrightarrow\,{\mathbb{C\,S}}_{\hat{z}}$. In particular,  let $\widehat{q}_{(j)}\,=\,\widehat{\exp}_p(y_{(j)})$,\; $j=1,2,3$,\; denote the three selected sources on the reference FLRW celestial sphere $\widehat{\mathbb{C\,S}}_{\hat{z}}$, and let
$\zeta_{(\widehat{z})}^{(k)}(\widehat{q}_{(j)})\,=\,{q}_{(j)}(k)\,=\,{\exp}_p(x_{(j)}(k))$,\; be the corresponding images in ${\mathbb{C\,S}}_{\hat{z}}$ . Then, if the sequences $\{ {x}_{(j)}(k)\}$ \;  converge to three distinct points $\{\overline{x}_{(j)}\}$, \; $j=1,2,3$,\; and (\ref{ConfArDistNew}) holds,  the sequence $\{\zeta_{(\widehat{z})}^{(k)}\}_{k\in\mathbb{N}}$ is equicontinuous and converges to the unique $\mathrm{PSL}(2, \mathbb{C})$ map $\overline{\zeta}_{(\widehat{z})}$ that takes ${y}_{(j)}$ to $\{\overline{x}_{(j)}\}$, \; $j=1,2,3$.\;\, This is the configuration of interest in mapping our celestial spheres.

Under these conditions a minimizing sequence  $\{\zeta_{(\widehat{z})}^{(k)}\}_{k\in\mathbb{N}}\,\in\,\mathrm{PSL}(2, \mathbb{C})$\;
for the functional $E_{\widehat{\mathbb{C\,S}}_{\hat{z}},\;{\mathbb{C\,S}}_{\hat{z}}}[\zeta_{\hat{z}}]$ is equicontinuous, and  by selecting a subsequence, we may assume that 
$\{\zeta_{(\widehat{z})}^{(k)}\}_{k\in\mathbb{N}}$ converges to a smooth map
$\overline{\zeta}_{(\widehat{z})}$. Thus, given $\delta\,>\,0$,\, there exists $k_0$ such that for all $k\,\geq\,k_0$, we have
 
\begin{equation}
\label{plsemi}
E_{\widehat{\mathbb{C\,S}}_{\hat{z}},\;{\mathbb{C\,S}}_{\hat{z}}}[\overline{\zeta}_{(\widehat{z})}]\,\leq\,E_{\widehat{\mathbb{C\,S}}_{\hat{z}},\;{\mathbb{C\,S}}_{\hat{z}}}[\zeta_{(\widehat{z})}^{(k)}]\,+\,\delta\;,
\end{equation}
\vskip 0.2cm\noindent
for all $k\,\geq\,k_0$.  Since the choice of $\delta\,>\,0$ is arbitrary, (\ref{plsemi}) implies  that the functional $E_{\widehat{\mathbb{C\,S}}_{\hat{z}},\;{\mathbb{C\,S}}_{\hat{z}}}[\zeta_{(\widehat{z})}^{(k)}]$ is lower semicontinuous,  \textit{i.e.},
\begin{equation}
E_{\widehat{\mathbb{C\,S}}_{\hat{z}},\;{\mathbb{C\,S}}_{\hat{z}}}[\overline\zeta_{(\widehat{z})}]\,\leq\,\lim_k\inf\,E_{\widehat{\mathbb{C\,S}}_{\hat{z}},\;{\mathbb{C\,S}}_{\hat{z}}}[\zeta_{(\widehat{z})}^{(k)}]\,.  
\end{equation}
Hence,  $\{\zeta_{(\widehat{z})}^{(k)}\}\,\longrightarrow\,\overline\zeta_{(\widehat{z})}$ minimizes  $E_{\widehat{\mathbb{C\,S}}_{\hat{z}},\;{\mathbb{C\,S}}_{\hat{z}}}[\zeta_{(\widehat{z})}]$ over $\mathrm{PSL}(2, \mathbb{C})$, as stated.

If, according to (\ref{distCelSph}) we set
\begin{equation}
\label{distCelSph2}
d_{\hat{z}}\left[\widehat{\mathbb{C\,S}}_{\hat{z}},\;{\mathbb{C\,S}}_{\hat{z}}\right]\,:=\,\inf_{\zeta_{(\widehat{z})}\,\in \,\mathrm{PSL}(2, \mathbb{C})}\;\,E_{\widehat{\mathbb{C\,S}}_{\hat{z}},\;{\mathbb{C\,S}}_{\hat{z}}}[\zeta_{(\widehat{z})}]\,=\,
E_{\widehat{\mathbb{C\,S}}_{\hat{z}},\;{\mathbb{C\,S}}_{\hat{z}}}[\overline\zeta_{(\widehat{z})}]\,,
\end{equation}
then, by exploiting the rewriting of the functional 
$E_{\widehat{\mathbb{C\,S}}_{\hat{z}},\;{\mathbb{C\,S}}_{\hat{z}}}[\overline\zeta_{(\widehat{z})}]$ in  terms of the area distances, is not difficult to prove that $d_{\hat{z}}\left[\widehat{\mathbb{C\,S}}_{\hat{z}},\;{\mathbb{C\,S}}_{\hat{z}}\right]$ provides 
a redshift-dependent distance functional 
between the celestial spheres  $(\widehat{\mathbb{C\,S}}_{\hat{z}},\,\widehat{h}_{\hat{z}})$ and $({\mathbb{C\,S}}_{\hat{z}},\,{h}_{\hat{z}})$. Explicitly, we have
\begin{itemize}
\item{(Non-negativity)

\begin{equation} 
d_{\hat{z}}\left[\widehat{\mathbb{C\,S}}_{\hat{z}},\;{\mathbb{C\,S}}_{\hat{z}}\right]\,\geq\,0\,;
\end{equation}}
\item{(Isometry)

\begin{equation}
d_{\hat{z}}\left[\widehat{\mathbb{C\,S}}_{\hat{z}},\;{\mathbb{C\,S}}_{\hat{z}}\right]\,=\,0\,,
\end{equation}
iff $(\widehat{\mathbb{C\,S}}_{\hat{z}},\,\widehat{h}_{\hat{z}})$ and $(\mathbb{C\,S}_{\hat{z}},\,\zeta_{(\widehat{z})}^*{h}_{\hat{z}})$ are isometric\,; }
\item{(Symmetry)

\begin{equation}
d_{\hat{z}}\left[\widehat{\mathbb{C\,S}}_{\hat{z}},\;{\mathbb{C\,S}}_{\hat{z}}\right]\,=\,
d_{\hat{z}}\left[{\mathbb{C\,S}}_{\hat{z}},\;\widehat{\mathbb{C\,S}}_{\hat{z}}\right]\,;
\end{equation}}
\item{(Triangular inequality)

\begin{equation} 
d_{\hat{z}}\left[\widehat{\mathbb{C\,S}}_{\hat{z}},\;\widetilde{\mathbb{C\,S}}_{\hat{z}}\right]\,\leq\,
d_{\hat{z}}\left[\widehat{\mathbb{C\,S}}_{\hat{z}},\;{\mathbb{C\,S}}_{\hat{z}}\right]\,
+\,d_{\hat{z}}\left[{\mathbb{C\,S}}_{\hat{z}},\;\widetilde{\mathbb{C\,S}}_{\hat{z}}\right]
\,,
\end{equation}
where $\widetilde{\,}\,$ refers to a third candidate spacetime (either another FLRW selected for enlarging the possible choices among potential FLRW best-fitting candidates or another physical spacetime among those candidates that the astrophysical observations may suggest). 
}
\end{itemize}
\end{proof}

According to  (\ref{Farea}), the area element associated with the metric $\widehat{h}_{\hat{z}}$ on $\widehat{\mathbb{CS}}_{\hat{z}}(\,p)$ is given by 
$d\mu_{\widehat{h}_{\hat{z}}}\,=\,(1\,+\,\widehat{z})^{-2}\,f^2(\widehat{r}(\,\widehat{z}))\,d\mu_{\widehat{\mathbb{S}}^2}$, where $d\mu_{\widehat{\mathbb{S}}^2}$ denotes the solid angle measure on the FLRW observer's celestial sphere $\widehat{\mathbb{CS}}_{\hat{z}}$, where the radial density is constant. It follows that if we consider a model FLRW spacetime with flat spatial sections, then $f^2(\widehat{r}(\,\widehat{z}))\,=\,\widehat{r}^2(\,\widehat{z})$ and we can profitably rewrite the functional $E_{\widehat{\mathbb{C\,S}}_{\hat{z}},\;{\mathbb{C\,S}}_{\hat{z}}}[\zeta_{(\widehat{z})}]$ in terms of the mean square relative fluctuation $\delta^{\;(2)}_{\widehat{\mathbb{CS}}}(\,\widehat{z}\,)$  of the area distance (see (\ref{meansquareFunc0})) according to
\begin{equation}
\label{rewd2}
E_{\widehat{\mathbb{C\,S}}_{\hat{z}},\;{\mathbb{C\,S}}_{\hat{z}}}\,=\,
\frac{4\pi\,\widehat{r}^{\;2}_{(\widehat{z})}}{\left(1\,+\,\widehat{z}\right)^{\;2}}
\,\delta^{\;(2)}_{\widehat{\mathbb{CS}}}(\,\widehat{z}\,)\,.
\end{equation}
 From this rewriting and (\ref{FactorizationN}), we easily obtain the following physical characterization of the comparison functional $E_{\widehat{\mathbb{C\,S}}_{\hat{z}},\;{\mathbb{C\,S}}_{\hat{z}}}$.

\begin{theorem}
\label{FactorEnergyD}
The functional $E_{\widehat{\mathbb{C\,S}}_{\hat{z}},\;{\mathbb{C\,S}}_{\hat{z}}}[\zeta_{(\widehat{z})}$ minimized on the celestial sphere $\widehat{\mathbb{C\,S}}_{\widehat{z}\,(c)}$ associated with the cosmological uncoupling surface $\partial\,V_{(c)} (p)$ is given by 
\begin{equation}
\label{disFunctEst}
d_{\widehat{z}\,(c)}\left[\widehat{\mathbb{C\,S}}_{\widehat{z}\,(c)},\;{\mathbb{C\,S}}_{\widehat{z}\,(c)}\right]\,=\,
E_{\widehat{\mathbb{C\,S}}_{\widehat{z}\,(c)},\;{\mathbb{C\,S}}_{\widehat{z}\,(c)}}[\overline\zeta_{(\widehat{z}\,(c))}]\,=\,
\frac{2\pi}{3}\,
\frac{\widehat{\Omega}_{\widehat\Lambda}}{H_0^2}\,\left(\frac{\widehat{\Lambda}-\Lambda^{(v)}}{\widehat{\Lambda}}\right)\,\widehat{z}^{\,4}(c)\,.
\end{equation}
In particular, if in the pre-homogeneity region  $(\frac{\Delta_{\widehat{z}\,(c)}\,\rho}{\widehat{\rho}_{\widehat{z}\,(c)}})^2\,\not=\,0$,\, then we have the lower bound
\begin{equation}
\label{Dbound}
d_{\widehat{z}\,(c)}\left[\widehat{\mathbb{C\,S}}_{\widehat{z}\,(c)},\;{\mathbb{C\,S}}_{\widehat{z}\,(c)}\right]\,>\,
\pi\,\left(\frac{\widehat{\Omega}_m}{24 H_0}\right)^2\,\left(\frac{\Delta_{\widehat{z}\,(c)}\,\rho}{\widehat{\rho}_{\widehat{z}\,(c)}}\right)^2\,\left(1\,+\,\widehat{z}\,(c)\right)^2\,\widehat{z}^{\,6}(c)\,.
\end{equation}
\end{theorem}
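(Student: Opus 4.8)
The plan is to obtain both claims by substituting the statistical factorization (\ref{FactorizationN}) into the closed form (\ref{rewd2}) of the comparison functional, evaluated at the minimizing M\"obius map $\overline\zeta_{(\widehat{z}\,(c))}$ supplied by Theorem \ref{distheorem}. First I would invoke Theorem \ref{distheorem} to guarantee that the infimum defining $d_{\widehat{z}\,(c)}[\widehat{\mathbb{C\,S}}_{\widehat{z}\,(c)},\,{\mathbb{C\,S}}_{\widehat{z}\,(c)}]$ is attained by $\overline\zeta_{(\widehat{z}\,(c))}$, so that the distance functional equals $E_{\widehat{\mathbb{C\,S}}_{\widehat{z}\,(c)},\,{\mathbb{C\,S}}_{\widehat{z}\,(c)}}[\overline\zeta_{(\widehat{z}\,(c))}]$. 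Specializing to the spatially flat reference FLRW model, where $f^2(\widehat{r}(\widehat{z}))=\widehat{r}^2(\widehat{z})$, the rewriting (\ref{rewd2}) gives
\begin{equation*}
E_{\widehat{\mathbb{C\,S}}_{\widehat{z}\,(c)},\,{\mathbb{C\,S}}_{\widehat{z}\,(c)}}[\overline\zeta_{(\widehat{z}\,(c))}]=\frac{4\pi\,\widehat{r}^{\,2}(\widehat{z}\,(c))}{(1+\widehat{z}\,(c))^2}\,\delta^{\;(2)}_{\widehat{\mathbb{CS}}}(\widehat{z}\,(c))\,,
\end{equation*}
which reduces the problem to expressing $\delta^{\;(2)}$ and the comoving radius in terms of cosmological quantities.

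Next I would substitute the second line of the factorization (\ref{FactorizationN}) for $\delta^{\;(2)}_{\widehat{\mathbb{CS}}}(\widehat{z}\,(c))$, noting that the $(1+\widehat{z}\,(c))^2$ factor it carries cancels the one in the denominator above. Since the construction is anchored at the low reference redshift $\widehat{z}\,(c)\sim 10^{-4}$, I would then use the linear Hubble relation $\widehat{z}\,(c)=H_0\,\widehat{r}(\widehat{z}\,(c))$, already adopted in Theorem \ref{Theasympt1}, to replace $\widehat{r}^2(\widehat{z}\,(c))$ by $\widehat{z}^{\,2}(c)/H_0^2$. Collecting the powers of $\widehat{z}\,(c)$ and the numerical factor $4\pi/6=2\pi/3$ yields the stated equality for $d_{\widehat{z}\,(c)}$.

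For the lower bound I would keep the same closed form but estimate $\delta^{\;(2)}$ from below rather than evaluate it. The variance inequality (\ref{samplevar0i}), namely $\delta^{\;(2)}_{\widehat{\mathbb{CS}}}(\widehat{z}\,(c))-(\delta^{\;(1)}_{\widehat{\mathbb{CS}}}(\widehat{z}\,(c)))^2=\mathrm{Var}_{\widehat{\mathbb{CS}}}(Y_{(\widehat{z}\,(c))})>0$, gives $\delta^{\;(2)}>(\delta^{\;(1)})^2$, with \emph{strict} inequality whenever the density contrast $\Delta_{\widehat{z}\,(c)}\rho/\widehat{\rho}_{\widehat{z}\,(c)}$ is nonzero. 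Substituting the first line of (\ref{FactorizationN}) for $\delta^{\;(1)}$, squaring, and again inserting $\widehat{r}^2=\widehat{z}^{\,2}(c)/H_0^2$ produces the claimed bound after simplifying $4\pi/48^2=\pi/576=\pi/(24)^2$; exactly one factor $(1+\widehat{z}\,(c))^2$ survives because $(\delta^{\;(1)})^2$ carries $(1+\widehat{z}\,(c))^4$ against the single power in the denominator.

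The algebra itself is routine; the part requiring care is the logical status of the substitution. The factorization (\ref{FactorizationN}) of $\delta^{\;(1)}$ and $\delta^{\;(2)}$ was established in Theorem \ref{MattCentLim} only in the $N_{(0,1)}$-distribution sense, through the strong law (Theorem \ref{AlmSureConv}) and central limit arguments (Theorem \ref{GaussCentLim}, Proposition \ref{PropA}) that identify the empirical averages sampled along $\widehat{\mathbb{CS}}$ with the curvature deviation at $p$. Thus the main obstacle is to argue that the $\delta^{\;(2)}$ appearing in (\ref{rewd2}) at the minimizer $\overline\zeta_{(\widehat{z}\,(c))}$ is precisely the one entering the statistical factorization, i.e.\ that minimizing the conformal distortion over $\mathrm{PSL}(2,\mathbb{C})$ is compatible with the mean-square interpretation of the area-distance fluctuations used to derive (\ref{FactorizationN}). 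I would settle this by observing that, via (\ref{Efunctn1}) and (\ref{ConfArDistNew}), $E[\overline\zeta_{(\widehat{z}\,(c))}]$ is by definition $4\pi\widehat{D}^2_{\widehat{z}\,(c)}$ times the mean-square fluctuation $\delta^{\;(2)}$ evaluated with the optimal aberration data $\overline\zeta_{(\widehat{z}\,(c))}$, so the \emph{same} $\delta^{\;(2)}$ feeds both relations; the strictness of the lower bound then follows from positivity of the variance whenever gravitational clustering is present.
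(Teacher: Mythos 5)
Your proposal is correct and follows essentially the same route as the paper's proof: substituting the factorization (\ref{FactorizationN}) into (\ref{rewd2}) together with the low-redshift Hubble relation $\widehat{z}=H_0\,\widehat{r}(\,\widehat{z})$ yields (\ref{disFunctEst}), while the strict variance inequality (\ref{samplevar0iii}) combined with the $\delta^{\;(1)}_{\widehat{\mathbb{CS}}}$ factorization gives (\ref{Dbound}), with all your numerical simplifications ($4\pi/6=2\pi/3$, $4\pi/48^2=\pi/(24)^2$, the surviving $(1+\widehat{z}\,(c))^2$ factor) matching the paper. Your closing observation that the $\delta^{\;(2)}_{\widehat{\mathbb{CS}}}$ evaluated at the minimizer $\overline\zeta_{(\widehat{z}\,(c))}$ is, via (\ref{Efunctn1}) and (\ref{ConfArDistNew}), the same quantity entering the statistical factorization is a point the paper leaves implicit, and it is a sound justification rather than a deviation.
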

\noindent
This result implies that as long as there is a non-trivial matter density contrast in the pre-homogeneity region  (and for a cluster, this contrast can be very large), the distance functional $d_{\widehat{z}\,(c)}\left[\widehat{\mathbb{C\,S}}_{\widehat{z}\,(c)},\;{\mathbb{C\,S}}_{\widehat{z}\,(c)}\right]$ is different from zero.  
    
 \begin{proof}
 The relation (\ref{disFunctEst}) follows from a straightforward substitution of the factorization  (\ref{FactorizationN}) of $\delta^{\;(1)}_{\widehat{\mathbb{CS}}}(\,\widehat{z}\,(c))$ into (\ref{rewd2}), and of the standard low redshift Hubble formula $\widehat{z}=H_0\,\widehat{r}(\,\widehat{z})$. The lower bound is a consequence of the variance inequality (see (\ref{samplevar0}) and (\ref{samplevar0i}) ) 
\begin{align}
\label{samplevar0iii}
\mathrm{Var}_{\widehat{\mathbb{CS}}}(Y_{(\,\widehat{z}\,(i))})\,&:=\,
\frac{1}{4\pi}
\int_{\widehat{\mathbb{CS}}}
\left(Y_{(\,\widehat{z}\,(i))}\,-\,\delta^{\;(1)}_{\widehat{\mathbb{CS}}}(\,\widehat{z}\,(i))\right)^2
	d\mu_{\widehat{\mathbb{S}}^2}\nonumber\\
\\
&=\,\delta^{\;(2)}_{\widehat{\mathbb{CS}}}(\,\widehat{z}\,(i))\,-\,\left(\delta^{\;(1)}_{\widehat{\mathbb{CS}}}(\,\widehat{z}\,(i)) \right)^2\,>\,0.\nonumber
\end{align}
The bound (\ref{Dbound}) follows from a direct computation exploiting the factorization 
\begin{equation}
\delta^{\;(1)}_{\widehat{\mathbb{CS}}}(\,\widehat{z}\,(c))\,=\,\frac{\widehat{\Omega}_m}{48}\,\left(\frac{\Delta_{\widehat{z}\,(c)}\,\rho}{\widehat{\rho}_{\widehat{z}\,(c)}}\right)\,\left(1\,+\,\widehat{z}\,(c)\right)^2\,\widehat{z}^{\,2}(c)
\end{equation}
provided by (\ref{FactorizationN}).     
 \end{proof}

\end{document}